\newtheorem{theorem}{Theorem}[section]
\newtheorem{lemma}{Lemma}[section]
\newtheorem{proposition}{Proposition}[section]
\newtheorem{assumption}{Assumption}[section]
\newtheorem*{defn*}{Definition}
\newtheorem{remark}{Remark}
\newtheoremstyle{theoremd}
  {\topsep}
  {\topsep}
  {\itshape}
  {0pt}
  {\bfseries}
  {.\!\!$^{\,\boldsymbol\prime}$}
  { }
  {\thmname{#1}\thmnumber{ #2}\thmnote{ (#3)}}
\theoremstyle{theoremd}
\newtheorem{aprime}{Assumption}[section]
\newcommand{\ind}{1\!\mathrm{l}}
\newcommand{\mcr}[1]{\mathscr{#1}}
\newcommand{\mb}[1]{\mathbb{#1}}
\newcommand{\wh}[1]{\widehat{#1}}
\newcommand{\wt}[1]{\widetilde{#1}}
\newcommand{\mf}[1]{\mathbf{#1}}
\newcommand{\mr}[1]{\mathrm{#1}}
\newcommand{\mc}[1]{\mathcal{#1}}
\newcommand{\p}{\mb P}
\newcommand{\ul}[1]{\underline{#1}}
\newcommand{\ol}[1]{\overline{#1}}
\renewcommand\paragraph{\@startsection{paragraph}{4}{\z@}%
                                    {0pt \@plus1ex \@minus.2ex}%
                                    {-1em}%
                                    {\normalfont\normalsize\bfseries}}
\begin{document}

\title{Monte Carlo Confidence Sets for Identified Sets\thanks{%
We are grateful to K. Hirano and three reviewers, S. Bonhomme, and L. P. Hansen for insightful suggestions. We also thank T. Cogley, K. Evdokimov, H. Hong, B. Honor\'e, M. Keane, K. Menzel, M. Koles\'ar, U. M\"uller, J. Montiel Olea, T. Sargent, F. Schorfheide, C. Sims and participants at the 2015 ESWC meetings in Montreal, the September 2015 ``Big Data Big Methods'' Cambridge-INET conference, and workshops at many institutions for useful comments. We thank Keith O'Hara for his excellent RA work on the MCMC simulations and empirical results reported in the first version.}}
\author{Xiaohong Chen\thanks{%
Cowles Foundation for Research in Economics, Yale
University. E-mail address: \texttt{xiaohong.chen@yale.edu}}
\quad Timothy M. Christensen%
\thanks{%
Department of Economics, New York University. E-mail address: \texttt{timothy.christensen@nyu.edu}}
\quad Elie Tamer%
\thanks{%
Department of Economics, Harvard University. E-mail address: \texttt{elietamer@fas.harvard.edu}}}
\date{First draft: August 2015; Revised September 2017}
\maketitle

\begin{abstract}
\singlespacing
\noindent In complicated/nonlinear parametric models, it is generally hard to know whether the model parameters are point identified. We provide computationally attractive procedures to construct confidence sets (CSs) for identified sets of full parameters and of subvectors in models defined through a likelihood or a vector of moment equalities or inequalities.
These CSs are based on level sets of optimal sample criterion functions (such as likelihood or optimally-weighted or continuously-updated GMM criterions). The level sets are constructed using cutoffs that are computed via Monte Carlo (MC) simulations directly from the quasi-posterior distributions of the criterions. We establish new Bernstein-von Mises (or Bayesian Wilks) type theorems for the quasi-posterior distributions of the quasi-likelihood ratio (QLR) and profile QLR  in partially-identified regular models and some non-regular models. These results imply that our MC CSs have \emph{exact} asymptotic frequentist coverage for identified sets of full parameters and of subvectors in partially-identified regular models, and have valid but potentially conservative coverage in models with reduced-form parameters on the boundary. Our MC CSs for identified sets of subvectors are shown to have \emph{exact} asymptotic coverage in models with singularities. We also provide results on uniform validity of our CSs over classes of DGPs that include point and partially identified models. We demonstrate good finite-sample coverage properties of our procedures in two simulation experiments. Finally, our procedures are applied to two non-trivial empirical examples: an airline entry game and a model of trade flows.
\end{abstract}

\newpage

\section{Introduction}

It is often difficult to verify whether parameters in complicated nonlinear structural models are globally point identified. This is especially the case when conducting a sensitivity analysis to examine the impact of various model assumptions on the estimates of parameters of interest, where relaxing some suspect assumptions may lead to loss of point identification. This difficulty of verifying point identification naturally calls for  inference procedures that are valid whether or not the parameters of interest are point identified. Our goal is to contribute to this sensitivity literature by proposing relatively simple inference procedures that allow for partial identification in models defined through a likelihood or a vector of moment equalities or inequalities.

To that extent, we provide computationally attractive and asymptotically valid confidence set (CS) constructions for the identified set $\Theta_I$ of the full vector of parameters $\theta \equiv (\mu,\eta)\in \Theta$,\footnote{Following the literature, the identified set $\Theta_I$ is the argmax of a population criterion over the whole parameter space $\Theta$. A model is point identified if $\Theta_I$ is a singleton, say $\{\theta_0\}$, and partially identified if $\{\theta_0\} \subsetneq \Theta_I \subsetneq \Theta$.} and for the identified sets $M_I$ of subvectors $\mu$. As a sensitivity check in an empirical study, a researcher could report conventional CSs based on inverting a $t$ or Wald statistic, which are valid under point identification only, alongside our new CSs that are asymptotically optimal under point identification and robust to failure of point identification.

Our CS constructions are criterion-function based, as in \cite{CHT} (CHT) and the subsequent literature on CSs for identified sets. That is, contour sets of the sample criterion function
 are used as CSs for $\Theta_I$ and contour sets of the sample profile criterion are used as CSs for $M_I$. However, our CSs are constructed using critical values that are calculated \emph{differently} from those in the existing literature. In two of our proposed CS constructions, we estimate critical values using quantiles of the sample criterion function (or profile criterion) that are simulated from a quasi-posterior distribution, which is formed by combining the sample criterion function with a prior over the model parameter space $\Theta$.\footnote{In correctly-specified likelihood models the quasi-posterior is a true posterior distribution over $\Theta$. We refer to the distribution as a quasi-posterior because we accommodate non-likelihood based models, such as moment-based models with GMM criterions.}
 
We propose three procedures for constructing various CSs. To construct a CS for the identified set $\Theta_I$, our Procedure 1 draws a sample $\{\theta^1,...,\theta^B\}$ from the quasi-posterior, computes the $\alpha$-\emph{quantile of the sample criterion evaluated at the draws}, and then defines our CS $\wh\Theta_\alpha$ for $\Theta_I$ as the contour set at said $\alpha$-quantile. The computational complexity here is simply as hard as the problem of taking draws from the quasi-posterior, a well-researched and understood area in the literature on Monte Carlo (MC) algorithms in Bayesian computation (see, e.g., \cite{liu}, \cite{CRobert}). Many MC samplers (including the popular Markov Chain Monte Carlo (MCMC) algorithms) could, in principle, be used for this purpose. In our simulations and empirical applications, we use an adaptive sequential Monte Carlo (SMC) algorithm that is well-suited to drawing from irregular, multi-modal (quasi-)posteriors and is also easily parallelizable for fast computation (see, e.g., \cite{HS2014}, \cite{DDJ2012}, \cite{DG2014}). Our Procedure 2 produces a CS $\wh M_\alpha$ for $M_I$ of a general subvector using the same draws from the quasi-posterior as in Procedure 1. Here an added computation step is needed to obtain critical values that guarantee the exact asymptotic coverage for $M_I$. Finally, our Procedure 3 CS for $M_I$ of a scalar subvector is simply the contour set of the profiled quasi-likelihood ratio (QLR) with its critical value being the $\alpha$ quantile of a chi-square distribution with one degree of freedom. Our Procedure 3 CS is  simple to compute but is valid only for scalar subvectors.

Our CS constructions are valid for ``optimal'' criterions, which include (but are not limited to) correctly-specified likelihood models, GMM models with optimally-weighted or continuously-updated or GEL criterions,\footnote{Moment inequality-based models are special cases of moment equality-based models as one can add nuisance parameters to transform moment inequalities into moment equalities. Although moment inequality models are allowed, our criterion differs from the popular GMS criterion for moment inequalities in \cite{AndrewsSoares} and others; see Subsections \ref{s:missing}, \ref{s:miq0} and \ref{s:miq}.} or sandwich quasi-likelihoods. For point- or partially-identified regular models, optimal criterions correspond to criterions that satisfy a generalized information equality. But our optimal criterions also allow for some correctly-specified non-regular (or non-standard) models such as models with parameter-dependent support, an important feature of set identified models (see Appendix \ref{ax:pds}). Our Procedure 1 and 2 CSs, $\wh\Theta_\alpha$ and $\wh M_\alpha$, are shown to have \emph{exact} asymptotic coverage for $\Theta_I$ and $M_I$ in potentially partially identified regular models, and are valid but possibly conservative in potentially partially identified models with reduced-form parameters on the boundary (in which the local tangent space is a convex cone). Our Procedure 1 and 2 CSs are also shown to be uniformly valid over DGPs that include both point- and partially identified models (see Appendix \ref{s:uniformity}).
Moreover, our Procedure 2 CS is shown to have \emph{exact} asymptotic coverage for $M_I$ in models with singularities, which are particularly relevant in applications when parameters are close to point-identified or point-identified.
 Our Procedure 3 CS has \emph{exact} asymptotic coverage in regular models that are point-identified\footnote{In fact, all three of our procedures are efficient in point-identified regular models.}. Although theoretically slightly conservative in partially identified models, our Procedure 3 CS performs  well in our simulations and empirical examples.

Our Procedure 1 and 2 CSs are Monte Carlo (MC) based. To establish their theoretical validity, we derive new Bernstein-von Mises (or Bayesian Wilks) type theorems for the (quasi-)posterior distributions of the QLR and profile QLR in partially identified models, allowing for regular models and some important non-regular cases (e.g. models in which the local tangent space is a convex cone, models with singularities, and models with parameter-dependent support). These theorems establish that the (quasi-)posterior distributions of the QLR and profile QLR converge to their frequentist counterparts in regular models; see Section \ref{sec-property} and Appendix \ref{ax:pds} for similar results in some non-regular cases. As an illustration we briefly mention some results for Procedure 1 here:
Section \ref{sec-property} presents conditions under which the sample QLR statistic and the (quasi-)posterior distribution of the QLR both converge to a chi-square distribution with unknown degree of freedom in regular models.\footnote{In point-identified models, Wilks-type asymptotics imply the degree of freedom is equal to the dimension of $\theta$ for QLR statistics. In partially identified models, the degree of freedom is some $d^*$, typically less than or equal to $\dim(\theta)$. The correct $d^*$ may not be easy to infer from the context, which is why we refer to it as ``unknown''.} Appendix \ref{ax:pds} shows that the QLR and the (quasi-)posterior of the QLR both converge to a gamma distribution with scale parameter of 2 and unknown shape parameter in more general partially-identified models. These results ensure that the quantiles of the QLR evaluated at the MC draws from its quasi-posterior consistently estimate the correct critical values needed for Procedure 1 CS to have exact asymptotic coverage for $\Theta_I$. See Section \ref{sec-property} for similar results for the profile QLR and Procedure 2 CSs for $M_I$ for subvectors.

We demonstrate the computational feasibility and good finite-sample coverage of our proposed methods in two simulation experiments: a missing data example and a complete information entry game with correlated payoff shocks. We use the missing data example to illustrate the conceptual difficulties in a transparent way, studying both numerically and theoretically the behaviors of our CSs when this model is partially-identified, close to point-identified, and point-identified. Although the length of a confidence interval for the identified set $M_I$ of a scalar $\mu$ is by definition no shorter than that for $\mu$ itself, our simulations demonstrate that the differences in length between our Procedures 2 and 3 CSs for $M_I$ and the GMS CSs of \cite{AndrewsSoares} for $\mu$ are negligible. Finally, our CS constructions are applied to two real data examples: an airline entry game with correlated payoff shocks and an empirical trade flow model. The airline entry game example has 17 partially-identified structural parameters.  Our empirical findings using Procedures 2 and 3 CSs show that the data are informative about some equilibrium selection probabilities. The trade example has 46 structural parameters. Here, point-identification may be difficult to verify, especially when conducting a sensitivity analysis of restrictive model assumptions.

\paragraph{Literature Review.} Several papers have recently proposed Bayesian (or pseudo Bayesian) methods for constructing CSs for $\Theta_I$ that have correct frequentist coverage properties. See section 3.3 in 2009 NBER working paper version of \cite{MoonSchorfheide}, \cite{Kitagawa}, \cite{NoretsTang}, \cite{KlineTamer}, \cite{LiaoSimoni} and the references therein.
All these papers consider  \emph{separable} regular models and use various renderings of a similar intuition. First, there exists a finite-dimensional reduced-form parameter, say $\phi$, that is (globally) point-identified and $\sqrt n$-consistently and asymptotically normal estimable from the data, and is linked to the model structural parameter $\theta$ via a \emph{known} global mapping. Second, a prior is placed on the reduced-form parameter $\phi$, and third, a classical Bernstein-von Mises theorem stating the asymptotic normality of the posterior distribution for $\phi$ is assumed to hold. Finally, the known global mapping between the reduced-form and the structural parameters is inverted, which, by step 3, guarantees correct coverage for $\Theta_I$ in large samples. In addition to this literature's focus on separable models, it is not clear whether the results there remain valid in various non-regular models we study.

 Our approach is valid regardless of whether the model is separable or not. We show that for general separable or non-separable partially identified likelihood or moment-based models, a {\it local reduced-form reparameterization} exists (see Section \ref{s:suff}). We use this local reparameterization as a proof device to show that the (quasi-)posterior distributions of the QLR and the profile QLR statistics have a frequentist interpretation in large samples. Importantly, since our Procedures 1 and 2 impose priors on the model parameter $\theta$ only, there is no need for obtaining a global reduced-form reparameterization or deriving its dimension to implement our procedures. This is in contrast with the above-mentioned existing Bayesian methods for partially identified separable models, for which researchers need to impose priors on global reduced-form parameters $\phi$ to ensure that its posterior lies on $\{\phi(\theta) : \theta \in \Theta\}$ (i.e. the set of reduced-form parameters consistent with the structural model), which could be difficult even in some empirically relevant separable models; see the airline entry game application in Section \ref{sec-empirical}. Moreover, our new Bernstein-von Mises type theorems for the (quasi-)posterior distributions of the QLR and profile QLR allow for several important non-regular cases in which the local reduced-form parameter is typically not $\sqrt n$-consistent and asymptotically normally estimable.

When specialized to point- or partially-identified likelihood models, our Procedure 1 CS for $\Theta_I$ is equivalent to Bayesian credible set for $\theta$ based on inverting a LR statistic. With flat priors, these CSs are also the highest posterior density (HPD) credible sets. Our general theoretical results imply that HPD credible sets give correct frequentist coverage in partially identified regular models and conservative coverage in some non-standard circumstances. These findings complement those of \cite{MoonSchorfheide} who showed that HPD credible sets can under-cover (in a frequentist sense) in separable partially identified regular models under their conditions.\footnote{Note that this is not a contradiction since our priors are imposed on structural parameters $\theta$ only, violating  Assumption 2 in \cite{MoonSchorfheide}.} In point-identified regular models satisfying a generalized information equality with $\sqrt n$-consistent and asymptotically normally estimable parameters $\theta$, \cite{CH} (CH hereafter) propose constructing CSs for scalar subvectors $\mu$  by taking the upper and lower quantiles of the MCMC draws $\{\mu^1,\ldots,\mu^B\}$ where $(\mu^b,\eta^b) \equiv \theta^b$. Our CS constructions for scalar subvectors are asymptotically equivalent to CH's CSs in such models, but they differ otherwise. Our CS constructions, which are based on quantiles of the {\it criterion} evaluated at the MC draws $\{\theta^1,\ldots,\theta^B\}$ rather than of the raw parameter draws themselves, are valid irrespective of whether the model is point- or partially-identified.
Intuitively, this is because the population criterion is always point-identified irrespective of whether  $\theta$ is point- or partially-identified.

There are several published works on frequentist CS constructions for $\Theta_I$: see, e.g., CHT and  \cite{RomanoShaikh} where subsampling based methods are used for general partially identified models, \cite{Bugni} and \cite{Armstrong14} where bootstrap methods are used for moment inequality models, and \cite{BM} where random set methods are used when $\Theta_I$ is strictly convex. For inference on identified sets of subvectors, both the subsampling-based papers of CHT and  \cite{RomanoShaikh} deliver valid tests with a judicious choice of the subsample size for a profiled criterion function. The subsampling-based CS construction allows for general criterion functions,
but is computationally demanding and sensitive to choice of subsample size in realistic empirical structural models.\footnote{There is a large literature on frequentist approach for {\it inference on the true parameter} $\theta \in \Theta_I$ or $\mu \in M_I$ (e.g., \cite{IM}, \cite{Rosen}, \cite{AndrewsGuggenberger}, \cite{Stoye}, \cite{AndrewsSoares}, \cite{andrews2012inference}, \cite{canay}, \cite{RSW}, \cite{bugni/canay/shi:16} and \cite{KMS} among many others), which generally uses discontinuous-in-parameters asymptotic (repeated sampling) approximations to test statistics.  These existing frequentist methods
are difficult to implement in realistic empirical models.} Our methods are computationally attractive and typically have asymptotically correct coverage, but require ``optimal'' criterion functions.

The rest of the paper is organized as follows. Section \ref{sec-procedure} describes our new procedures for CSs for identified sets $\Theta_I$ and $M_I$. Section \ref{sec-empirical0} presents simulations and real data applications. Section \ref{sec-property} first establishes new BvM (or Bayesian Wilks) results for the QLR and profile QLR in partially identified models. It then derives the frequentist validity of our CSs. Section \ref{s:suff} provides some sufficient conditions to the key regularity conditions for the general theory in Section \ref{sec-property}. Section \ref{sec-conclusion} briefly concludes. Appendix \ref{a:mc} describes the implementation details for the simulations and real data applications in Section  \ref{sec-empirical0}. Appendix \ref{s:uniformity} shows that our CSs for $\Theta_I$ and $M_I$ are valid uniformly over a class of DGPs. 
Appendix \ref{s:uniform-ex} verifies the main regularity conditions for uniform validity in the missing data and a moment inequality examples. Appendix \ref{s:lp} presents results on local power. Appendix \ref{ax:pds} establishes a new BvM (or Bayesian Wilks) result which shows that the limiting (quasi-)posterior distribution of the QLR in a partially identified model is a gamma distribution with unknown shape parameter and scale parameter of 2. There, results on models with parameter-dependent support are given. Appendix \ref{a:proofs} contains all the proofs and additional lemmas.

\section{Description of our Procedures}\label{sec-procedure}

In this section we first describe our method for constructing CSs for $\Theta_I$. We then describe methods for constructing CSs for $M_I$ of any subvector. We finally present an extremely simple method for constructing CSs for $M_I$ of a scalar subvector in certain situations.

Let $\mf X_n = (X_1,\ldots,X_n)$ denote a sample of i.i.d. or strictly stationary and ergodic data of size $n$. Consider a population objective function  $L: \Theta \to \mb R$, such as a log-likelihood function for correctly specified likelihood models, an optimally-weighted or continuously-updated GMM objective function, or a sandwich quasi-likelihood function. The function $L$ is assumed to be an upper semicontinuous function of $\theta$ with $\sup_{\theta \in \Theta} L(\theta) < \infty$.
The population objective $L$ may not be maximized uniquely over $\Theta$, but rather its maximizers, the {\it identified set}, may be a nontrivial set of parameters:
\begin{equation} \label{Theta-I}
 \Theta_I := \left\{ \theta \in \Theta : L(\theta) = \textstyle \sup_{\vartheta \in \Theta} L(\vartheta)\right\}\,.
\end{equation}
The set $ \Theta_I$ is our first object of interest. In many applications, it may be of interest to provide a CS for a {\it subvector} of interest. Write $\theta \equiv (\mu,\eta)$ where $\mu$ is the subvector of interest and $\eta$ is a nuisance parameter. Our second object of interest is the identified set for the subvector $\mu$:
\begin{equation} \label{M-I}
 M_I := \{ \mu : (\mu,\eta) \in \Theta_I \mbox{ for some } \eta \}\,.
\end{equation}
Given the data $\mf X_n$, we seek to construct computationally attractive CSs that cover $\Theta_I$ or $M_I$ with a pre-specified probability (in repeated samples) as sample size $n$ gets large.

To describe our approach, let $L_n$ denote an (upper semicontinuous) sample criterion function that is a jointly measurable function of the data $\mf X_n$ and $\theta$. This objective function $L_n$  can be a natural sample analogue of $L$. We give a few examples of objective functions that we consider.

{\bf Parametric likelihood:}   Given a parametric model: $\{ P_\theta : \theta \in \Theta\},$  with a corresponding density $p_\theta(.)$ (with respect to some dominating measure), the identified set is $\Theta_I = \{ \theta \in \Theta: P_0= P_\theta\}$ where $P_0$ is the true data distribution.  We take $L_n$ to be the average log-likelihood function:
\begin{equation} \label{e:ll}
 L_n(\theta) = \frac{1}{n} \sum_{i=1}^n \log p_\theta(X_i)\,.
\end{equation}

{\bf GMM models:}  Consider a set of {\it moment equalities} $E[\rho_\theta(X_i)] = 0$  such that the solution to this vector of equalities may not be unique. The identified set is  $\Theta_I = \{ \theta \in \Theta: \, E[\rho_\theta(X_i)] = 0\}$. The sample objective function $L_n$ can be the continuously-updated GMM objective function:
\begin{equation} \label{e:cue}
 L_n(\theta) = -\frac{1}{2} \rho_n(\theta)' W_n(\theta)  \rho_n(\theta)
\end{equation}
where $\rho_n(\theta) = \frac{1}{n} \sum_{i=1}^n \rho_\theta(X_i)$ and $W_n(\theta) = \left( \frac{1}{n} \sum_{i=1}^n \rho_\theta(X_i)\rho_\theta(X_i)' - \rho_n(\theta) \rho_n(\theta)' \right)^{-}$ (the superscript $^-$ denotes generalized inverse) for iid data or other suitable choices. Given an optimal weighting matrix $\wh W_n$, we could also use an optimally-weighted GMM objective function:
\begin{equation} \label{e:ow}
 L_n(\theta) = -\frac{1}{2} \rho_n(\theta)' \wh W_n \rho_n(\theta) \,.
\end{equation}
Generalized empirical likelihood objective functions could also be used with our procedures.

Our main CS constructions (Procedures 1 and 2 below) are based on Monte Carlo (MC) simulation methods from a quasi-posterior. Given $L_n$ and a prior $\Pi$ over $\Theta$, the quasi-posterior distribution $\Pi_n$ for $\theta$ given $\mf X_n$ is defined as
\begin{equation} \label{e:posterior}
 \mr d \Pi_n(\theta|\mf X_n) = \frac{e^{nL_n(\theta)} \mr d \Pi(\theta)}{\int_\Theta e^{nL_n(\theta)} \mr d\Pi(\theta)}\,.
\end{equation}
Our procedures 1 and 2 require drawing a sample $\{\theta^1,\ldots,\theta^B\}$ from the quasi-posterior $\Pi_n$. In practice we use an adaptive sequential Monte Carlo (SMC) algorithm which is known to be well suited to drawing from irregular, multi-modal distributions, but any MC sampler could, in principle, be used. The SMC algorithm  is described in detail in Appendix \ref{s:smc}.

\subsection{Confidence sets for the identified set $\Theta_I$}

Here we seek a 100$\alpha$\% CS $\wh \Theta_{\alpha}$ for $\Theta_I$ using $L_n(\theta)$ that has asymptotically exact coverage, i.e.:
\[
 \lim_{n \to \infty} \p(\Theta_I \subseteq \wh \Theta_{\alpha}) = \alpha\,.
\]

 \centerline{\it \large \sc [Procedure 1: Confidence sets for the identified set]}

\begin{enumerate}
\item Draw a sample $\{\theta^1,\ldots,\theta^B \}$ from the quasi-posterior distribution $\Pi_n$ in (\ref{e:posterior}).
\item Calculate the $(1-\alpha)$ quantile of $\{L_n(\theta^1),\ldots,L_n(\theta^B)\}$; call it $\zeta_{n,\alpha}^{mc}$.
\item Our 100$\alpha$\% confidence set for $\Theta_I$ is then:
\begin{equation} \label{e:cs:full}
\wh\Theta_\alpha = \{ \theta \in \Theta : L_n(\theta) \geq \zeta_{n,\alpha}^{mc}\}\,.
\end{equation}
\end{enumerate}
Notice that no optimization of $L_n$ itself is required in order to construct $\wh \Theta_\alpha$. Further, an exhaustive grid search over the full parameter space $\Theta$ is not required as the MC draws $\{\theta^1,\ldots,\theta^B\}$ will concentrate around $\Theta_I$ and thereby indicate the  regions in $\Theta$ over which to search.

CHT  considered inference on the set of minimizers of a {\it nonnegative population criterion function} $Q: \, \Theta \to \mb R_+$ using a sample analogue $Q_n$ of $Q$. Let $\xi_{n,\alpha}$ denote a consistent estimator of the $\alpha$ quantile of $\sup_{\theta \in \Theta_I} Q_n(\theta)$. The 100$\alpha$\% CS for $\Theta_I$ at level $\alpha \in (0,1)$ proposed  is $\wh \Theta_\alpha^{CHT} = \{ \theta \in \Theta : Q_n(\theta) \leq \xi_{n,\alpha}\}$. In the existing literature, subsampling or bootstrap based methods have been used to compute $\xi_{n, \alpha}$ which can be tedious to implement. Instead, our procedure replaces $\xi_{n,\alpha}$ with a cut off based on Monte Carlo simulations.
 The next remark provides an equivalent approach to Procedure 1 but that is constructed in terms of $Q_n$, which is the quasi likelihood ratio statistic associated with $L_n$.

\begin{remark}\label{rmk:full}
Let $\hat \theta \in \Theta$ denote an approximate maximizer of $L_n$, i.e.:
\[
L_n(\hat \theta) = \sup_{\theta \in \Theta}L_n(\theta) + o_\p(n^{-1})\,.
\]
and define the quasi-likelihood ratio (QLR) (at a point $\theta \in \Theta$) as:
\begin{equation} \label{e:qlr}
 Q_n(\theta) = 2n [L_n(\hat \theta) - L_n (\theta)]\,.
\end{equation}
Let $\xi_{n,\alpha}^{mc}$ denote the $\alpha$ quantile of $\{Q_n(\theta^1),\ldots,Q_n(\theta^B)\}$. The confidence set:
\[
 \wh\Theta_\alpha' = \{ \theta \in \Theta : Q_n(\theta) \leq \xi_{n,\alpha}^{mc}\}
\]
is equivalent to $\wh \Theta_\alpha$ defined in (\ref{e:cs:full}) because $L_n(\theta) \geq \zeta_{n,\alpha}^{mc}$ if and only if $Q_n(\theta) \leq \xi_{n,\alpha}^{mc}$.
\end{remark}

In Procedure 1 and Remark \ref{rmk:full} above, the posterior-like quantity involves the use of a prior distribution $\Pi$ over $\Theta$. This prior is user chosen and typically would be the uniform prior but other choices are possible. In our simulations, various choices of prior did not matter much, unless they assigned extremely small mass near the true parameter (which is avoided by using a uniform prior whenever $\Theta$ is compact).

The next lemma presents high-level conditions under which \emph{any} 100$\alpha$\% criterion-based CS for $\Theta_I$  has asymptotically correct (frequentist) coverage. Similar statements appear in CHT. Let $F_W (c):= \Pr(W \leq c )$ denote the (probability) distribution function of a random variable $W$ and $w_{\alpha}:=\inf \{c \in \mb R: F_W (c) \geq \alpha \}$ be the $\alpha$ quantile of $F_W$.

\begin{lemma}\label{l:basic}
Let (i) $\sup_{\theta \in \Theta_I} Q_n(\theta) \rightsquigarrow W$ where $W$ is a random variable for which $F_W$ is continuous at $w_\alpha$, and (ii) $(w_{n,\alpha})_{n \in \mb N}$ be a sequence of random variables such that $w_{n,\alpha} \geq w_\alpha + o_\p(1)$. Define:
\[ \wh \Theta_\alpha = \{ \theta \in \Theta : Q_n(\theta) \leq w_{n,\alpha}\}\,.
\]
Then: $\liminf_{n \to \infty} \mb P(\Theta_I \subseteq \wh \Theta_\alpha) \geq \alpha$. Moreover, if condition (ii) is replaced by the condition $w_{n,\alpha} = w_\alpha + o_\p(1)$, then: $\lim_{n \to \infty} \mb P(\Theta_I \subseteq \wh \Theta_\alpha) = \alpha$.
\end{lemma}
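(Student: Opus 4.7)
The plan is to reduce the coverage event to a one-dimensional statement involving only $\sup_{\theta \in \Theta_I} Q_n(\theta)$ and the cutoff $w_{n,\alpha}$, and then to sandwich the random cutoff between deterministic quantities via condition (ii). The first step is the set-theoretic identity
\[
 \{\Theta_I \subseteq \wh\Theta_\alpha\} = \Bigl\{ \sup_{\theta \in \Theta_I} Q_n(\theta) \leq w_{n,\alpha} \Bigr\}\,,
\]
which holds immediately from the definition of $\wh\Theta_\alpha$. So the entire problem reduces to analyzing the probability on the right-hand side; condition (i) controls the distribution of the supremum, while condition (ii) controls the deviation of the cutoff from its target.

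For the lower bound, fix $\varepsilon > 0$ small enough that $w_\alpha - \varepsilon$ is a continuity point of $F_W$ (possible because continuity points are dense). On the event $\{w_{n,\alpha} \geq w_\alpha - \varepsilon\}$ the inclusion $\{\sup_{\theta \in \Theta_I} Q_n(\theta) \leq w_\alpha - \varepsilon\} \subseteq \{\sup_{\theta \in \Theta_I} Q_n(\theta) \leq w_{n,\alpha}\}$ holds, which gives the Bonferroni-style bound
\[
 \p(\Theta_I \subseteq \wh\Theta_\alpha) \;\geq\; \p\Bigl(\sup_{\theta \in \Theta_I} Q_n(\theta) \leq w_\alpha - \varepsilon\Bigr) - \p(w_{n,\alpha} < w_\alpha - \varepsilon)\,.
\]
Condition (ii) sends the second term to $0$, while condition (i) together with the Portmanteau theorem sends the first term to $F_W(w_\alpha - \varepsilon)$. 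Taking $\varepsilon \downarrow 0$ along continuity points, left-continuity at $w_\alpha$ (granted by assumption) combined with the defining property $F_W(c) < \alpha$ for $c < w_\alpha$ forces $F_W(w_\alpha - \varepsilon) \to \alpha$. Hence $\liminf_n \p(\Theta_I \subseteq \wh\Theta_\alpha) \geq \alpha$.

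For the equality claim, strengthen (ii) to $w_{n,\alpha} = w_\alpha + o_\p(1)$ and run the symmetric argument from above: for each $\varepsilon>0$ with $w_\alpha + \varepsilon$ a continuity point of $F_W$,
\[
 \p(\Theta_I \subseteq \wh\Theta_\alpha) \;\leq\; \p\Bigl(\sup_{\theta \in \Theta_I} Q_n(\theta) \leq w_\alpha + \varepsilon\Bigr) + \p(w_{n,\alpha} > w_\alpha + \varepsilon)\,,
\]
whose limsup is at most $F_W(w_\alpha + \varepsilon)$. Sending $\varepsilon \downarrow 0$ and using continuity of $F_W$ at $w_\alpha$ yields $\limsup_n \p(\Theta_I \subseteq \wh\Theta_\alpha) \leq \alpha$, which combined with the earlier $\liminf$ gives the limit $\alpha$. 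There is no deep obstacle here; the only point requiring mild care is restricting $\varepsilon$-perturbations to continuity points of $F_W$ when invoking Portmanteau, which is harmless since continuity points of a CDF are dense and $F_W$ is assumed continuous at the single point $w_\alpha$ that matters in the limit.
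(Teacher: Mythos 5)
Your proof is correct and follows essentially the same route as the paper's: both reduce coverage to the event $\{\sup_{\theta \in \Theta_I} Q_n(\theta) \leq w_{n,\alpha}\}$, control the random cutoff by an $\varepsilon$-deviation bound using condition (ii), and invoke weak convergence from condition (i) together with continuity of $F_W$ at $w_\alpha$ (which gives $F_W(w_\alpha)=\alpha$); your fixed-$\varepsilon$-then-$\varepsilon\downarrow 0$ argument is just a slightly more explicit version of the paper's $\varepsilon_n = o(1)$ sequence. The two-sided argument for the exact-coverage case is likewise the same as the paper's (stated there in one line), so no substantive difference.
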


Our MC CSs for $\Theta_I$ are shown to be valid by verifying parts (i) and (ii) with $w_{n,\alpha} = \xi_{n,\alpha}^{mc}$. To verify part (ii), we shall establish a new Bernstein-von Mises (BvM) (or a new Bayesian Wilks) type result for the quasi-posterior distribution of the QLR under loss of identifiability.

\subsection{Confidence sets for the identified set $M_I$ of subvectors}

We seek a CS $\widehat M_{\alpha}$ for $M_I$ such that:
\[
 \lim_{n \to \infty} \p(M_I \subseteq \wh M_{\alpha}) = \alpha\,.
\]
A well-known method to construct a CS for $M_I$ is based on projection, which maps a CS $\wh \Theta_\alpha $ for $\Theta_I$ into one for $M_I$. The projection CS:
\begin{equation}\label{projection}
  \wh {M}_{\alpha}^{proj} = \{\mu : (\mu,\eta) \in \wh \Theta_\alpha \mbox{ for some } \eta \}
\end{equation}
is a valid $100\alpha$\% CS for $M_I$ whenever $\wh \Theta_\alpha$ is a valid $100\alpha$\%  CS for $\Theta_I$. As is well documented, $\wh M_\alpha^{proj}$ is typically conservative, and especially so when the dimension of $\mu$ is small relative to the dimension of $\theta$. Indeed, our simulations below indicate that $\wh M_\alpha^{proj}$ is very conservative even in reasonably low-dimensional parametric models.

We propose CSs for $M_I$ based on a profile criterion for $M_I$. Let $M = \{\mu: (\mu,\eta) \in \Theta \mbox{ for some } \eta\}$ and $H_\mu = \{ \eta : (\mu,\eta) \in \Theta\}$. The profile criterion for a point $\mu \in M$ is $\sup_{\eta \in H_\mu} L_n(\mu,\eta)$,
and the profile criterion for $M_I$ is
\begin{equation}\label{PL-set}
PL_n(M_I) \equiv \inf_{\mu \in M_I} \sup_{\eta \in H_\mu} L_n(\mu,\eta).
\end{equation}
Let $\Delta(\theta^b)$ be an equivalence set for $\theta^b$. In likelihood models we define $\Delta(\theta^b) = \{\theta \in \Theta : p_\theta = p_{\theta^b}\}$ and in moment-based models we define $\Delta(\theta^b) = \{ \theta \in \Theta : E[\rho(X_i,\theta)] = E[\rho(X_i,\theta^b)] \}$. Let $M(\theta^b) = \{ \mu : (\mu,\eta) \in \Delta(\theta^b) \mbox{ for some } \eta\}$, and the profile criterion for $M(\theta^b)$ is
\begin{equation}\label{PL-set-b}
PL_n(M(\theta^b)) \equiv \inf_{\mu \in M(\theta^b)} \sup_{\eta \in H_\mu} L_n(\mu,\eta)\,.
\end{equation}

\bigskip

 \centerline{\it \large \sc [Procedure 2: Confidence sets for subvectors]}

\begin{enumerate}
\item Draw a sample $\{\theta^1,\ldots,\theta^B\}$ from the quasi-posterior distribution $\Pi_n$ in (\ref{e:posterior}).
\item Calculate the $(1-\alpha)$ quantile of $\big\{ PL_n(M(\theta^b)) : b = 1,\ldots,B \big\}$; call it $\zeta_{n,\alpha}^{mc,p}$.
\item Our 100$\alpha$\% confidence set for $M_I$ is then:
\begin{equation} \label{e:cs:subvec}
 \wh M_\alpha = \Big\{  \mu \in M : \sup_{\eta \in H_\mu} L_n(\mu,\eta) \geq \zeta_{n,\alpha}^{mc,p} \Big\} \,.
\end{equation}

\end{enumerate}

By forming  $\wh M_\alpha$ in terms of the profile criterion we avoid having to do an exhaustive grid search over $\Theta$. An additional computational advantage is that the subvectors of the draws, say $\{\mu^1,\ldots,\mu^B\}$, concentrate around $M_I$, thereby indicating the region in $M$ over which to search.

\bigskip

\begin{remark}\label{rmk:subvec}
Recall the definition of the QLR $Q_n$ in (\ref{e:qlr}), we define the profile QLR for the set $M(\theta^b)$ analogously as
\begin{equation}\label{PQLR-set-b}
PQ_n(M(\theta^b)) \equiv 2n [L_n(\hat \theta) - PL_n(M(\theta^b))] \;=\;  \sup_{\mu \in M(\theta^b)} \inf_{\eta \in H_\mu} Q_n (\mu, \eta)\,.
\end{equation}
Let $\xi_{n,\alpha}^{mc,p}$ denote the $\alpha$ quantile of the profile QLR draws $\big\{PQ_n(M(\theta^b)) : b=1,\ldots,B\big\}$. The confidence set:
\[
 \wh M_\alpha' = \Big\{ \mu \in M : \inf_{\eta \in H_\mu} Q_n(\mu,\eta) \leq \xi_{n,\alpha}^{mc,p}\Big\}
\]
is equivalent to $\wh M_\alpha$ because $\sup_{\eta \in H_\mu} L_n(\mu,\eta) \geq \zeta_{n,\alpha}^{mc,p}$ if and only if $\inf_{\eta \in H_\mu} Q_n(\mu,\eta) \leq \xi_{n,\alpha}^{mc,p}$.
\end{remark}

Our Procedure 2 and Remark \ref{rmk:subvec} above are \emph{different} from taking quantiles of the MC parameter draws. A popular percentile CS (denoted as $\wh{M}_{\alpha}^{perc}$) for a scalar subvector $\mu$ is computed by taking the upper and lower $100(1-\alpha)/2$ percentiles of $\{\mu^1,\ldots,\mu^B\}$. For point-identified regular models with $\sqrt n$-consistent and asymptotically normal parameters $\theta$, this approach is known to be valid for correctly-specified likelihood models in the standard Bayesian literature and its validity for criterion-based models satisfying a generalized information equality has been established by \cite{CH}. However, in partially identified models this approach is no longer valid and under-covers, as evidenced in the simulation results below. 

The following result presents high-level conditions under which any 100$\alpha$\% criterion-based CS for $M_I$ is asymptotically valid. A similar statement appears in \cite{RomanoShaikh}.

\begin{lemma}\label{l:basic:profile}
Let (i) $\sup_{\mu \in M_I} \inf_{\eta \in H_\mu} Q_n(\mu,\eta) \rightsquigarrow W$ where $W$ is a random variable for which $F_W$ is continuous at $w_\alpha$, and (ii) $(w_{n,\alpha})_{n \in \mb N}$ be a sequence of random variables such that $w_{n,\alpha} \geq w_\alpha + o_\p(1)$. Define:
\[
 \wh M_\alpha =\Big\{ \mu \in M : \inf_{\eta \in H_\mu} Q_n(\mu,\eta)  \leq  w_{n,\alpha}\Big\}\,.
\]
Then: $\liminf_{n \to \infty} \mb P(M_I \subseteq \wh M_\alpha) \geq \alpha$. Moreover, if condition (ii) is replaced by the condition $w_{n,\alpha} = w_\alpha + o_\p(1)$, then: $\lim_{n \to \infty} \mb P(M_I \subseteq \wh M_\alpha) = \alpha$.
\end{lemma}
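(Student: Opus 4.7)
The plan is to reduce the coverage event $\{M_I \subseteq \wh M_\alpha\}$ to a single inequality involving the statistic $T_n := \sup_{\mu \in M_I} \inf_{\eta \in H_\mu} Q_n(\mu,\eta)$ whose limit law is given in (i). By the very definition of $\wh M_\alpha$, the inclusion $M_I \subseteq \wh M_\alpha$ is equivalent to the requirement that $\inf_{\eta \in H_\mu} Q_n(\mu,\eta) \leq w_{n,\alpha}$ hold for \emph{every} $\mu \in M_I$, which in turn is equivalent to $T_n \leq w_{n,\alpha}$. Hence $\mb P(M_I \subseteq \wh M_\alpha) = \mb P(T_n \leq w_{n,\alpha})$, and the problem is reduced to a Slutsky-type comparison between the distribution of $T_n$ and the random cutoff $w_{n,\alpha}$.

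For the first assertion, I would use the one-sided control $w_{n,\alpha} \geq w_\alpha + o_\p(1)$. For any $\epsilon > 0$, the set inclusion $\{T_n > w_{n,\alpha}\} \subseteq \{T_n > w_\alpha - \epsilon\} \cup \{w_{n,\alpha} < w_\alpha - \epsilon\}$ gives
\[
\mb P(T_n > w_{n,\alpha}) \leq \mb P(T_n > w_\alpha - \epsilon) + \mb P(w_{n,\alpha} < w_\alpha - \epsilon).
\]
The second summand vanishes as $n \to \infty$ by (ii), while the Portmanteau theorem applied to (i) yields $\limsup_n \mb P(T_n > w_\alpha - \epsilon) \leq 1 - F_W(w_\alpha - \epsilon)$. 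Letting $\epsilon \downarrow 0$ and invoking continuity of $F_W$ at $w_\alpha$ (so that $F_W(w_\alpha) = \alpha$), I obtain $\limsup_n \mb P(T_n > w_{n,\alpha}) \leq 1 - \alpha$, i.e., $\liminf_n \mb P(M_I \subseteq \wh M_\alpha) \geq \alpha$.

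For the refinement, the two-sided statement $w_{n,\alpha} = w_\alpha + o_\p(1)$ allows a symmetric upper bound: for any $\epsilon > 0$,
\[
\mb P(T_n \leq w_{n,\alpha}) \leq \mb P(T_n \leq w_\alpha + \epsilon) + \mb P(w_{n,\alpha} > w_\alpha + \epsilon),
\]
whose right-hand side has limsup at most $F_W(w_\alpha + \epsilon)$, and this tends to $F_W(w_\alpha) = \alpha$ as $\epsilon \downarrow 0$, again by the continuity assumption. Combined with the lower bound from the first part, this delivers $\lim_n \mb P(M_I \subseteq \wh M_\alpha) = \alpha$.

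I do not anticipate a real obstacle: the argument is structurally identical to that for Lemma \ref{l:basic}, and the only point requiring a moment's care is the opening reduction, where one must observe that the universal quantifier over $\mu \in M_I$ converts cleanly into a supremum (this works because the profile inequality $\inf_{\eta \in H_\mu} Q_n(\mu,\eta) \leq c$ is stable under taking suprema in $\mu$). Once $\mb P(M_I \subseteq \wh M_\alpha)$ has been rewritten in terms of $T_n$, the rest is a routine Portmanteau/Slutsky exercise.
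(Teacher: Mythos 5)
Your proposal is correct and takes essentially the same route as the paper, which proves this lemma by exactly the argument you give for Lemma \ref{l:basic}: rewrite the coverage event as $\{\sup_{\mu \in M_I}\inf_{\eta \in H_\mu} Q_n(\mu,\eta) \le w_{n,\alpha}\}$ and run the $\epsilon$/Portmanteau--Slutsky comparison with $w_\alpha$, using continuity of $F_W$ at $w_\alpha$. One cosmetic point: the Portmanteau step actually yields $\limsup_n \p(T_n > w_\alpha - \epsilon) \le \p(W \ge w_\alpha - \epsilon)$ rather than $1-F_W(w_\alpha-\epsilon)$, but this changes nothing once you let $\epsilon \downarrow 0$ and use continuity at $w_\alpha$.
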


Our MC CSs for $M_I$ are shown to be valid by verifying parts (i) and (ii) with $w_{n,\alpha} = \xi_{n,\alpha}^{mc,p}$. To verify part (ii), we shall derive a new BvM type result for the quasi-posterior of the profile QLR under loss of identifiability.

\subsection{A simple but slightly conservative CS for $M_I$ of scalar subvectors}

For a class of partially identified models with one-dimensional subvectors of interest, we now propose another CS $\wh M_\alpha^\chi$ which is extremely simple to construct. This new CS for $M_I$ is slightly conservative (whereas $\wh M_\alpha$ could be asymptotically exact), but its coverage is much less conservative than that of the projection-based CS $\wh {M}_{\alpha}^{proj}$.

\bigskip
\bigskip

 \centerline{\it \large \sc [Procedure 3: Simple conservative CSs for scalar subvectors]}
\begin{enumerate}
\item Calculate a maximizer $\hat \theta$ for which $L_n(\hat \theta) \geq \sup_{\theta \in \Theta} L_n(\theta) + o_\p(n^{-1})$.
\item Our 100$\alpha$\% confidence set for $M_I \subset \mb R$ is then:
\begin{equation} \label{e:mchi}
 \wh M_\alpha^\chi = \Big\{ \mu \in M : \inf_{\eta \in H_\mu} Q_n(\mu,\eta) \leq \chi^2_{1,\alpha} \Big\}
\end{equation}
where $Q_n$ is the QLR in (\ref{e:qlr}) and $\chi^2_{1,\alpha}$ denotes the $\alpha$ quantile of the $\chi^2_1$ distribution.
\end{enumerate}

Procedure 3 above is justified when the limit distribution of the profile QLR for $M_I$ is stochastically dominated by the $\chi^2_1$ distribution (i.e., $F_W (z)\geq F_{\chi^2_1} (z)$ for all $z \geq 0 $ in Lemma \ref{l:basic:profile}). This allows for computationally simple construction using repeated evaluations on a scalar grid.
Unlike $\wh M_\alpha$, the CS $\wh M_\alpha^\chi$ for $M_I$ is typically asymptotically conservative and is only valid for scalar functions of $\Theta_I$ (see Section \ref{s:mchi}). Nevertheless, the CS $\wh M_\alpha^\chi$ is asymptotically exact when $M_I$ happens to be a singleton belonging to the interior of $M$, and, for confidence levels of $\alpha \geq 0.85$, its degree of conservativeness for the set $M_I$ is negligible (see Section \ref{s:mchi}). It is extremely simple to implement and performs very favorably in simulations. As a sensitivity check in empirical estimation of a complicated structural model, one could report the conventional CS based on a $t$-statistic (that is valid under point identification only) as well as our CS $\wh M_\alpha^\chi$ (that remains valid under partial identification); see Section \ref{sec-empirical}.

\section{Simulation Evidence and Empirical Applications}\label{sec-empirical0}

This section presents simulation evidence and empirical applications to demonstrate the good performances of our new procedures for general possibly partially identified models. See Appendix \ref{a:mc} for implementation details.

\subsection{Simulation evidence}

In this subsection we investigate the finite-sample behavior of our proposed CSs in two leading examples of partially identified models: missing data and entry game with correlated payoff shocks. Both have been studied in the existing literature as leading examples of partially-identified moment \emph{inequality} models; we instead use them as examples of likelihood and moment {\it equality} models.

 We use samples of size $n = 100$, $250$, $500$, and $1000$. For each sample, we calculate the posterior quantile of the QLR or profile QLR statistic using $B=10000$ draws from an adaptive SMC algorithm (see Appendix \ref{s:smc} for a description of the algorithm).

\subsubsection{Example 1: missing data}\label{s:missing}

We first consider the simple but insightful missing data example. Suppose we observe a random sample $\{(D_i,Y_iD_i)\}_{i=1}^n$ where both the outcome variable $Y_i$ and the selection variable $D_i$ take values in $\{0,1\}$. The parameter of interest is the true mean $\mu_0 = \mb E[Y_i]$. Without further assumptions, $\mu_0$ is not point identified when $\Pr(D_i = 0) > 0$ as we only observe $Y_i$ when $D_i=1$.

Denote the true probabilities of observing $(D_i,Y_iD_i) = (1,1)$, $(0,0)$ and $(1,0)$ by $\tilde \gamma_{11}$, $\tilde \gamma_{00}$, and $\tilde \gamma_{10} = 1-\tilde \gamma_{11} - \tilde \gamma_{00}$ respectively. We view $\tilde \gamma_{00}$ and $\tilde \gamma_{11}$ as true {\it reduced-form parameters} that are consistently estimable. The reduced-form parameters are functions of the structural parameter $\theta = (\mu, \eta_1, \eta_2)$ where $\mu = \mb E[Y_i]$,  $\eta_1 = \Pr(Y_i = 1|D_i =0)$, and $\eta_2 = \Pr(D_i = 1)$.
Under this model parameterization, $\theta$ is related to the reduced form parameters via $\tilde \gamma_{00} (\theta ) = 1-\eta_2$ and $\tilde \gamma_{11} (\theta ) = \mu - \eta_1 (1- \eta_2)$.
The parameter space $\Theta$ for $\theta$ is defined as:
\begin{equation}\label{e:theta:md}
 \Theta = \{ (\mu , \eta_1 , \eta_2) \in \mb [0,1]^3 : 0 \leq \mu - \eta_1(1-\eta_2) \leq \eta_2  \}\,.
\end{equation}
The identified set for $\theta$ is:
\begin{equation}\label{e:thetaI:md}
 \Theta_I = \{ (\mu,\eta_1,\eta_2) \in \Theta : \tilde \gamma_{00} = 1-\eta_2, \tilde \gamma_{11} = \mu - \eta_1(1-\eta_2) \}.
\end{equation}
Here, $\eta_2$ is point-identified but only an affine combination of $\mu$ and $\eta_1$ are identified.
The identified set for $\mu = E[Y_i]$ is:
\[
M_I = [\tilde \gamma_{11},\tilde \gamma_{11}+\tilde \gamma_{00}]
\]
and the identified set for the nuisance parameter $\eta_1$ is $[0,1]$.

We set the true values of the parameters to be $\mu = 0.5$, $\eta_1 = 0.5$, and take $\eta_2 = 1-c/\sqrt n$ for $c = 0,1,2$ to cover both partially-identified but ``drifting-to-point-identification'' ($c = 1,2$) and point-identified $(c = 0)$ cases. We first implement the procedures using a likelihood criterion and a flat prior on $\Theta$. The likelihood function of $(D_i,Y_iD_i)=(d,yd)$ is
\begin{align*}
 p_\theta(d,yd) & = [\tilde \gamma_{11}(\theta)]^{yd} [1-\tilde \gamma_{11}(\theta)-\tilde \gamma_{00}(\theta)]^{d - y d} [\tilde \gamma_{00}(\theta)]^{1-d}\,.
\end{align*}
In Appendix \ref{a:mc} we present and discuss additional results for a likelihood criterion with a curved prior and a continuously-updated GMM criterion based on the moments $E [ \ind\{D_i = 0 \} - \tilde \gamma_{00} (\theta)] = 0$ and $E [ \ind\{(D_i,Y_iD_i) = (1,1) \} - \tilde \gamma_{11} (\theta) ]  = 0$ with a flat prior (this GMM case may be interpreted as a moment inequality model with $\eta_1(1-\eta_2)$ playing the role of a slackness parameter).

We implement the SMC algorithm as described in Appendix \ref{ax:smc:ex1}. To illustrate sampling via the SMC algorithm and the resulting posterior of the QLR, Figure \ref{f:ex1-plots} displays histograms of the draws for $\mu$, $\eta_1$ and $\eta_2$ for one run of the adaptive SMC algorithm for a sample of size $1000$ with $\eta_2 = 0.8$. Here $\mu$ is partially identified with $M_I = [0.4,0.6]$. The histograms in Figure \ref{f:ex1-plots} show that the draws for $\mu$ and $\eta_1$ are both approximately flat across their identified sets. In contrast, the draws for $\eta_2$, which is point identified, are approximately normally distributed and centered at the MLE. The Q-Q plot in Figure \ref{f:ex1-plots} shows that the quantiles of $Q_n(\theta)$ computed from the draws are very close to the quantiles of a $\chi^2_2$ distribution, as predicted by our theoretical results below.

\begin{figure}
\begin{subfigure}{.5\textwidth}
  \centering
  \includegraphics[width=.8\linewidth]{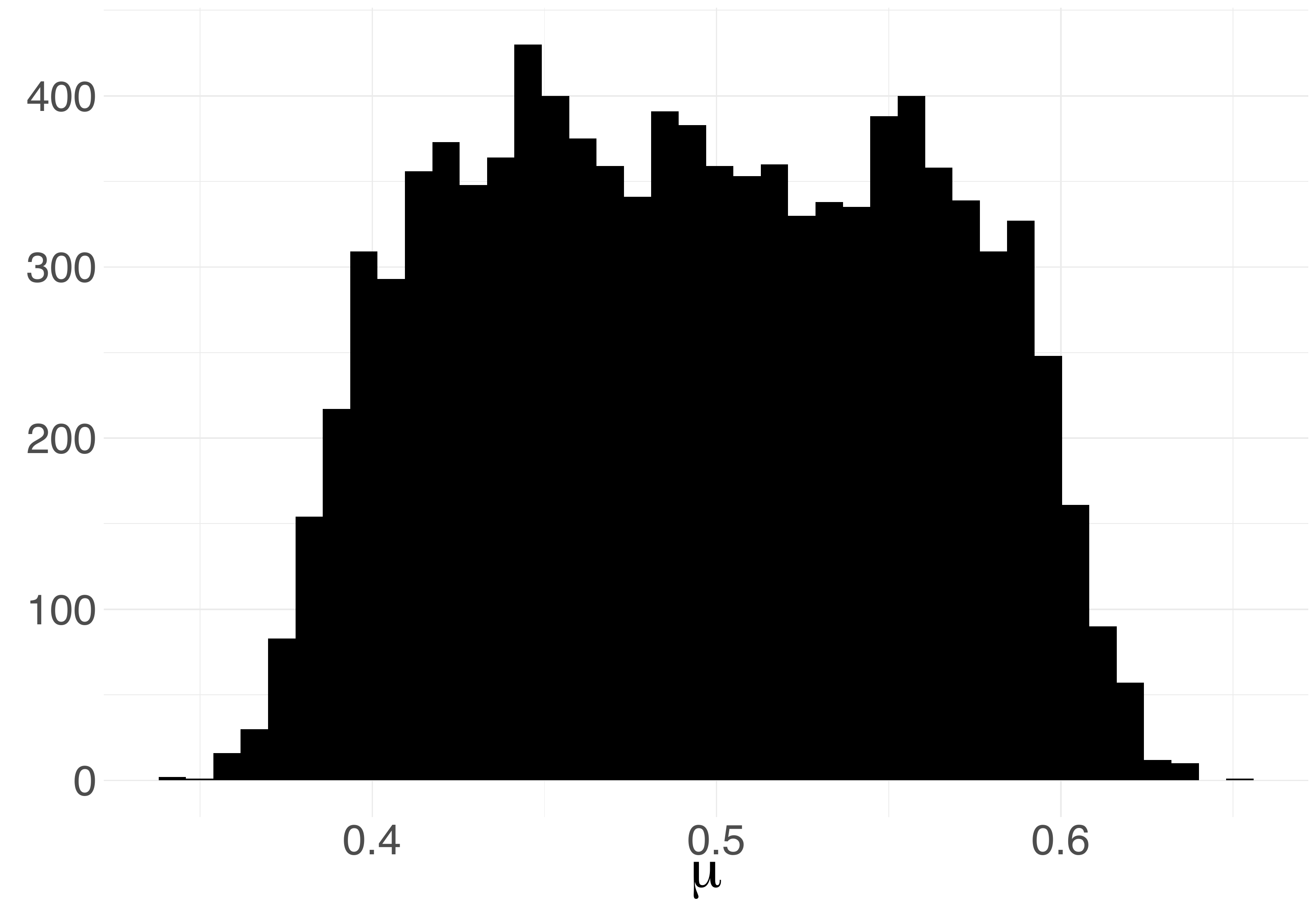}
  \end{subfigure}%
\begin{subfigure}{.5\textwidth}
  \centering
  \includegraphics[width=.8\linewidth]{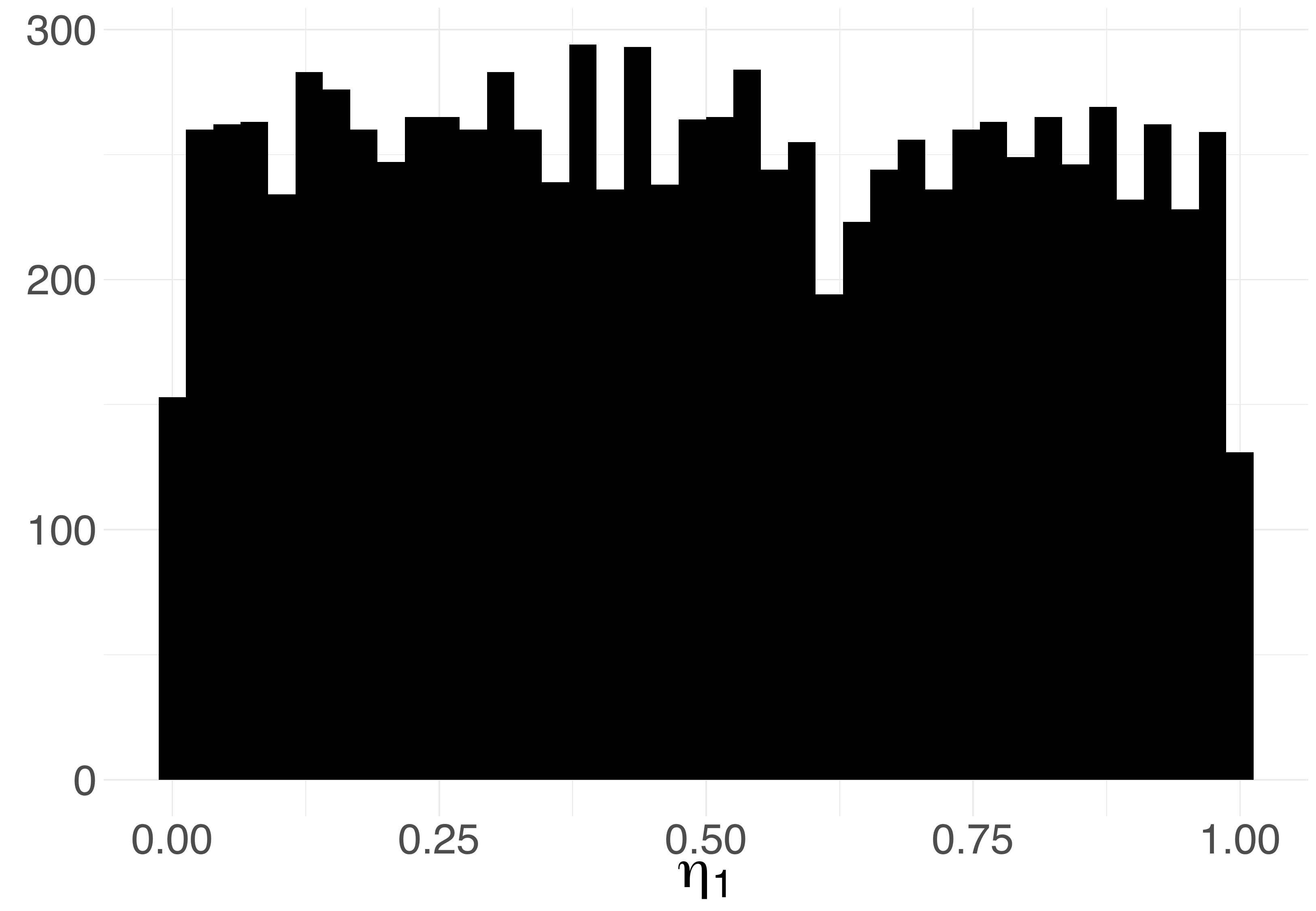}
\end{subfigure}
\begin{subfigure}{.5\textwidth}
  \centering
  \includegraphics[width=.8\linewidth]{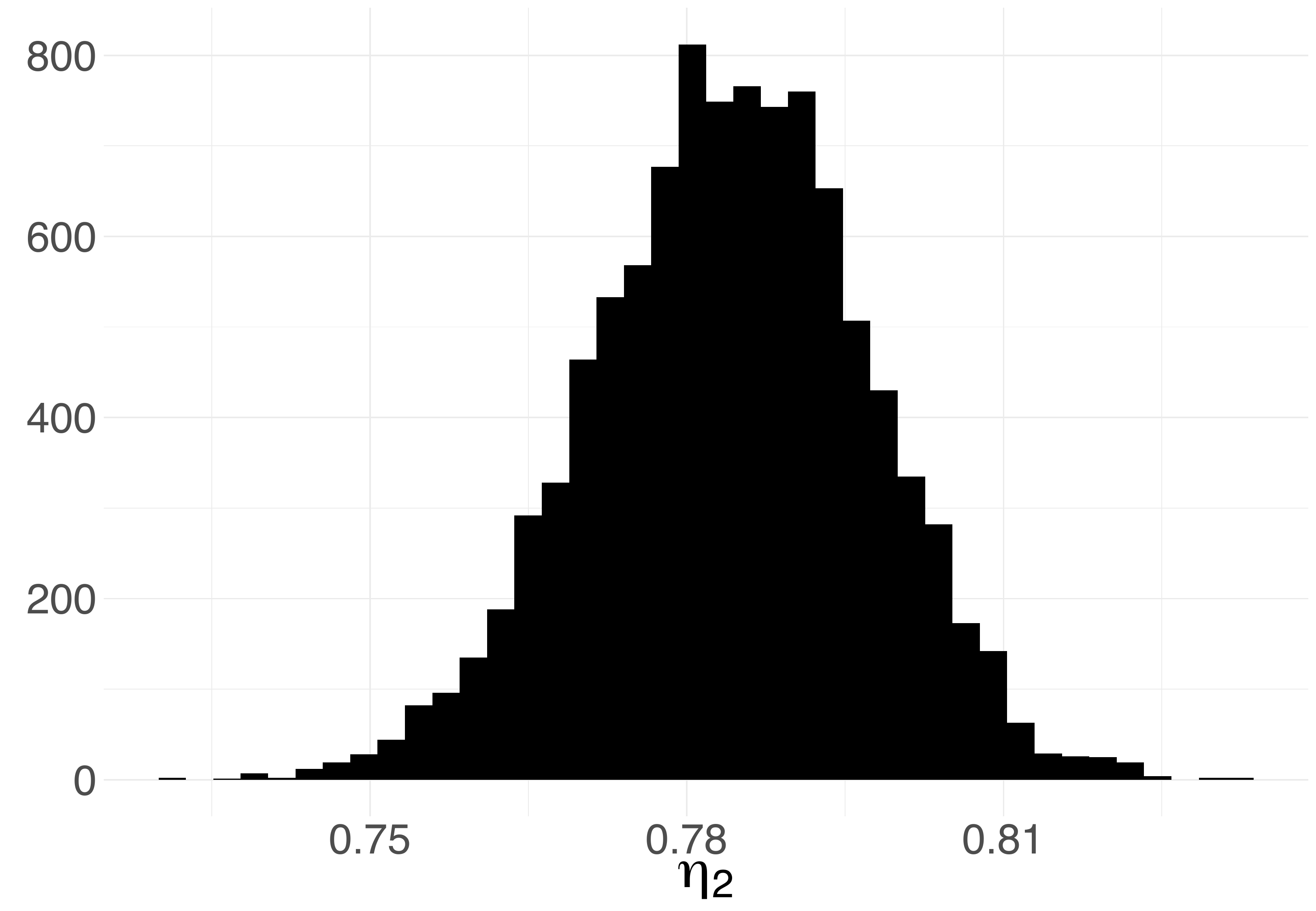}
  \end{subfigure}%
\begin{subfigure}{.5\textwidth}
  \centering
  \includegraphics[width=.8\linewidth]{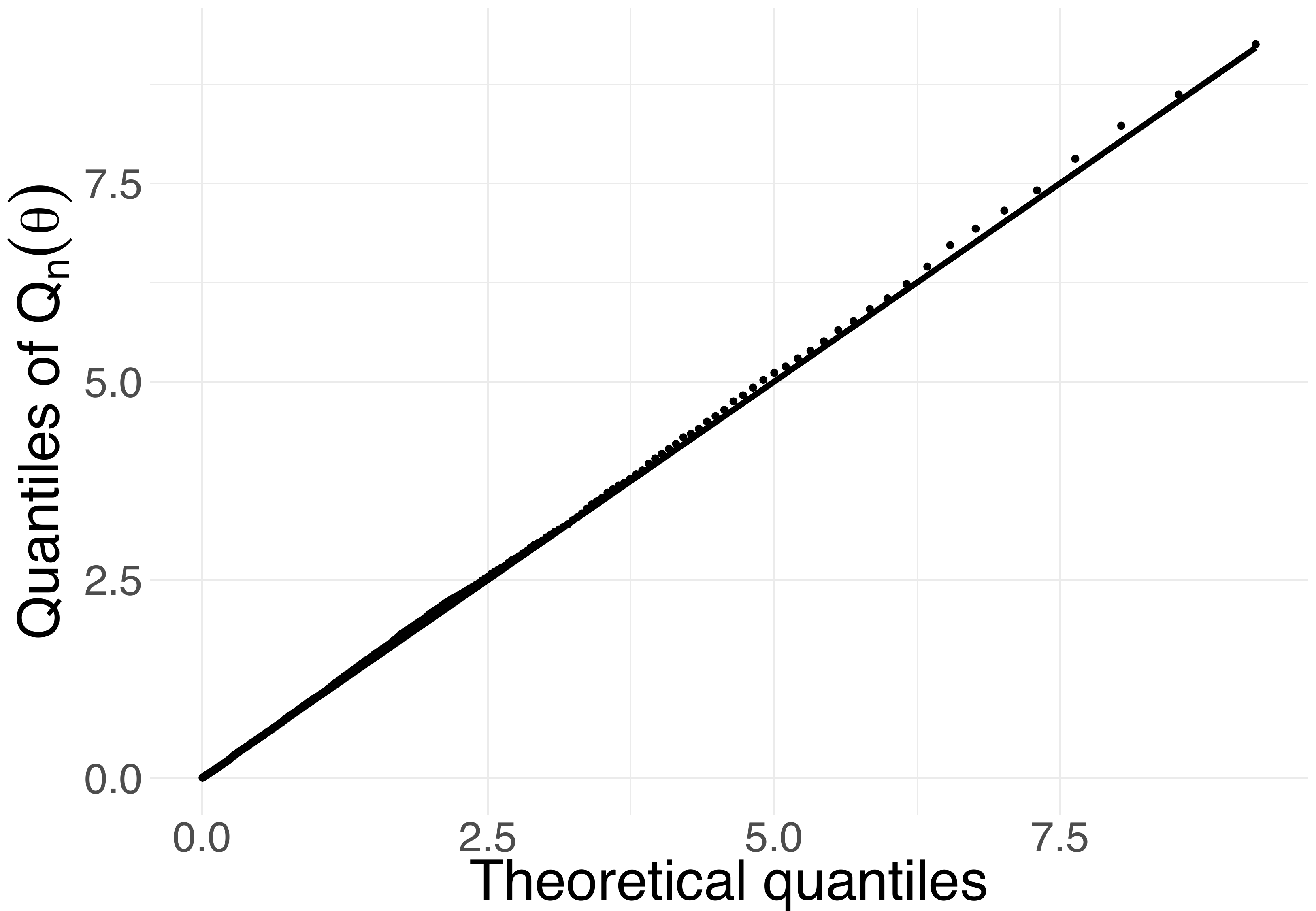}
\end{subfigure}
\centering

\parbox{12cm}{\caption{\small\label{f:ex1-plots} Missing data example: histograms of the SMC draws for $\mu$ (top left), $\eta_1$ (top right), and $\eta_2$ (bottom left) and Q-Q plot of $Q_n(\theta)$ computed from the draws against $\chi^2_2$ quantiles (bottom right) for a sample of size $n = 1000$ with $\eta_2 = 0.8$. The identified sets for $\mu$ and $\eta_1$ are $[0.4,0.6]$ and $[0,1]$, respectively.} }
\end{figure}

\begin{sidewaystable}[p]{\begin{center} \footnotesize
\begin{tabular}{|c|cccccc|cccccc|cccccc|} \hline
	& \multicolumn{6}{c|}{$\eta_2 = 1-\frac{2}{\sqrt n}$} & \multicolumn{6}{c|}{$\eta_2 = 1-\frac{1}{\sqrt n}$}& \multicolumn{6}{c|}{$\eta_2 = 1$ (Point ID)}   \\
	& \multicolumn{2}{c}{0.90} & \multicolumn{2}{c}{0.95} & \multicolumn{2}{c|}{0.99} & \multicolumn{2}{c}{0.90} & \multicolumn{2}{c}{0.95} & \multicolumn{2}{c|}{0.99}  & \multicolumn{2}{c}{0.90} & \multicolumn{2}{c}{0.95} & \multicolumn{2}{c|}{0.99}  \\ \hline
	& & \multicolumn{16}{c}{$\widehat{\Theta}_{\alpha}$ (Procedure 1)} & \\
100     &  .910  & ---   &  .957  & ---   &  .994  & ---   &  .903  & ---   &  .953  & ---   &  .993  & ---   &  .989  & ---   &  .997  & ---   &  1.000  & ---  \\
250     &  .901  & ---   &  .947  & ---   &  .991  & ---   &  .912  & ---   &  .955  & ---   &  .992  & ---   &  .992  & ---   &  .997  & ---   &  1.000  & ---  \\
500     &  .913  & ---   &  .956  & ---   &  .991  & ---   &  .908  & ---   &  .957  & ---   &  .991  & ---   &  .995  & ---   &  .997  & ---   &  \phantom{0}.999  & ---  \\
1000    &  .910  & ---   &  .958  & ---   &  .992  & ---   &  .911  & ---   &  .958  & ---   &  .994  & ---   &  .997  & ---   &  .999  & ---   &  1.000  & ---  \\
	& & \multicolumn{16}{c}{$\widehat{M}_{\alpha}$ (Procedure 2)} & \\
100     &  .920  & [$ .32 ,\! .68 $]  &  .969  & [$ .30 ,\! .70 $]  &  .997  & [$ .27 ,\! .73 $]  &  .918  & [$ .37 ,\! .63 $]  &  .964  & [$ .35 ,\! .65 $]  &  .994  & [$ .32 ,\! .68 $]  &  .911  & [$ .42 ,\! .59 $]  &  .958  & [$ .40 ,\! .60 $]  &  \phantom{0}.990  & [$ .37 ,\! .63 $] \\
250     &  .917  & [$ .39 ,\! .61 $]  &  .961  & [$ .38 ,\! .62 $]  &  .992  & [$ .36 ,\! .64 $]  &  .920  & [$ .42 ,\! .58 $]  &  .963  & [$ .41 ,\! .59 $]  &  .991  & [$ .39 ,\! .61 $]  &  .915  & [$ .45 ,\! .55 $]  &  .959  & [$ .44 ,\! .56 $]  &  \phantom{0}.991  & [$ .42 ,\! .58 $] \\
500     &  .914  & [$ .42 ,\! .58 $]  &  .961  & [$ .41 ,\! .59 $]  &  .993  & [$ .40 ,\! .60 $]  &  .914  & [$ .44 ,\! .56 $]  &  .958  & [$ .43 ,\! .57 $]  &  .992  & [$ .42 ,\! .58 $]  &  .916  & [$ .46 ,\! .54 $]  &  .959  & [$ .46 ,\! .54 $]  &  \phantom{0}.990  & [$ .44 ,\! .56 $] \\
1000    &  .917  & [$ .44 ,\! .56 $]  &  .956  & [$ .44 ,\! .56 $]  &  .993  & [$ .43 ,\! .57 $]  &  .914  & [$ .46 ,\! .54 $]  &  .955  & [$ .45 ,\! .55 $]  &  .993  & [$ .44 ,\! .56 $]  &  .916  & [$ .47 ,\! .53 $]  &  .959  & [$ .47 ,\! .53 $]  &  \phantom{0}.992  & [$ .46 ,\! .54 $] \\
	& & \multicolumn{16}{c}{$\widehat{M}^{\chi}_{\alpha}$ (Procedure 3)} & \\
100     &  .920  & [$ .32 ,\! .68 $]  &  .952  & [$ .31 ,\! .69 $]  &  .990  & [$ .28 ,\! .72 $]  &  .916  & [$ .37 ,\! .63 $]  &  .946  & [$ .36 ,\! .64 $]  &  .989  & [$ .33 ,\! .67 $]  &  .902  & [$ .42 ,\! .58 $]  &  .937  & [$ .41 ,\! .60 $]  &  \phantom{0}.986  & [$ .38 ,\! .63 $] \\
250     &  .915  & [$ .39 ,\! .61 $]  &  .952  & [$ .38 ,\! .62 $]  &  .990  & [$ .36 ,\! .64 $]  &  .914  & [$ .42 ,\! .58 $]  &  .954  & [$ .41 ,\! .59 $]  &  .990  & [$ .39 ,\! .61 $]  &  .883  & [$ .45 ,\! .55 $]  &  .949  & [$ .44 ,\! .56 $]  &  \phantom{0}.991  & [$ .42 ,\! .58 $] \\
500     &  .894  & [$ .42 ,\! .58 $]  &  .954  & [$ .41 ,\! .59 $]  &  .989  & [$ .40 ,\! .60 $]  &  .906  & [$ .44 ,\! .56 $]  &  .949  & [$ .44 ,\! .56 $]  &  .990  & [$ .42 ,\! .58 $]  &  .899  & [$ .46 ,\! .54 $]  &  .945  & [$ .46 ,\! .54 $]  &  \phantom{0}.988  & [$ .44 ,\! .56 $] \\
1000    &  .909  & [$ .44 ,\! .56 $]  &  .950  & [$ .44 ,\! .56 $]  &  .993  & [$ .43 ,\! .57 $]  &  .904  & [$ .46 ,\! .54 $]  &  .954  & [$ .45 ,\! .55 $]  &  .989  & [$ .45 ,\! .55 $]  &  .906  & [$ .48 ,\! .52 $]  &  .946  & [$ .47 ,\! .53 $]  &  \phantom{0}.991  & [$ .46 ,\! .54 $] \\
	& & \multicolumn{16}{c}{$\widehat{M}^{proj}_{\alpha}$ (Projection)} & \\
100     &  .972  & [$ .30 ,\! .70 $]  &  .990  & [$ .28 ,\! .71 $]  &  .999  & [$ .25 ,\! .75 $]  &  .969  & [$ .35 ,\! .65 $]  &  .989  & [$ .33 ,\! .67 $]  &  .998  & [$ .30 ,\! .70 $]  &  .989  & [$ .37 ,\! .63 $]  &  .997  & [$ .36 ,\! .64 $]  &  1.000  & [$ .33 ,\! .67 $] \\
250     &  .971  & [$ .37 ,\! .63 $]  &  .986  & [$ .36 ,\! .64 $]  &  .998  & [$ .34 ,\! .66 $]  &  .976  & [$ .40 ,\! .60 $]  &  .988  & [$ .39 ,\! .61 $]  &  .998  & [$ .37 ,\! .63 $]  &  .992  & [$ .42 ,\! .58 $]  &  .997  & [$ .41 ,\! .59 $]  &  1.000  & [$ .39 ,\! .61 $] \\
500     &  .972  & [$ .41 ,\! .59 $]  &  .985  & [$ .40 ,\! .60 $]  &  .999  & [$ .39 ,\! .61 $]  &  .972  & [$ .43 ,\! .57 $]  &  .989  & [$ .42 ,\! .58 $]  &  .999  & [$ .41 ,\! .59 $]  &  .995  & [$ .44 ,\! .56 $]  &  .997  & [$ .43 ,\! .57 $]  &  \phantom{0}.999  & [$ .42 ,\! .58 $] \\
1000    &  .973  & [$ .44 ,\! .56 $]  &  .990  & [$ .43 ,\! .57 $]  &  .999  & [$ .42 ,\! .58 $]  &  .973  & [$ .45 ,\! .55 $]  &  .988  & [$ .45 ,\! .55 $]  &  .999  & [$ .44 ,\! .56 $]  &  .997  & [$ .45 ,\! .55 $]  &  .999  & [$ .45 ,\! .55 $]  &  1.000  & [$ .44 ,\! .56 $] \\
	& & \multicolumn{16}{c}{$\widehat{M}^{perc}_{\alpha}$ (Percentile)} & \\
100     &  .416  & [$ .38 ,\! .62 $]  &  .676  & [$ .36 ,\! .64 $]  &  .945  & [$ .32 ,\! .68 $]  &  .661  & [$ .40 ,\! .59 $]  &  .822  & [$ .39 ,\! .61 $]  &  .963  & [$ .35 ,\! .65 $]  &  .896  & [$ .42 ,\! .58 $]  &  .946  & [$ .40 ,\! .60 $]  &  \phantom{0}.989  & [$ .38 ,\! .63 $] \\
250     &  .402  & [$ .42 ,\! .58 $]  &  .669  & [$ .41 ,\! .59 $]  &  .917  & [$ .38 ,\! .62 $]  &  .662  & [$ .44 ,\! .56 $]  &  .822  & [$ .43 ,\! .57 $]  &  .960  & [$ .41 ,\! .59 $]  &  .899  & [$ .45 ,\! .55 $]  &  .950  & [$ .44 ,\! .56 $]  &  \phantom{0}.990  & [$ .42 ,\! .58 $] \\
500     &  .400  & [$ .44 ,\! .56 $]  &  .652  & [$ .43 ,\! .57 $]  &  .914  & [$ .42 ,\! .58 $]  &  .652  & [$ .46 ,\! .54 $]  &  .812  & [$ .45 ,\! .55 $]  &  .955  & [$ .43 ,\! .57 $]  &  .903  & [$ .46 ,\! .54 $]  &  .953  & [$ .46 ,\! .54 $]  &  \phantom{0}.988  & [$ .44 ,\! .56 $] \\
1000    &  .405  & [$ .46 ,\! .54 $]  &  .671  & [$ .45 ,\! .55 $]  &  .917  & [$ .44 ,\! .56 $]  &  .662  & [$ .47 ,\! .53 $]  &  .819  & [$ .46 ,\! .54 $]  &  .953  & [$ .45 ,\! .55 $]  &  .905  & [$ .47 ,\! .53 $]  &  .953  & [$ .47 ,\! .53 $]  &  \phantom{0}.990  & [$ .46 ,\! .54 $] \\
	& & \multicolumn{16}{c}{Comparison with GMS CSs for $\mu$ via moment inequalities} & \\
100     &  .815  & [$ .34 ,\! .66 $]  &  .908  & [$ .32 ,\! .68 $]  &  .981  & [$ .29 ,\! .71 $]  &  .803  & [$ .39 ,\! .61 $]  &  .904  & [$ .37 ,\! .63 $]  &  .980  & [$ .34 ,\! .66 $]  &  .889  & [$ .42 ,\! .58 $]  &  .938  & [$ .40 ,\! .60 $]  &  \phantom{0}.973  & [$ .39 ,\! .62 $] \\
250     &  .798  & [$ .40 ,\! .60 $]  &  .899  & [$ .39 ,\! .61 $]  &  .979  & [$ .36 ,\! .63 $]  &  .811  & [$ .43 ,\! .57 $]  &  .897  & [$ .42 ,\! .58 $]  &  .980  & [$ .40 ,\! .60 $]  &  .896  & [$ .45 ,\! .55 $]  &  .944  & [$ .44 ,\! .56 $]  &  \phantom{0}.981  & [$ .42 ,\! .57 $] \\
500     &  .794  & [$ .43 ,\! .57 $]  &  .898  & [$ .42 ,\! .58 $]  &  .976  & [$ .40 ,\! .60 $]  &  .789  & [$ .45 ,\! .55 $]  &  .892  & [$ .44 ,\! .56 $]  &  .975  & [$ .43 ,\! .57 $]  &  .897  & [$ .46 ,\! .54 $]  &  .948  & [$ .46 ,\! .54 $]  &  \phantom{0}.986  & [$ .45 ,\! .55 $] \\
1000    &  .802  & [$ .45 ,\! .55 $]  &  .900  & [$ .44 ,\! .56 $]  &  .978  & [$ .43 ,\! .57 $]  &  .812  & [$ .46 ,\! .54 $]  &  .900  & [$ .46 ,\! .54 $]  &  .978  & [$ .45 ,\! .55 $]  &  .898  & [$ .47 ,\! .53 $]  &  .949  & [$ .47 ,\! .53 $]  &  \phantom{0}.990  & [$ .46 ,\! .54 $] \\ \hline
			\end{tabular}
			\parbox{14cm}{\caption{\small\label{t:ex1} Missing data example: average coverage probabilities for $\Theta_I$ and $M_I$ and average lower and upper bounds of CSs for $M_I$ across 5000 MC replications. Procedures 1--3, Projection and Percentile are implemented using a likelihood criterion and flat prior.}}
	\end{center}	}	
\end{sidewaystable}

\paragraph{Confidence sets for $\Theta_I$:}

The top panel of Table \ref{t:ex1} displays MC coverage probabilities of $\wh \Theta_\alpha$ for 5000 replications. The MC coverage probability should be equal to its nominal value in large samples when $\eta_2 < 1$ (see Theorem \ref{t:main}). It is perhaps surprising that the nominal and MC coverage probabilities are close even in samples as small as $n = 100$. When $\eta_2 = 1$ the CSs for $\Theta_I$ are conservative, as predicted by our theoretical results (see Theorem \ref{t:main:prime}).

\paragraph{Confidence sets for $M_I$:}

We now consider various CSs for the identified set $M_I$ for $\mu$. We first compute the projection CS $\wh M_\alpha^{proj}$, as defined in (\ref{projection}), for $M_I$. As we can see from Table \ref{t:ex1}, this results in conservative CSs for $M_I$. For example, when $\alpha = 0.90$ the projection CSs cover $M_I$ in around 97\% of repeated samples. As the models with $c = 1,2$ are close to point-identified, one might be tempted to report simple percentile CSs $\wh{M}_{\alpha}^{perc}$ for $M_I$ using \cite{CH} procedure, which is valid under point identification, and taking the upper and lower $100(1-\alpha)/2$ quantiles from of the draws for $\mu$.\footnote{Note that we use exactly the same draws for implementing the percentile CS and procedures 1 and 2. As the SMC algorithm uses a particle approximation to the posterior, in practice we compute posterior quantiles for $\mu$ using the particle weights in a manner similar to (\ref{e:quantile-weighted}).} The results in Table \ref{t:ex1} show that $\wh M_\alpha^{perc}$ has correct coverage when $\mu$ is point identified (i.e. $\eta_2 =1$) but it under-covers when $\mu$ is not point identified. For instance, the coverage probabilities of 90\% CSs for $M_I$ are about 66\% with $c = 1$.

In contrast, our criterion-based procedures 2 and 3 remain valid under partial identification. We show below (see Theorem \ref{t:main:profile}) that the coverage probabilities of our procedure 2 CS $\wh M_\alpha$ (for $M_I$) should be equal to their nominal values $\alpha$ when $n$ is large irrespective of whether the model is partially identified with (i.e. $\eta_2 < 1$) or point identified (i.e. $\eta_2 = 1$). The results in Table \ref{t:ex1} show that this is indeed the case, and that the coverage probabilities for procedure 2 are close to their nominal level even for small values of $n$, irrespective of whether the model is point- or partially-identified. In Section \ref{s:md}, we show that the asymptotic distribution of the profile QLR for $M_I$ is stochastically dominated by the $\chi^2_1$ distribution. Table \ref{t:ex1} also presents results for procedure 3 using $\wh M_\alpha^\chi$ as in (\ref{e:mchi}). As we can see from these tables, the coverage results look remarkably close to their nominal values even for small sample sizes and for all values of $\eta_2$.

Finally, we compare our the length of CSs for $M_I$ using procedures 2 and 3 with the length of CSs for the parameter $\mu$ constructed using the generalized moment selection (GMS) procedure of \cite{AndrewsSoares}. We implement their procedure using the inequalities
\begin{align} \label{e:mom-ineq}
 E[\mu - Y_iD_i] & \geq 0 &
 E[ Y_i D_i + (1-D_i) - \mu] & \geq 0
\end{align}
with their smoothing parameter $\kappa_n = (\log n)^{1/2}$, their GMS function $\varphi_j^{(1)}$, and with critical values computed via a multiplier bootstrap. Of course, GMS CSs are for the parameter $\mu$ rather than the set $M_I$, which is why the coverage for $M_I$ reported in Table \ref{t:ex1} appears lower than nominal under partial identification (GMS CSs are known to be asymptotically valid CSs for $\mu$). Importantly, the average lower and upper bounds of our CSs for $M_I$ constructed using Procedures 2 and 3 are very close to those using GMS, whereas projection-based CSs are, in turn, larger. On the other hand, CSs computed using percentiles of the draws for $\mu$ are narrower.

\subsubsection{Example 2: entry game with correlated payoff shocks}\label{s:game}

We now consider the complete information entry game example described in Table \ref{table:simplegame}. We assume that $(\epsilon_1, \epsilon_2)$, observed by the players, are jointly normally distributed with variance 1 and correlation $\rho$, an important parameter of interest. We also assume that $\Delta_1$ and $\Delta_2$ are both negative and that players play a pure strategy Nash equilibrium.  When $-\beta_j \leq \epsilon_j \leq -\beta_j -\Delta_j$, $j=1,2$, the game has two equilibria: for given values of the epsilons in this region, the model predicts $(1,0)$ {\it and} $(0,1)$. Let $D_{a_1 a_2}$ denote a binary random variable taking the value $1$ if and only if player 1 takes action $a_1$ and player 2 takes action $a_2$. We observe a random sample of $\{(D_{00,i},D_{10,i},D_{01,i},D_{11,i})\}_{i=1}^n$. So the data provides information of four choice probabilities $(P(0,0), P(1,0), P(0,1), P(1,1))$, but there are six parameters that need to be estimated: $\theta = (\beta_1, \beta_2, \Delta_1, \Delta_1, \rho, s)$ where $s \in [0,1]$ is the equilibrium selection probability. The model parameter is partially identified as we have 3 non-redundant choice probabilities from which we need to learn about 6 parameters.

\begin{table}[t]
\begin{center}
  \setlength{\extrarowheight}{2pt}
  \begin{tabular}{*{4}{c|}}
    \multicolumn{2}{c}{} & \multicolumn{2}{c}{Player $2$}\\\cline{3-4}
    \multicolumn{1}{c}{} &  & $0$  & $1$ \\\cline{2-4}
    \multirow{2}*{Player $1$}  & $0$ & $\phantom{\beta_1 + \epsilon_1}(0,0)\phantom{0}$ & $\phantom{\beta_2  + \epsilon_2}(0,\beta_2  + \epsilon_2)\phantom{0}$ \\\cline{2-4}
    & $1$ & $(\beta_1 + \epsilon_1 ,0)$ & $(\beta_1 + \Delta_1 + \epsilon_1,\beta_2 + \Delta_2 + \epsilon_2)$ \\\cline{2-4}
  \end{tabular}
  \parbox{12cm}{\caption{\small\label{table:simplegame}Payoff matrix for the binary entry game. The first entry in each cell is the payoff to player 1 and the second entry is the payoff to player 2. }}
\end{center}
\end{table}

We can link the choice probabilities (reduced-form parameters) to $\theta$ via:
\begin{align*}
 \tilde \gamma_{00}(\theta) :=& Q_\rho(\epsilon_1 \leq -\beta_1; \, \epsilon_2 \leq -\beta_2)\\
 \tilde \gamma_{11}(\theta) :=& Q_\rho(\epsilon_1 \geq -\beta_1 - \Delta_1; \, \epsilon_2 \geq -\beta_2 - \Delta_2 )\\
 \tilde \gamma_{10}(\theta) :=& s\times Q_\rho(-\beta_1 \leq \epsilon_1 \leq -\beta_1 - \Delta_1; \, -\beta_2 \leq \epsilon_2 \leq -\beta_2 - \Delta_2)
 \\ & \quad + Q_\rho( \epsilon_1 \geq - \beta_1; \epsilon_2 \leq - \beta_2) + Q_\rho( \epsilon_1 \geq -\beta_1 - \Delta_1 ; -\beta_2 \leq \epsilon_2 \leq -\beta_2 - \Delta_2)
\end{align*}
and $\tilde \gamma_{01}(\theta) = 1-\tilde \gamma_{00}(\theta) - \tilde \gamma_{11}(\theta) - \tilde \gamma_{10}(\theta)$, where $Q_\rho$ denotes the joint probability distribution of $(\epsilon_1,\epsilon_2)$ indexed by the correlation parameter $\rho$. Let $(\tilde \gamma_{00}, \tilde \gamma_{10}, \tilde \gamma_{01}, \tilde \gamma_{11})$ denote the true choice probabilities $(P(0,0), P(1,0), P(0,1), P(1,1))$.
This naturally suggests a likelihood approach, where the likelihood of $(D_{00,i},D_{10,i},D_{11,i},D_{01,i})=(d_{00},d_{10},d_{11},1-d_{00}-d_{10}-d_{11})$ is:
\[
 p_\theta(d_{00},d_{10},d_{11})  = [\tilde \gamma_{00}(\theta)]^{d_{00}}[\tilde \gamma_{10}(\theta)]^{d_{10}}[\tilde \gamma_{11}(\theta)]^{d_{11}}[1-\tilde \gamma_{00}(\theta)-\tilde \gamma_{10}(\theta)-\tilde \gamma_{11}(\theta)]^{1-d_{00}-d_{10}-d_{11}} \,.
\]
In the simulations, we use a likelihood criterion with parameter space:
\[
\Theta = \{ (\beta_1, \beta_2, \Delta_1, \Delta_2, \rho,s) \in \mb R^6 :  (\beta_1,\beta_2) \in [-1,2]^2, \; (\Delta_1,\Delta_2) \in [-2,0]^2, \; (\rho,s) \in [0,1]^2 \}\,.
\]
We simulate the data using $\beta_1 = \beta_2 = 0.2$, $\Delta_1 = \Delta_2 = -0.5$, $\rho = 0.5$ and $s = 0.5$.

\begin{figure}[t]\centering
\begin{subfigure}{.5\textwidth}
  \centering
  \includegraphics[width=.8\linewidth]{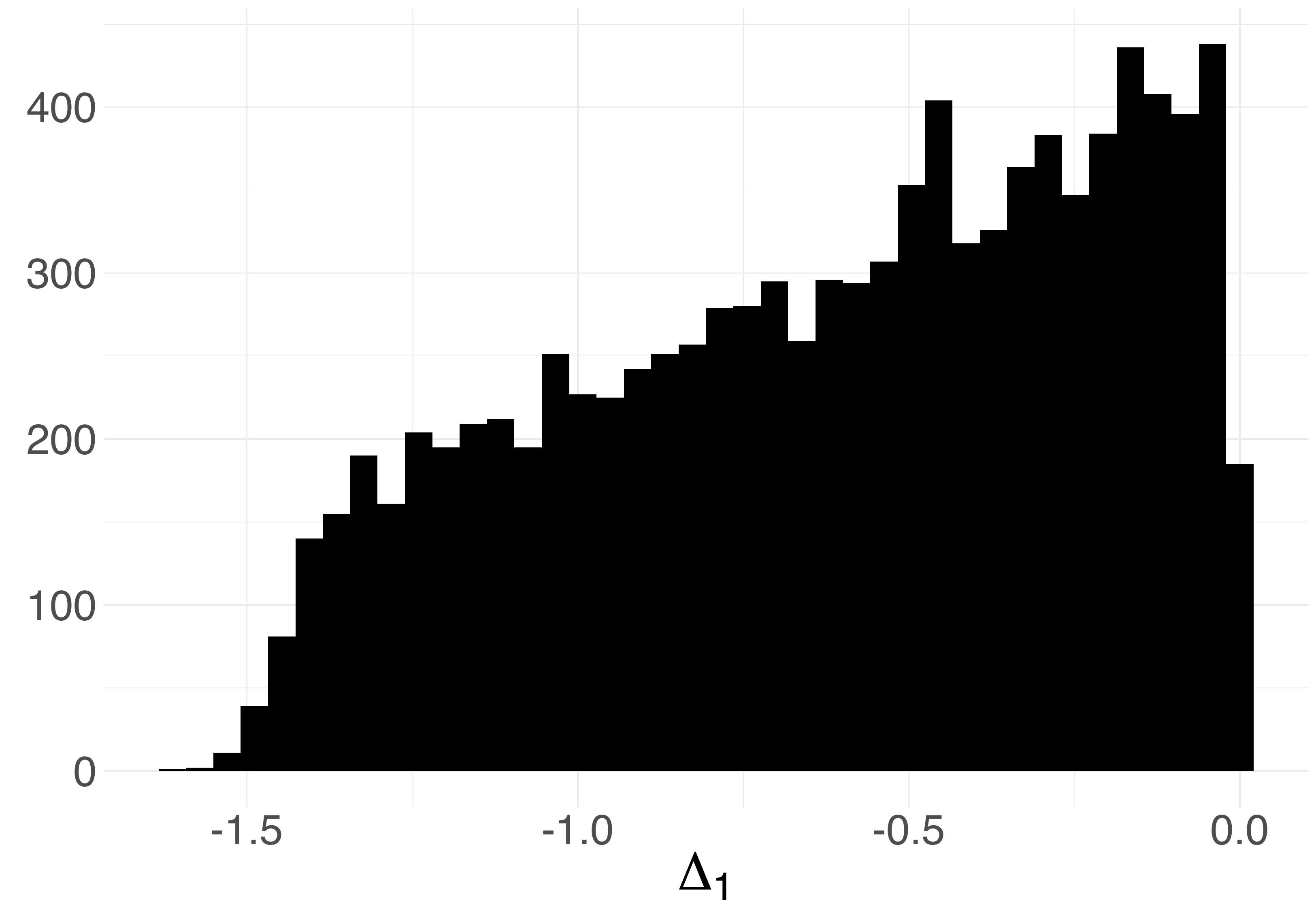}
  \end{subfigure}%
\begin{subfigure}{.5\textwidth}
  \centering
  \includegraphics[width=.8\linewidth]{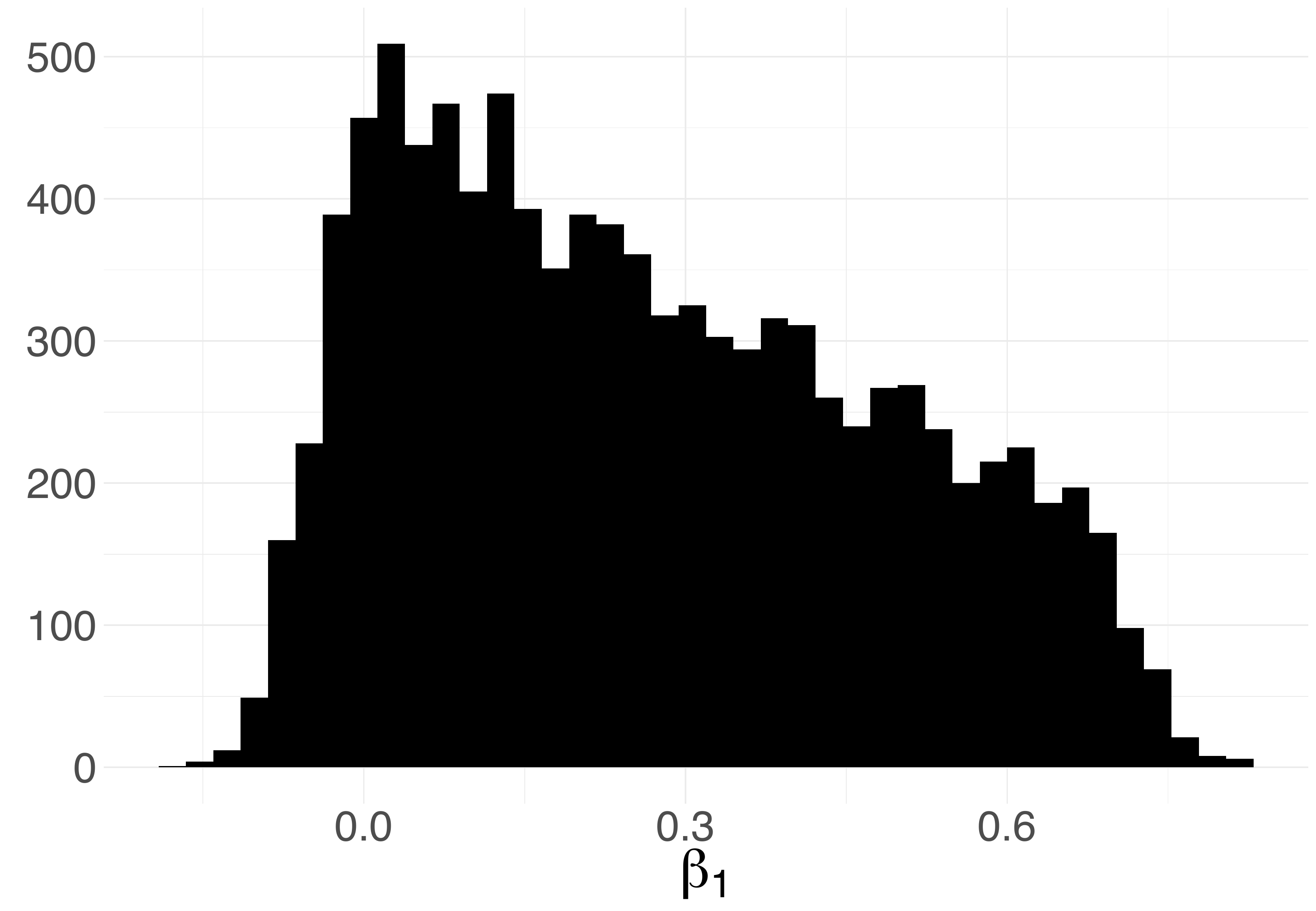}
\end{subfigure}
\begin{subfigure}{.5\textwidth}
  \centering
  \includegraphics[width=.8\linewidth]{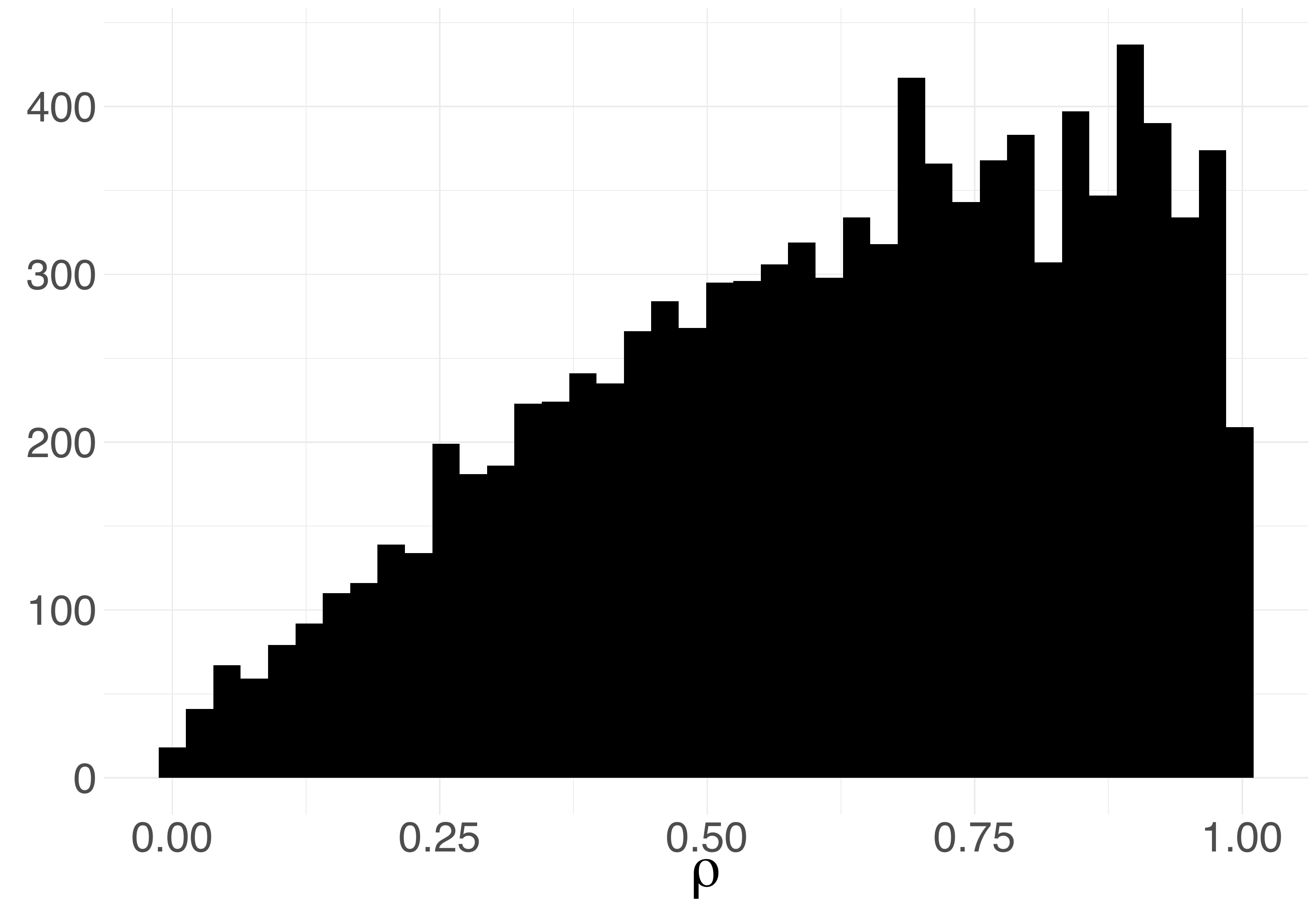}
  \end{subfigure}%
\begin{subfigure}{.5\textwidth}
  \centering
  \includegraphics[width=.8\linewidth]{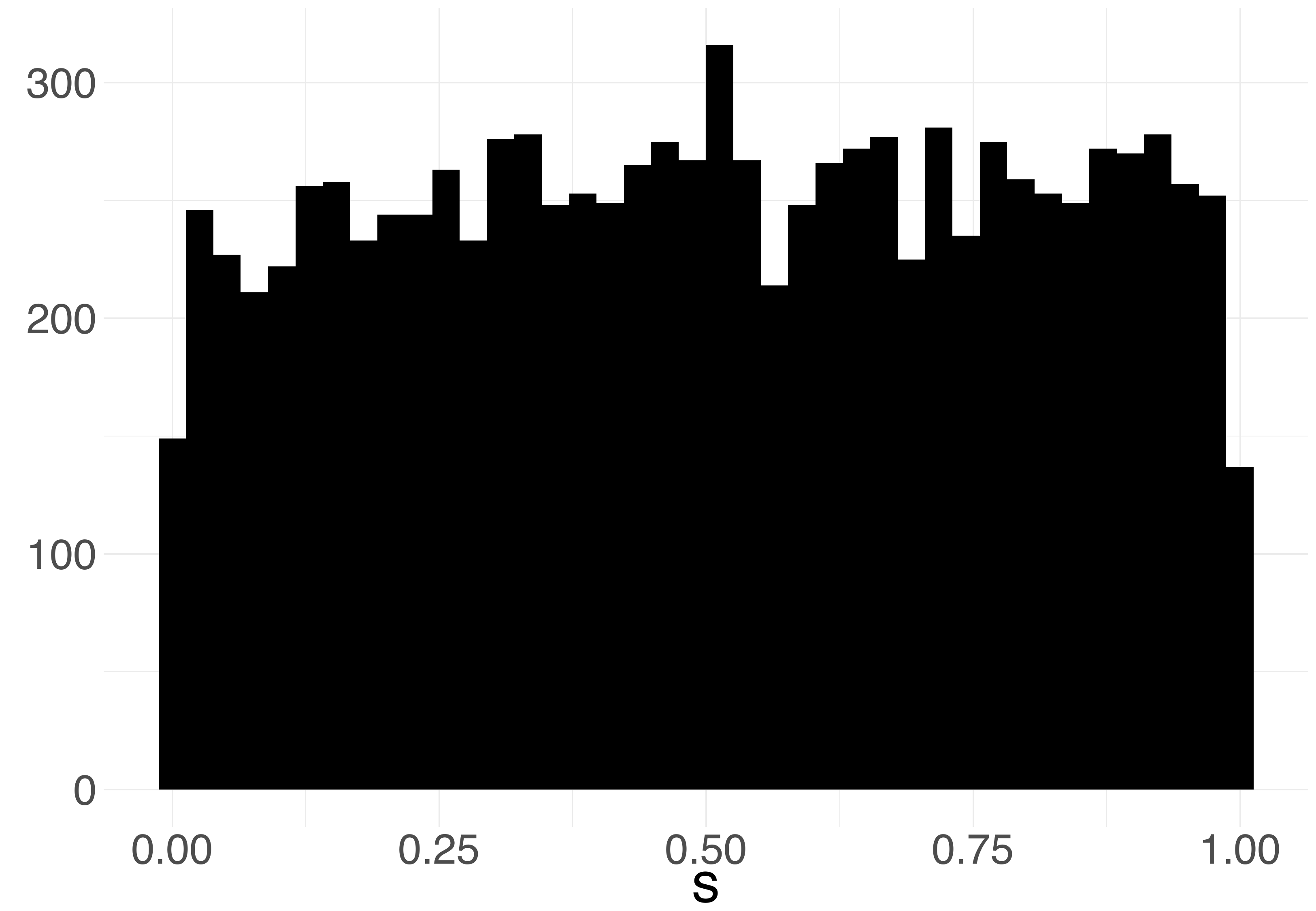}
\end{subfigure}
\begin{subfigure}{.5\textwidth}
  \centering
  \includegraphics[width=.8\linewidth]{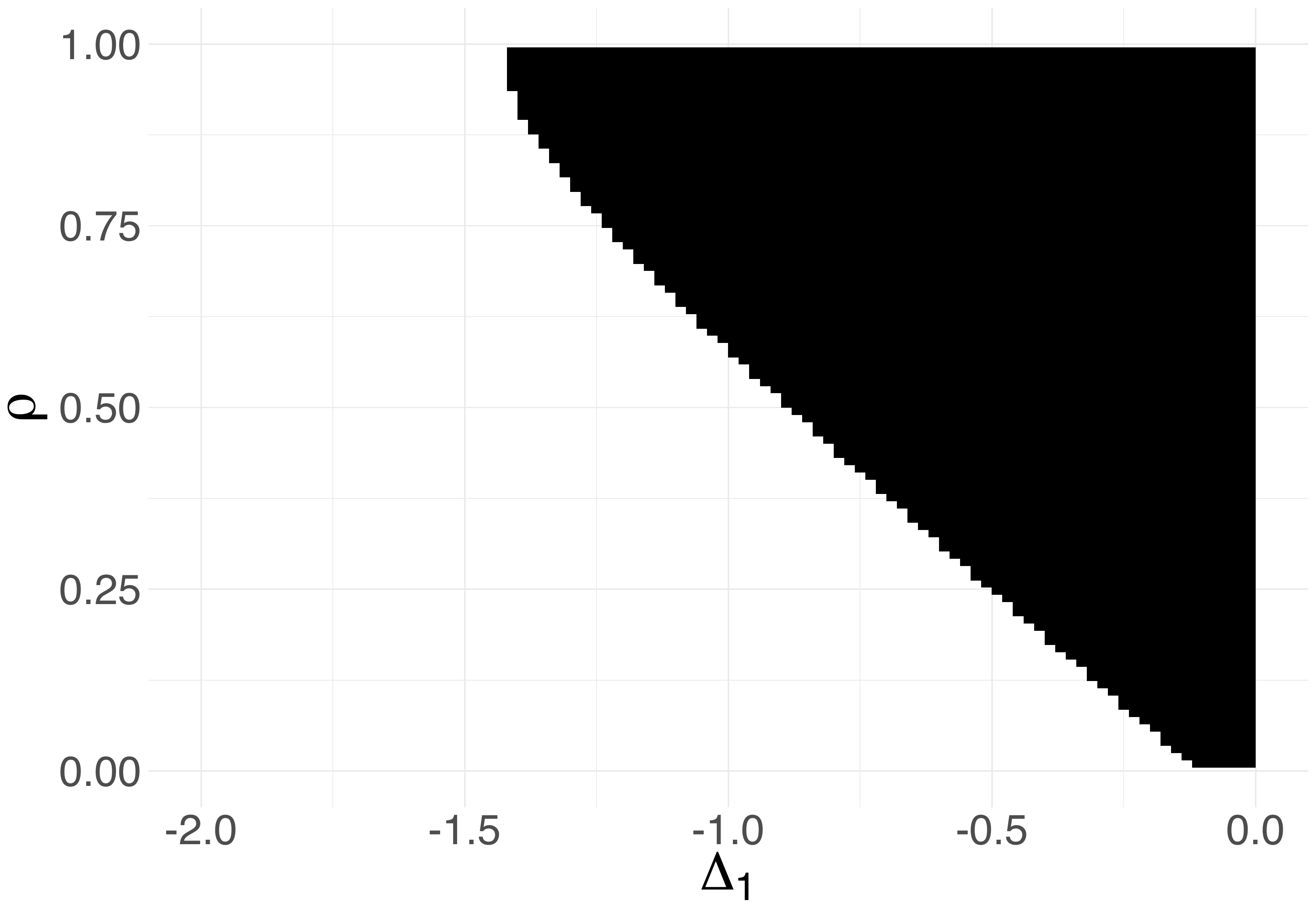}
\end{subfigure}%
\begin{subfigure}{.5\textwidth}
  \centering
  \includegraphics[width=.8\linewidth]{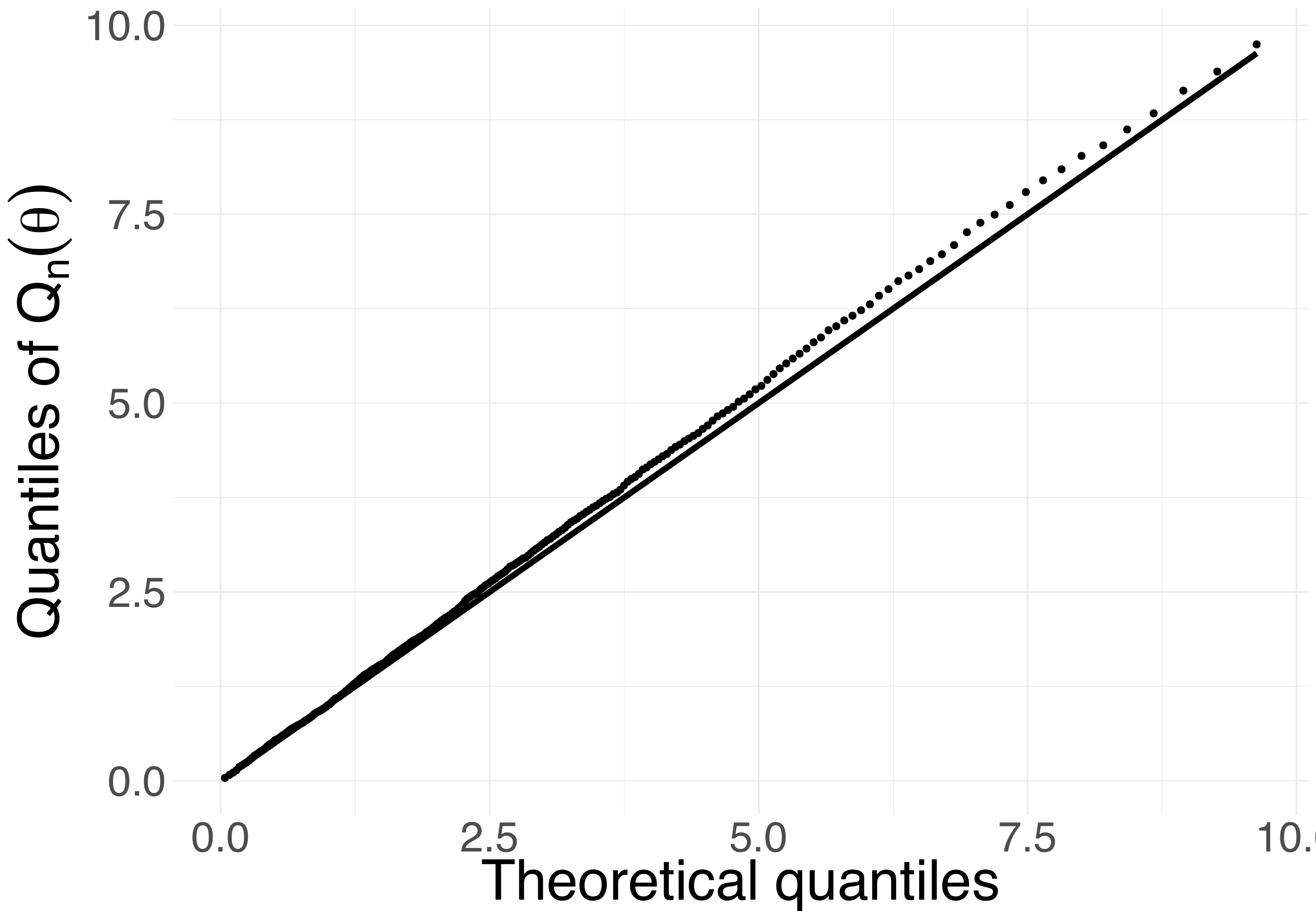}
\end{subfigure}
\centering
\parbox{12cm}{\caption{\small\label{f:ex2-plots} Entry game example: histograms of the SMC draws for $\Delta_1$ (top left), $\beta_1$ (top right) $\rho$ (mid left), and $s$ (mid right), and Q-Q plot of $Q_n(\theta)$ computed from the draws against $\chi^2_3$ quantiles (bottom right) for a sample of size $n = 1000$. The identified set for $(\Delta_1,\rho)$ is also shown (bottom left).} }
\end{figure}

We put a flat prior on $\Theta$ and implement the SMC algorithm as described in Appendix \ref{ax:smc:ex2}. Figure \ref{f:ex2-plots} displays histograms of the marginal draws for $\Delta_1$, $\beta_1$, $s$ and $\rho$ for one run of the SMC algorithm with a sample of size $n = 1000$. The plots for $\Delta_2$ and $\beta_2$ are very similar to those for $\Delta_1$ and $\beta_1$ (which is to be expected as the parameters are symmetric) and are therefore omitted. The draws for $\Delta_1$ and $\beta_1$ are supported on and around their respective identified sets, which are approximately $[-1.42,0]$ and $[-0.05,0.66]$ (the identified sets for $\rho$ and $s$ are $[0,1]$). Note that here the draws for $\Delta_1$, $\beta_1$ and $\rho$ are not flat over their identified sets, in contrast with the draws for $\mu$ and $\eta_1$ in Figure \ref{f:ex1-plots}. To see why, consider the marginal identified set for $(\Delta_1,\rho)$, plotted as the shaded region in Figure \ref{f:ex2-plots}. This plot shows that when $\Delta_1$ is close to the upper bound of its identified set, $(\Delta_1,\rho)$ is in the shaded region for any $\rho \in [0,1]$. However, when $\Delta_1$ is close to the lower bound of its identified set, $(\Delta_1,\rho)$ is only in the shaded region for very large values of $\rho$. This structure of $\Theta_I$, together with the flat prior on $\Theta$, means that the marginal posterior for $\Delta_1$ assigns relatively more mass towards the upper limit of the identified set for $\Delta_1$. Similar logic applies for $\beta_1$ and $\rho$. Figure \ref{f:ex2-plots} also shows that the quantiles of $Q_n(\theta)$ computed from the draws are very close to the $\chi^2_3$ quantiles, as predicted by our theoretical results below.

Table \ref{t:game:MC} reports average coverage probabilities and CS limits for the various procedures across 1000 replications. We form CSs for $\Theta_I$ using procedure 1, as well as CSs for the identified sets of scalar subvectors $\Delta_1$ and $\beta_1$ using procedures 2 and 3.\footnote{As the parameterization is symmetric, the identified sets for $\Delta_2$ and $\beta_2$ are the same as for $\Delta_1$ and $\beta_1$ so we omit them. We also omit CSs for $\rho$ and $s$, whose identified sets are both $[0,1]$.} We also compare our CS for identified sets for $\Delta_1$ and $\beta_1$ with projection-based and percentile-based CSs. Appendix \ref{ax:smc:ex2} provides additional details on computation of $M(\theta)$ for implementation of procedure 2. We do not use the reduced-form reparameterization in terms of choice probabilities to compute $M(\theta)$.  Coverage of $\wh \Theta_\alpha$ for $\Theta_I$ is extremely good, even with the small sample size $n = 100$. Coverage of procedures 2 and 3 for the identified sets for $\Delta_1$ and $\beta_1$ is slightly conservative for the small sample size $n$, but close to nominal for $n =1000$. As expected, projection CSs are valid but very conservative (the coverage probabilities of 90\% CSs are all at least 98\%) whereas percentile-based CSs undercover.

\begin{table}[p]{  \footnotesize
\begin{center}
\begin{tabular}{|c|cccccc|} \hline
	& \multicolumn{2}{c}{0.90} & \multicolumn{2}{c}{0.95} & \multicolumn{2}{c|}{0.99}  \\ \hline
 & & \multicolumn{4}{c}{CSs for the identified set $\Theta_I$} & \\
 & & \multicolumn{4}{c}{$\widehat{\Theta}_{\alpha}$ (Procedure 1)} & \\
100     &  0.924  & ---   &  0.965  & ---   &  0.993  & ---  \\
250     &  0.901  & ---   &  0.952  & ---   &  0.996  & ---  \\
500     &  0.913  & ---   &  0.958  & ---   &  0.991  & ---  \\
1000    &  0.913  & ---   &  0.964  & ---   &  0.992  & ---  \\ \hline
& & \multicolumn{4}{c}{CSs for the identified set for $\Delta_1$} & \\
 & & \multicolumn{4}{c}{$\widehat{M}_{\alpha}$ (Procedure 2)} & \\
100     &  0.958  & [$ -1.70 ,\! 0.00 $]  &  0.986  & [$ -1.77 ,\! 0.00 $]  &  0.997  & [$ -1.87 ,\! 0.00 $] \\
250     &  0.930  & [$ -1.58 ,\! 0.00 $]  &  0.960  & [$ -1.62 ,\! 0.00 $]  &  0.997  & [$ -1.70 ,\! 0.00 $] \\
500     &  0.923  & [$ -1.52 ,\! 0.00 $]  &  0.961  & [$ -1.55 ,\! 0.00 $]  &  0.996  & [$ -1.60 ,\! 0.00 $] \\
1000    &  0.886  & [$ -1.48 ,\! 0.00 $]  &  0.952  & [$ -1.50 ,\! 0.00 $]  &  0.989  & [$ -1.54 ,\! 0.00 $] \\
& & \multicolumn{4}{c}{$\widehat{M}_{\alpha}^\chi$ (Procedure 3)} & \\
100     &  0.944  & [$ -1.70 ,\! 0.00 $]  &  0.973  & [$ -1.75 ,\! 0.00 $]  &  0.993  & [$ -1.84 ,\! 0.00 $] \\
250     &  0.939  & [$ -1.59 ,\! 0.00 $]  &  0.957  & [$ -1.62 ,\! 0.00 $]  &  0.997  & [$ -1.69 ,\! 0.00 $] \\
500     &  0.937  & [$ -1.53 ,\! 0.00 $]  &  0.971  & [$ -1.55 ,\! 0.00 $]  &  0.996  & [$ -1.60 ,\! 0.00 $] \\
1000    &  0.924  & [$ -1.49 ,\! 0.00 $]  &  0.966  & [$ -1.51 ,\! 0.00 $]  &  0.992  & [$ -1.54 ,\! 0.00 $] \\
& & \multicolumn{4}{c}{$\widehat{M}_{\alpha}^{proj}$ (Projection)} & \\
100     &  0.993  & [$ -1.84 ,\! 0.00 $]  &  0.997  & [$ -1.88 ,\! 0.00 $]  &  1.000  & [$ -1.94 ,\! 0.00 $] \\
250     &  0.996  & [$ -1.69 ,\! 0.00 $]  &  0.999  & [$ -1.72 ,\! 0.00 $]  &  1.000  & [$ -1.79 ,\! 0.00 $] \\
500     &  0.996  & [$ -1.60 ,\! 0.00 $]  &  0.999  & [$ -1.62 ,\! 0.00 $]  &  1.000  & [$ -1.67 ,\! 0.00 $] \\
1000    &  0.989  & [$ -1.54 ,\! 0.00 $]  &  0.996  & [$ -1.56 ,\! 0.00 $]  &  0.999  & [$ -1.59 ,\! 0.00 $] \\
& & \multicolumn{4}{c}{$\widehat{M}_{\alpha}^{perc}$ (Percentiles)} & \\
100     &  0.000  & [$ -1.43 ,\! -0.06 $]  &  0.000  & [$ -1.54 ,\! -0.03 $]  &  0.000  & [$ -1.72 ,\! -0.01 $] \\
250     &  0.000  & [$ -1.37 ,\! -0.06 $]  &  0.000  & [$ -1.45 ,\! -0.03 $]  &  0.000  & [$ -1.57 ,\! -0.01 $] \\
500     &  0.000  & [$ -1.34 ,\! -0.05 $]  &  0.000  & [$ -1.41 ,\! -0.03 $]  &  0.000  & [$ -1.50 ,\! -0.01 $] \\
1000    &  0.000  & [$ -1.33 ,\! -0.05 $]  &  0.000  & [$ -1.39 ,\! -0.03 $]  &  0.000  & [$ -1.46 ,\! -0.01 $] \\ \hline
& & \multicolumn{4}{c}{CSs for the identified set for $\beta_1$} & \\
 & & \multicolumn{4}{c}{$\widehat{M}_{\alpha}$ (Procedure 2)} & \\
100     &  0.960  & [$ -0.28 ,\! 0.89 $]  &  0.974  & [$ -0.32 ,\! 0.94 $]  &  0.994  & [$ -0.40 ,\! 1.03 $] \\
250     &  0.935  & [$ -0.18 ,\! 0.81 $]  &  0.958  & [$ -0.20 ,\! 0.84 $]  &  0.995  & [$ -0.26 ,\! 0.89 $] \\
500     &  0.925  & [$ -0.14 ,\! 0.76 $]  &  0.958  & [$ -0.16 ,\! 0.78 $]  &  0.995  & [$ -0.19 ,\! 0.82 $] \\
1000    &  0.926  & [$ -0.11 ,\! 0.72 $]  &  0.970  & [$ -0.12 ,\! 0.74 $]  &  0.994  & [$ -0.15 ,\! 0.76 $] \\
& & \multicolumn{4}{c}{$\widehat{M}_{\alpha}^\chi$ (Procedure 3)} & \\
100     &  0.918  & [$ -0.26 ,\! 0.87 $]  &  0.963  & [$ -0.30 ,\! 0.92 $]  &  0.992  & [$ -0.38 ,\! 1.01 $] \\
250     &  0.914  & [$ -0.17 ,\! 0.80 $]  &  0.953  & [$ -0.20 ,\! 0.83 $]  &  0.991  & [$ -0.25 ,\! 0.88 $] \\
500     &  0.912  & [$ -0.13 ,\! 0.75 $]  &  0.957  & [$ -0.15 ,\! 0.77 $]  &  0.990  & [$ -0.19 ,\! 0.81 $] \\
1000    &  0.917  & [$ -0.11 ,\! 0.72 $]  &  0.962  & [$ -0.12 ,\! 0.73 $]  &  0.993  & [$ -0.14 ,\! 0.76 $] \\
& & \multicolumn{4}{c}{$\widehat{M}_{\alpha}^{proj}$ (Projection)} & \\
100     &  0.990  & [$ -0.38 ,\! 1.00 $]  &  0.997  & [$ -0.41 ,\! 1.05 $]  &  1.000  & [$ -0.49 ,\! 1.13 $] \\
250     &  0.989  & [$ -0.24 ,\! 0.88 $]  &  0.997  & [$ -0.27 ,\! 0.90 $]  &  1.000  & [$ -0.32 ,\! 0.96 $] \\
500     &  0.989  & [$ -0.19 ,\! 0.81 $]  &  0.996  & [$ -0.20 ,\! 0.82 $]  &  1.000  & [$ -0.24 ,\! 0.86 $] \\
1000    &  0.990  & [$ -0.14 ,\! 0.76 $]  &  0.998  & [$ -0.15 ,\! 0.77 $]  &  1.000  & [$ -0.18 ,\! 0.80 $] \\
& & \multicolumn{4}{c}{$\widehat{M}_{\alpha}^{perc}$ (Percentiles)} & \\
100     &  0.395  & [$ -0.11 ,\! 0.71 $]  &  0.654  & [$ -0.16 ,\! 0.78 $]  &  0.937  & [$ -0.26 ,\! 0.90 $] \\
250     &  0.169  & [$ -0.05 ,\! 0.66 $]  &  0.478  & [$ -0.09 ,\! 0.71 $]  &  0.883  & [$ -0.16 ,\! 0.80 $] \\
500     &  0.085  & [$ -0.04 ,\! 0.63 $]  &  0.399  & [$ -0.07 ,\! 0.68 $]  &  0.840  & [$ -0.12 ,\! 0.74 $] \\
1000    &  0.031  & [$ -0.03 ,\! 0.62 $]  &  0.242  & [$ -0.05 ,\! 0.65 $]  &  0.803  & [$ -0.09 ,\! 0.70 $] \\ \hline
	\end{tabular}
\parbox{12cm}{\caption{\small\label{t:game:MC} Entry game example: average coverage probabilities for $\Theta_I$ and identified sets for $\Delta_1$ and $\beta_1$ across MC replications and average lower and upper bounds of CSs for identified sets for $\Delta_1$ and $\beta_1$ across MC replications using a likelihood criterion function and flat prior. The identified sets for $\Delta_1$ and $\beta_1$ are approximately $[-1.42,0]$ and $[-0.05,0.66]$.}}
\end{center}
}	
\end{table}

\subsection{Empirical applications}\label{sec-empirical}

This subsection implements our procedures in two non-trivial empirical applications. The first application estimates an entry game with correlated payoff shocks using data from the US airline industry. Here there are 17 model parameters to be estimated. The second application estimates a model of trade flows initially examined in \cite{HMR} (HMR henceforth). We use a version of the empirical model in HMR with 46 parameters to be estimated.

Although the entry game model is separable, we do not make use of separability in implementing our procedures. In fact, the existing Bayesian approaches that impose priors on the globally-identified reduced-form parameters will be problematic in this example. This separable model has 24 non-redundant choice probabilities (global reduced-form parameters, i.e., $\dim (\phi ) =24$) and 17 model structural parameters (i.e., $\dim (\theta ) =17$), and there is no explicit closed form expression for the identified set. Both \cite{MoonSchorfheide} and \cite{KlineTamer} would sample from the posterior for the reduced-form parameter $\phi$. But, unless the posterior for $\phi$ is constrained to lie on $\{\phi(\theta) : \theta \in \Theta\}$ (i.e. the set of reduced-form probabilities consistent with the model, rather than the full 24-dimensional space), certain values of $\phi$ drawn from their posteriors for $\phi$ will not be consistent with the model.

The empirical trade example is a {\it nonseparable} likelihood model that cannot be handled by either (a) existing Bayesian approaches that rely on a point-identified, $\sqrt n$-estimable and asymptotically normal reduced-form parameter, or (b) inference procedures based on moment inequalities.

In both applications, our approach only puts a prior on the model structural parameter $\theta$ so it does not matter whether the model is separable or not. Both applications illustrate how our procedures may be used to examine the robustness of estimates to various ad hoc modeling assumptions in a theoretically valid and computationally feasible way.

\subsubsection{Bivariate Entry Game with US Airline Data}

This section estimates a version of the entry game that we study in Subsection \ref{s:game} above.
We use data from the second quarter of $2010$'s
Airline Origin and Destination Survey (DB1B) to estimate a  binary game where the payoff for firm $i$ from entering market $m$ is
\[
 \beta_i + \beta_i^xx_{im} + \Delta_iy_{3-i} + \epsilon_{im} \quad i=1,2
\]
where the $\Delta_i$ are assumed to be negative (as usually the case in entry models). The data contain 7882 markets which are formally defined as trips between two airports irrespective of stopping and we examine the entry behavior of two kinds of firms: LC (low cost) firms,\footnote{The low cost carriers are:  JetBLue, Frontier, Air Tran,  Allegiant Air, Spirit, Sun Country, USA3000, Virgin America, Midwest Air, and Southwest.} and OA (other airlines) which includes all the other firms. The unconditional choice probabilities are $(.16, .61, .07, .15)$ which are respectively the probabilities that OA and LC serve a market, that OA and not LC serve a market, that LC and not OA serve a market, and finally whether no airline serve the market.

The regressors are {\it market presence} and {\it market size}. Market presence is a market- and airline-specific variable defined as follows: from a given airport, we compute the ratio of markets a given carrier (we take the maximum within the category OA or LC, as appropriate) serves divided by the total number of markets served from that given airport. The market presence variable $MP$ is the average of the ratios from the two endpoints and it provides a proxy for an airline's presence in a given airport (See \cite{berry} for more on this variable).  This variable acts as an excluded regressor: the market presence for OA only enters OA's payoffs, so $MP$ is both market- and airline-specific.  The second regressor we use is {\it market size} $MS$ which is defined as the  population at the endpoints, so this variable is market-specific. We discretize both $MP$ and $MS$ into binary variables that take the value of one if the variable is higher than its median (in the data) value and zero otherwise. The choice probabilities are $P(y_{OA}, y_{LC}|MS, \, MP_{OA},\, MP_{LC})$ are conditional on the three-dimensional vector $(MS, MP_{OA}, MP_{LC})$. We therefore have 4 choice probabilities for every value of the conditioning variables (and there are 8 values for these).\footnote{With binary values, the conditioning set takes the following eight values: (1,1,1), (1,1,0), (1,0,1), (1,0,0), (0,1,1), (0,1,0), (0,0,1), (0,0,0).} To use notation similar to that in Subsection \ref{s:game}, let OA be player $1$ and firm LC be player $2$. Denote $\beta_1 (x_{mOA}):=\beta^0_{OA} + \beta_{OA}'x_{mOA}$ and $\beta_2 (x_{mLC}):= \beta^0_{LC} + \beta_{LC}'x_{mLC}$ with $x_{mOA} = (MS_m, MP_{mOA})'$ and  $x_{mLC} = (MS_m, MP_{mLC})'$. The likelihood for market $m$ depends on the choice probabilities:

\vskip -20pt

{
\footnotesize
\begin{align*}
 \tilde\gamma_{11}(\theta; x_m) :=& P(\epsilon_{1m} \geq -\beta_1 (x_{mOA}) - \Delta_{OA}; \, \epsilon_{2m} \geq -\beta_2 (x_{mLC}) - \Delta_{LC} )\\
 \tilde\gamma_{00}(\theta; x_m) :=& P(\epsilon_{1m} \leq -\beta_1 (x_{mOA}) ; \epsilon_{2m} \leq -\beta_2 (x_{mLC}) )\\
 \tilde\gamma_{10}(\theta; x_m) := & s(x_m)\times P(-\beta_1 (x_{mOA}) \leq \epsilon_{1m} \leq -\beta_1 (x_{mOA}) - \Delta_{OA}; -\beta_2 (x_{mLC})  \leq \epsilon_{2m} \leq -\beta_2 (x_{mLC}) - \Delta_{LC})
 \\ & \quad + P( \epsilon_{1m} \geq - \beta_{1}(x_{mOA}); \epsilon_{2m} \leq -\beta_2 (x_{mLC}) )\\ &\quad + P( \epsilon_{1m} \geq -\beta_{1}(x_{mOA}) - \Delta_{OA} ;-\beta_2 (x_{mLC}) \leq \epsilon_{2m} \leq -\beta_2 (x_{mLC}) - \Delta_{LC}) \,.
\end{align*}
}

\vskip -20pt

Here $s(x_m)$ is a nuisance parameter which corresponds to the various {\it aggregate} equilibrium selection probabilities. Here $s(\cdot)$ is defined on the support of $x_m$, so in the model this function takes $2^3=8$ values each belonging to  $[0,1]$. In the {\it full model} we make no assumptions on the equilibrium selection mechanism. Therefore, the full model has 17 parameters: 4 parameters per profit function (namely $\Delta_i$, $\beta_i^0$, $\beta_i^{MS}$, and $\beta_i^{MP}$), the correlation $\rho$ between $\epsilon_{i1}$ and $\epsilon_{i2}$, and the 8 parameters in the aggregate equilibrium choice probabilities $s(\cdot)$. We also estimate a restricted version of the model called {\it fixed $s$} in which we restrict the aggregate selection probabilities to be the same across markets, for a total of 10 parameters. Note that these are just one version of the econometric model for a game; a less parsimonious version would allow, for example, for the parameters to change with regressor values, or allow for the regressors' support to be richer (rather than binary). We analyze this case precisely to highlight the fact that our CSs provide coverage guarantees regardless of whether the parameter vector is point identified.

\begin{sidewaystable}[p]
\begin{center}{\footnotesize
\begin{tabular}{|c|cccc|cccc|} \hline
 & \multicolumn{4}{c|}{Full model} & \multicolumn{4}{c|}{Fixed-$s$ model}\\
\hline
 &  Procedure 2 & Procedure 3 & Projection & Percentile &  Procedure 2 & Procedure 3 & Projection & Percentile \\
$\Delta_{OA}$        & [$ -1.599 ,\! -1.178 $]  & [$ -1.539 ,\! -1.303 $]  & [$ -1.707 ,\! -0.701 $]  & [$ -1.515 ,\! -1.117 $]  & [$ -1.563 ,\! -1.335 $]  & [$ -1.543 ,\! -1.363 $]  & [$ -1.655 ,\! -1.194 $]  & [$ -1.536 ,\! -1.326 $] \\
$\Delta_{LC}$        & [$ -1.527 ,\! -1.218 $]  & [$ -1.503 ,\! -1.246 $]  & [$ -1.719 ,\! -1.018 $]  & [$ -1.489 ,\! -1.225 $]  & [$ -1.567 ,\! -1.343 $]  & [$ -1.547 ,\! -1.367 $]  & [$ -1.671 ,\! -1.222 $]  & [$ -1.548 ,\! -1.339 $] \\
$\beta_{OA}^0$       & [$ 0.443 ,\! 0.581 $]  & [$ 0.455 ,\! 0.575 $]  & [$ 0.341 ,\! 0.695 $]  & [$ 0.447 ,\! 0.578 $]  & [$ 0.431 ,\! 0.551 $]  & [$ 0.437 ,\! 0.539 $]  & [$ 0.365 ,\! 0.611 $]  & [$ 0.427 ,\! 0.540 $] \\
$\beta_{OA}^{MS}$    & [$ 0.365 ,\! 0.539 $]  & [$ 0.383 ,\! 0.521 $]  & [$ 0.238 ,\! 0.665 $]  & [$ 0.389 ,\! 0.544 $]  & [$ 0.347 ,\! 0.479 $]  & [$ 0.353 ,\! 0.467 $]  & [$ 0.275 ,\! 0.551 $]  & [$ 0.348 ,\! 0.477 $] \\
$\beta_{OA}^{MP}$    & [$ 0.413 ,\! 0.581 $]  & [$ 0.425 ,\! 0.569 $]  & [$ 0.275 ,\! 0.713 $]  & [$ 0.424 ,\! 0.579 $]  & [$ 0.479 ,\! 0.641 $]  & [$ 0.497 ,\! 0.623 $]  & [$ 0.389 ,\! 0.719 $]  & [$ 0.504 ,\! 0.648 $] \\
$\beta_{LC}^0$       & [$ -1.000 ,\! -0.729 $]  & [$ -1.000 ,\! -0.723 $]  & [$ -1.000 ,\! -0.453 $]  & [$ -0.993 ,\! -0.751 $]  & [$ -0.910 ,\! -0.627 $]  & [$ -0.874 ,\! -0.657 $]  & [$ -1.000 ,\! -0.507 $]  & [$ -0.917 ,\! -0.655 $] \\
$\beta_{LC}^{MS}$    & [$ 0.226 ,\! 0.431 $]  & [$ 0.238 ,\! 0.419 $]  & [$ 0.064 ,\! 0.599 $]  & [$ 0.220 ,\! 0.405 $]  & [$ 0.299 ,\! 0.443 $]  & [$ 0.305 ,\! 0.431 $]  & [$ 0.220 ,\! 0.527 $]  & [$ 0.303 ,\! 0.442 $] \\
$\beta_{LC}^{MP}$    & [$ 1.591 ,\! 1.868 $]  & [$ 1.633 ,\! 1.832 $]  & [$ 1.423 ,\! 1.988 $]  & [$ 1.615 ,\! 1.821 $]  & [$ 1.573 ,\! 1.790 $]  & [$ 1.597 ,\! 1.760 $]  & [$ 1.489 ,\! 1.880 $]  & [$ 1.590 ,\! 1.776 $] \\
$\rho$               & [$ 0.874 ,\! 0.986 $]  & [$ 0.910 ,\! 0.978 $]  & [$ 0.713 ,\! 0.998 $]  & [$ 0.867 ,\! 0.977 $]  & [$ 0.938 ,\! 0.990 $]  & [$ 0.948 ,\! 0.986 $]  & [$ 0.886 ,\! 0.998 $]  & [$ 0.935 ,\! 0.986 $] \\
$s$                  & --- & --- & --- & ---    & [$ 0.926 ,\! 0.980 $]  & [$ 0.932 ,\! 0.976 $]  & [$ 0.888 ,\! 0.992 $]  & [$ 0.927 ,\! 0.977 $] \\
$s_{000}$             & [$ 0.587 ,\! 0.964 $]  & [$ 0.679 ,\! 0.950 $]  & [$ 0.000 ,\! 1.000 $]  & [$ 0.572 ,\! 0.934 $]  & --- & --- & --- & ---  \\
$s_{001}$             & [$ 0.812 ,\! 1.000 $]  & [$ 0.854 ,\! 1.000 $]  & [$ 0.439 ,\! 1.000 $]  & [$ 0.797 ,\! 0.995 $]  & --- & --- & --- & ---  \\
$s_{010}$             & [$ 0.000 ,\! 1.000 $]  & [$ 0.000 ,\! 0.906 $]  & [$ 0.000 ,\! 1.000 $]  & [$ 0.018 ,\! 0.828 $]  & --- & --- & --- & ---  \\
$s_{100}$             & [$ 0.637 ,\! 0.998 $]  & [$ 0.794 ,\! 0.998 $]  & [$ 0.000 ,\! 1.000 $]  & [$ 0.612 ,\! 0.990 $]  & --- & --- & --- & ---  \\
$s_{011}$             & [$ 0.916 ,\! 1.000 $]  & [$ 0.930 ,\! 1.000 $]  & [$ 0.804 ,\! 1.000 $]  & [$ 0.915 ,\! 0.999 $]  & --- & --- & --- & ---  \\
$s_{101}$             & [$ 0.491 ,\! 0.920 $]  & [$ 0.607 ,\! 0.842 $]  & [$ 0.000 ,\! 1.000 $]  & [$ 0.449 ,\! 0.799 $]  & --- & --- & --- & ---  \\
$s_{110}$             & [$ 0.000 ,\! 1.000 $]  & [$ 0.000 ,\! 1.000 $]  & [$ 0.000 ,\! 1.000 $]  & [$ 0.042 ,\! 0.986 $]  & --- & --- & --- & ---  \\
$s_{111}$             & [$ 0.942 ,\! 1.000 $]  & [$ 0.966 ,\! 1.000 $]  & [$ 0.856 ,\! 1.000 $]  & [$ 0.941 ,\! 0.999 $]  & --- & --- & --- & ---  \\
\hline
\end{tabular}}
\vskip 4pt
\parbox{14cm}{\caption{\label{g:app} \small Entry game application: 95\% CSs for structural parameters computed via our Procedures 2 and 3 as well as via Projection and Percentile methods. The full model contains a general specification for equilibrium selection while the fixed-$s$ model restricts the equilibrium selection probability to be the same across markets with different regressor values.}}

\end{center}
\end{sidewaystable}

We again take a flat prior on $\Theta$ and implement the procedures using a likelihood criterion. We restrict the support of $\Delta_i$ to $[-2,0]$, $\beta_i$ to $[-1,2]^3$, $\rho$ to $[0,1]$ and the selection probabilities to $[0,1]$. We implement the procedure using the adaptive SMC algorithm as described in Appendix \ref{ax:smc:app1} with $B = 10000$ draws. Histograms of the SMC draws for the selection probabilities are presented in Figure \ref{f:game_1:s}; histograms of draws for the profit function parameters and $\rho$ are presented in Figures \ref{f:game_0} and \ref{f:game_1} in Appendix \ref{ax:smc:app1}. We construct CSs for each of the parameters using procedure 2 and procedure 3 and compare these to projection-based CSs (projecting $\wh \Theta_\alpha$ using our procedure 1) and percentile CSs. The empirical findings are presented in Table \ref{g:app} below. Appendix \ref{ax:smc:app1} contains further details on computation of $M(\theta)$ for implementation of procedure 2. As with the game simulation, we do not explicitly use the reduced-form reparameterization $\theta \mapsto \tilde \gamma(\theta)$ when computing $M(\theta)$.

The results in Table \ref{g:app} show that CSs computed via procedures 2 and 3 are generally similar (though there are some differences, with CSs via procedure 2, which is valid under weaker conditions than procedure 3, appearing wider for some of the selection probabilities in the full model). On the other hand, projection CSs are very wide, especially in the full model. For instance, the projection CS for $s_{101}$ is $[0,1]$ whereas CSs via procedures 2 and 3 are $[0.49,0.92]$ and $[0.61,0.84]$ respectively. As expected, percentile CSs are narrower than procedure 2 and 3 CSs, reflecting the fact that percentile CSs under-cover in partially identified models.

Starting with the {full model} results, we see that the estimates are meaningful economically and are inline with recent estimates obtained in the literature. For example, fixed costs (the intercepts) are positive and significant for the large airlines (OA) but are negative for the LC carriers. Typically, the presence of higher fixed costs can signal various barriers to entry prevent LCs from entering: the higher these fixed costs the less likely it is for LCs to enter. On the other hand, higher fixed costs of large airlines are associated with a bigger presence (such as a hub) and so OAs are more likely to enter. As expected, both market presence and market size are associated with a positive probability of entry for both OA and LC. Results for the fixed-$s$ model are in agreement with the corresponding ones for the full model and tell a consistent story. Note also the very high correlation in the errors, which could indicate missing profitability variables whereby firms enter a particularly profitable markets regardless of competition.

One interesting observation are the CSs for the selection probabilities (also see Figure \ref{f:game_1:s}). Consider $s_{010}$ and $s_{110}$: these are the aggregate selection probabilities which, according to the results, are not identified. This is likely due to the rather small number of markets with small size, large presence for OA but small presence for LC (for $s_{010}$) and the small number of markets with large market size, large presence for OA but small presence for LC (for $s_{110}$). The strength of our approach is its {\it adaptivity} to lack of identification in a particular data set: for example, 95\% CSs for the identified set for $s_{010}$ are $[0,1]$ (via procedure 2), indicating that the model (and data) has no information about this parameter, while the corresponding CS for the identified set for $s_{111}$ is  the narrow and informative interval $[0.94,1.00]$.

\begin{figure}[p]
\begin{subfigure}{.45\textwidth}
  \centering
  \includegraphics[width=.9\linewidth]{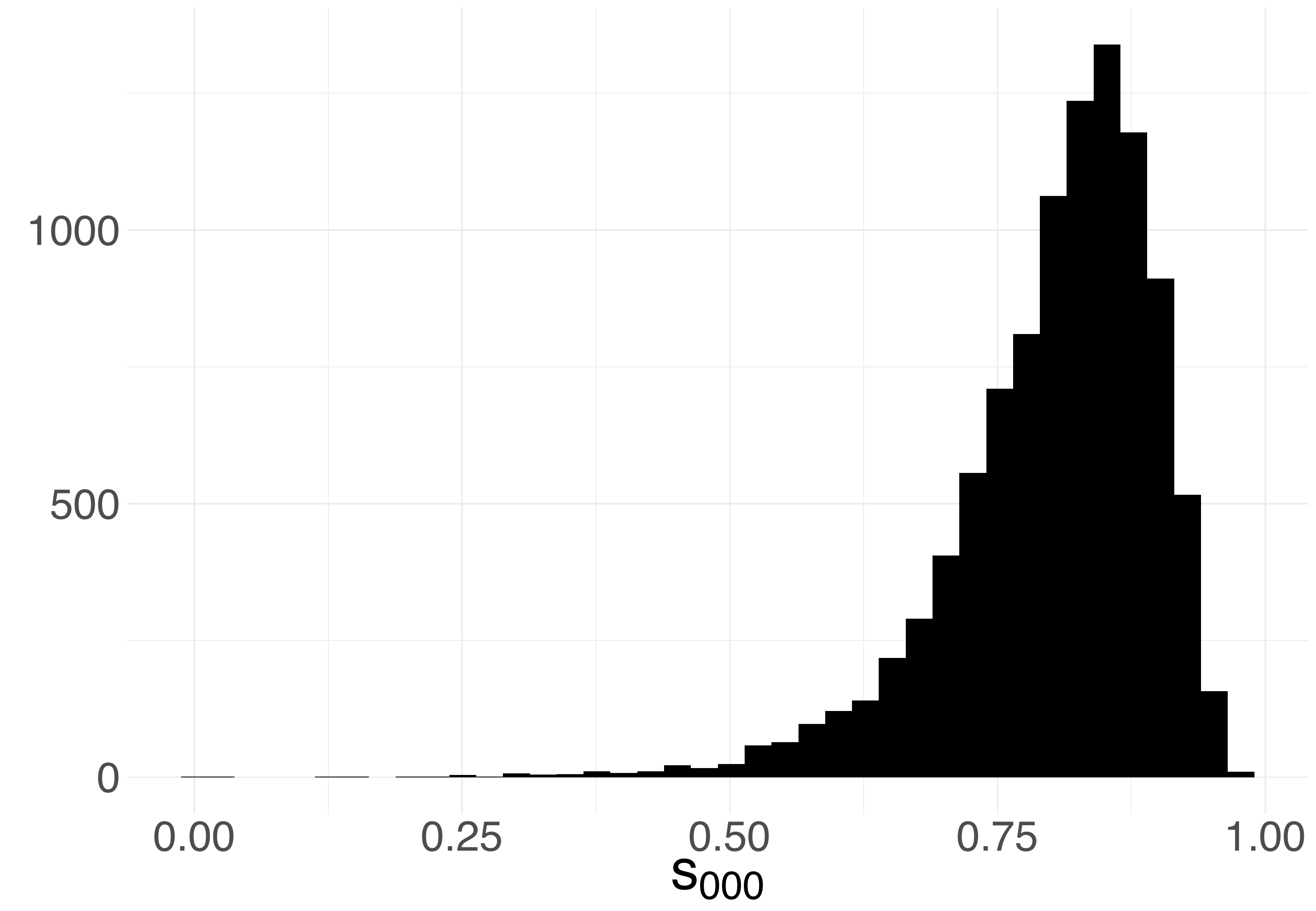}
  \end{subfigure}%
\begin{subfigure}{.45\textwidth}
  \centering
  \includegraphics[width=.9\linewidth]{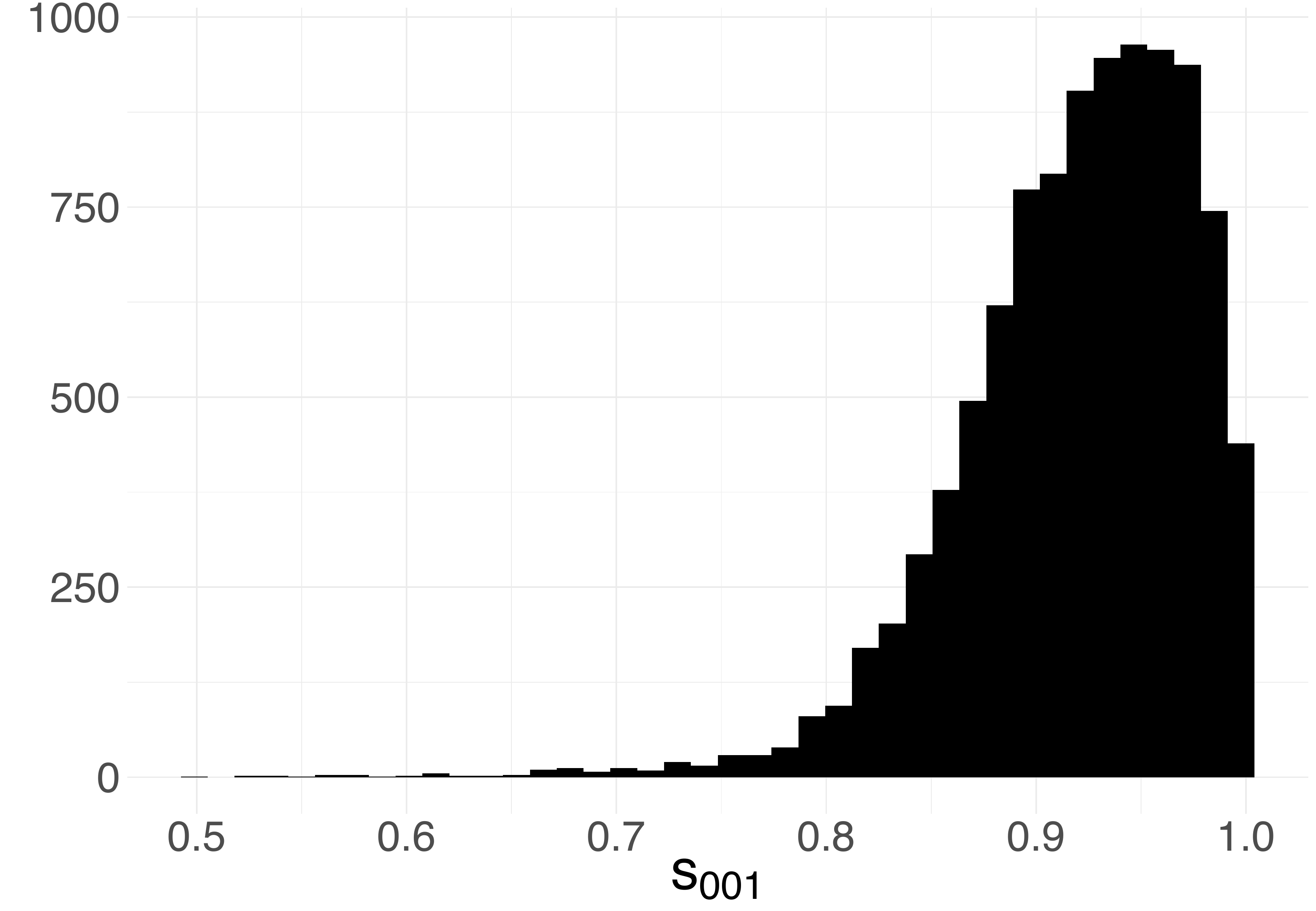}
\end{subfigure}
\begin{subfigure}{.45\textwidth}
  \centering
  \includegraphics[width=.9\linewidth]{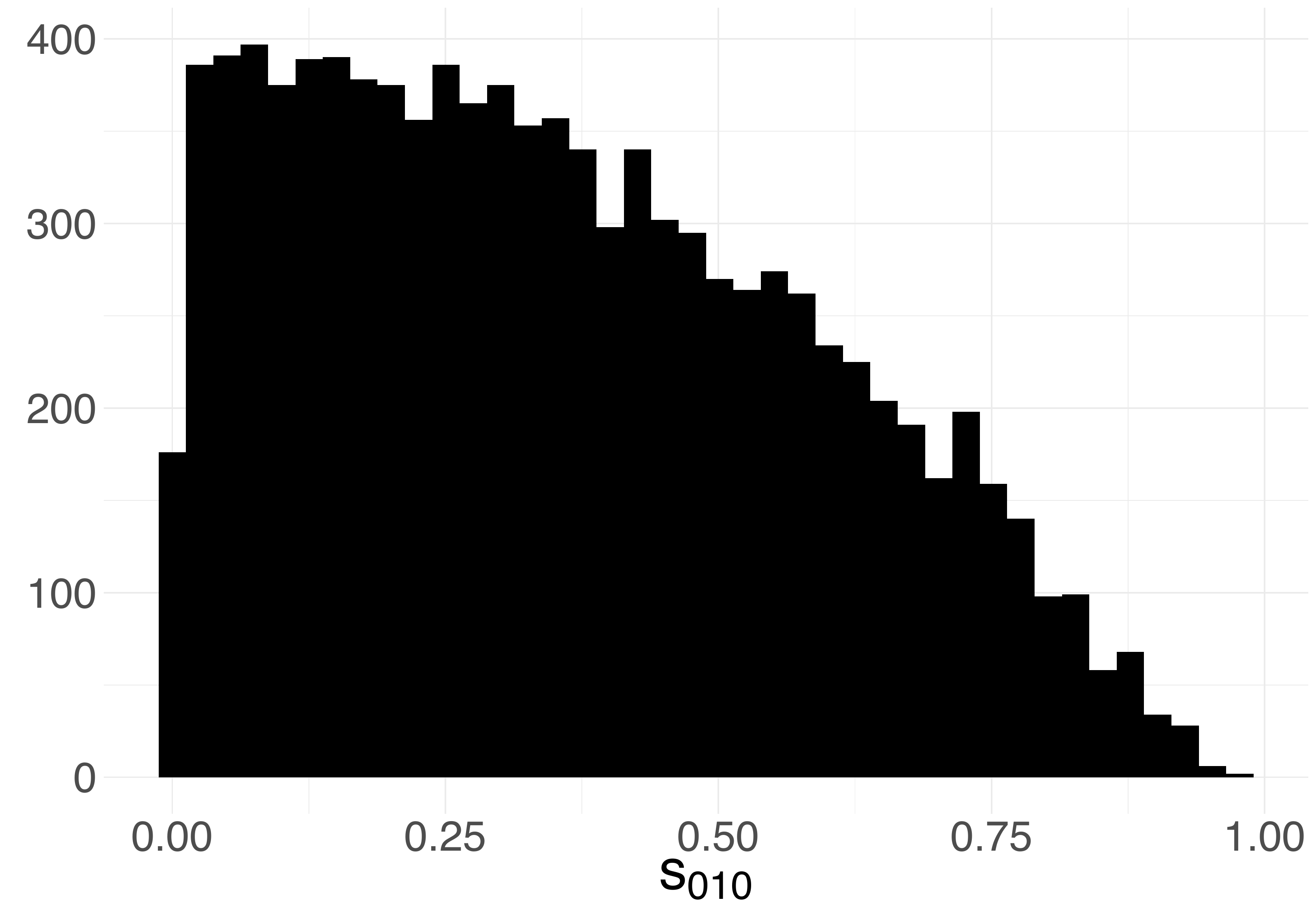}
  \end{subfigure}%
\begin{subfigure}{.45\textwidth}
  \centering
  \includegraphics[width=.9\linewidth]{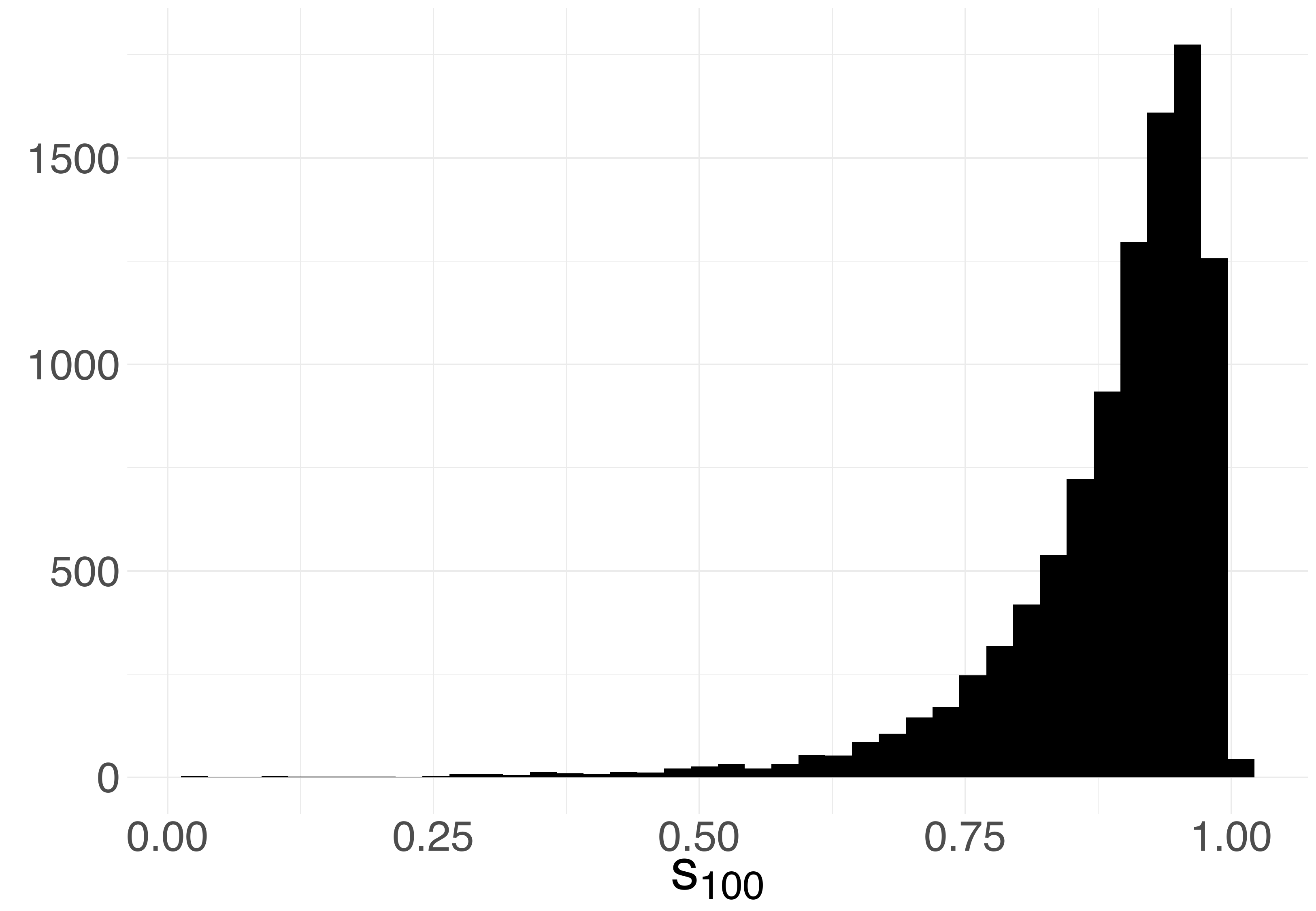}
\end{subfigure}
\begin{subfigure}{.45\textwidth}
  \centering
  \includegraphics[width=.9\linewidth]{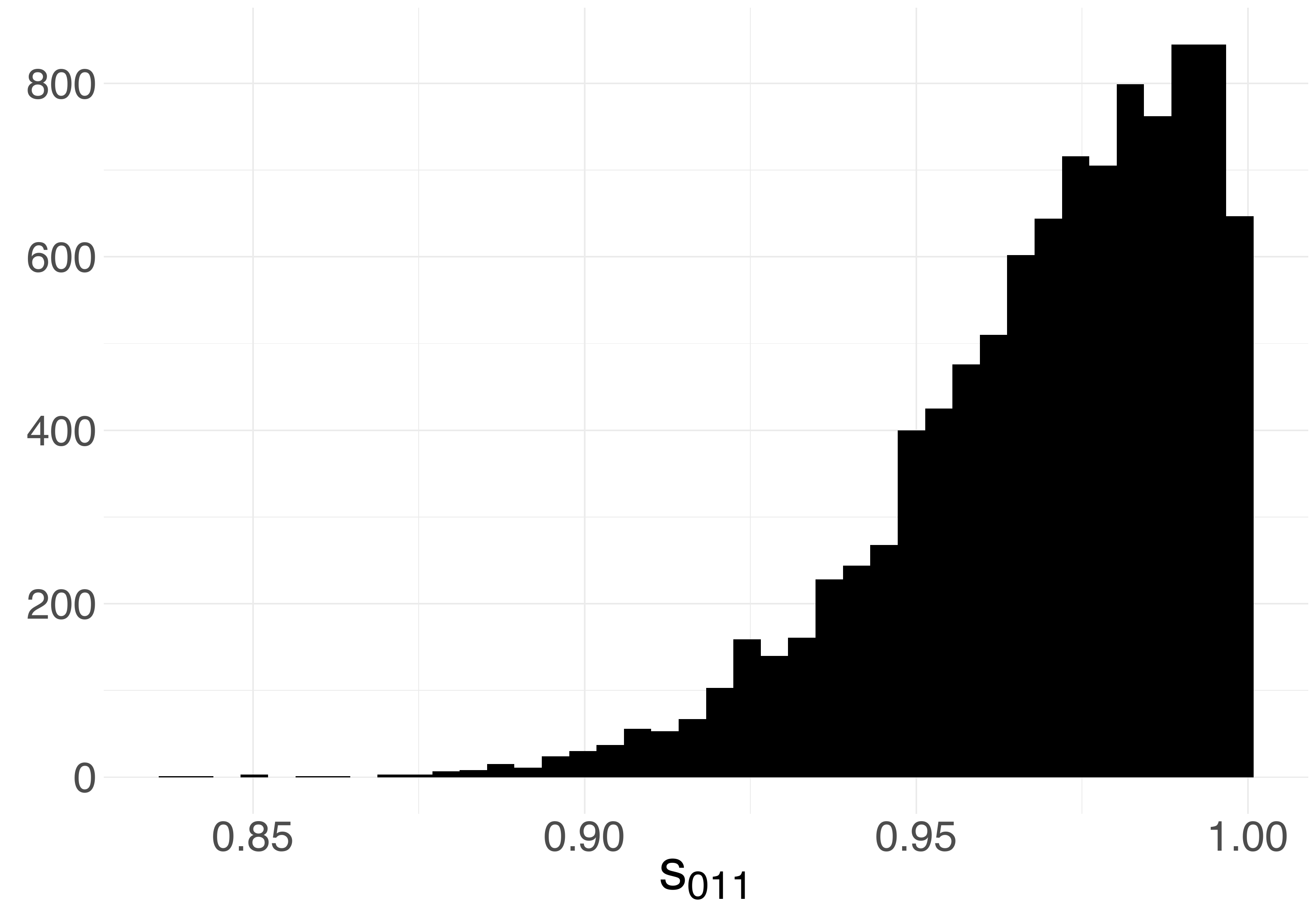}
  \end{subfigure}%
\begin{subfigure}{.45\textwidth}
  \centering
  \includegraphics[width=.9\linewidth]{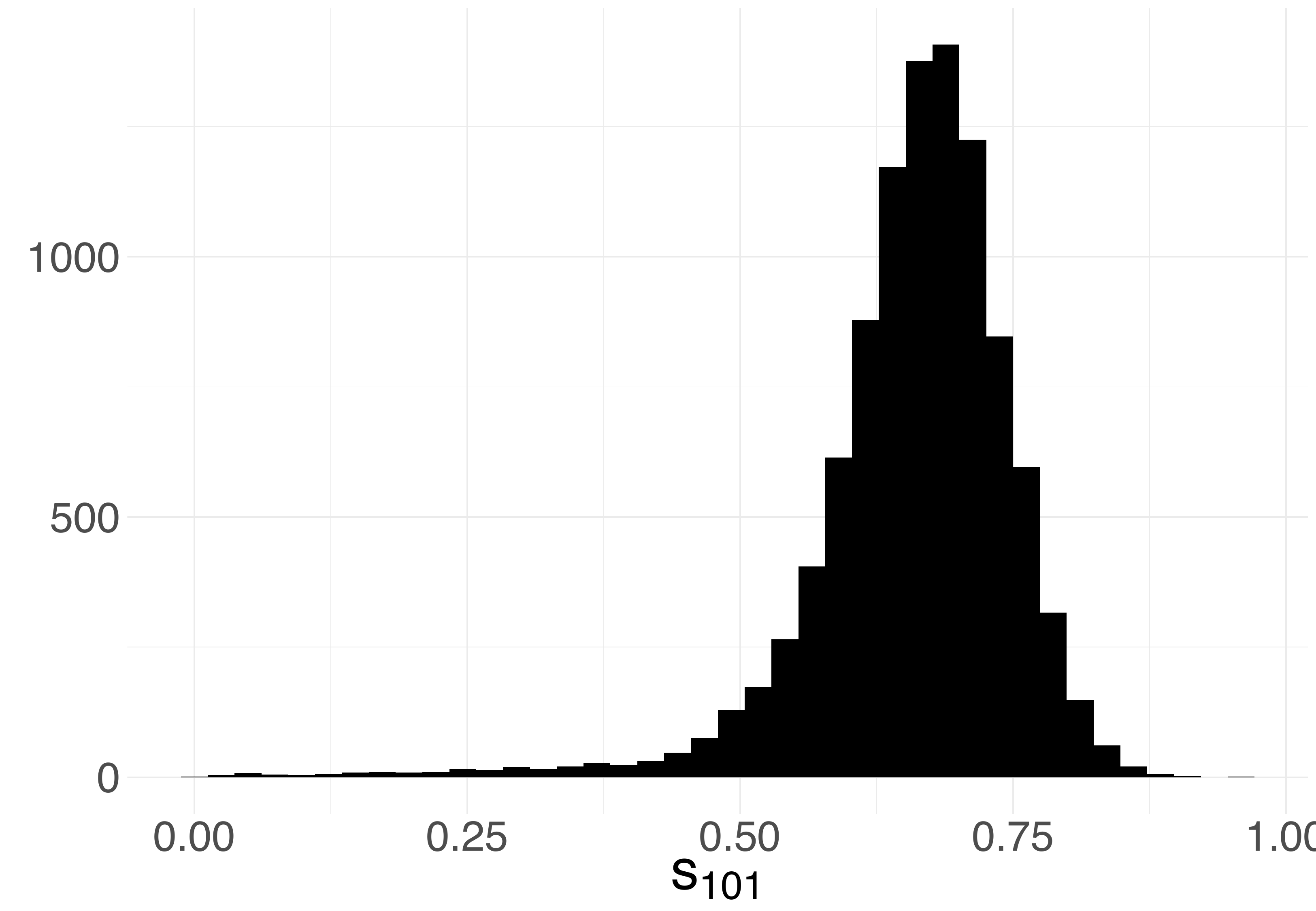}
\end{subfigure}
\begin{subfigure}{.45\textwidth}
  \centering
  \includegraphics[width=.9\linewidth]{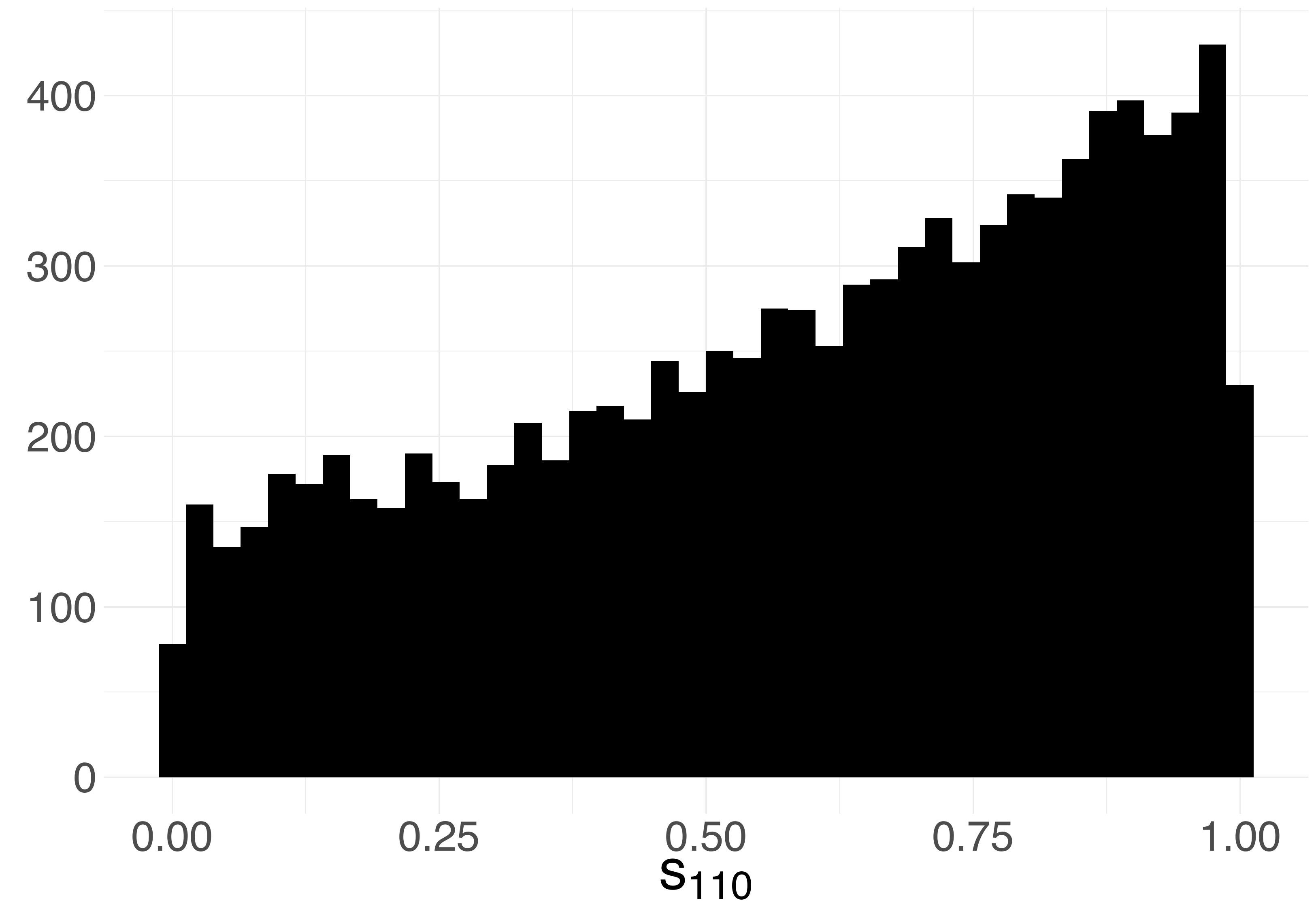}
  \end{subfigure}%
\begin{subfigure}{.45\textwidth}
  \centering
  \includegraphics[width=.9\linewidth]{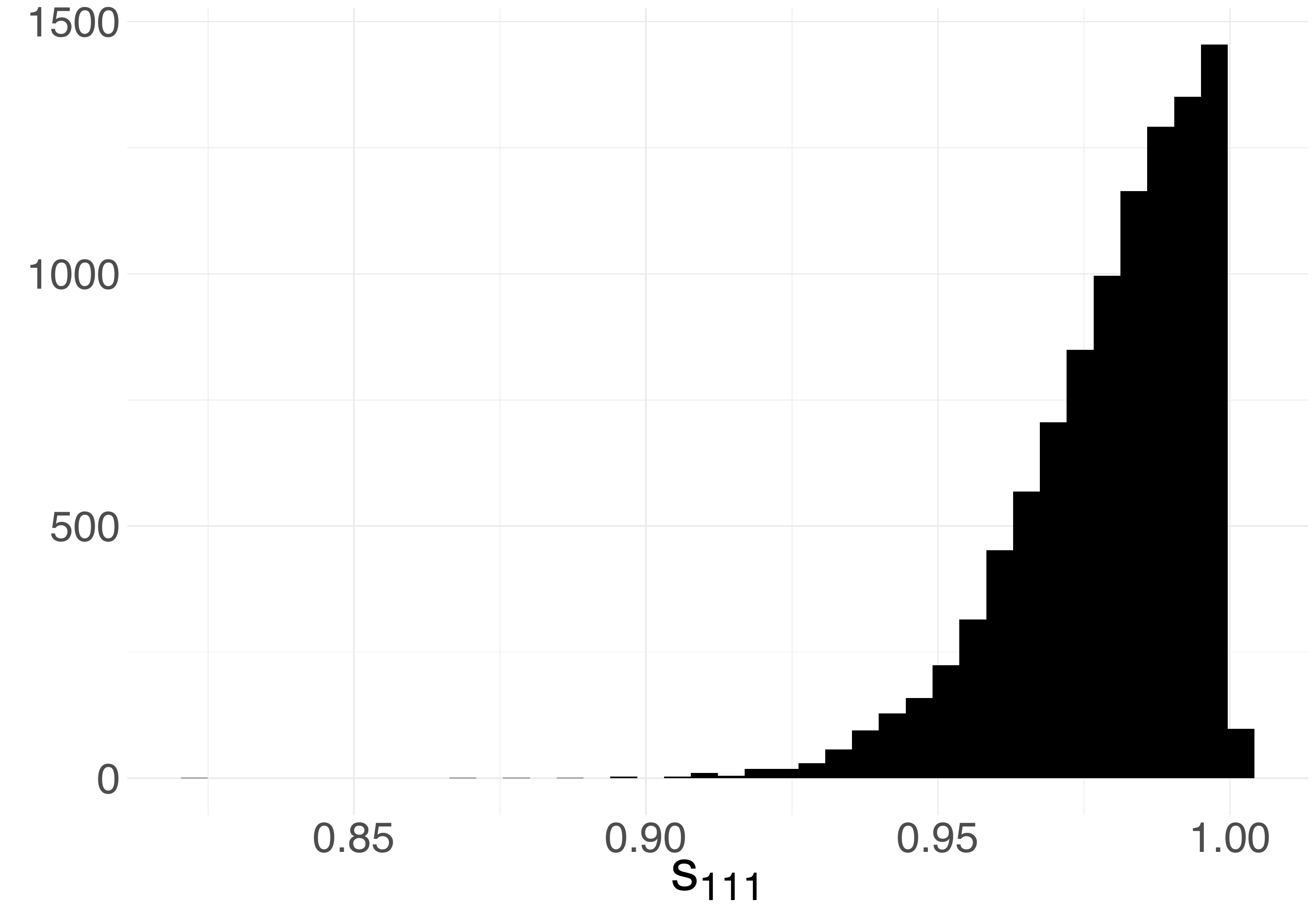}
\end{subfigure}
\centering
\parbox{12cm}{\caption{\small\label{f:game_1:s} Entry game application: histograms of the SMC draws for selection probabilities $s_{000}$, $s_{001}$, $s_{010}$, $s_{100}$, $s_{011}$, $s_{101}$, $s_{110}$, and $s_{111}$ for the \textbf{full model}.} }
\end{figure}

\subsubsection{ An empirical model of trade flows}

In an influential paper, \cite{HMR} examine the extensive margin of trade using a structural model estimated with current trade data. The following is a brief description of their empirical framework.  Let $M_{ij}$ denote the value of country $i$'s imports from country $j$. This is only observed if country $j$ exports to country $i$. If a random draw for productivity from country $j$ to $i$ is sufficiently high then $j$ will export to $i$. To model this, \cite{HMR} introduce a latent variable $z_{ij}^*$ which measures trade volume between $i$ and $j$. Here $z_{ij}^*$ takes the value zero if $j$ does not export to $i$ and is strictly positive otherwise. We adapt slightly their empirical model to obtain a selection model of the form:
\begin{align*}
 \log M_{ij} & = \left\{ \begin{array}{ll}
  \beta_0 + \lambda_j + \chi_i - \nu' f_{ij} + \delta z_{ij}^* + u_{ij} & \mbox{if $z_{ij}^* > 0$} \\
  \mbox{not observed} & \mbox{if $z_{ij}^* \leq 0$} \end{array} \right. \\
  z_{ij}^* & = \beta_0^* + \lambda^*_j + \chi^*_i - \nu^{*\prime} f_{ij} + \eta_{ij}^*
\end{align*}
in which $\lambda_j$, $\chi_i$, $\lambda_j^*$ and $\chi_i^*$  are exporting and importing continent fixed effects, $f_{ij}$ is a vector of observable trade frictions between $i$ and $j$, and $u_{ij}$ and $\eta_{ij}^*$ are error terms described below.
Exclusion restrictions can be imposed by setting at least one of the elements of $\nu$ equal to zero.

There are three differences between our empirical model and that of \cite{HMR}. First, we let $z_{ij}^*$ enter the outcome equation linearly instead of nonlinearly.\footnote{Their nonlinear specification is known to be problematic (see, e.g., \cite{HMRcritical}).} Second, we use continent fixed effect instead of country fixed effects. This reduces the number of parameters from over 400 to 46. Third, we allow for heteroskedasticity in the selection equation, which is known to be a problem in trade data. This illustrates the robustness approach we advocate which relaxes parametric assumptions on part of the model that is suspect (homoskedasticity) without worrying about loss of point identification.

\begin{sidewaystable}[p]
{\small
\begin{center}
\begin{tabular}{|c|cc|ccccc|}
 \multicolumn{1}{c}{} & \multicolumn{2}{c}{Homoskedastic} & \multicolumn{5}{c}{Heteroskedastic} \\ \hline
 Variable & MLE & $t$-stat CI &  MLE & $t$-stat CI & Procedure 2 & Procedure 3 & Percentile \\ \hline
Distance  &   2.352 &   [1.154,3.549] &   0.314 &   [0.273,0.355] &   [0.216,0.749] &   [0.242,0.509] &   [0.207,0.397]  \\
Border  &   -5.191 &   [-7.077,-3.304] &   -2.265 &   [-2.452,-2.077] &   [-2.651,-1.859] &   [-2.611,-1.898] &   [-2.618,-1.816]  \\
Island  &   -1.302 &   [-1.913,-0.691] &   -0.728 &   [-0.868,-0.589] &   [-1.060,-0.308] &   [-1.060,-0.308] &   [-0.983,-0.397]  \\
Landlock  &   -7.275 &   [-10.769,-3.780] &   -1.369 &   [-1.602,-1.137] &   [-2.194,-0.914] &   [-2.194,-0.890] &   [-1.801,-0.954]  \\
Legal  &   0.358 &   [0.002,0.715] &   -0.122 &   [-0.183,-0.061] &   [-0.254,0.004] &   [-0.242,-0.009] &   [-0.248,0.011]  \\
Language  &   -4.098 &   [-6.430,-1.766] &   -0.095 &   [-0.168,-0.021] &   [-0.868,0.049] &   [-0.868,0.026] &   [-0.237,0.067]  \\
Colonial  &   -17.378 &   [-26.002,-8.755] &   -2.822 &   [-3.029,-2.615] &   [-4.980,-2.373] &   [-4.980,-2.461] &   [-3.231,-2.298]  \\
Currency  &   -1.550 &   [-2.780,-0.320] &   -0.631 &   [-0.946,-0.315] &   [-1.315,0.020] &   [-1.282,-0.013] &   [-1.274,0.062]  \\
FTA  &   -19.540 &   [-29.783,-9.298] &   -2.151 &   [-2.410,-1.892] &   [-2.686,-1.589] &   [-2.631,-1.616] &   [-2.680,-1.577]  \\  \hline
\end{tabular}
\vskip 4pt
\parbox{14cm}{\caption{\label{t:app} \small Maximum likelihood estimates of the coefficients $\nu$ of the trade friction variables in the outcome equation (MLE) together with their 95\% confidence sets based on inverting $t$-statistics ($t$-stat CI). Also shown are 95\% CSs computed by our Procedures 2 and 3 as well as via Percentile methods.}}
\end{center}
}
\end{sidewaystable}

To allow for heteroskedasticity, we suppose that the distribution of $(u_{ij},\eta_{ij}^*)$ conditional on observables is Normal with mean zero and covariance:
\[
 \Sigma(X_{ij}) = \left[ \begin{array}{cc}
 \sigma_m^2  & \rho \sigma_m \sigma_z (X_{ij}) \\
 \rho \sigma_m \sigma_z (X_{ij}) & \sigma_z^2 (X_{ij})
 \end{array} \right]
\]
where $X_{ij}$ denotes $f_{ij}$, the exporter's continent, and the importer's continent and where
\[
 \sigma_z(X_{ij}) = \exp(\varpi_1 \log(\mr{distance}_{ij}) + \varpi_2 \log(\mr{distance}_{ij})^2) \,.
\]

We estimate the model from data on 24,649 country pairs in the selection equation and 11,146 country pairs in the outcome equation using the same data from 1986 as in \cite{HMR}. We also impose the exclusion restriction that the coefficient in $\nu$ corresponding to religion is equal to zero, else there is an exact linear relationship between the coefficients in the outcome and selection equation. This leaves a total of 46 parameters to be estimated. We only report estimates for the trade friction coefficients $\nu$ in the outcome equation as these are the most important. We estimate the model first by maximum likelihood under homoskedasticity and report conventional ML estimates for $\nu$ together with 95\% CSs based on inverting $t$-statistics. We then re-estimate the model under heteroskedasticity and report conventional ML estimates together with confidence sets based on inverting $t$-statistics, percentile CSs (i.e. the \cite{CH} procedure under point identification), and our procedures 2 and 3. To implement our Procedure 2 and percentile CSs, we use the adaptive SMC algorithm as described in Appendix \ref{ax:smc:app2} with $B = 10000$ draws.

The results are presented in Table \ref{t:app}.\footnote{Note that the friction variables enter negatively in the outcome equation. The coefficient of distance is positive meaning that distance negatively affects trade flows; the remaining variables are dummy variables, so a negative coefficient of border means that sharing a border positively affects trade flows, and so forth.} Overall, though the model is sensitive to the presence of heteroskedasticity, the results for the heteroskedastic specification show that the confidence sets seem reasonably insensitive to the type of procedure used, which suggests that partial identification may not be an issue even allowing for heteroskedasticity. We also notice some difference in results relative to \cite{HMR}.
For instance, they document strong positive effects of common legal systems and currency unions on trade flows, whereas we find much weaker evidence for this. We also find a positive effect of landlocked status on trade flows whereas they document a negative effect.
Under heteroskedasticity, the magnitudes of coefficients of the trade friction variables are generally smaller than under homoskedasticity but of the same sign. The exception is the legal variable, whose coefficient is positive under homoskedasticity but negative under heteroskedasticity. A remaining question is whether the estimates are also sensitive to the normality assumption on the errors. This question can be examined within the context of our results by, for example, using a flexible form for the joint distribution of the errors.

\section{Large Sample Properties}\label{sec-property}

 This section provides regularity conditions under which $\wh \Theta_\alpha$ (Procedure 1),  $\wh M_\alpha$ (Procedure 2) and $\wh M_\alpha^\chi$ (Procedure 3) are asymptotically valid confidence sets for $\Theta_I$ and $M_I$. The main new theoretical contributions are the derivations of the large-sample (quasi)-posterior distributions of the QLR for $\Theta_I$ and of the profile QLR for $M_I$ under loss of identifiability.

\subsection{Coverage properties of $\wh \Theta_\alpha$ for $\Theta_I$}

We first state some high-level regularity conditions. A discussion of these assumptions follows.

\begin{assumption}\label{a:rate} (Posterior contraction) \\
	(i) $L_n(\hat \theta) = \sup_{\theta \in \Theta_{osn}} L_n(\theta) + o_\p(n^{-1})$, with $(\Theta_{osn})_{n \in \mb N}$ a sequence of local neighborhoods of $\Theta_I$; \\
    (ii) $\Pi_n(\Theta_{osn}^c |\,\mf X_n) = o_\p(1)$, where $\Theta_{osn}^c = \Theta \!\setminus \!\Theta_{osn}$.
\end{assumption}

We presume the existence of a fixed neighborhood $\Theta_I^N$ of $\Theta_I^{\phantom *}$ (with $\Theta_{osn} \subset \Theta_I^N$ for all $n$ sufficiently large) upon which there exists a \emph{local} reduced-form reparameterization $\theta \mapsto \gamma(\theta)$ from $\Theta_I^N$ into $\Gamma \subseteq \mb R^{d^*}$ for a possibly unknown dimension $d^* \in [1,\infty)$, with $\gamma(\theta) =\gamma_0 \equiv  0$ if and only if $\theta \in \Theta_I$. Here $\gamma (\cdot) $ is merely a proof device and is only required to exist for $\theta$ in a fixed neighborhood of $\Theta_I$. To accommodate situations in which the true reduced-form parameter value $\gamma_0 =0$ may be ``on the boundary'' of $\Gamma$, we assume that the sets $T_{osn} \equiv \{ \sqrt n \gamma(\theta) : \theta \in \Theta_{osn}\}$ \emph{cover}\footnote{We say that a sequence of sets $A_n \subseteq \mb R^{d^*}$ \emph{covers} a set $A \subseteq \mb R^{d^*}$ if (i) $\sup_{b : \|b\| \leq M} |\inf_{a \in A_n} \| a- b\|^2 - \inf_{a \in A} \| a - b\|^2 | = o_\p(1)$ for each $M$,  and (ii) there is a sequence of closed balls $B_{k_n}$ of radius $k_n \to \infty$ centered at the origin with each $C_n := A_n \cap B_{k_n}$ convex, $C_n \subseteq C_{n'}$ for each $n' \geq n$, and $A = \ol{ \cup_{n \geq 1} C_n }$ (almost surely).} \emph{a closed convex cone} $T\subseteq \mb R^{d^*}$. We note that this is trivially satisfied with $T = \mb R^{d^*}$ whenever each $T_{osn}$ contains a ball of radius $k_n \to \infty$ centered at the origin. A similar approach is taken for point-identified models by \cite{Chernoff1954}, \cite{Geyer1994}, and \cite{Andrews1999}. Let $\|\gamma\|^2:=\gamma'\gamma$ and for any $v \in \mb R^{d^*}$, let $\mf T v= \mr{arg}\min_{t \in T} \|v -t\|^2$ denote the orthogonal (or metric) projection of $v$ onto $T$.

\begin{assumption}\label{a:quad} (Local quadratic approximation) \\
	There exist sequences of random variables $\ell_n$ and $\mb R^{d^*}$-valued random vectors $\hat \gamma_n$ (both measurable in $\mf X_n$) such that as $n \to \infty$:
	\begin{equation} \label{e:quad}
	\sup_{\theta \in \Theta_{osn}} \left| n L_n(\theta) - \left(\ell_n +\frac{1}{2} \| \sqrt n \hat \gamma_n \|^2 - \frac{1}{2} \|\sqrt n (\hat \gamma_n - \gamma(\theta))\|^2 \right) \right| = o_\p(1)
	\end{equation}
	with $\sup_{\theta \in \Theta_{osn}} \|\gamma(\theta)\| \to 0$ and $\sqrt n \hat \gamma_n = \mf T \mb V_n$ where $\mb V_n \rightsquigarrow N(0,\Sigma)$.
\end{assumption}

Let $\Pi_\Gamma$ denote the image measure (under the map $\theta \mapsto \gamma(\theta)$) of the prior $\Pi$  on $\Theta_I^N$, namely $\Pi_\Gamma (A) = \Pi(\{\theta \in \Theta_I^N : \gamma(\theta) \in A\})$. Let $B_\delta \subset \mb R^{d^*}$ be a ball of radius $\delta$ centered at the origin.

\begin{assumption}\label{a:prior} (Prior) \\
	(i) $\int_\Theta e^{nL_n(\theta)} \,\mr d \Pi(\theta) < \infty $ almost surely; \\
	(ii) $\Pi_\Gamma$ has a continuous, strictly positive density $\pi_\Gamma$ on $B_\delta \cap \Gamma$ for some $\delta > 0$.
\end{assumption}

\paragraph{Discussion of Assumptions:}
Assumption \ref{a:rate}(i) is a standard condition on any approximate extremum estimator, and Assumption \ref{a:rate}(ii) is a standard posterior contraction condition. The choice of $\Theta_{osn}$ is deliberately general and will depend on the particular model under consideration. See Section \ref{s:suff} for verification of Assumption \ref{a:rate}.
Assumption \ref{a:quad} is a standard local quadratic expansion condition imposed on the local reduced form parameter around $\gamma =0$. It is readily verified for likelihood and GMM models (see Section \ref{s:suff}) with $\hat \gamma_n =\gamma (\hat \theta )$ and $\mb V_n$ typically a normalized score function of the data. For these models with i.i.d. data the vector $\mb V_n$ is typically of the form: $\mb V_n = n^{-1/2} \sum_{i=1}^n v(X_i)+ o_\p(1)$ with $\mb E[v(X_i) ] = 0$ and $\mr{Var}[v(X_i)]=\Sigma$. In fact, Appendix \ref{s:uniformity-qe}  shows that this quadratic expansion assumption is satisfied uniformly over a large class of DGPs in models with discrete data.
Assumption \ref{a:prior}(i) requires the quasi-posterior to be proper. Assumption \ref{a:prior}(ii) is a standard prior mass and smoothness condition used to establish BvM theorems for identified parametric models (see, e.g., Section 10.2 of \cite{vdV}) but applied to $\Pi_\Gamma$. Under a flat prior on $\Theta$ and a continuous local mapping $\gamma :\Theta_I^N \mapsto \Gamma$, this assumption is easily satisfied (see its verification in examples of Section \ref{s:suff}).

Assumptions \ref{a:rate}(i) and \ref{a:quad} imply that the QLR statistic for $\Theta_I$ satisfies
\begin{equation} \label{Fqlr}
\sup_{\theta \in \Theta_I} Q_n(\theta) = \|\mf T \mb V_n\|^2 + o_\p(1)
\end{equation}
(see Lemma \ref{l:quad}). Therefore, under the {\it generalized information equality} $\Sigma = I_{d^*}$, which holds for a correctly-specified likelihood, an optimally-weighted or continuously-updated GMM, or various (generalized) empirical-likelihood criterions, the asymptotic distribution of $\sup_{\theta \in \Theta_I} Q_n(\theta)$ becomes $F_T$, which is defined as
\begin{equation} \label{e:F_T}
 F_T (z) := \p_Z (\| \mf T Z \|^2 \leq z)
\end{equation}
where $\p_Z$ denotes the distribution of a $N(0,I_{d^*})$ random vector $Z$. This recovers the known asymptotic distribution result for QLR statistics under point identification. If $T = \mb R^{d^*}$ then $F_T$ reduces to $ F_{\chi^2_{d^*}}$, the cdf of $\chi^2_{d^*}$ (a chi-square random variable with $d^*$ degree of freedom). If $T$ is polyhedral then $F_T$ is the distribution of a chi-bar-squared random variable (i.e. a mixture of chi-squared distributions with different degrees of freedom where the mixture weights depend on $T$).

Let $\mb P_{Z|\mf X_n}$ denote the distribution of a $N(0,I_{d^*})$ random vector $Z$ (conditional on the data), and $T -v$ denote the convex cone $T$ translated to have vertex at $-v$. The next lemma establishes the large sample behavior of the posterior distribution of the QLR statistic.

\begin{lemma}\label{l:post}
	Let Assumptions \ref{a:rate}, \ref{a:quad} and \ref{a:prior} hold. Then:
\begin{equation} \label{e:c:post:1}
	\sup_z \left| \Pi_n \big(\{ \theta : Q_n(\theta) \leq z \}\big|\,\mf X_n\big) - \p_{Z | \mf X_n} \Big( \|Z\|^2 \leq z  \Big| Z \in T - \sqrt n \hat \gamma_n \Big) \right| = o_\p(1)\,.
\end{equation}
And hence we have:\\
 (i) If $T \subsetneq \mb R^{d^*}$ then: $\Pi_n \big(\{ \theta : Q_n(\theta) \leq z \}\big|\,\mf X_n\big) \leq F_T (z)$ for all $z \geq 0$.\\
 (ii) If $T = \mb R^{d^*}$ then: $\sup_z \left| \Pi_n \big(\{ \theta : Q_n(\theta) \leq z\} \,\big|\,\mf X_n\big) - F_{\chi^2_{d^*}}\!\!( z ) \right| = o_\p(1)$.
\end{lemma}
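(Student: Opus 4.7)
The plan is to reduce the quasi-posterior of $Q_n(\theta)$ to the law of a simple quadratic form of a truncated Gaussian via the local reduced-form reparameterization. First, I would combine Assumption \ref{a:rate}(i) with the quadratic expansion \eqref{e:quad} in Assumption \ref{a:quad} to obtain
\[
Q_n(\theta) = \|\sqrt n(\hat\gamma_n - \gamma(\theta))\|^2 + o_{\mb P}(1)
\]
uniformly on $\Theta_{osn}$. Since $\sqrt n\hat\gamma_n = \mf T\mb V_n \in T$ and the sets $T_{osn}$ cover $T$, one has $\inf_{\theta\in\Theta_{osn}}\|\sqrt n\hat\gamma_n - \sqrt n\gamma(\theta)\|^2 = o_{\mb P}(1)$, which together with \eqref{e:quad} yields $nL_n(\hat\theta) = \ell_n + \tfrac12\|\sqrt n\hat\gamma_n\|^2 + o_{\mb P}(1)$ and hence the display above.

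Next, I would use Assumption \ref{a:rate}(ii) so that, up to $o_{\mb P}(1)$ error, $\Pi_n(A \cap \Theta_{osn} | \mf X_n)$ equals $\Pi_n(A | \mf X_n)$ for any measurable $A$. Changing variables to $u = \sqrt n\gamma(\theta) \in T_{osn}$ and pushing the prior $\Pi$ forward to the image measure $\Pi_\Gamma$ on $\Gamma$, the quadratic expansion combined with Assumption \ref{a:prior}(ii) (continuity and positivity of $\pi_\Gamma$ at $0$) shows that the induced posterior density on $u$ is, up to a normalizing constant and negligible error uniform on compact sets, proportional to $\exp(-\tfrac12\|u-\sqrt n\hat\gamma_n\|^2)\,\mathbf 1\{u\in T_{osn}\}$. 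Because this law concentrates on bounded neighborhoods of $\sqrt n\hat\gamma_n$ and $T_{osn}$ covers $T$, I may replace the truncation set $T_{osn}$ by $T$ at cost $o_{\mb P}(1)$. Consequently the induced distribution of $u-\sqrt n\hat\gamma_n$ coincides asymptotically with that of $Z\sim N(0,I_{d^*})$ conditioned on $Z\in T-\sqrt n\hat\gamma_n$, and feeding this back through the first step delivers the main display \eqref{e:c:post:1}. Uniformity in $z$ then follows from a P\'olya-type argument using continuity of the limiting distribution function.

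Parts (i) and (ii) fall out. For (ii), $T - \sqrt n\hat\gamma_n = \mb R^{d^*}$, so the conditioning is vacuous and $\|Z\|^2 \sim \chi^2_{d^*}$. For (i), the bound reduces, up to $o_{\mb P}(1)$, to the deterministic inequality
\[
\mb P\bigl(\|Z\|^2 \leq z \bigm| Z \in T - c\bigr) \;\leq\; \mb P\bigl(\|\mf T Z\|^2 \leq z\bigr) = F_T(z)
\]
for $Z \sim N(0, I_{d^*})$ and every $c \in T$ and $z \geq 0$. I expect this stochastic dominance step to be the main obstacle. The cleanest route is to exploit the Pythagorean decomposition $\|Z\|^2 = \|\mf T Z\|^2 + \|Z - \mf T Z\|^2$ (valid because $\langle \mf T Z, Z - \mf T Z\rangle = 0$ when projecting onto a convex cone with vertex $0$) together with a Gaussian-correlation/Anderson-type argument that exploits convexity of $T - c$ and the fact that $0 \in T - c$ (since $c \in T$). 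The careful replacement of $T_{osn}$ by $T$ inside the truncated Gaussian is the other delicate piece; the rest is a transparent BvM argument in the reparameterized coordinates.
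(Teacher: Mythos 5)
Your treatment of the main display (\ref{e:c:post:1}) follows the paper's own route essentially step for step: use Assumption \ref{a:rate}(ii) to restrict the quasi-posterior to $\Theta_{osn}$, use Assumption \ref{a:rate}(i) plus the quadratic expansion to get $Q_n(\theta)=\|\sqrt n\gamma(\theta)-\mf T\mb V_n\|^2+o_\p(1)$ uniformly on $\Theta_{osn}$ (this is the paper's Lemma \ref{l:quad}), change variables to the local reduced form, replace the prior density by a constant via Assumption \ref{a:prior}(ii), pass from the truncation set $T_{osn}$ to $T$ using tightness of $\mb V_n$, the covering property and the lower bound on the Gaussian mass of $T-\mf T\mb V_n$, and absorb the $\pm\varepsilon_n$ slack uniformly in $z$ by continuity of $F_{\chi^2_{d^*}}$. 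Part (ii) is then immediate since the conditioning event is all of $\mb R^{d^*}$. Up to this point your proposal is correct and is the same argument as the paper's.

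The gap is part (i). You correctly reduce it to the deterministic claim $\p_Z\big(\|Z\|^2\le z\,\big|\,Z\in T-c\big)\le \p_Z\big(\|\mf T Z\|^2\le z\big)=F_T(z)$ for every $c\in T$ and $z\ge 0$, but you do not prove it, and the tools you point to do not deliver it as stated. Anderson's lemma compares a centered with a shifted Gaussian over the \emph{same symmetric} convex set, whereas here the conditioning set $T-c$ is convex but not symmetric and the two events involve different sets: the ball $B_{\sqrt z}$ versus the tube $\{v:\|\mf T v\|^2\le z\}=T^o+B_{\sqrt z}$ ($T^o$ the polar cone). The Gaussian correlation inequality gives \emph{positive} association of symmetric convex sets, which is the wrong direction; what is actually needed is a negative-association inequality of the form $\p_Z\big(Z\in B_{\sqrt z}\cap(T-c)\big)\le \p_Z\big(Z\in T^o+B_{\sqrt z}\big)\,\p_Z\big(Z\in T-c\big)$ for a translated closed convex cone, and the Pythagorean identity $\|Z\|^2=\|\mf T Z\|^2+\|\mf T^o Z\|^2$ alone does not produce it. The paper does not prove this inequality either: it invokes it as display (\ref{e:c:post:2}) by citing Theorem 2 of \cite{ChenGao} (and the analogous step for the profile QLR in Theorem \ref{t:main:profile}(i) cites their Theorem 1). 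So to close your argument you must either cite that external result or supply an actual proof of the cone inequality; as written, the stochastic-dominance step you yourself flag as the main obstacle remains open.
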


This result shows that the posterior distribution of the QLR statistic is asymptotically $\chi^2_{d^*}$ when $T = \mb R^{d^*}$, which may be viewed as a Bayesian Wilks theorem for partially identified models, and asymptotically (first-order) stochastically dominates $F_T$ when $T$ is a closed convex cone. Note that Lemma \ref{l:post} does not require the generalized information equality $\Sigma=I_{d^*}$ to hold. This lemma extends known BvM results for possibly misspecified likelihood models with point-identified $\sqrt n$-consistent and asymptotically normally estimable parameters (see \cite{KleijnvdV} and the references therein) to allow for other models with failure of $\Sigma=I_{d^*}$, with partially-identified parameters and/or parameters on a boundary.

Let $\xi_{n,\alpha}^{post}$ denote the $\alpha$ quantile of $Q_n(\theta)$ under the posterior distribution $\Pi_n$, and let $\xi_{n,\alpha}^{mc}$ be as stated in Remark \ref{rmk:full}.

\begin{assumption} \label{a:mcmc}  (MC convergence)\\
	$\xi_{n,\alpha}^{mc} = \xi_{n,\alpha}^{post} + o_\p(1)$.
\end{assumption}

Lemma \ref{l:post} and Assumption \ref{a:mcmc} together imply
that our Procedure 1 CS $\wh \Theta_\alpha$ is always a well-defined (quasi-)Bayesian credible set (BCS) regardless of whether $\Sigma=I_{d^*}$ holds or not. Further, together with Equation (\ref{Fqlr}), they imply the following result.

\begin{theorem}\label{t:main}
Let Assumptions \ref{a:rate}, \ref{a:quad}, \ref{a:prior}, and \ref{a:mcmc} hold with $\Sigma = I_{d^*}$. Then for any $\alpha$ such that $F_T (\cdot )$ is continuous at its $\alpha$ quantile, we have:
\newline
(i) $\liminf_{n \to \infty} \p(\Theta_I \subseteq \wh \Theta_{\alpha}) \geq \alpha $;
\newline
(ii) If $T = \mb R^{d^*}$ then: $\lim_{n \to \infty} \p(\Theta_I \subseteq \wh \Theta_{\alpha}) = \alpha $.
\end{theorem}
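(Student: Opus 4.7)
The plan is to apply Lemma \ref{l:basic} with $w_{n,\alpha} := \xi_{n,\alpha}^{mc}$ and $W$ the weak limit of $\sup_{\theta \in \Theta_I} Q_n(\theta)$. Thus I need to (a) identify the frequentist limit distribution of the sup-QLR and check its continuity at the $\alpha$-quantile, and (b) control the Monte Carlo cutoff $\xi_{n,\alpha}^{mc}$ in terms of the corresponding quantile $w_\alpha$ of that limit distribution. All the heavy lifting has already been done in the preceding lemmas, so the proof is a matter of stitching.

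For (a), Equation (\ref{Fqlr}), which was established under Assumptions \ref{a:rate}(i) and \ref{a:quad}, gives $\sup_{\theta \in \Theta_I} Q_n(\theta) = \|\mf T \mb V_n\|^2 + o_\p(1)$ with $\mb V_n \rightsquigarrow N(0,\Sigma)$. The generalized information equality $\Sigma = I_{d^*}$ together with continuity of the metric projection $\mf T$ onto the closed convex cone $T$ (metric projection onto a closed convex set is $1$-Lipschitz) yield, via continuous mapping, $\sup_{\theta \in \Theta_I} Q_n(\theta) \rightsquigarrow \|\mf T Z\|^2$ for $Z \sim N(0, I_{d^*})$, whose law is precisely $F_T$ defined in (\ref{e:F_T}). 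By hypothesis, $F_T$ is continuous at its $\alpha$-quantile $w_\alpha$, so hypothesis (i) of Lemma \ref{l:basic} is verified with $W \sim F_T$.

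For (b), I combine Lemma \ref{l:post} with Assumption \ref{a:mcmc}. In the general case, part (i) of Lemma \ref{l:post} says the quasi-posterior CDF of $Q_n$ is dominated (up to $o_\p(1)$ uniformly in $z$) by $F_T$, so the posterior $\alpha$-quantile satisfies $\xi_{n,\alpha}^{post} \geq w_\alpha + o_\p(1)$ by the standard inversion of a CDF inequality at a continuity point. Assumption \ref{a:mcmc} transfers this to the Monte Carlo cutoff: $\xi_{n,\alpha}^{mc} \geq w_\alpha + o_\p(1)$, verifying hypothesis (ii) of Lemma \ref{l:basic} in its one-sided form and yielding claim (i). When $T = \mb R^{d^*}$, part (ii) of Lemma \ref{l:post} upgrades the dominance to a Kolmogorov-type uniform convergence of the posterior CDF to $F_{\chi^2_{d^*}} = F_T$; continuity of $F_T$ at $w_\alpha$ then promotes CDF convergence to quantile convergence $\xi_{n,\alpha}^{post} = w_\alpha + o_\p(1)$, and another application of Assumption \ref{a:mcmc} gives $\xi_{n,\alpha}^{mc} = w_\alpha + o_\p(1)$. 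Plugging this equality into the second half of Lemma \ref{l:basic} delivers the exact-coverage claim (ii).

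The only nontrivial manoeuvre here is the CDF-to-quantile passage used twice above; this is routine given continuity of $F_T$ at $w_\alpha$, but is also the unique place in the argument where the continuity hypothesis is actually consumed, so one should be careful to avoid the possibility of $F_T$ having an atom at $w_\alpha$ (which, in the cone case, could happen at $w_\alpha = 0$ if $T$ is a proper subspace; this is ruled out by the continuity assumption). Everything else is bookkeeping on results already proved.
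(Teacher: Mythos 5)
Your proposal is correct and follows essentially the same route as the paper: it verifies the hypotheses of Lemma \ref{l:basic} by combining the expansion (\ref{Fqlr}) (Lemma \ref{l:quad}) for the weak limit $\|\mf T Z\|^2 \sim F_T$ of the sup-QLR, Lemma \ref{l:post}(i) (stochastic dominance) to get $\xi_{n,\alpha}^{post} \geq w_\alpha + o_\p(1)$ for part (i), Lemma \ref{l:post}(ii) plus the CDF-to-quantile step for part (ii), and Assumption \ref{a:mcmc} to transfer the posterior quantile to the Monte Carlo cutoff. The only cosmetic difference is that the paper additionally notes one may assume w.l.o.g. that $\hat\theta$ approximately maximizes $L_n$ over $\Theta_{osn}$ since $\wh\Theta_\alpha$ does not depend on the particular $\hat\theta$ used; this does not affect the substance of your argument.
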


Theorem \ref{t:main} shows that we need the generalized information equality $\Sigma=I_{d^*}$ to hold so that our Procedure 1 CS $\wh \Theta_\alpha$ has valid frequentist coverage for $\Theta_I$ in large samples.\footnote{This is consistent with the fact that percentile CSs also need $\Sigma=I_{d^*}$ in order to have a correct coverage for a point-identified scalar parameter (see, e.g., \cite{CH} and \cite{CRobert}).} This is because the asymptotic distribution of $\sup_{\theta \in \Theta_I} Q_n(\theta)$ is $F_T$ only under $\Sigma=I_{d^*}$. It follows that, with a criterion satisfying  $\Sigma = I_{d^*}$, our CS $\wh \Theta_\alpha$ will be asymptotically exact (for $\Theta_I$) when $T = \mb R^{d^*}$, and asymptotically valid but possibly conservative when $T$ is a convex cone.

\medskip

\begin{remark}\label{s:opt}
Theorem \ref{t:main} is still applicable to misspecified, separable partially-identified likelihood models. We can write the density in such models as $p_\theta(\cdot) = q_{\tilde \gamma(\theta)}(\cdot)$ where $\tilde \gamma(\theta)$ is an identifiable reduced-form parameter (see Section \ref{s:rfr} below). Under misspecification the identified set is $\Theta_I = \{ \theta : \tilde \gamma(\theta) = \tilde \gamma^*\}$ where $\tilde \gamma^*$ is the unique maximizer of $E[\log q_{\tilde \gamma}(X_i)]$ over $\wt \Gamma = \{\tilde \gamma(\theta) : \theta \in \Theta\}$. Following the insight of \cite{Mueller}, we could base our inference on the sandwich log-likelihood function:
\[
 L_n(\theta) = -\frac{1}{2}  (\check \gamma - \tilde \gamma(\theta)) ' (\wh \Sigma_S)^{-1} (\check \gamma - \tilde \gamma(\theta))
\]
where $\check \gamma$ approximately maximizes $\frac{1}{n} \sum_{i=1}^n \log q_\gamma(X_i)$ over $\wt \Gamma$ and $\wh \Sigma_S$ is the sandwich covariance matrix estimator for $\check \gamma$. If $\sqrt n (\check \gamma - \tilde \gamma^*) \rightsquigarrow N(0,\Sigma_S)$ and $\hat \Sigma_S \to_p \Sigma_S$ with $\Sigma_S$ positive definite, then Assumption \ref{a:quad} will hold with $\hat \gamma_n = \Sigma^{-1/2}_S(\check \gamma - \tilde \gamma^*)$ where $\sqrt n \hat \gamma_n \to_d N(0,I_{d^*})$ and $\gamma(\theta) = \Sigma^{-1/2}_S(\tilde \gamma(\theta) - \gamma^*)$.

\end{remark}

\medskip

\begin{remark}
In correctly specified likelihood models with flat priors, one may interpret $\wh \Theta_\alpha$ as a  HPD 100$\alpha$\% BCS for $\theta$. \cite{MoonSchorfheide} (MS hereafter) show that BCSs for a partially identified parameter $\theta$ (or subvectors) can under-cover asymptotically. As CSs for $\Theta_I$ should be larger than CSs for $\theta$, MS's result might appear to suggest that our Procedure 1 CS $\wh \Theta_\alpha$ would under-cover for $\Theta_I$. The ``apparent contradiction'' is because a key regularity condition in MS's under-coverage result (their Assumption 2 on p. 767) is violated in our setting.  For partially identified separable models, MS put a prior on the globally identified reduced-form parameter $\gamma$, say $\Pi(\gamma)$, and then a conditional prior, say $\Pi(\theta|\gamma)$, on the structural parameter $\theta$ given $\gamma$. The conditional prior $\Pi(\cdot|\gamma)$ needs to be supported on what would be the identified set for $\theta$ if $\gamma$ were the true reduced form parameter. Their Assumption 2 requires that $\Pi(\cdot|\gamma)$ is (locally) Lipschitz in $\gamma$, which is violated in our setting. We only put a prior on $\theta$. This prior on $\theta$ induces a prior on $\gamma = \gamma(\theta)$ and a conditional prior $\Pi(\theta|\gamma)$ that is supported on $\{\theta \in \Theta : \gamma(\theta) = \gamma\}$. Since $\gamma(\theta) = 0$ if and only if $\theta \in \Theta_I$, for any $\bar \gamma \neq 0$ our induced conditional prior satisfies
\[
 \Pi(\{\theta \in \Theta: \gamma(\theta) = 0\} | \gamma = 0) - \Pi(\{\theta \in \Theta : \gamma(\theta) = 0\} | \gamma = \bar \gamma ) = 1-0 = 1,
\]
thereby violating MS's Lipschitz condition (their Assumption 2).
See Remark 3 in MS for additional discussion of violation of their Lipschitz condition.
\end{remark}

\subsubsection{Models with singularities}

In this subsection we consider (possibly) partially identified models with singularities.\footnote{Such models are also referred to as non-regular models or models with non-regular parameters.}
In identifiable parametric models $\{ P_\theta : \theta \in \Theta\}$, the standard notion of differentiability in quadratic mean requires that the mass of the part of $P_\theta$ that is singular with respect to the true distribution $P_0 = P_{\theta_0}$ vanishes faster than $\|\theta - \theta_0\|^2$ as $\theta \to \theta_0$ \cite[section 6.2]{LeCamYang}. If this condition fails then the log-likelihood will not be locally quadratic at $\theta_0$. By analogy with the identifiable case, we say a non-identifiable model has a singularity if it does not admit a local quadratic approximation (in the reduced-form reparameterization) like that in Assumption \ref{a:quad}. One example is the missing data model under identification (see Subsection \ref{s:md} below).

To allow for partially identified models with singularities, we first  generalize the notion of the local reduced-form reparameterization to be of the form $\theta \mapsto (\gamma(\theta),\gamma_\bot(\theta))$ from $\Theta_I^N$ into $\Gamma \times \Gamma_\bot$ where $\Gamma \subseteq \mb R^{d^*}$ and $\Gamma_\bot \subseteq \mb R^{\dim(\gamma_\bot)}$ with $(\gamma(\theta),\gamma_\bot(\theta)) = 0$ if and only if $\theta \in \Theta_I$. The following regularity conditions generalize Assumptions \ref{a:quad} and \ref{a:prior} to allow for singularity.

\setcounter{aprime}{1}

\begin{aprime}\label{a:quad:prime} (Local quadratic approximation with singularity) \\
	(i) There exist sequences of random variables $\ell_n$ and $\mb R^{d^*}$-valued random vectors $\hat \gamma_n$ (both measurable in $\mf X_n$), and a sequence of functions $f_{n,\bot}: \Gamma_\bot \to \mb R_{+}$ (measurable in $\mf X_n$) with $f_{n,\bot}(0)=0$ (almost surely), such that as $n \to \infty$:
	\begin{equation} \label{e:quad:prime}
	\sup_{\theta \in \Theta_{osn}} \left| n L_n(\theta) - \left( \ell_n + \frac{1}{2} \|\sqrt n \hat \gamma_n\|^2 - \frac{1}{2} \| \sqrt n(\hat \gamma_n - \gamma(\theta))\|^2 -  f_{n,\perp}(\gamma_\perp(\theta)) \right)  \right| = o_\p(1)
	\end{equation}
	with $\sup_{\theta \in \Theta_{osn}} \|(\gamma(\theta),\gamma_\bot(\theta))\| \to 0$ and $\sqrt n \hat \gamma_n = \mf T \mb V_n$ where $\mb V_n \rightsquigarrow N(0,\Sigma)$; \\
	(ii) $\{(\gamma(\theta),\gamma_{\bot}(\theta)) : \theta \in \Theta_{osn}\} = \{\gamma(\theta) : \theta \in \Theta_{osn}\} \times \{\gamma_{\bot}(\theta) : \theta \in \Theta_{osn}\}$.
\end{aprime}

Let $\Pi_{\Gamma^*}$ denote the image of the measure $\Pi$ under the map $\Theta_I^N \ni \theta \mapsto (\gamma(\theta),\gamma_\bot(\theta))$. Let $B_r^* \subset \mb R^{d^*+\dim(\gamma_\bot)}$ denote a ball of radius $r$ centered at the origin.

\begin{aprime}\label{a:prior:prime} (Prior with singularity) \\
	(i) $\int_\Theta e^{nL_n(\theta)} \,\mr d \Pi(\theta) < \infty $ almost surely\\
	(ii) $\Pi_{\Gamma^*}$ has a continuous, strictly positive density $\pi_{\Gamma^*}$ on $B_\delta^* \cap (\Gamma \times \Gamma_\bot)$ for some $\delta > 0$.
\end{aprime}

\paragraph{Discussion of Assumptions:}
Assumption \ref{a:quad:prime}' is generalizes of Assumption \ref{a:quad} to the singular case. Assumption \ref{a:quad:prime}' implies that the peak of the likelihood does not concentrate on sets of the form $\{ \theta : f_{n,\perp}(\gamma_\perp(\theta)) > \epsilon >0\}$. Recently, \cite{BochkinaGreen} established a BvM result for \emph{identifiable} parametric likelihood models with singularities. They assume the likelihood is locally quadratic in some parameters and locally linear in others (similar to Assumption \ref{a:quad:prime}'(i)) and  that the local parameter space satisfies conditions similar to our Assumption \ref{a:quad:prime}'(ii). Assumption \ref{a:prior:prime}' generalizes Assumption \ref{a:prior} to the singular case. We impose no further restrictions on the set $\{\gamma_\bot(\theta) : \theta \in \Theta_I^N \}$.

The next lemma shows that the posterior distribution of the QLR asymptotically (first-order) stochastically dominates $F_T$ in partially identified models with singularity.

\begin{lemma}\label{l:post:prime}
	Let Assumptions \ref{a:rate}, \ref{a:quad:prime}' and \ref{a:prior:prime}' hold. Then:
\begin{equation} \label{e:post:qlr:prime}
	\sup_z  \left( \Pi_n \big(\{ \theta : Q_n(\theta) \leq z \}\big|\,\mf X_n\big) - \p_{Z | \mf X_n} \Big( \|Z\|^2 \leq z  \Big| Z \in T - \sqrt n \hat \gamma_n \Big)  \right) \leq o_\p(1) \,.
	\end{equation}
Hence: $\sup_z  \left( \Pi_n \big(\{ \theta : Q_n(\theta) \leq z \}\big|\,\mf X_n\big) - F_T(z) \right) \leq o_\p(1)$.
\end{lemma}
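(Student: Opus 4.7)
\textbf{Proof proposal for Lemma \ref{l:post:prime}.} The plan is to mimic the argument for Lemma \ref{l:post}, decoupling the regular direction $\gamma$ from the singular direction $\gamma_\perp$. Two observations drive the approach: since $f_{n,\perp}\ge 0$ with $f_{n,\perp}(0)=0$, the event $\{Q_n\le z\}$ is contained in the regular event $\{\|\sqrt n(\hat\gamma_n-\gamma(\theta))\|^2\le z\}$, which forces the conclusion to be only a one-sided bound; and the product structure in Assumption \ref{a:quad:prime}'(ii), together with the local continuity of $\pi_{\Gamma^*}$ at the origin, lets the $\gamma_\perp$ integrals factor out and cancel between numerator and denominator.

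First I would apply Assumption \ref{a:rate} to reduce both the numerator and denominator of the posterior to integrals over $\Theta_{osn}$, then combine \eqref{e:quad:prime} with $n L_n(\hat\theta)=\ell_n+\tfrac12\|\sqrt n\hat\gamma_n\|^2+o_\p(1)$ (which holds because the quadratic-plus-penalty form is maximized, up to $o_\p(1)$, at $\gamma(\theta)=\hat\gamma_n=\mf T\mb V_n/\sqrt n$ and $\gamma_\perp(\theta)=0$) to obtain the uniform expansion
\begin{equation*}
Q_n(\theta) \;=\; \|\sqrt n(\hat\gamma_n-\gamma(\theta))\|^2 + 2 f_{n,\perp}(\gamma_\perp(\theta)) + o_\p(1) \quad \text{on } \Theta_{osn}.
\end{equation*}
Pushing forward to $(\gamma,\gamma_\perp)$ coordinates via $\Pi_{\Gamma^*}$ and using the product $\Gamma_{osn}\times\Gamma_{\perp,osn}$ from Assumption \ref{a:quad:prime}'(ii), the non-negativity of $f_{n,\perp}$ yields the one-sided bound
\begin{equation*}
\Pi_n(\{Q_n\le z\}\mid\mf X_n) \;\le\; \frac{\int_{\{\|\sqrt n(\hat\gamma_n-\gamma)\|^2\le z\}\cap\Gamma_{osn}} e^{-\frac12\|\sqrt n(\hat\gamma_n-\gamma)\|^2}\,\mr d\gamma}{\int_{\Gamma_{osn}} e^{-\frac12\|\sqrt n(\hat\gamma_n-\gamma)\|^2}\,\mr d\gamma} + o_\p(1),
\end{equation*}
where the common $\gamma_\perp$ factor $\int_{\Gamma_{\perp,osn}} e^{-f_{n,\perp}(\gamma_\perp)}\,\mr d\gamma_\perp$ has been pulled out of both numerator and denominator after replacing $\pi_{\Gamma^*}(\gamma,\gamma_\perp)$ by $\pi_{\Gamma^*}(0,0)$ on the shrinking effective support (valid by Assumption \ref{a:prior:prime}'(ii) and posterior contraction). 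A change of variables $u=\sqrt n(\gamma-\hat\gamma_n)$ then reduces the remaining ratio to $\p_{Z\mid\mf X_n}(\|Z\|^2\le z\mid Z\in T_{osn}-\sqrt n\hat\gamma_n)$, which equals $\p_{Z\mid\mf X_n}(\|Z\|^2\le z\mid Z\in T-\sqrt n\hat\gamma_n)+o_\p(1)$ because $T_{osn}$ covers $T$; this delivers \eqref{e:post:qlr:prime}. The ``hence'' conclusion then follows by invoking the inequality $\p_Z(\|Z\|^2\le z\mid Z\in T-v)\le F_T(z)$ for $v\in T$ that already underpins Lemma \ref{l:post}(i), combined with a standard P\'olya-type argument to upgrade pointwise $o_\p(1)$ control to uniform-in-$z$ control of the supremum.

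The main obstacle is rigorously executing the factor-and-cancel step: one must verify that the $\gamma_\perp$ integrals $\int_{\Gamma_{\perp,osn}} e^{-f_{n,\perp}(\gamma_\perp)}\,\mr d\gamma_\perp$ are finite and bounded away from zero (which uses only $f_{n,\perp}\ge 0$, $f_{n,\perp}(0)=0$, and the local geometry of $\Gamma_{\perp,osn}$ via the posterior-contraction scale), and that the error from replacing $\pi_{\Gamma^*}(\gamma,\gamma_\perp)$ by $\pi_{\Gamma^*}(0,0)$ on this effective support is $o_\p(1)$ uniformly in $z$. The product-set assumption \ref{a:quad:prime}'(ii) is indispensable here, because without it the $\gamma_\perp$ integrals could depend on $\gamma$ and would not cancel cleanly; the local continuity in Assumption \ref{a:prior:prime}'(ii) in turn supplies the uniform approximation by a product density on the shrinking neighborhood of the origin that houses the posterior mass.
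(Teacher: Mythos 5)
Your proposal follows essentially the same route as the paper's proof: reduce to the ratio $R_n(z)$ over $\Theta_{osn}$, use the expansion $Q_n(\theta)=\|\sqrt n(\hat\gamma_n-\gamma(\theta))\|^2+2f_{n,\perp}(\gamma_\perp(\theta))+o_\p(1)$ together with $f_{n,\perp}\ge 0$ to get the one-sided inclusion, then exploit the product structure of Assumption \ref{a:quad:prime}'(ii), the local positivity/continuity of $\pi_{\Gamma^*}$, and Tonelli's theorem to cancel the $\gamma_\perp$ factor, finishing with the same change of variables and the Chen--Gao-type inequality $\p_Z(\|Z\|^2\le z\mid Z\in T-v)\le F_T(z)$ used in Lemma \ref{l:post}(i). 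The steps you flag as the main obstacles (finiteness and positivity of the $\gamma_\perp$ integral and the uniform prior approximation on the shrinking support) are handled in the paper exactly as you indicate, so the argument is correct as proposed.
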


Lemma \ref{l:post:prime} immediately implies the following result.

\begin{theorem}\label{t:main:prime}
	Let Assumptions \ref{a:rate}, \ref{a:quad:prime}', \ref{a:prior:prime}', and \ref{a:mcmc} hold with $\Sigma=I_{d^*}$. Then for any $\alpha$ such that $F_T (\cdot )$ is continuous at its $\alpha$ quantile, we have: $\liminf_{n \to \infty} \p(\Theta_I \subseteq \wh \Theta_{\alpha}) \geq \alpha$.
\end{theorem}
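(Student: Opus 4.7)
The plan is to deduce the theorem by invoking Lemma \ref{l:basic} with $w_{n,\alpha}=\xi_{n,\alpha}^{mc}$ and $w_\alpha$ equal to the $\alpha$-quantile of $F_T$. This requires (i) the frequentist convergence $\sup_{\theta\in\Theta_I}Q_n(\theta)\rightsquigarrow W$ with $F_W=F_T$ (continuous at $w_\alpha$ by hypothesis), and (ii) a lower bound $\xi_{n,\alpha}^{mc}\geq w_\alpha+o_\p(1)$. Part (ii) will follow from Lemma \ref{l:post:prime} combined with Assumption \ref{a:mcmc}, so the main work is to establish (i) under the singular expansion in Assumption \ref{a:quad:prime}$'$.

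For (i), I would first observe that on $\Theta_I$ both $\gamma(\theta)=0$ and $\gamma_\bot(\theta)=0$, so that Assumption \ref{a:quad:prime}$'$(i) collapses to $nL_n(\theta)=\ell_n+o_\p(1)$ uniformly on $\Theta_I$ (since $f_{n,\bot}(0)=0$). To control $L_n(\hat\theta)$, I would use Assumption \ref{a:rate}(i) together with the product structure of the local reparameterization in Assumption \ref{a:quad:prime}$'$(ii): since $f_{n,\bot}\geq 0$ with $f_{n,\bot}(0)=0$ and $(\gamma,\gamma_\bot)$ varies independently over $\Theta_{osn}$, the supremum of $nL_n$ over $\Theta_{osn}$ is attained (up to $o_\p(1)$) along the slice $\gamma_\bot=0$, on which the expansion is purely quadratic in $\gamma$. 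Because $T_{osn}=\{\sqrt n\gamma(\theta):\theta\in\Theta_{osn}\}$ covers the convex cone $T$ and $\sqrt n\hat\gamma_n=\mf T\mb V_n\in T$, the minimum of $\tfrac12\|\sqrt n(\hat\gamma_n-\gamma(\theta))\|^2$ over this slice is $o_\p(1)$. Hence $nL_n(\hat\theta)=\ell_n+\tfrac12\|\mf T\mb V_n\|^2+o_\p(1)$, and subtracting gives
\[
\sup_{\theta\in\Theta_I}Q_n(\theta)=\|\mf T\mb V_n\|^2+o_\p(1).
\]
Under $\Sigma=I_{d^*}$, the continuous mapping theorem yields $\sup_{\theta\in\Theta_I}Q_n(\theta)\rightsquigarrow\|\mf T Z\|^2$ with $Z\sim N(0,I_{d^*})$, whose law is $F_T$.

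For (ii), Lemma \ref{l:post:prime} gives the one-sided uniform bound $\Pi_n(\{\theta:Q_n(\theta)\leq z\}\mid\mf X_n)\leq F_T(z)+o_\p(1)$ uniformly in $z$. Fix $z<w_\alpha$; continuity of $F_T$ at $w_\alpha$ implies $F_T(z)<\alpha$, so with probability approaching one the posterior CDF of $Q_n$ evaluated at $z$ lies strictly below $\alpha$, which forces $\xi_{n,\alpha}^{post}\geq z$ eventually. Letting $z\uparrow w_\alpha$ along a countable sequence yields $\xi_{n,\alpha}^{post}\geq w_\alpha+o_\p(1)$, and Assumption \ref{a:mcmc} then transfers this bound to $\xi_{n,\alpha}^{mc}$. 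Combining (i) and (ii) in Lemma \ref{l:basic} delivers $\liminf_{n\to\infty}\p(\Theta_I\subseteq\wh\Theta_\alpha)\geq\alpha$.

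The main obstacle I anticipate is the argument in the first paragraph that the $\sup$ of $nL_n$ over $\Theta_{osn}$ is attained along $\gamma_\bot=0$: the quadratic expansion holds only up to an $o_\p(1)$ slack that must be shown to dominate any effect of $f_{n,\bot}$ when restricting to that slice, and one must use Assumption \ref{a:quad:prime}$'$(ii) plus the covering property of $T_{osn}$ to guarantee that points with $\gamma_\bot=0$ and $\gamma$ close to $\hat\gamma_n$ are actually available within $\Theta_{osn}$. Everything else is routine manipulation of quantiles and of the one-sided stochastic dominance supplied by Lemma \ref{l:post:prime}; note in particular that only the weaker $\liminf$ conclusion of Lemma \ref{l:basic} is asserted, matching the one-sided nature of the posterior bound in the singular case.
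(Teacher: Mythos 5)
Your proposal is correct and follows essentially the same route as the paper: the frequentist limit $\sup_{\theta\in\Theta_I}Q_n(\theta)=\|\mf T\mb V_n\|^2+o_\p(1)$ that you derive in your first paragraph (using $f_{n,\bot}\geq 0$, $f_{n,\bot}(0)=0$, the product structure in Assumption \ref{a:quad:prime}$'$(ii), and the covering of $T$ by $T_{osn}$) is exactly the content and proof of the paper's Lemma \ref{l:quad:prime}, and your step (ii) — converting the one-sided bound of Lemma \ref{l:post:prime} into $\xi_{n,\alpha}^{post}\geq w_\alpha+o_\p(1)$ and transferring it to $\xi_{n,\alpha}^{mc}$ via Assumption \ref{a:mcmc} before invoking Lemma \ref{l:basic} — is precisely how the paper argues (by reference to the proof of Theorem \ref{t:main}(i)).
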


For non-singular models, Theorem \ref{t:main} establishes that $\wh \Theta_\alpha$ is asymptotically valid for $\Theta_I$, with asymptotically exact coverage when $T$ is linear and can be conservative when $T$ is a closed convex cone. For singular models, Theorem \ref{t:main:prime} shows that $\wh \Theta_\alpha$ is still asymptotically valid for $\Theta_I$ but can be conservative even when $T$ is linear.\footnote{It might be possible to establish asymptotically exact coverage of $\wh \Theta_\alpha$ for $\Theta_I$ in singular models where the singular part $f_{n,\perp}(\gamma_\perp(\theta))$ in Assumption \ref{a:quad:prime}' possesses some extra structure.} When applied to the missing data example, Theorems \ref{t:main} and \ref{t:main:prime} imply that $\wh \Theta_\alpha$ for $\Theta_I$ is asymptotically exact under partial identification but conservative under point identification. This is consistent with simulation results reported in Table \ref{t:ex1}; see Section \ref{s:md} below for details.

\subsection{Coverage properties of $\wh M_\alpha$ for $M_I$}\label{s:subvec}

Here we present conditions under which $\wh M_\alpha$ has correct coverage for the identified set $M_I$ of subvectors $\mu$. Recall the definition of $M(\theta) \equiv  \{ \mu : (\mu,\eta) \in \Delta(\theta) \mbox{ for some } \eta\}$ from Section \ref{sec-procedure}. The profile criterion $PL_n(M(\theta))$ for $M(\theta)$ and the profile QLR $PQ_n(M(\theta))$ for $M(\theta)$ are defined the same way as those in (\ref{PL-set-b}) and (\ref{PQLR-set-b}) respectively:
\[
PL_n(M(\theta)) \equiv \inf_{\mu \in M(\theta)} \sup_{\eta \in H_\mu} L_n(\mu,\eta) \quad \mbox{and} \quad PQ_n(M(\theta))\equiv 2n [L_n(\hat \theta) - PL_n(M(\theta))].
\]

\begin{assumption}\label{a:qlr:profile} (Profile QL) \\
	There exists a measurable $f : \mb R^{d^*} \to \mb R_{+}$ such that:
	\begin{align*}
	& \sup_{\theta \in \Theta_{osn}} \left|  nP L_n(M(\theta)) - \left( \ell_n + \frac{1}{2} \|\sqrt n \hat \gamma_n \|^2 - \frac{1}{2} f \left( \sqrt n (\hat \gamma_n - \gamma(\theta))  \right)  \right) \right| = o_\p(1)
	\end{align*}
	with $\hat \gamma_n$ and $\gamma(\cdot)$ from Assumption \ref{a:quad} or \ref{a:quad:prime}'.
\end{assumption}

\paragraph{Discussion of Assumption \ref{a:qlr:profile}:} By definition of $M_I$ (in display (\ref{M-I})) we have: $M_I = \{\mu : \gamma(\mu,\eta)=0 \mbox{ for some } \eta \in H_\mu\}$ and also
$M_I = M(\theta)$ for any $\theta \in \Theta_I$. Thus
\[
PQ_n(M_I) = \sup_{\mu \in M_I} \inf_{\eta \in H_\mu} Q_n (\mu, \eta)
=PQ_n(M(\theta))\quad  \mbox{for all $\theta \in \Theta_I$}~.
\]
Assumption \ref{a:qlr:profile} imposes some structure on the profile QLR statistic for $M_I$ over the local neighborhood $\Theta_{osn}$. It implies that the profile QLR for $M_I$ is of the form:
\begin{equation} \label{e:qlr:subvec}
 PQ_n(M_I) = f (\mf T \mb V_n ) + o_\p(1) \,.
\end{equation}
When $\Sigma = I_{d^*}$, the asymptotic distribution of $\sup_{\theta \in \Theta_I} PQ_n(M(\theta))=PQ_n(M_I)$ becomes $G_T$:
\[
 G_T (z) := \p_Z (f( \mf T Z ) \leq z) \,~~~\mbox{where}~~Z \sim N(0,I_{d^*})~.
\]
The functional form of $f$ depends on the local reparameterization $\gamma$ and the geometry of $M_I$. When $M_I$ is a singleton and $T = \mb R^{d^*}$ then equation (\ref{e:qlr:subvec}) is typically satisfied with $f(v) = \inf_{t \in T_1} \|v - t\|^2$ where $T_1= \mb R^{d_1^*}$ with $d_1^* < d^*$ and the QLR statistic is asymptotically $\chi^2_{d^*-d_1^*}$. For a non-singleton set $M_I$, $f$ will typically be more complex. In the missing data example, we show in Section \ref{s:suff} that $f(v) = \max_{\mu \in \{ \ul \mu,\ol \mu\}} \inf_{t \in T_\mu} \|v - t\|^2$ where $T_{\ul \mu}$ and $T_{\ol \mu}$ are halfspaces and $T = \mb R^{d^*}$. Here the resulting profile QLR statistic for $M_I$ is asymptotically  the maximum of two mixtures of $\chi^2$ random variables. Note that the existence of $f$ is merely a proof device, and one does not need to know its precise expression in the implementation of our Procedure 2 CS $\wh M_{\alpha}$ for $M_I$.

The next lemma is a new BvM-type result for the posterior distribution of the profile QLR for $M_I$. Note that this result also allows for singular models.

\begin{lemma}\label{l:post:profile}
	Let Assumptions \ref{a:rate}, \ref{a:quad}, \ref{a:prior}, and \ref{a:qlr:profile} or \ref{a:rate}, \ref{a:quad:prime}', \ref{a:prior:prime}', and \ref{a:qlr:profile} hold. Then for any interval $I$ such that $\p_{Z} ( f(Z) \leq z )$ is continuous on $I$, we have:
	\begin{align*}
	\sup_{z \in I} \left| { \Pi_n \big( \{\theta:PQ_n( M(\theta)) \leq  z \} \,\big|\, \mf X_n \big) } - \p_{Z | \mf X_n} \Big( f(Z) \leq z  \Big| Z \in \sqrt n \hat \gamma_n - T \Big)  \right| = o_\p(1)\,.
	\end{align*}
And hence we have:\\
 (i) If $T \subsetneq \mb R^{d^*}$ and $f$ is subconvex,\footnote{We say that $f : \mb R^{d^*} \to \mb R_+$ is quasiconvex if $f^{-1}(z) := \{ v : f(v) \leq z\}$ is convex for each $z \geq 0$ and subconvex if, in addition, $f(v) = f(-v)$ for all $v \in \mb R^{d^*}$. The conclusion of Lemma \ref{l:post:profile}(i) remains valid under the weaker condition that (i) $f$ is quasiconvex and (ii) $\p_Z( Z \in ( f^{-1}(\xi_\alpha)- T^o)) \leq \p_Z( f(\mf T Z) \leq \xi_\alpha) $ holds, where $\xi_\alpha$ is the $\alpha$ quantile of $G_T$ and $T^o$ is the polar cone of $T$.} then: $\Pi_n \big(\{ \theta : Q_n(\theta) \leq z \}\big|\,\mf X_n\big) \leq G_T (z)$ for all $z \geq 0$.\\
 (ii) If $T= \mb R^{d^*}$ then: $\sup_z \left| \Pi_n \big(\{ \theta : Q_n(\theta) \leq z\} \,\big|\,\mf X_n\big) - \mb P_{Z} \big( f ( Z )  \leq z\big) \right| = o_\p(1)$.
\end{lemma}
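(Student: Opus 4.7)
The plan is to mimic the proof of Lemma \ref{l:post} for the posterior of the ordinary QLR, replacing the quadratic form $\tfrac12\|\sqrt n(\hat\gamma_n-\gamma(\theta))\|^2$ by the functional $\tfrac12 f(\sqrt n(\hat\gamma_n-\gamma(\theta)))$ supplied by Assumption \ref{a:qlr:profile}. First I would use posterior contraction (Assumption \ref{a:rate}(ii)) to discard posterior mass outside $\Theta_{osn}$. On $\Theta_{osn}$, combining Assumption \ref{a:quad} (or \ref{a:quad:prime}' in the singular case) with Assumption \ref{a:qlr:profile}, and using $nL_n(\hat\theta)=\ell_n+\tfrac12\|\sqrt n\hat\gamma_n\|^2+o_\p(1)$ (a by-product of (\ref{Fqlr})), I would obtain the uniform approximation
\[
 PQ_n(M(\theta)) \;=\; f\!\bigl(\sqrt n(\hat\gamma_n-\gamma(\theta))\bigr)+o_\p(1),
\]
so that, up to an $o_\p(1)$ slack in $z$, the event $\{PQ_n(M(\theta))\leq z\}\cap\Theta_{osn}$ depends on $\theta$ only through $\gamma(\theta)$.

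I would then push the localized posterior forward under $\theta\mapsto\gamma(\theta)$, using the positive continuous density of $\Pi_\Gamma$ at $0$ (Assumption \ref{a:prior}(ii) or \ref{a:prior:prime}'(ii)); in the singular case the coordinates $\gamma_\perp$ integrate out thanks to the product structure in Assumption \ref{a:quad:prime}'(ii), so the common factor $\int e^{-f_{n,\perp}(\gamma_\perp)}\,\mr d\gamma_\perp$ cancels between numerator and denominator of $\Pi_n$. Rescaling $u=\sqrt n\gamma(\theta)\in T_{osn}$ and inserting the quadratic part of Assumption \ref{a:quad}, the conditional density of $u$ given the data is, to leading order, proportional to $\exp(-\tfrac12\|u-\sqrt n\hat\gamma_n\|^2)$ truncated to $T_{osn}$. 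Because $T_{osn}$ covers the closed convex cone $T$ and $\sqrt n\hat\gamma_n=\mf T\mb V_n\in T$, this truncation can be replaced by truncation to $T$ at the cost of $o_\p(1)$, exactly as in the proof of Lemma \ref{l:post}. Setting $Z=\sqrt n\hat\gamma_n-u\sim N(0,I_{d^*})$ conditional on $Z\in\sqrt n\hat\gamma_n-T$ and composing with $f$ yields the main display; the stated continuity of $z\mapsto\mb P_Z(f(Z)\leq z)$ on $I$ then upgrades the resulting pointwise convergence to the uniform statement over $I$.

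Part (ii) will follow immediately, since $T=\mb R^{d^*}$ makes the conditioning on $\{Z\in\sqrt n\hat\gamma_n-T\}$ vacuous. Part (i) is the step I expect to be the main obstacle: I need to show
\[
 \mb P_Z\!\bigl(f(Z)\leq z\,\big|\,Z\in v-T\bigr)\;\leq\;\mb P_Z\!\bigl(f(\mf T Z)\leq z\bigr)=G_T(z)
\]
for every $v\in T$. Using the polar decomposition $Z=\mf T Z+(Z-\mf T Z)$ with $Z-\mf T Z\in T^o$, subconvexity of $f$ (which makes $f^{-1}(z)$ convex and centrally symmetric) should reduce this to an Anderson-type comparison of a Gaussian measure on a translated centrally symmetric convex set, which delivers the required bound; the weaker quasiconvex condition in the footnote would be handled by the same comparison applied only at the single value $\xi_\alpha$ rather than uniformly in $z$.
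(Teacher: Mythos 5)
Your plan for the main display and for part (ii) is essentially the paper's own proof: localize the posterior to $\Theta_{osn}$ by Assumption \ref{a:rate}(ii), sandwich $PQ_n(M(\theta))$ and $nL_n(\theta)$ by their expansions (Lemma \ref{l:quad:profile} plus Assumption \ref{a:quad} or \ref{a:quad:prime}') up to an $\varepsilon_n=o(1)$ slack, change variables $\theta\mapsto(\gamma(\theta),\gamma_\bot(\theta))$ using the positive continuous density of the induced prior, integrate out $\gamma_\bot$ by Tonelli via the product structure in Assumption \ref{a:quad:prime}'(ii), rescale to $\kappa=\mf T\mb V_n-\sqrt n\gamma$, replace $T_{osn}$ by $T$ as in Lemma \ref{l:post}, and absorb the $\pm\varepsilon_n$ slack using continuity of $z\mapsto \p_Z(f(Z)\leq z)$ on $I$. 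That part is correct and matches the paper, as does the observation that (ii) is immediate because the conditioning is vacuous when $T=\mb R^{d^*}$.

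The genuine gap is part (i). You need, for every $v$, the bound $\p_Z\big(f(Z)\leq z\,\big|\,Z\in \mf T v-T\big)\leq \p_Z\big(f(\mf T Z)\leq z\big)=G_T(z)$, and your appeal to "an Anderson-type comparison" does not deliver it: Anderson's lemma compares the Gaussian mass of a centrally symmetric convex set with that of its translates, whereas here you must control a probability \emph{conditional on a translated convex cone} and compare it with the law of the \emph{projected} vector $\mf T Z$, not of $Z$. The paper (in the proof of Theorem \ref{t:main:profile}) handles this in three steps that are absent from your sketch: (a) enlarge $f^{-1}(z)$ to $f^{-1}(z)-T^o$, which is legitimate since $0\in T^o$; (b) decouple the intersection with the shifted cone via the Gaussian correlation-type inequality of Theorem 1 in \cite{ChenGao} (with $A=\{\mf T v\}$, $B=f^{-1}(z)$, $C=-T$, $D=-T^o$), giving $\nu_{d^*}\big(f^{-1}(z)\cap(\mf T v-T)\big)\leq \nu_{d^*}\big(f^{-1}(z)-T^o\big)\,\nu_{d^*}\big(\mf T v-T\big)$; and (c) use subconvexity of $f$ to ensure $\nu_{d^*}\big(f^{-1}(z)-T^o\big)\leq \p_Z\big(f(\mf T Z)\leq z\big)$ — indeed the footnote's weaker condition is exactly step (c) evaluated at the single quantile $\xi_\alpha$. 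Without the decoupling step (b) in particular, the cone conditioning cannot be removed, so as written your argument for (i) does not go through; symmetry and convexity of $f^{-1}(z)$ alone, fed into Anderson's lemma, are not sufficient.
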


 Let $\xi_{n,\alpha}^{post,p}$ denote the $\alpha$ quantile of the profile QLR $PQ_n(M(\theta))$ under the posterior distribution $\Pi_n$, and $\xi_{n,\alpha}^{mc,p}$ be given in Remark \ref{rmk:subvec}.

\begin{assumption}\label{a:mcmc:profile} (MC convergence) \\
	$\xi_{n,\alpha}^{mc,p} = \xi_{n,\alpha}^{post,p} + o_\p(1)$.
\end{assumption}

The next theorem is an important consequence of Lemma \ref{l:post:profile}.

\begin{theorem} \label{t:main:profile}
	Let Assumptions \ref{a:rate}, \ref{a:quad}, \ref{a:prior}, \ref{a:qlr:profile}, and \ref{a:mcmc:profile} or \ref{a:rate}, \ref{a:quad:prime}', \ref{a:prior:prime}', \ref{a:qlr:profile}, and \ref{a:mcmc:profile} hold with $\Sigma=I_{d^*}$ and suppose that $G_T (\cdot )$ is continuous at its $\alpha$ quantile.
\newline
(i) If $T \subsetneq \mb R^{d^*}$ and $f$ is subconvex,\footnote{The conclusion of Theorem \ref{t:main:profile}(i) remains valid under the weaker condition stated in footnote for Lemma \ref{l:post:profile}(i).} 
then: $\liminf_{n \to \infty} \p(M_I \subseteq \wh M_{\alpha}) \geq \alpha $ ;
\newline
(ii) If $T = \mb R^{d^*}$ then: $\lim_{n \to \infty} \p(M_I \subseteq \wh M_{\alpha}) = \alpha $.
\end{theorem}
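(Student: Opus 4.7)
The plan is to invoke the general high-level coverage result Lemma \ref{l:basic:profile} with $w_{n,\alpha} = \xi_{n,\alpha}^{mc,p}$ and $w_\alpha$ equal to $g_{T,\alpha}$, the $\alpha$-quantile of $G_T$. Two things need to be checked: the limiting distribution of the \emph{frequentist} profile QLR statistic, and the large-sample behavior of the Monte-Carlo cutoff.

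First, I would combine Assumption \ref{a:qlr:profile} with the local quadratic approximation (Assumption \ref{a:quad} or \ref{a:quad:prime}$'$) to establish
\[
 \sup_{\mu \in M_I} \inf_{\eta \in H_\mu} Q_n(\mu,\eta) \;=\; PQ_n(M_I) \;=\; f(\mf T \mb V_n) + o_\p(1).
\]
Here one uses that $\gamma(\theta) = 0$ for $\theta \in \Theta_I$, so plugging any $\theta \in \Theta_I$ into the expansion of $PL_n(M(\theta))$ in Assumption \ref{a:qlr:profile} and comparing with the expansion of $L_n(\hat\theta)$ (which, by Lemma \ref{l:quad} applied to Assumption \ref{a:quad}/\ref{a:quad:prime}$'$, equals $\ell_n/n + \tfrac{1}{2n}\|\sqrt n \hat\gamma_n\|^2 + o_\p(n^{-1})$) cancels the $\ell_n + \tfrac12\|\sqrt n \hat\gamma_n\|^2$ terms. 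Since $\Sigma = I_{d^*}$ gives $\mb V_n \rightsquigarrow Z \sim N(0,I_{d^*})$, a continuous-mapping argument (valid at any $z$ where $G_T$ is continuous) yields $PQ_n(M_I) \rightsquigarrow W$ with $W \sim G_T$. This verifies condition (i) of Lemma \ref{l:basic:profile}.

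Next I would turn the posterior BvM-type result in Lemma \ref{l:post:profile} into a statement about the MC cutoff $\xi_{n,\alpha}^{mc,p}$. In case (ii) with $T = \mb R^{d^*}$, Lemma \ref{l:post:profile}(ii) gives uniform convergence of the posterior CDF of $PQ_n(M(\theta))$ to $G_T$, so continuity of $G_T$ at $g_{T,\alpha}$ implies $\xi_{n,\alpha}^{post,p} = g_{T,\alpha} + o_\p(1)$, and Assumption \ref{a:mcmc:profile} upgrades this to $\xi_{n,\alpha}^{mc,p} = g_{T,\alpha} + o_\p(1)$. In case (i) with $T \subsetneq \mb R^{d^*}$ and $f$ subconvex, Lemma \ref{l:post:profile}(i) asserts that the posterior CDF is asymptotically dominated by $G_T$, i.e.\
\[
 \Pi_n\!\big(\{\theta : PQ_n(M(\theta)) \leq z\}\big|\,\mf X_n\big) \;\leq\; G_T(z) + o_\p(1)\,.
\]
Inverting this stochastic-dominance inequality at level $\alpha$ (using continuity of $G_T$ at $g_{T,\alpha}$ to pass from CDF bounds to quantile bounds) gives $\xi_{n,\alpha}^{post,p} \geq g_{T,\alpha} + o_\p(1)$, and hence by Assumption \ref{a:mcmc:profile} also $\xi_{n,\alpha}^{mc,p} \geq g_{T,\alpha} + o_\p(1)$. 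This verifies condition (ii) of Lemma \ref{l:basic:profile}, in the two cases with equality and with $\geq$ respectively.

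Finally, Lemma \ref{l:basic:profile} delivers $\liminf_{n\to\infty} \p(M_I \subseteq \wh M_\alpha) \geq \alpha$ in case (i), and $\lim_{n\to\infty} \p(M_I \subseteq \wh M_\alpha) = \alpha$ in case (ii). The only nontrivial step is the quantile inversion in case (i): subconvexity of $f$ is used precisely at the Lemma \ref{l:post:profile}(i) stage, to ensure that the conditional probability $\p_{Z|\mf X_n}(f(Z) \leq z \mid Z \in \sqrt n \hat\gamma_n - T)$ is bounded above by the unconditional $G_T(z)$ (an Anderson-type argument for symmetric quasiconvex level sets against a centered Gaussian conditioned on a translated cone). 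Everything else is routine assembly of Lemmas \ref{l:basic:profile}, \ref{l:post:profile}, and the frequentist expansion implied by Assumption \ref{a:qlr:profile}.
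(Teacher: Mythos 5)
Your proposal is correct and follows essentially the same route as the paper: verify Lemma \ref{l:basic:profile} by (a) using the expansions in Assumptions \ref{a:quad}/\ref{a:quad:prime}' and \ref{a:qlr:profile} to get $PQ_n(M_I) = f(\mf T \mb V_n) + o_\p(1) \rightsquigarrow G_T$, and (b) controlling $\xi_{n,\alpha}^{mc,p}$ via Lemma \ref{l:post:profile} plus Assumption \ref{a:mcmc:profile}, with exact quantile convergence when $T = \mb R^{d^*}$ and the stochastic-dominance inversion $\xi_{n,\alpha}^{post,p} \geq g_{T,\alpha} + o_\p(1)$ when $T \subsetneq \mb R^{d^*}$. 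The only cosmetic difference is that the paper carries out the Anderson/Chen--Gao domination argument (where subconvexity enters) inside the theorem's proof, whereas you invoke it through the statement of Lemma \ref{l:post:profile}(i); the substance is identical.
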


Theorem \ref{t:main:profile}(ii) shows that our Procedure 2 CSs $\wh M_{\alpha}$ for $M_I$ can have asymptotically exact coverage if $T = \mb R^{d^*}$ even if the model is singular. In the missing data example, Theorem \ref{t:main:profile}(ii) implies that $\wh M_\alpha$ for $M_I$ is asymptotically exact irrespective of whether the model is point-identified or not (see Subsection \ref{s:md} below). Theorem \ref{t:main:profile}(i) shows that the CSs $\wh M_{\alpha}$ for $M_I$ can have conservative coverage when $T$ is a convex cone (see Appendix \ref{s:drift} for a moment inequality example).

\subsection{Coverage properties of $\wh M_\alpha^\chi$ for $M_I$ of scalar subvectors} \label{s:mchi}

This section presents one sufficient condition for validity of our Procedure 3 CS $\wh M_\alpha^\chi$ for $M_I \subset \mb R$. We say a half-space is \emph{regular} if it is of the form $\{v \in \mb R^{d^*} : a'v \leq 0\}$ for some $a \in \mb R^{d^*}$.

\begin{assumption}\label{a:qlr:chi}(Profile QLR, $\chi^2$ bound) \\
	$PQ_n(M(\theta)) \rightsquigarrow W \leq \max_{i \in \{1,2\}} \inf_{t \in T_i} \| Z - t\|^2$ for all $\theta \in \Theta_I$, where $Z \sim N(0,I_{d^*})$ for some $d^* \geq 1$ and $T_1$ and $T_2$ are regular half-spaces in $\mb R^{d^*}$.
\end{assumption}

\begin{theorem}\label{t:profile:chi}
	Let Assumption \ref{a:qlr:chi} hold and let the distribution of $W$ be continuous at its $\alpha$ quantile. Then: $\liminf_{n \to \infty} \p( M_I \subseteq \wh M_\alpha^\chi ) \geq \alpha$.
\end{theorem}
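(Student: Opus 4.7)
The plan is to invoke Lemma~\ref{l:basic:profile} with $w_{n,\alpha} = \chi^2_{1,\alpha}$ (which is non-random and trivially satisfies $w_{n,\alpha} \geq w_\alpha + o_\p(1)$ provided we can show $w_\alpha \leq \chi^2_{1,\alpha}$). Since $M(\theta) = M_I$ for any $\theta \in \Theta_I$ (by the very definition of the equivalence classes $\Delta(\theta)$), the profile QLR statistic $\sup_{\mu \in M_I}\inf_{\eta \in H_\mu} Q_n(\mu,\eta) = PQ_n(M(\theta))$ for any such $\theta$, so Assumption~\ref{a:qlr:chi} delivers condition (i) of Lemma~\ref{l:basic:profile}. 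The whole task therefore reduces to showing that $W$ is stochastically dominated by the $\chi^2_1$ distribution, i.e.\ $F_W(z) \geq F_{\chi^2_1}(z)$ for all $z \geq 0$.

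Since stochastic dominance is preserved by the bound in Assumption~\ref{a:qlr:chi}, it suffices to show that $Y := \max_{i \in \{1,2\}} \inf_{t \in T_i}\|Z-t\|^2$ is stochastically dominated by $\chi^2_1$. For each regular half-space $T_i = \{v \in \mb R^{d^*} : a_i'v \leq 0\}$, a direct computation of the orthogonal projection gives
\[
\inf_{t \in T_i}\|Z-t\|^2 \;=\; \bigl( (a_i'Z/\|a_i\|)_+\bigr)^2 \;=\; X_{i,+}^2,
\]
where $X_i := a_i'Z/\|a_i\| \sim N(0,1)$ marginally (with some correlation $\rho = a_1'a_2/(\|a_1\|\|a_2\|) \in [-1,1]$ between $X_1$ and $X_2$). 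For any $z \geq 0$, the event $X_{i,+}^2 \leq z$ is identical to $\{X_i \leq \sqrt z\}$, so
\[
F_Y(z) \;=\; \p\bigl(X_1 \leq \sqrt z,\; X_2 \leq \sqrt z\bigr).
\]

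The final ingredient is the Fréchet/Bonferroni lower bound $\p(A \cap B) \geq \p(A) + \p(B) - 1$, which yields
\[
F_Y(z) \;\geq\; 2\Phi(\sqrt z) - 1 \;=\; F_{\chi^2_1}(z) \qquad \text{for all } z \geq 0.
\]
Hence $Y$, and therefore $W$, is stochastically dominated by $\chi^2_1$, so that the $\alpha$-quantile satisfies $w_\alpha \leq \chi^2_{1,\alpha}$. Lemma~\ref{l:basic:profile} then delivers $\liminf_{n\to\infty}\p(M_I \subseteq \wh M_\alpha^\chi) \geq \alpha$, as required.

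The only nontrivial step is the stochastic dominance argument, and the real crux is that it relies on the Bonferroni lower bound applied to exactly two half-spaces: this extends to negatively correlated or arbitrary $(a_1,a_2)$ without any independence assumption, but would degrade if one attempted the same strategy for three or more half-spaces (the analogous lower bound $\sum_i \p(A_i) - (k-1)$ need not control the $\chi^2_1$ tail). This explains, and is consistent with, the fact that Assumption~\ref{a:qlr:chi} is tailored to scalar subvectors (for which the local geometry of $M_I$ typically reduces to the endpoints $\ul\mu$ and $\ol\mu$ and hence to at most two half-spaces).
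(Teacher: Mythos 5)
Your proof is correct, and it reaches the key inequality by a genuinely different and considerably shorter route than the paper. The reduction is the same in both cases: invoke Lemma \ref{l:basic:profile} with the deterministic cutoff $w_{n,\alpha}=\chi^2_{1,\alpha}$, note $M(\theta)=M_I$ for $\theta\in\Theta_I$ so Assumption \ref{a:qlr:chi} gives condition (i), and reduce everything to showing $\Pr(W^*\leq w)\geq F_{\chi^2_1}(w)$ for $W^*=\max_{i\in\{1,2\}}\inf_{t\in T_i}\|Z-t\|^2$. Where the paper establishes this dominance through an extended geometric case analysis ($d^*=1$, $d^*=2$, $d^*\geq 3$), rotating coordinates so both polar cones lie in a plane, splitting on the symmetry axis $y=-x\cot(\varphi/2)$, and evaluating a conditional Gaussian probability whose invariance in the correlation is shown by an odd-integrand argument, you simply use the closed form $\inf_{t\in T_i}\|Z-t\|^2=((a_i'Z/\|a_i\|)_+)^2$, observe that $\{W^*\leq z\}=\{X_1\leq\sqrt z\}\cap\{X_2\leq\sqrt z\}$ with $X_i=a_i'Z/\|a_i\|\sim N(0,1)$, and apply the Fr\'echet--Bonferroni bound, which delivers exactly $2\Phi(\sqrt z)-1=F_{\chi^2_1}(z)$; this is dimension-free, uses only the marginal laws of $X_1,X_2$ (no joint geometry at all), and correctly identifies $\rho=-1$ (complementary half-spaces) as the extremal configuration, matching the paper's Case 3b. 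What the paper's longer argument buys in exchange is the explicit dependence of the law of $W^*$ on the angle between the half-spaces, which is what Section \ref{s:mchi} uses to quantify the worst-case conservativeness of $\wh M_\alpha^\chi$ (the mixture distribution $W^*$ when the polar cones are orthogonal); your Bonferroni bound yields the validity statement but not that refinement. One trivial caveat you may wish to note: if some $a_i=0$ (so $T_i=\mb R^{d^*}$), then $X_i$ is undefined, but the corresponding distance is identically zero and the domination is immediate from the single remaining half-space, so the argument is unaffected (the paper likewise implicitly takes $a_i\neq 0$).
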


The following proposition presents a set of sufficient conditions for Assumption \ref{a:qlr:chi}.

\begin{proposition}\label{p:qlr:chi}
	Let the following hold:\\
	(i) Assumptions \ref{a:rate}(i), \ref{a:quad} or \ref{a:quad:prime}' hold with $\Sigma = I_{d^*}$ and $T = \mb R^{d^*}$; \\
    (ii) $\inf_{\mu \in M_I} \sup_{\eta \in H_\mu} L_n(\mu,\eta) = \min_{\mu \in \{ \ul \mu, \ol \mu\}} \sup_{\eta \in H_\mu} L_n(\mu,\eta) + o_\p(n^{-1})$; \\
	(iii) for each $\mu \in \{ \ul \mu, \ol \mu\}$ there exists a sequence of sets $(\Gamma_{\mu,osn})_{n \in \mb N}$ with $\Gamma_{\mu,osn} \subseteq \Gamma$ for each $n$ and a halfspace $T_\mu$ in $\mb R^{d^*}$ such that:
	\[
	\sup_{\eta \in H_\mu} n L_n(\mu,\eta) = \sup_{\gamma \in \Gamma_{\mu,osn}} \left( \ell_n + \frac{1}{2} \|\mb V_n\|^2 - \frac{1}{2} \|  \sqrt n \gamma - \mb V_n \|^2 \right) + o_\p(1)
	\]
	and $\inf_{\gamma \in \Gamma_{\mu,osn}} \| \sqrt n \gamma - \mb V_n \|^2 = \inf_{t \in T_\mu} \|t - \mb V_n\|^2 + o_\p(1)$. \\
	Then: Assumption \ref{a:qlr:chi} holds with $W=\max_{i \in \{ \ul \mu, \ol \mu\}} \inf_{t \in T_i} \| Z - t\|^2$.
\end{proposition}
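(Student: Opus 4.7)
The plan is to combine conditions (i)--(iii) to obtain an asymptotic linearization of $PQ_n(M_I)$ as a maximum of two squared distances from $\mb V_n$ to halfspaces, and then apply the continuous mapping theorem to get the stated weak limit. Since $M(\theta)=M_I$ for every $\theta\in\Theta_I$, it suffices to analyze $PQ_n(M_I) = 2n[L_n(\hat\theta)-PL_n(M_I)]$.

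First I would expand $nL_n(\hat\theta)$ using (i) together with Assumption~\ref{a:rate}(i). Since $T=\mb R^{d^*}$, the projection $\mf T$ is the identity and $\sqrt n\,\hat\gamma_n=\mb V_n$; the covering property of $\{\sqrt n\,\gamma(\theta):\theta\in\Theta_{osn}\}$ together with the local quadratic expansion of Assumption~\ref{a:quad} (or \ref{a:quad:prime}') then yields
\[
 nL_n(\hat\theta) \;=\; \ell_n + \tfrac12\|\mb V_n\|^2 + o_\p(1).
\]
Next I would read off the supremum in (iii) directly, which, for each $\mu\in\{\ul\mu,\ol\mu\}$, gives
\[
 \sup_{\eta\in H_\mu} nL_n(\mu,\eta) \;=\; \ell_n + \tfrac12\|\mb V_n\|^2 - \tfrac12 \inf_{t\in T_\mu}\|t-\mb V_n\|^2 + o_\p(1).
\]
Subtracting and doubling gives the per-endpoint piece $2n[L_n(\hat\theta)-\sup_{\eta\in H_\mu}L_n(\mu,\eta)] = \inf_{t\in T_\mu}\|t-\mb V_n\|^2 + o_\p(1)$.

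I would then use (ii) to replace $\inf_{\mu\in M_I}$ in $PL_n(M_I)$ by $\min_{\mu\in\{\ul\mu,\ol\mu\}}$ (up to $o_\p(n^{-1})$); once subtracted from $nL_n(\hat\theta)$ the minimum becomes a maximum, so
\[
 PQ_n(M_I) \;=\; \max_{\mu\in\{\ul\mu,\ol\mu\}} \inf_{t\in T_\mu}\|t-\mb V_n\|^2 + o_\p(1).
\]
Because $\mb V_n\rightsquigarrow Z\sim N(0,I_{d^*})$ by (i) and the map $v\mapsto\max_i\inf_{t\in T_i}\|t-v\|^2$ is continuous on $\mb R^{d^*}$ (each $T_i$ is closed), the continuous mapping theorem delivers $PQ_n(M_I)\rightsquigarrow W := \max_{\mu\in\{\ul\mu,\ol\mu\}}\inf_{t\in T_\mu}\|Z-t\|^2$, which is Assumption~\ref{a:qlr:chi} with equality in the stochastic bound.

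The main subtlety, rather than the derivation itself, is justifying that $T_{\ul\mu}$ and $T_{\ol\mu}$ are \emph{regular} halfspaces (i.e.\ contain the origin on their boundary), since Assumption~\ref{a:qlr:chi} requires this. This follows because the endpoints $\ul\mu,\ol\mu$ are attained in $M_I$, so $(\ul\mu,\eta)\in\Theta_I$ for some $\eta\in H_{\ul\mu}$, forcing $\gamma=0\in\Gamma_{\mu,osn}$ for $n$ large; the inf-matching clause in (iii), evaluated near the origin where $\mb V_n$ has non-vanishing asymptotic mass, then forces $0$ to sit on $\partial T_\mu$ rather than in its interior (otherwise the two infima would differ by a non-negligible amount with positive probability). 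The remaining work is just careful bookkeeping of the $o_\p(1)$ error terms through the sup, inf, and max operations, which is routine once the quadratic expansion is in place.
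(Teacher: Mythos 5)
Your derivation is correct and is essentially the paper's own proof: expand $2nL_n(\hat\theta)=2\ell_n+\|\mb V_n\|^2+o_\p(1)$ from condition (i) (the displays behind Lemma \ref{l:quad}/\ref{l:quad:prime}), plug conditions (ii)--(iii) into the profile criterion at the two endpoints, subtract to get $PQ_n(M_I)=\max_{\mu\in\{\ul\mu,\ol\mu\}}\inf_{t\in T_\mu}\|\mb V_n-t\|^2+o_\p(1)$, and conclude from $\Sigma=I_{d^*}$ via the continuous mapping theorem. Your closing paragraph on why the $T_\mu$ are \emph{regular} halfspaces addresses a point the paper's proof simply leaves implicit (it takes the halfspaces as given, as in the missing-data verification); your heuristic there is reasonable but not needed for the argument and is the only part not in the paper.
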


Suppose $M_I = [\ul \mu, \ol \mu]\subsetneq \mb R$
(which is true when $\Theta_I$ is connected and bounded). If $\sup_{\eta \in H_\mu} L_n(\mu,\eta)$ is strictly concave in $\mu$ then condition (ii) of Proposition \ref{p:qlr:chi} holds. The other conditions of Proposition \ref{p:qlr:chi} are easy to verify as in the missing data example (see Subsection \ref{s:md} below).

The exact distribution of $\max_{i \in \{1,2\}} \inf_{t \in T_i} \| Z - t\|^2$ depends on the geometry of $T_1$ and $T_2$. For the missing data example, the polar cones of $T_1$ and $T_2$ are at least $90^o$ apart. The worst-case coverage (i.e., the case in which asymptotic coverage of $\wh M_\alpha^\chi$ will be most conservative) will occur when the polar cones of $T_1$ and $T_2$ are orthogonal, in which case $\max_{i \in \{1,2\}} \inf_{t \in T_i} \| Z - t\|^2$ has the mixture distribution $W^* :=\frac{1}{4} \delta_0 + \frac{1}{2}\chi^2_1 + \frac{1}{4} (\chi^2_1 \cdot \chi^2_1) $ where $\delta_0$ is a point mass at zero and $\chi^2_1 \cdot \chi^2_1$ is the distribution of the product of two independent $\chi^2_1$ random variables. The quantiles of the distribution of $\max_{i \in \{1,2\}} \inf_{t \in T_i} \| Z - t\|^2$ are continuous in $\alpha$ for all $\alpha > \frac{1}{4}$. For all configurations of $T_1$ and $T_2$ in this example, the distribution of $\max_{i \in \{1,2\}} \inf_{t \in T_i} \| Z - t\|^2$ (first-order) stochastically dominates $F_{W^*}$ and is (first-order) stochastically dominated by $F_{\chi^2_1}$ (i.e., $F_{W^*} (w) \geq F_{W} (w) \geq  F_{\chi^2_1} (w)$). Notice that this is different from the usual chi-bar-squared case encountered when testing whether a parameter belongs to the identified set on the basis of finitely many moment inequalities \citep{Rosen}.

To get an idea of the degree of conservativeness of  $\wh M_\alpha^\chi$, consider the class of models satisfying conditions for Proposition \ref{p:qlr:chi}. Figure \ref{f:cvg:chi} plots the asymptotic coverage of  $\wh M_\alpha$ and $\wh M_\alpha^\chi$ against nominal coverage for models in this class where $\wh M_\alpha^\chi$ is most conservative for the missing data example (i.e., the worst-case coverage). For each model in this class, the asymptotic coverage of $\wh M_\alpha$ and $\wh M_\alpha^\chi$ is between the nominal coverage and the worst-case coverage. As can be seen, the coverage of $\wh M_\alpha$ is exact at all levels $\alpha \in (0,1)$ for which the distribution of the profile QLR is continuous at its $\alpha$ quantile, as shown in Theorem \ref{t:main:profile}(ii). On the other hand, $\wh M_\alpha^\chi$ is asymptotically conservative, but the level of conservativeness decreases as $\alpha$ increases towards one. Indeed, for levels of $\alpha$ in excess of $0.85$ the level of conservativeness is negligible.

\begin{figure}[t]
\begin{center}
\includegraphics[trim = 0cm 0cm 0cm 0cm, clip, width = 0.6\textwidth]{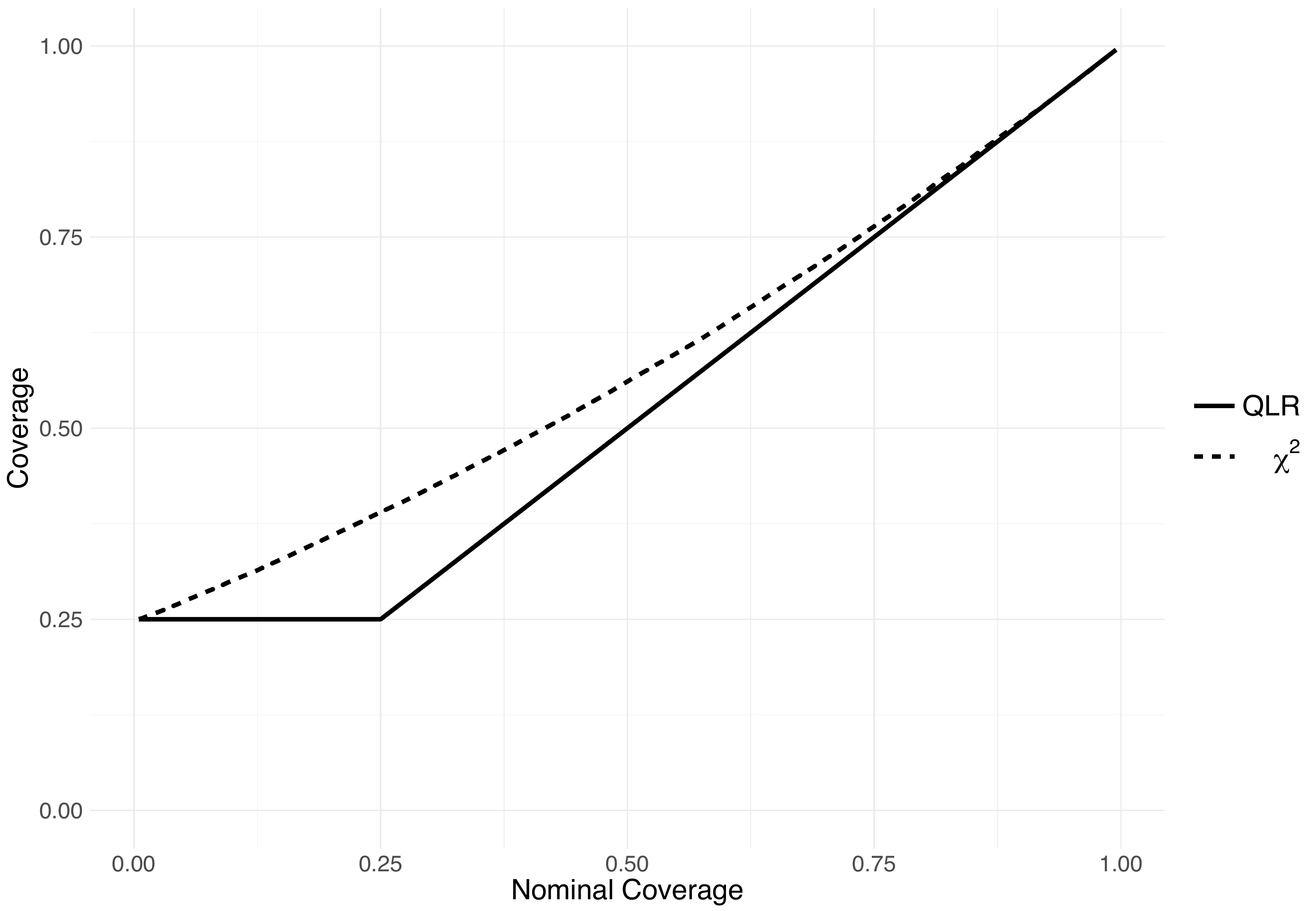}
\parbox{12cm}{\caption{\small\label{f:cvg:chi}  Missing data example: comparison of asymptotic coverage of $\wh M_\alpha$  (QLR -- solid kinked line) and $\wh M_\alpha^\chi$ ($\chi^2$ -- dashed curved line) with their nominal coverage for models where $\wh M_\alpha^\chi$ is valid for $M_I$ but most conservative.}}
\end{center}
\end{figure}

Since empirical papers typically report CSs for scalar parameters, Theorem \ref{t:profile:chi} will be very useful in applied work. One could generalize $\wh M_\alpha^\chi$ to deal with vector-valued subvectors by allowing $\chi_d^2$ quantiles with higher degrees of freedom $d \in (1,\dim(\theta))$, but it might be difficult to provide sufficient conditions as those in Proposition \ref{p:qlr:chi} to establish results like Theorem \ref{t:profile:chi}.

\section{Sufficient Conditions and Examples}\label{s:suff}

 This section provides sufficient conditions for the key regularity condition, Assumption \ref{a:quad}, in possibly partially identified likelihood and moment-based models with i.i.d. data. See Appendix \ref{s:uniformity-qe} for low-level conditions to ensure that Assumption \ref{a:quad} holds uniformly over a large class of DGPs in discrete models. We also verify Assumptions \ref{a:rate}, \ref{a:quad} (or \ref{a:quad:prime}'), \ref{a:prior} and \ref{a:qlr:profile}) in examples.

We use standard empirical process notation: $P_0 g$ denotes the expectation of $g(X_i)$ under the true probability measure $P_0$,  $\p_n g = n^{-1} \sum_{i=1}^n g(X_i)$ denotes expectation of $g(X_i)$ under the empirical measure, and  $\mb G_n g = \sqrt n (\p_n - P_0)g$ denotes the empirical process.

\subsection{Partially identified likelihood models}

Consider a parametric likelihood model $\mc P = \{p_\theta : \theta \in \Theta\}$ where each $p_\theta(\cdot)$ is a probability density with respect to a common $\sigma$-finite dominating measure $\lambda$. Let $p_0 \in \mc P$ be the true density under the data-generating probability measure, $D_{KL}(p \| q)$ denote the Kullback-Leibler divergence, and $h(p,q)^2 = \int (\sqrt p - \sqrt q)^2 \, \mr d \lambda$ denote the squared Hellinger distance between densities $p$ and $q$. The identified set is $\Theta_I = \{\theta \in \Theta : D_{KL}(p_0 \| p_\theta) = 0\}= \{\theta \in \Theta : h(p_0, p_\theta) = 0\}$.

\subsubsection{Separable likelihood models}\label{s:rfr}

For a large class of partially identified parametric likelihood models $\mc P= \{p_\theta : \theta \in \Theta\}$, there exists a function $\tilde \gamma : \Theta \to \wt \Gamma \subset \mb R^{d^*}$ for some possibly unknown $d^* \in [1, + \infty)$, such that $p_\theta(\cdot) = q_{\tilde \gamma(\theta)}(\cdot)$ for each $\theta \in \Theta$ and some densities $\{q_{\tilde \gamma(\theta)}(\cdot) : \tilde \gamma \in \wt \Gamma\}$. In this case we say that the model $\mc P$ is separable and admits a (global) reduced-form reparameterization. The reparameterization is assumed to be identifiable, i.e. $D_{KL}(q_{\tilde \gamma_0} \| q_{\tilde \gamma}) > 0$ for any $\tilde \gamma \neq \tilde \gamma_0$.  The identified set is $\Theta_I = \{\theta \in \Theta : \tilde \gamma(\theta) = \tilde \gamma_0\}$ where $\tilde \gamma_0$ is the true parameter, i.e. $p_0 = q_{\tilde \gamma_0}$. Models with discrete choice probabilities (such as the missing data and entry game designs we used in simulations) fall into this framework, where the vector $\tilde \gamma$ maps the structural parameters $\theta$ to the model-implied probabilities of discrete outcomes and the true probabilities $\tilde \gamma_0 \in \wt \Gamma$ of discrete outcomes are point-identified.

The following result presents one set of sufficient conditions for Assumptions \ref{a:rate}(ii) and \ref{a:quad} under conventional smoothness assumptions.

Let $\ell_{\tilde \gamma}(\cdot) := \log q_{\tilde \gamma}(\cdot)$, let $\dot \ell_{\tilde \gamma}$ and $\ddot \ell_{\tilde \gamma}$ denote the score and Hessian, let $\mb I_{0} := -P_0( \ddot \ell_{\tilde \gamma_0}^{\phantom\prime} )$ and let $\gamma(\theta) = \mb I_{0}^{1/2} (\tilde \gamma(\theta) - \tilde \gamma_0)$ and  $\Gamma = \{\mb I_{0}^{1/2} (\tilde \gamma\ - \tilde \gamma_0) : \tilde \gamma \in \wt \Gamma\}$.

\medskip

\begin{proposition} \label{p:rfr}
	Suppose that $\{q_{\tilde \gamma} : \tilde \gamma \in \wt \Gamma\}$ satisfies the following regularity conditions:\\
	(a) $X_1,\ldots,X_n$ is an i.i.d. sample from $q_{\tilde \gamma_0}$  with $\tilde \gamma_0$ identifiable and on the interior of $\wt \Gamma$; \\
	(b) $\tilde \gamma \mapsto P_0 \ell_{\tilde \gamma}$ is continuous and there is a neighborhood $U$ of $\tilde \gamma_0$ on which $\ell_{\tilde \gamma}(x)$ is twice continuously differentiable for each $x$, with $\dot \ell_{\tilde \gamma_0} \in L^2(P_0)$ and $\sup_{\tilde \gamma \in U}  \|\ddot \ell_{\tilde \gamma}(x) \| \leq \bar \ell(x)$ for some $\bar \ell \in L^2(P_0)$; \\
	(c) $P_0 \dot \ell_{\tilde \gamma} = 0$, $\mb I_0$ is non-singular, and $\mb I_0 = P_0( \dot \ell_{\tilde \gamma_0}^{\phantom \prime}\dot \ell_{\tilde \gamma_0}^{\prime})$; \\
	(d) $\wt \Gamma$ is compact and $\pi_\Gamma$ is strictly positive and continuous on $U$. \\
	Then: there exists a sequence $(r_n)_{n \in \mb N}$ with $r_n \to \infty$ and $r_n = o(n^{1/4})$ such that Assumptions \ref{a:rate}(ii) and \ref{a:quad} hold for the average log-likelihood (\ref{e:ll}) over $\Theta_{osn}:= \{ \theta \in \Theta : \|\gamma(\theta)\| \leq r_n/\sqrt n\}$ with $\ell_n = n\p_n \log p_0$, $\sqrt n \hat \gamma_n = \mb V_n = \mb I_{0}^{-1/2} \mb G_n (\dot \ell_{\tilde \gamma_0})$, $\Sigma = I_{d^*}$ and $T = \mb R^{d^*}$.
\end{proposition}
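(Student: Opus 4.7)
The plan is to exploit the separable structure $p_\theta = q_{\tilde\gamma(\theta)}$ to pull both assumptions back to the identifiable reduced-form model $\{q_{\tilde\gamma} : \tilde\gamma \in \wt\Gamma\}$, where classical LAN and posterior-contraction theory apply, and then push the result forward via the smooth reparameterization $\gamma(\theta) = \mb I_0^{1/2}(\tilde\gamma(\theta) - \tilde\gamma_0)$.

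For Assumption~\ref{a:quad}, I would Taylor-expand $\tilde\gamma \mapsto \ell_{\tilde\gamma}(x) = \log q_{\tilde\gamma}(x)$ to second order about $\tilde\gamma_0$. Averaging and using $P_0 \dot\ell_{\tilde\gamma_0} = 0$ from (c) yields
\[ n\p_n \ell_{\tilde\gamma(\theta)} - n\p_n \ell_{\tilde\gamma_0} = \sqrt n\, (\tilde\gamma(\theta) - \tilde\gamma_0)' \mb G_n \dot\ell_{\tilde\gamma_0} + \tfrac{n}{2}(\tilde\gamma(\theta) - \tilde\gamma_0)' \p_n \ddot\ell_{\tilde\gamma^*(\theta)} (\tilde\gamma(\theta) - \tilde\gamma_0), \]
with $\tilde\gamma^*(\theta)$ on the segment from $\tilde\gamma_0$ to $\tilde\gamma(\theta)$. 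The envelope $\bar\ell \in L^2(P_0)$ in (b) gives a uniform LLN for $\ddot\ell$; combining with continuity of $\tilde\gamma \mapsto P_0 \ddot\ell_{\tilde\gamma}$ at $\tilde\gamma_0$ and $\sup_{\Theta_{osn}}\|\tilde\gamma(\theta)-\tilde\gamma_0\| \to 0$, one obtains $\sup_{\theta \in \Theta_{osn}} \|\p_n \ddot\ell_{\tilde\gamma^*(\theta)} + \mb I_0\| = o_\p(1)$. Substituting $\gamma(\theta) = \mb I_0^{1/2}(\tilde\gamma(\theta) - \tilde\gamma_0)$ and $\mb V_n = \mb I_0^{-1/2} \mb G_n \dot\ell_{\tilde\gamma_0}$ and completing the square gives
\[ nL_n(\theta) = \ell_n + \tfrac{1}{2}\|\mb V_n\|^2 - \tfrac{1}{2}\|\mb V_n - \sqrt n \gamma(\theta)\|^2 + R_n(\theta), \]
with $\sup_{\Theta_{osn}}|R_n| \leq r_n^2 \cdot o_\p(1)$, which is $o_\p(1)$ provided $r_n$ grows slowly. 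The information equality in (c) yields $\mr{Var}(\mb V_n) \to I_{d^*}$, so $\Sigma = I_{d^*}$; and since $\tilde\gamma_0$ is interior to $\wt\Gamma$ by (a), the set $\{\sqrt n \gamma(\theta) : \theta \in \Theta_{osn}\}$ exhausts $\mb R^{d^*}$ once $r_n \to \infty$, so we may take $T = \mb R^{d^*}$.

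For Assumption~\ref{a:rate}(ii), I would invoke standard Ghosal--van der Vaart posterior-contraction machinery applied to the reduced-form parameter. Under (b)--(c), $h(q_{\tilde\gamma_0}, q_{\tilde\gamma})^2 \asymp \|\tilde\gamma - \tilde\gamma_0\|^2$ locally, and the prior mass of balls $\{\tilde\gamma : \|\tilde\gamma - \tilde\gamma_0\| \leq \epsilon\}$ is polynomial in $\epsilon$ by (d). Existence of the requisite tests is classical for finite-dimensional identifiable parametric models; and by compactness of $\wt\Gamma$ in (d) together with identifiability in (a), $P_0(\ell_{\tilde\gamma} - \ell_{\tilde\gamma_0}) < 0$ outside any neighborhood of $\tilde\gamma_0$, delivering exponential decay of the marginal likelihood outside that neighborhood. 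These ingredients yield contraction of the posterior of $\tilde\gamma$ at the parametric rate $n^{-1/2}$: for any $M_n \to \infty$,
\[ \Pi_n\!\left(\{\theta : \|\tilde\gamma(\theta) - \tilde\gamma_0\| > M_n/\sqrt n\} \,\big|\, \mf X_n\right) = o_\p(1), \]
which is exactly $\Pi_n(\Theta_{osn}^c \mid \mf X_n) = o_\p(1)$ whenever $r_n \to \infty$.

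The main obstacle is balancing the two requirements on $r_n$: large enough to absorb the full posterior (trivially satisfied here for any $r_n \to \infty$), yet small enough that $r_n^2$ times the modulus of the empirical quadratic form is $o_\p(1)$. That modulus $\sup_{\|\tilde\gamma - \tilde\gamma_0\| \leq c r_n/\sqrt n}\|\p_n \ddot\ell_{\tilde\gamma} + \mb I_0\|$ decomposes into a deterministic continuity-of-$P_0\ddot\ell$ piece (vanishing as $r_n/\sqrt n \to 0$) and a stochastic piece of order $O_\p(n^{-1/2})$ from Donsker-type control on $\{\ddot\ell_{\tilde\gamma} : \|\tilde\gamma - \tilde\gamma_0\| \leq \delta\}$ (afforded by the $L^2$ envelope). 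Quantifying how fast the deterministic modulus decays against $r_n^2$ is what pins down the sufficient rate $r_n = o(n^{1/4})$.
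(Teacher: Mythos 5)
Your argument is correct and follows essentially the same route as the paper: a second-order Taylor expansion in the reduced-form parameter, a uniform LLN for the Hessian (the paper invokes Lemma 2.4 of Newey--McFadden, which is exactly the continuity-plus-$L^2$-envelope ULLN you describe), completion of the square after the substitution $\gamma(\theta)=\mb I_0^{1/2}(\tilde\gamma(\theta)-\tilde\gamma_0)$, and a choice of $r_n\to\infty$ slow enough that $r_n^2$ times the Hessian modulus is $o_\p(1)$. For Assumption \ref{a:rate}(ii) the paper establishes the local two-sided bound $\ul c\|\tilde\gamma-\tilde\gamma_0\|^2\leq D_{KL}(p_0\|q_{\tilde\gamma})\leq\ol c\|\tilde\gamma-\tilde\gamma_0\|^2$, uses identifiability, continuity of $\tilde\gamma\mapsto P_0\ell_{\tilde\gamma}$ and compactness of $\wt\Gamma$ for the global separation, and then cites Schwartz-type consistency plus the parametric Bernstein--von Mises theorem for the $\sqrt n$-rate, which is the same standard machinery you invoke under the Ghosal--van der Vaart label.
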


\subsubsection{General non-identifiable likelihood models}\label{s:genscore}

It is possible to define a local reduced-form reparameterization for non-identifiable likelihood models, even when $\mc P= \{p_\theta : \theta \in \Theta\}$ does not admit an explicit (global) reduced-form reparameterization. Let $\mc D \subset L^2(P_0)$ denote the set of all limit points of:
\[
\mc D_\epsilon := \left\{ \frac{\sqrt{p/p_0}-1}{h(p,p_0)} : p \in \mc P, 0 < h(p,p_0) \leq \epsilon\right\}
\]
as $\epsilon \to 0$ and let $\ol{\mc D}_{\epsilon} = \mc D_\epsilon \cup \mc D$. The set $\mc D$ is the set of generalized Hellinger scores,\footnote{It is possible to define sets of generalized scores via other measures of distance between densities. See \cite{LiuShao} and \cite{AGM}. Our results can easily be adapted to these other cases.} which consists of functions of $X_i$ with mean zero and unit variance. The cone $\mc T = \{ \tau d : \tau \geq 0, d \in \mc D\}$ is the tangent cone of the model $\mc P$ at $p_0$. We say that $\mc P$ is differentiable in quadratic mean (DQM) if each $p \in \mc P$ is absolutely continuous with respect to $p_0$ and for each $p \in \mc P$ there are elements $g_p \in \mc T$ and remainders $R_p \in L^2(\lambda)$ such that:
\[
\sqrt{p_{\phantom{.}}} - \sqrt{p_0} = g_p  \sqrt{p_0} + h(p,p_0) R_p
\]
with $\sup\{ \| R_p \|_{L^2(\lambda)} : h(p,p_0) \leq \varepsilon \} \to 0$ as $\varepsilon \to 0$. If the linear hull $\mathrm{Span}(\mc T)$ of $\mc T$ has finite dimension $d^* \geq 1$, then we can write each $g \in \mc T$ as $g = c(g)' \psi$ where $c(g) \in \mb R^{d^*}$ and the elements of $\psi = (\psi_1,\ldots,\psi_{d^*})$ form an orthonormal basis for $\mathrm{Span}(\mc T)$ in $L^2(P_0)$. Let $\mb T$ denote the orthogonal projection\footnote{If $\mc T \subseteq L^2(P_0)$ is a closed convex cone, the projection $\mb T f$ of any $f \in L^2(P_0)$ is defined as the unique element of $\mc T$ such that $\|f - \mb T f\|_{L^2(P_0)} = \inf_{t \in \mc T} \|f - t\|_{L^2(P_0)}$.} onto $\mc T$ and let $\gamma(\theta)$ be given by
\begin{align}\label{e-gamma-gen}
 \mb T (2(\sqrt{p_\theta/p_0}-1)) = \gamma(\theta)'\psi\,.
\end{align}

\medskip

\begin{proposition}\label{p:genscore}
	Suppose that $\mc P$ satisfies the following regularity conditions:\\
	(a) $\{\log p : p \in \mc P\}$ is $P_0$-Glivenko Cantelli; \\
	(b) $\mc P$ is DQM, $\mc T$ is closed and convex and $\mathrm{Span}(\mc T)$ has finite dimension $d^* \geq 1$;\\
	(c) there exists $\varepsilon > 0$ such that $\ol{\mc D}_\varepsilon$ is Donsker and has envelope $D \in L^2(P_0)$.\\
	Then: there exists a sequence $(r_n)_{n \in \mb N}$ with $r_n \to \infty$ and $r_n = o(n^{1/4})$, such that Assumption \ref{a:quad} holds for the average log-likelihood (\ref{e:ll}) over $\Theta_{osn}:= \{ \theta : h(p_\theta,p_0) \leq r_n/\sqrt n\}$
	with $\ell_n = n\p_n \log p_0$,  $\sqrt n \hat \gamma_n = \mb V_n = \mb G_n(\psi)$, $\Sigma = I_{d^*}$ and  $\gamma(\theta)$  defined in (\ref{e-gamma-gen}).
\end{proposition}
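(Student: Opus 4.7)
The plan is to decompose $nL_n(\theta)-\ell_n = n\p_n\log(p_\theta/p_0)$ into its $P_0$-mean and empirical-process pieces and to verify a Le Cam-type quadratic expansion around the tangent cone $\mc T$, uniformly over the Hellinger ball $\Theta_{osn}=\{\theta : h(p_\theta,p_0)\leq r_n/\sqrt n\}$. I identify $\mathrm{Span}(\mc T)$ with $\mb R^{d^*}$ via the orthonormal basis $\psi$, so the $L^2(P_0)$ projection $\mb T$ onto the closed convex cone $\mc T$ becomes a Euclidean projection in coordinates. As a preliminary step I record the size of the reparameterization: the DQM identity $2(\sqrt{p_\theta/p_0}-1) = 2 g_{p_\theta} + 2 h(p_\theta,p_0) R_{p_\theta}/\sqrt{p_0}$, together with $g_{p_\theta} \in \mc T$ and the 1-Lipschitz continuity of $\mb T$, gives
$$\bigl\| \gamma(\theta)'\psi - 2 g_{p_\theta} \bigr\|_{L^2(P_0)} \leq 2 h(p_\theta,p_0) \|R_{p_\theta}\|_{L^2(\lambda)} = o(h(p_\theta,p_0));$$
combined with $\|g_{p_\theta}\|_{L^2(P_0)} = h(p_\theta,p_0)(1+o(1))$ (immediate from DQM) and orthonormality of $\psi$, this yields $\|\gamma(\theta)\| = 2 h(p_\theta,p_0)(1+o(1))$, so $\sup_{\theta \in \Theta_{osn}} \|\gamma(\theta)\| \lesssim r_n/\sqrt n \to 0$, as Assumption \ref{a:quad} requires.

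For the deterministic piece I Taylor-expand $\log(1+u)^2 = 2u - u^2 + O(u^3)$ at $u_\theta := \sqrt{p_\theta/p_0} - 1$. The exact identities $P_0 u_\theta = -\tfrac12 h(p_\theta, p_0)^2$ and $P_0 u_\theta^2 = h(p_\theta, p_0)^2$ reduce the problem to bounding $P_0 u_\theta^3$; using the envelope $|u_\theta| \leq h(p_\theta,p_0) D$ with $D \in L^2(P_0)$ on $\ol{\mc D}_\epsilon$, together with the rate constraint $r_n = o(n^{1/4})$, gives $n P_0 \log(p_\theta/p_0) = -2 n h(p_\theta,p_0)^2 + o(1)$ uniformly on $\Theta_{osn}$. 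Replacing $4 h(p_\theta,p_0)^2$ by $\|\gamma(\theta)\|^2$ via the preliminary step, with a residual of order $o(nh^2)=o(r_n^2)$ that is absorbed into $o(1)$ by the same envelope bookkeeping, produces $nP_0 \log(p_\theta/p_0) = -\tfrac{n}{2}\|\gamma(\theta)\|^2 + o(1)$ uniformly.

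For the empirical-process piece I write $n(\p_n - P_0)\log(p_\theta/p_0) = 2\sqrt n\, \mb G_n u_\theta - \sqrt n\, \mb G_n u_\theta^2 + \sqrt n\, \mb G_n O(u_\theta^3)$. Since $u_\theta/h(p_\theta,p_0) \in \mc D_{h(p_\theta,p_0)} \subseteq \ol{\mc D}_\epsilon$, a Donsker class with $L^2(P_0)$ envelope, standard modulus-of-continuity estimates for Donsker classes yield $\mb G_n(u_\theta - g_{p_\theta}) = o_\p(h(p_\theta,p_0))$ uniformly on $\Theta_{osn}$; combined with the preliminary step, this gives
$$2\sqrt n\, \mb G_n u_\theta = 2\sqrt n\, \mb G_n g_{p_\theta} + o_\p(1) = \sqrt n\, \mb G_n (\gamma(\theta)'\psi) + o_\p(1) = \sqrt n\, \gamma(\theta)' \mb V_n + o_\p(1).$$
The quadratic and cubic residuals are controlled using $|u_\theta^2| \leq h^2 D^2$ and a maximal inequality on the squared class, together with $r_n = o(n^{1/4})$. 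Adding the deterministic and stochastic expansions gives
$$n L_n(\theta) = \ell_n + \sqrt n\, \gamma(\theta)' \mb V_n - \tfrac{n}{2}\|\gamma(\theta)\|^2 + o_\p(1) = \ell_n + \tfrac12 \|\mb V_n\|^2 - \tfrac12 \|\mb V_n - \sqrt n\, \gamma(\theta)\|^2 + o_\p(1)$$
uniformly on $\Theta_{osn}$, which is Assumption \ref{a:quad} with $\sqrt n\, \hat\gamma_n = \mb V_n = \mb G_n \psi$. The limit $\mb V_n \rightsquigarrow N(0, I_{d^*})$ follows from the multivariate CLT, since $P_0 \psi = 0$ (elements of $\mc T$ lie in $L^2_0(P_0)$) and $\mathrm{Cov}_{P_0}(\psi) = I_{d^*}$ by orthonormality.

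The main obstacle is not the algebra but the \emph{uniformity} of the $o_\p(1)$ remainders over a shrinking neighborhood whose Hellinger radius $r_n/\sqrt n$ exceeds the parametric $1/\sqrt n$ rate. Both the cubic Taylor remainder in the deterministic piece and the quadratic empirical-process term carry factors growing with $r_n$, and only the combination of (i) the $L^2(P_0)$ envelope on $\ol{\mc D}_\epsilon$, which upgrades Hellinger $L^2$ control of $u_\theta$ to pointwise power bounds, and (ii) the rate restriction $r_n = o(n^{1/4})$, ensuring $r_n^4/n \to 0$ and $r_n^3/\sqrt n \to 0$, makes both remainders vanish uniformly in probability. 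Once this uniform control is in place, the quadratic expansion reduces to Taylor-plus-empirical-process bookkeeping on the Donsker class $\ol{\mc D}_\epsilon$, and the CLT for $\psi$ finishes the job.
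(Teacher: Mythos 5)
Your overall strategy (Taylor expansion of $2\log(1+u_\theta)$ with $u_\theta=\sqrt{p_\theta/p_0}-1$, identification of $\mathrm{Span}(\mc T)$ with $\mb R^{d^*}$ via $\psi$, and control of remainders via the Donsker/envelope structure of $\ol{\mc D}_\varepsilon$ plus a slowly growing $r_n$) is close in spirit to the paper's, but your decomposition into a population piece $nP_0\log(p_\theta/p_0)$ and an empirical-process piece creates a genuine gap at the population cubic remainder. You claim $nP_0\log(p_\theta/p_0) = -2nh(p_\theta,p_0)^2 + o(1)$ uniformly on $\Theta_{osn}$ by bounding $P_0 u_\theta^3$ through $|u_\theta|\le h(p_\theta,p_0)D$. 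But $D$ is only assumed to lie in $L^2(P_0)$, so $P_0 D^3$ (hence $P_0|u_\theta|^3$) need not be finite; worse, the exact remainder $2\log(1+u_\theta)-2u_\theta+u_\theta^2$ is not dominated by $|u_\theta|^3$ on the region where $p_\theta/p_0$ is close to zero, where the log term blows up and is controlled by neither the Hellinger distance nor the envelope. In other words, under the stated assumptions $nD_{KL}(p_0\|p_\theta)$ need not equal $2nh(p_\theta,p_0)^2+o(1)$ uniformly over the Hellinger ball, so this step of your argument would fail without additional integrability (e.g.\ $D\in L^3(P_0)$ or a density-ratio bound) that the proposition does not assume. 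The paper avoids this entirely: in Lemma \ref{l:genscore} it expands $n\p_n\log(p_\theta/p_0)$ directly under the \emph{empirical} measure, so the third-order remainder $2n\p_n(S_{p_\theta}^2 r(S_{p_\theta}))$ involves only sampled values and is killed by $\max_{i\le n}D(X_i)=o_\p(\sqrt n/r_n^3)$ together with Glivenko--Cantelli for $\ol{\mc D}_\varepsilon^2$; the population enters only through the exact identity $2P_0 S_{p_\theta}=-h^2(p_\theta,p_0)$ (introduced by adding and subtracting $2nP_0S_{p_\theta}$) and through the DQM/Moreau bound $\|S_{p_\theta}-\mb T S_{p_\theta}\|_{L^2(P_0)}\le h(p_\theta,p_0)\|R_{p_\theta}\|_{L^2(\lambda)}$, neither of which requires third moments.

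A secondary, more repairable slip: for the linear term you invoke $\mb G_n(u_\theta-g_{p_\theta})=o_\p(h(p_\theta,p_0))$ and conclude $2\sqrt n\,\mb G_n u_\theta = \sqrt n\,\gamma(\theta)'\mb V_n+o_\p(1)$; but $\sqrt n\cdot o_\p(h)=o_\p(r_n)$, not $o_\p(1)$. The paper closes this by applying asymptotic equicontinuity to the normalized residual class $\{s_p-\mb T s_p\}$, whose $L^2(P_0)$ norms vanish by DQM, and then exploiting the freedom to let $r_n$ diverge slowly enough (its $r_n=(a_n\wedge b_n\wedge c_n\wedge d_n)$ device) so that $r_n$ times the resulting $o_\p(1)$ is still $o_\p(1)$; the same device, not the condition $r_n=o(n^{1/4})$ by itself, is what handles your quadratic empirical term $r_n^2\sup_\theta|(\p_n-P_0)s_\theta^2|$. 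These bookkeeping points could be fixed within your framework, but the population cubic remainder cannot, so the proof should be restructured along the paper's empirical-measure expansion.
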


Proposition \ref{p:genscore} is a set of sufficient conditions for i.i.d. data; see Lemma \ref{l:genscore} in Appendix \ref{a:proofs} for a more general result. Assumption \ref{a:rate}(ii) can be verified under additional mild conditions (see, e.g., Theorem 5.1 of \cite{GGV2000}).

\subsection{GMM models}\label{s:gmm}

Consider the GMM model $\{ \rho_\theta : \theta \in \Theta\}$ with $\rho : \mcr X \times \Theta \to \mb R^{d_\rho}$. Let $g(\theta) = E[\rho_\theta(X_i)]$ and the identified set be $\Theta_I = \{ \theta \in \Theta :  g(\theta) = 0\}$ (we assume throughout this subsection that $\Theta_I$ is non-empty). When $\rho$ is of higher dimension than $\theta$, the set $\mc G = \{ g(\theta) : \theta \in \Theta\}$ will not contain a neighborhood of the origin. But, if the map $\theta \mapsto  g(\theta)$ is smooth (e.g. $\mc G$ is a smooth manifold) then $\mc G$ can typically be locally approximated at the origin by a closed convex cone $\mc T \subset \mb R^{d_\rho}$.

To simplify notation, we assume that for any $v \in \mr{Span}(\mc T)$ we may partition $\Omega^{-1/2} v$ so that its upper $d^*$ elements $[\Omega^{-1/2} v]_1$ are (possibly) non-zero and the remaining $d_\rho -d^*$ elements $[\Omega^{-1/2} v]_2 = 0$ (this can always be achieved by multiplying the moment functions by a suitable rotation matrix).\footnote{See our July 2016 working paper version for details.} If $\mc G$ contains a neighborhood of the origin then we simply take $\mc T = \mb R^{d_\rho}$ and $[\Omega^{-1/2} v]_1 = \Omega^{-1/2} v$. Let $\mb T g(\theta)$ denote the projection of $g(\theta)$ onto $\mc T \subset \mb R^{d_\rho}$ and note that $[\Omega^{-1/2} \mb T g(\theta)]_2=0$. Finally, define $\Theta_I^\varepsilon = \{ \theta \in \Theta : \|  g(\theta)\| \leq \varepsilon\}$.

\medskip

\begin{proposition}\label{p:gmm:cue}
	Suppose that $\{ \rho_\theta : \theta \in \Theta\}$ satisfies the following regularity conditions: \\
	(a) there exists $\varepsilon_0 > 0$ such that $\{\rho_\theta : \theta \in \Theta_I^{\varepsilon_0}\}$ is Donsker; \\
	(b) $E[\rho_\theta(X_i) \rho_\theta(X_i)'] = \Omega$ for each $\theta \in \Theta_I$ and $\Omega$ is positive definite; \\
	(c) there exists $\theta^* \in \Theta_I$ such that $\sup_{\theta \in \Theta_I^\varepsilon} E[\|\rho_\theta(X_i) - \rho_{\theta^*}(X_i)\|^2] = o(1)$ as $\varepsilon \to 0$;\\
	(d) there exists $\delta > 0$ such that $\sup_{\theta \in \Theta_I^\varepsilon} \| g(\theta) - \mb T g(\theta)\| = o(\varepsilon^{1+\delta})$ as $\varepsilon \to 0$.\\
	Then: there exists a sequence $(r_n)_{n \in \mb N}$ with $r_n \to \infty$ and $r_n = o(n^{1/4})$ such that Assumption \ref{a:quad} holds for the CU-GMM criterion (\ref{e:cue}) over $\Theta_{osn} = \{ \theta \in \Theta : \|g(\theta)\| \leq r_n/\sqrt n\}$, where
	 $\ell_n = -\frac{1}{2}Z_n'\Omega^{-1} Z_n$, $Z_n=\mb G_n (\rho_{\theta^*})$,  $\gamma(\theta) = [\Omega^{-1/2} \mb Tg(\theta)]_1$, and $\sqrt n \hat \gamma_n = \mb V_n = -[\Omega^{-1/2}Z_n]_1$ and $\Sigma = I_{d^*}$.\\
	If $\mc G$ contains a neighborhood of the origin then $\gamma(\theta) = \Omega^{-1/2} g(\theta)$ and $\sqrt n \hat \gamma_n = \mb V_n = - \Omega^{-1/2} Z_n$.
\end{proposition}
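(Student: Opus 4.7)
\textbf{Proof plan for Proposition \ref{p:gmm:cue}.} The plan is to show that the CU-GMM criterion admits the stated local quadratic approximation by (i) replacing the stochastic weighting matrix by $\Omega^{-1}$ uniformly on $\Theta_{osn}$, (ii) replacing the empirical moment at $\theta$ by the sum of a deterministic mean and a pivot at $\theta^*$, and (iii) replacing the deterministic mean $g(\theta)$ by its projection $\mb T g(\theta)$ onto the tangent cone. After these three uniform substitutions, a completion of squares delivers the desired expansion.

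\textbf{Step 1 (weighting matrix).} For $\theta \in \Theta_{osn} = \{\theta : \|g(\theta)\| \leq r_n/\sqrt n\}$ we have $g(\theta) \to 0$ and $\rho_n(\theta) = g(\theta) + O_\p(n^{-1/2}) = o_\p(1)$. Together with the $L^2$-continuity condition (c) and the law of large numbers for the class $\{\rho_\theta\rho_\theta' : \theta \in \Theta_I^{\varepsilon_0}\}$ (whose Glivenko--Cantelli property follows from the Donsker condition (a)), condition (b) then yields
\[
\sup_{\theta \in \Theta_{osn}} \| W_n(\theta) - \Omega^{-1} \| = o_\p(1),
\]
and for $n$ large enough the generalized inverse equals the ordinary inverse since $\Omega$ is positive definite.

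\textbf{Step 2 (linearization of $\rho_n$).} Writing $\sqrt n \rho_n(\theta) = \sqrt n g(\theta) + \mb G_n \rho_\theta$, the Donsker condition (a) together with $L^2$-continuity (c) yields stochastic equicontinuity of $\theta \mapsto \mb G_n \rho_\theta$ at $\theta^*$, so that
\[
\sup_{\theta \in \Theta_{osn}} \| \mb G_n \rho_\theta - \mb G_n \rho_{\theta^*} \| = o_\p(1).
\]
Combining with Step 1 and expanding the quadratic form,
\[
nL_n(\theta) = -\tfrac{1}{2}(Z_n + \sqrt n g(\theta))' \Omega^{-1}(Z_n + \sqrt n g(\theta)) + o_\p(1)
\]
uniformly on $\Theta_{osn}$, where $Z_n = \mb G_n \rho_{\theta^*}$.

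\textbf{Step 3 (cone projection via condition (d)).} On $\Theta_{osn}$ we have $\|g(\theta)\| \leq r_n/\sqrt n$ and $\|g(\theta) - \mb T g(\theta)\| = o((r_n/\sqrt n)^{1+\delta})$. Thus the errors from substituting $\mb T g(\theta)$ for $g(\theta)$ are bounded by $\sqrt n \|Z_n\|\|g - \mb T g\| = O_\p(r_n^{1+\delta}/n^{\delta/2})$ in the cross term and by $n(\|\mb T g\|\|g - \mb T g\| + \|g - \mb T g\|^2) = o(r_n^{2+\delta}/n^{\delta/2})$ in the quadratic term. Choosing $r_n \to \infty$ sufficiently slowly (so in particular $r_n = o(n^{1/4})$) makes both $o_\p(1)$. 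Hence
\[
nL_n(\theta) = -\tfrac{1}{2}(Z_n + \sqrt n \mb T g(\theta))' \Omega^{-1}(Z_n + \sqrt n \mb T g(\theta)) + o_\p(1)
\]
uniformly on $\Theta_{osn}$.

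\textbf{Step 4 (partition and completion of squares).} Because $[\Omega^{-1/2} \mb T g(\theta)]_2 = 0$ by construction, the inner product $\mb T g(\theta)'\Omega^{-1} Z_n = [\Omega^{-1/2}\mb T g(\theta)]_1'[\Omega^{-1/2} Z_n]_1 = -\gamma(\theta)'\mb V_n$, and $\mb T g(\theta)'\Omega^{-1}\mb T g(\theta) = \|\gamma(\theta)\|^2$. Substituting, and using the identity $\tfrac{1}{2}\|\mb V_n\|^2 - \tfrac{1}{2}\|\mb V_n - \sqrt n \gamma(\theta)\|^2 = \sqrt n \gamma(\theta)'\mb V_n - \tfrac{n}{2}\|\gamma(\theta)\|^2$, we obtain the stated expansion with $\ell_n = -\tfrac{1}{2}Z_n'\Omega^{-1}Z_n$ and $\sqrt n \hat\gamma_n = \mb V_n$. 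The marginal convergence $\mb V_n \rightsquigarrow N(0,I_{d^*})$ follows from the central limit theorem for $Z_n$ and condition (b), giving $\Sigma = I_{d^*}$. When $\mc G$ contains a neighborhood of the origin, $\mb T$ is the identity, $\gamma(\theta) = \Omega^{-1/2} g(\theta)$, and the same derivation applies without any cone projection.

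The main obstacle is Step 3: balancing the rate of $r_n$ against the polynomial smoothness exponent $\delta$ in condition (d) while keeping $r_n \to \infty$ (which Assumption \ref{a:rate}(ii) needs in applications). Step 1 is also nontrivial because $W_n(\theta)$ depends on $\theta$ and one must argue uniformly; the Donsker hypothesis in (a) combined with the $L^2$-continuity in (c) is precisely what makes this uniformity possible.
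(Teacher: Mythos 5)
Your overall route is the same as the paper's: replace $W_n(\theta)$ by $\Omega^{-1}$, linearize $\sqrt n \rho_n(\theta) = \sqrt n g(\theta) + \mb G_n \rho_\theta$ around $\theta^*$ via asymptotic equicontinuity, substitute $\mb T g(\theta)$ for $g(\theta)$ using (d), and finish with the block partition and completion of squares (your Step 4, including $\mb V_n = -[\Omega^{-1/2}Z_n]_1 \rightsquigarrow N(0,I_{d^*})$, matches the paper exactly). However, there is a genuine gap in Steps 1--2 as written. On $\Theta_{osn}$ the vector $\sqrt n \rho_n(\theta)$ has norm of order $r_n \to \infty$ (since $\|\sqrt n g(\theta)\|$ can be as large as $r_n$), so a uniform $o_\p(1)$ bound on $\|W_n(\theta)-\Omega^{-1}\|$ leaves an error in $nL_n(\theta)$ of order $r_n^2 \cdot o_\p(1)$, and a uniform $o_\p(1)$ bound on $\|\mb G_n\rho_\theta - Z_n\|$ leaves a cross-term error of order $r_n \cdot o_\p(1)$; neither need vanish. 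Hence the display you assert at the end of Step 2 does not follow from what you established. What is actually needed is $\sup_{\theta \in \Theta_{osn}}\|W_n(\theta)-\Omega^{-1}\| = o_\p(r_n^{-2})$ and $\sup_{\theta \in \Theta_{osn}}\|\mb G_n \rho_\theta - Z_n\| = o_\p(r_n^{-1})$, which the paper obtains by the same device you deploy only in Step 3: first prove the $o_\p(1)$ bounds over $\{\theta : \|g(\theta)\| \leq n^{-1/4}\}$, then choose $a_n, b_n \to \infty$ slowly enough that $a_n^2$ (respectively $b_n$) times those errors is still $o_\p(1)$, and finally set $r_n = a_n \wedge b_n \wedge c_n$ with $c_n$ your Step-3 sequence. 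Your closing diagnosis — that the only rate-balancing occurs in Step 3 and that Step 1 is delicate merely because of uniformity in $\theta$ — misses this: the binding issue is the size of the uniform errors relative to the diverging radius $r_n$ in all three substitution steps, not just in the cone projection.

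Once Steps 1--2 are restated with these $r_n$-dependent rates, the rest of your argument goes through and coincides with the paper's proof (your Step-3 bounds $O_\p(r_n^{1+\delta}n^{-\delta/2})$ and $o(r_n^{2+\delta}n^{-\delta/2})$ are the right ones, and the algebra in Step 4 is correct). One further small caution: Glivenko--Cantelli for $\{\rho_\theta^{\phantom\prime}\rho_\theta' \}$ is not an automatic consequence of (a) alone; the paper invokes (a)--(c) together with Lemma 2.10.14 of van der Vaart and Wellner to control $\sup_{\theta}\|\p_n(\rho_\theta^{\phantom\prime}\rho_\theta') - \Omega\|$, and you should do the same rather than attribute it to the Donsker property by itself.
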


\medskip

\begin{proposition}\label{p:gmm:opt}
	Let all the conditions of Proposition \ref{p:gmm:cue} hold and let: (e) $\|\wh W - \Omega^{-1}\| = o_\p(1)$. \\
	Then: the conclusions of Proposition \ref{p:gmm:cue} hold for the optimally-weighted GMM criterion (\ref{e:ow}).
\end{proposition}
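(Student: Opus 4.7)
The plan is to reduce to Proposition \ref{p:gmm:cue} by showing that the optimally-weighted and continuously-updated GMM criteria agree up to $o_\p(1)$ uniformly on the local neighborhood $\Theta_{osn}$, so that the local quadratic expansion for CU-GMM transfers directly to OW-GMM with the same $\ell_n$, $\hat\gamma_n$, $\gamma(\theta)$, $\Sigma$, and $T$. First I would take the sequence $r_n \to \infty$ with $r_n = o(n^{1/4})$ supplied by Proposition \ref{p:gmm:cue} and, if necessary, slow its rate of divergence so that $r_n^2\|\wh W_n - \Omega^{-1}\| = o_\p(1)$; such an $r_n$ exists because condition (e) gives $\|\wh W_n - \Omega^{-1}\| = o_\p(1)$.

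Then, using the Donsker condition (a) together with the $L^2$-continuity condition (c), a standard stochastic equicontinuity argument yields
\begin{equation*}
\sup_{\theta \in \Theta_{osn}} \bigl\| \mb G_n(\rho_\theta) - \mb G_n(\rho_{\theta^*}) \bigr\| = o_\p(1),
\end{equation*}
so that $\sqrt n \rho_n(\theta) = Z_n + \sqrt n g(\theta) + o_\p(1)$ uniformly on $\Theta_{osn}$ and hence $\sup_{\theta \in \Theta_{osn}} \|\sqrt n \rho_n(\theta)\| = O_\p(r_n)$. Next, through the identity
\begin{equation*}
nL_n^{OW}(\theta) - nL_n^{CU}(\theta) = \tfrac{1}{2}\bigl(\sqrt n \rho_n(\theta)\bigr)'\bigl(W_n(\theta) - \wh W_n\bigr)\bigl(\sqrt n \rho_n(\theta)\bigr),
\end{equation*}
where $L_n^{OW}$ and $L_n^{CU}$ denote the OW-GMM and CU-GMM criteria respectively, I would bound $\|W_n(\theta) - \wh W_n\|$ uniformly on $\Theta_{osn}$ by the triangle inequality: the proof of Proposition \ref{p:gmm:cue} already gives $\sup_{\theta \in \Theta_{osn}} \|W_n(\theta) - \Omega^{-1}\| = o_\p(1)$ under (b) and (c), and (e) gives $\|\wh W_n - \Omega^{-1}\| = o_\p(1)$. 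Combining this with the $O_\p(r_n)$ bound and the calibration of $r_n$ yields $\sup_{\theta \in \Theta_{osn}} |nL_n^{OW}(\theta) - nL_n^{CU}(\theta)| = o_\p(1)$, and the desired expansion for OW-GMM follows by adding this vanishing remainder to the one guaranteed by Proposition \ref{p:gmm:cue}.

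The main technical obstacle is the calibration of $r_n$ in the first step: because (e) supplies only consistency rather than any explicit rate for $\wh W_n$, the neighborhood radius must be chosen to diverge slowly enough that $r_n^2\|\wh W_n - \Omega^{-1}\| = o_\p(1)$. This is the only extra ingredient needed beyond the CU-GMM argument, and it is handled by a standard subsequence/diagonal construction applied to any deterministic sequence $\varepsilon_n \to 0$ with $\p(\|\wh W_n - \Omega^{-1}\| > \varepsilon_n) \to 0$, choosing $r_n$ so that $r_n^2 \varepsilon_n \to 0$ while keeping $r_n \to \infty$ and $r_n = o(n^{1/4})$.
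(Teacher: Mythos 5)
Your proposal is correct and is essentially the paper's own argument: the paper's proof simply reruns the CU-GMM expansion with $\wh W$ in place of the continuously-updated weight, using condition (e) to choose $a_n \to \infty$ slowly so that $a_n^2\|\wh W - \Omega^{-1}\| = o_\p(1)$ and hence $\|\wh W - \Omega^{-1}\| = o_\p(r_n^{-2})$ for any $0 < r_n \leq a_n$ — exactly your calibration step — and your difference-of-criteria identity is just a repackaging of the same quadratic-form bound. One small precision: where you quote $\sup_{\theta \in \Theta_{osn}}\|W_n(\theta) - \Omega^{-1}\| = o_\p(1)$ from the CU proof, you actually need (and that proof's uniform convergence display delivers) the rate $o_\p(r_n^{-2})$, since this discrepancy is multiplied by $\|\sqrt n \rho_n(\theta)\|^2 = O_\p(r_n^2)$.
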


\subsubsection{Moment inequality models}\label{s:miq0}

Consider the moment inequality model $\{ \tilde \rho(X_i,\mu) : \mu \in M\}$ where $\tilde \rho$ is a $d_\rho$ vector of moments and the space is $M \subseteq \mb R^{d_\mu}$. The identified set for $\mu$ is $M_I = \{ \mu \in M : E[\tilde \rho(X_i,\mu)] \leq 0\}$ (the inequality is understood to hold element-wise). We may reformulate the moment inequality model as a moment equality model by augmenting the parameter vector with a vector of slackness parameters $\eta \in H = \mb R^{d_\rho}_+$. Thus we re-parameterize the model by $\theta = (\mu,\eta) \in \Theta = M \times H$ and write the inequality model as a GMM model with
\begin{equation}\label{aug-gmm}
E[\rho_\theta(X_i)] = 0~\mbox{for}~\theta \in \Theta_I,~~~\rho_\theta(X_i) = \tilde \rho(X_i,\mu) + \eta~,
\end{equation}
where the identified set for $\theta$ is $\Theta_I = \{ \theta \in \Theta : E[\rho_\theta(X_i)] = 0\}$ and $M_I$ is the projection of $\Theta_I$ onto $M$. Here the objective function would be as in display (\ref{e:cue}) or (\ref{e:ow}) using $\rho_\theta(X_i) = \tilde \rho(X_i,\mu) + \eta$. We may then apply Propositions \ref{p:gmm:cue} or \ref{p:gmm:opt} to the reparameterized GMM model (\ref{aug-gmm}).

As the parameter of interest is $\mu$, one could use our Procedures 2 or 3 for inference on $M_I$. These procedures involve the profile criterion $\sup_{\eta \in H} L_n(\mu,\eta)$ which is simple to compute because the GMM objective function is quadratic in $\eta$ for given $\mu$ (since the optimal weighting or continuous updating weighting matrix will typically not depend on $\eta$). See Example 3 in Subsection \ref{s:miq}.

\subsection{Examples}\label{s:ex}

\subsubsection{Example 1: missing data model in Subsection \ref{s:missing}}\label{s:md}

We revisit the missing data example in Subsection \ref{s:missing}, where the parameter space $\Theta$ for $\theta = (\mu, \eta_1, \eta_2)$ is given in (\ref{e:theta:md}), the identified set for $\theta$ is $\Theta_I$ given in (\ref{e:thetaI:md}), and the identified set for $\mu$ is $M_I = [\tilde \gamma_{11},\tilde \gamma_{11}+\tilde \gamma_{00}]$.

\paragraph{Inference under partial identification:}

Consider the case in which the model is partially identified (i.e. $0 < \eta_2 < 1$). The likelihood of the $i$-th observation $(D_i,Y_iD_i)=(d,yd)$ is
\begin{align*}
 p_\theta(d,yd) & = [\tilde \gamma_{11}(\theta)]^{yd} [1-\tilde \gamma_{11}(\theta)-\tilde \gamma_{00}(\theta)]^{d - y d} [\tilde \gamma_{00}(\theta)]^{1-d} = q_{\tilde \gamma(\theta)}(d,yd)
\end{align*}
where:
\[
 \tilde \gamma(\theta) = \left( \begin{array}{c}  \tilde \gamma_{11}(\theta) - \tilde \gamma_{11} \\ \tilde \gamma_{00}(\theta) -\tilde \gamma_{00} \end{array} \right)
\]
with $\wt \Gamma = \{\tilde \gamma(\theta) : \theta \in \Theta\} = \{(g_{11}-\tilde \gamma_{11},g_{00}-\tilde \gamma_{00}) : (g_{11},g_{00}) \in[0,1]^2 , 0 \leq g_{11} \leq 1-g_{00}\}$. Conditions (a)-(b) of Proposition \ref{p:rfr} hold and Assumption \ref{a:quad} is satisfied with $\gamma(\theta) =  \mb I_{0}^{1/2} \tilde \gamma(\theta)$,
\begin{align*}
 \mb I_{0} & = \left[ \begin{array}{cc}
\frac{1}{\tilde \gamma_{11}} + \frac{1}{1-\tilde \gamma_{11}-\tilde \gamma_{00}} & \frac{1}{1-\tilde \gamma_{11}-\tilde \gamma_{00}} \\
\frac{1}{1-\tilde \gamma_{11}-\tilde \gamma_{00}} & \frac{1}{\tilde \gamma_{00}} + \frac{1}{1-\tilde \gamma_{11}-\tilde \gamma_{00}} \end{array} \right] &
\sqrt n \hat \gamma_n = \mb V_n & = \mb I_{0}^{-1/2} \mb G_n \left( \begin{array}{c}
\frac{y d }{\tilde \gamma_{11}} - \frac{d -y d }{1-\tilde \gamma_{11}-\tilde \gamma_{00}} \\
\frac{1-d }{\tilde \gamma_{00}} - \frac{d -y d }{1-\tilde \gamma_{11}-\tilde \gamma_{00}}
\end{array} \right)
\end{align*}
$\Sigma = I_2$ and $T = \mb R^2$. A flat prior on $\Theta$ in (\ref{e:theta:md}) induces a flat prior on $\Gamma$, which verifies Condition (c) of Proposition \ref{p:rfr} and Assumption \ref{a:prior}. Therefore, Theorem \ref{t:main}(ii) implies that our CSs $\wh \Theta_\alpha$ for $\Theta_I$ has asymptotically exact coverage.

Now consider CSs for $M_I = [\tilde \gamma_{11},\tilde \gamma_{11}+\tilde \gamma_{00}]$. Here $H_\mu = \{ (\eta_1,\eta_2) \in [0,1]^2 : 0 \leq \mu - \eta_1(1-\eta_2) \leq \eta_2\}$. By concavity in $\mu$, the profile log-likelihood for $M_I$ is:
\begin{align*}
PL_n(M_I)
& =  \min_{\mu \in\{ \ul \mu, \ol \mu \}} \sup_{\eta \in H_\mu} \p_n \log p_{(\mu,\eta)}
\end{align*}
where $\ul \mu= \tilde \gamma_{11}$ and $\ol \mu = \tilde \gamma_{11}+\tilde \gamma_{00}$. The inner maximization problem is:
\[
  \sup_{\eta \in H_\mu} \p_n \log p_{(\mu,\eta)}
 =  \sup_{ \substack{ 0 \leq g_{11} \leq \mu \\
		\mu  \leq g_{11}+g_{00} \leq 1 } } \!\!\!\!\! \p_n \Big(  yd \log g_{11} + (d-yd) \log (1-g_{11}-g_{00}) + (1- d) \log g_{00} \Big) .\label{e:md:lower}
\]
Let $g = (g_{11},g_{00})'$ and $\tilde \gamma = (\tilde \gamma_{11},\tilde \gamma_{00})'$ and let:
\begin{align*}
T_{\mu} & = \bigcup_{n \geq 1} \Big\{ \sqrt n \mb I_{0}^{1/2} (g-\tilde\gamma) \,:\, 0 \leq g_{11} \leq \mu , \;
\mu \leq g_{11}+g_{00} \leq 1 ,  \;
\|g-\tilde\gamma\|^2 \leq r_n^2/n \Big\}
\end{align*}
where $r_n$ is from Proposition \ref{p:rfr}.
It follows that:
\begin{align*}
nPL_n(M_I)
& = \ell_n + \frac{1}{2} \|\mb V_n\|^2 - \max_{\mu \in \{\ul \mu,\ol \mu\}} \frac{1}{2} \inf_{t \in T_\mu} \|\mb V_n - t\|^2 + o_\p(1)  \\
PQ_n(M_I) & = \max_{\mu \in \{\ul \mu,\ol \mu\}}  \inf_{t \in T_\mu} \|\mb V_n - t\|^2+ o_\p(1) \,.
\end{align*}
Equation (\ref{e:qlr:subvec}) and Assumption \ref{a:qlr:chi} therefore hold with $f(v) = \max_{\mu \in \{\ul \mu,\ol \mu\}}  \inf_{t \in T_\mu} \|v - t\|^2$ where $T_{\ul \mu}$ and $T_{\ol \mu}$ are regular halfspaces in $\mb R^2$. Theorem \ref{t:profile:chi} implies that the CS $\wh M_\alpha^\chi$ is asymptotically valid (but conservative) for $M_I$.

To verify Assumption \ref{a:qlr:profile}, take $n$ sufficiently large that $\gamma(\theta) \in \mr{int}(\Gamma)$ for all $\theta \in \Theta_{osn}$. Then:
\begin{align} \label{e:md:theta}
 PL_n (M ( \theta ))
& =  \min_{\mu \in \{ \tilde\gamma_{11}(\theta), \tilde\gamma_{11}(\theta)+ \tilde\gamma_{00}(\theta)\}} \sup_{ \eta \in H_\mu }  \p_n \log p_{(\mu,\eta)}\,.
\end{align}
This is geometrically the same as the profile QLR for $M_I$ up to a translation of the local parameter space from $(\tilde\gamma_{11},\tilde\gamma_{00})'$ to $(\tilde\gamma_{11}(\theta),\tilde\gamma_{00}(\theta))'$. The local parameter spaces are approximated by $T_{\ul \mu}(\theta) = T_{\ul \mu} + \sqrt n \gamma(\theta)$ and $T_{\ol \mu} (\theta) = T_{\ol \mu} + \sqrt n \gamma(\theta)$. It follows that uniformly in $\theta \in \Theta_{osn}$,
\begin{align*}
n PL_n ( M ( \theta ))
&  = \ell_n + \frac{1}{2} \|\mb V_n\|^2 - \frac{1}{2} f \big( \mb V_n - \sqrt n \gamma(\theta) \big) + o_\p(1)
\end{align*}
verifying Assumption \ref{a:qlr:profile}. Theorem \ref{t:main:profile}(ii) implies that $\wh M_\alpha$ has asymptotically exact coverage.

\paragraph{Inference under identification:}

Now consider the case in which the model is identified (i.e. $\eta_2 = 1$ and $\tilde \gamma_{00} = 0$) and $M_I = \{\mu_0\}$. Here each $D_i = 1$ so the likelihood of the $i$-th observation $(D_i,Y_iD_i)=(1,y)$ is
\begin{align*}
 p_\theta(1,y) & = [\tilde\gamma_{11}(\theta)]^{y} [1-\tilde\gamma_{11}(\theta)-\tilde\gamma_{00}(\theta)]^{1 - y} = q_{\tilde\gamma(\theta)}(1,y)
\end{align*}
Lemma \ref{l:md:singular} in Appendix \ref{a:proofs} shows that with $\Theta$ as in (\ref{e:theta:md}) and a flat prior, the posterior $\Pi_n$ concentrates on the local neighborhood $\Theta_{osn} = \{ \theta : |\tilde\gamma_{11}(\theta)-\tilde\gamma_{11}| \leq r_n/\sqrt n, \tilde\gamma_{00}(\theta) \leq r_n/n\}$ for any positive sequence $(r_n)_{n \in \mb N}$ with $r_n \to \infty$, $r_n/\sqrt n = o(1)$.

In this case, the reduced-form parameter is $\tilde\gamma_{11}(\theta)$ and the singular part is $\gamma_\perp(\theta) = \tilde \gamma_{00}(\theta) \geq 0$. Uniformly over $\Theta_{osn}$ we obtain:
\[
n L_n(\theta) = \ell_n - \frac{1}{2} \frac{(\sqrt n (\tilde\gamma_{11}(\theta)-\tilde\gamma_{11}))^2}{\tilde\gamma_{11}(1-\tilde\gamma_{11})} + \frac{\sqrt n( \tilde \gamma_{11}(\theta)-\tilde\gamma_{11})}{{\tilde\gamma_{11}(1-\tilde\gamma_{11})}} \mb G_n(y) - n \tilde\gamma_{00}(\theta)+ o_\p(1)
\]
which verifies Assumption \ref{a:quad:prime}'(i) with
\begin{align*}
 \gamma(\theta) & = \frac{\tilde \gamma_{11}(\theta)-\tilde\gamma_{11}}{\sqrt{\tilde\gamma_{11}(1-\tilde\gamma_{11})}} &
 \sqrt n \hat \gamma_n =\mb V_n & = \frac{\mb G_n(y)}{\sqrt{\tilde\gamma_{11}(1-\tilde\gamma_{11})}} &
 f_{n,\bot}(\gamma_\bot(\theta)) & = n\gamma_\bot(\theta)
\end{align*}
and $T = \mb R$. The remaining parts of Assumption \ref{a:quad:prime}' are easily shown to be satisfied. Therefore, Theorem \ref{t:main:prime} implies that $\wh \Theta_\alpha$ for $\Theta_I$ will be asymptotically valid but conservative.

For inference on $M_I = \{\mu_0\}$, the profile LR statistic is asymptotically $\chi^2_1$ and equation (\ref{e:qlr:subvec}) holds with $f(v) = v^2$ and $T = \mb R$. To verify Assumption \ref{a:qlr:profile}, for each $\theta \in \Theta_{osn}$ we need to solve
\begin{align*}
 \sup_{\eta \in H_\mu} \p_n \log p_{(\mu,\eta)}
 & =  \sup_{ \substack{ 0 \leq g_{11} \leq \mu \\
		\mu  \leq g_{11}+g_{00} \leq 1 } } \!\!\!\!\! \p_n \Big(  y \log g_{11} + (1-y) \log (1-g_{11}-g_{00})  \Big)
\end{align*}
at $\mu = \tilde\gamma_{11}(\theta)$ and $\mu = \tilde\gamma_{11}(\theta)+\tilde\gamma_{00}(\theta)$.
The maximum is achieved when $g_{00}$ is as small as possible, i.e., when $g_{00} = \mu - g_{11}$. Substituting in and maximizing with respect to $g_{11}$:
\[
 \sup_{\eta \in H_\mu} \p_n \log p_{(\mu,\eta)} = \p_n \big(y \log \mu + (1-y) \log(1-\mu) \big)\,.
\]
Therefore, we obtain the following expansion uniformly for $\theta \in \Theta_{osn}$:
\begin{align*}
n PL_n (M (\theta))
&  = \ell_n + \frac{1}{2} \mb V_n^2 - \frac{1}{2} \Big( \big( \mb V_n - \sqrt n \gamma(\theta) \big)^2 \vee \big( \mb V_n - \sqrt n ( \gamma(\theta) + \tilde \gamma_{00}(\theta)) \big)^2 \Big) + o_\p(1) \\
&  = \ell_n  + \frac{1}{2} \mb V_n^2 - \frac{1}{2} \big( \mb V_n - \sqrt n \gamma(\theta) \big)^2  + o_\p(1)
\end{align*}
where the last equality holds because $\sup_{\theta \in \Theta_{osn}} \tilde\gamma_{00}(\theta) \leq r_n/n = o(n^{-1/2})$. This verifies that Assumption \ref{a:qlr:profile} holds with  $f(v) = v^2$. Thus Theorem \ref{t:main:profile}(ii) implies that $\wh M_\alpha$ has asymptotically exact coverage for $M_I$, even though $\wh \Theta_\alpha$ is conservative for $\Theta_I$ in this case.

\subsubsection{Example 2: entry game with correlated shocks in Subsection \ref{s:game}}\label{s:game1}

Consider the bivariate discrete game with payoffs described in Subsection \ref{s:game}. Here we consider a slightly more general setting, in which $Q_\rho$ denotes a general joint distribution (not just bivariate Gaussian) for $(\epsilon_1,\epsilon_2)$ indexed by a parameter $\rho$. This model falls into the class of models dealt with in Proposition \ref{p:rfr}. Conditions (a)-(b) and (d) of Proposition \ref{p:rfr} hold with $\tilde \gamma(\theta) = (\tilde \gamma_{00}(\theta),\tilde \gamma_{10}(\theta),\tilde \gamma_{11}(\theta))'$ and $\wt \Gamma = \{\tilde \gamma(\theta) : \theta \in \Theta\}$ under very mild conditions on the parameterization $\theta \mapsto \tilde \gamma(\theta)$ (which, in turn, is determined by the specification of $Q_\rho$). Assumption \ref{a:quad} is therefore satisfied with:
\[
\mb I_{0} = \left[ \begin{array}{ccc}
\frac{1}{\tilde \gamma_{00}} & 0 & 0 \\
0 & \frac{1}{\tilde \gamma_{10}} & 0 \\
0 & 0 & \frac{1}{\tilde \gamma_{11}} \end{array} \right]  + \frac{1}{1-\tilde \gamma_{00}-\tilde \gamma_{10}-\tilde \gamma_{11}} \mf 1_{3 \times 3}
\]
where $\mf 1_{3 \times 3}$ denotes a $3 \times 3$ matrix of ones,
\[
 \sqrt n \hat \gamma_n = \mb V_n = \mb I_{0}^{-1/2} \mb G_n \left( \begin{array}{c}
\frac{d_{00}}{\tilde \gamma_{00}} - \frac{1-d_{00}-d_{10}-d_{11}}{1-\tilde \gamma_{00}-\tilde \gamma_{10}-\tilde \gamma_{11}} \\[4pt]
\frac{d_{01}}{\tilde \gamma_{10}} - \frac{1-d_{00}-d_{10}-d_{11}}{1-\tilde \gamma_{00}-\tilde \gamma_{10}-\tilde \gamma_{11}} \\[4pt]
\frac{d_{11}}{\tilde \gamma_{11}} - \frac{1-d_{00}-d_{10}-d_{11}}{1-\tilde \gamma_{00}-\tilde \gamma_{10}-\tilde \gamma_{11}}
\end{array} \right) \rightsquigarrow N(0,I_3)
\]
and $T = \mb R^3$. Condition (c) of Proposition \ref{p:rfr} and Assumption \ref{a:prior} can be verified under mild conditions on the map $\theta \mapsto \tilde \gamma(\theta)$ and the prior $\Pi$. For instance, consider the parameterization $\theta = (\Delta_1,\Delta_2,\beta_1,\beta_2,\rho,s)$ where the joint distribution of $(\epsilon_{1},\epsilon_{2})$ is a bivariate Normal with mean zero, standard deviations one and positive correlation $\rho \in [0,1]$. The parameter space is
\[
\Theta = \{ (\Delta_1, \Delta_2, \beta_1, \beta_2, \rho,s) \in \mb R^6 : \ul \Delta \leq \Delta_1, \Delta_2 \leq \ol \Delta , \ul \beta \leq \beta_1, \beta_2 \leq \ol \beta,  0 \leq \rho,s \leq 1\}\,.
\]
where $-\infty < \ul \Delta <  \ol \Delta < 0$ and $-\infty < \ul \beta <  \ol \beta < \infty$. The image measure  $\Pi_\Gamma$ of a flat prior on $\Theta$ is positive and continuous on a neighborhood of the origin, which verifies Condition (c) of Proposition \ref{p:rfr} and Assumption \ref{a:prior}. Therefore, Theorem \ref{t:main}(ii) implies that our MC CSs for $\Theta_I$ will have asymptotically exact coverage.

\subsubsection{Example 3: a moment inequality model}\label{s:miq}

As a simple illustration, suppose that $\mu \in M = \mb R_+$ is identified by the inequality $\mb E[\mu - X_i] \leq 0$ where $X_1,\ldots,X_n$ are i.i.d. with unknown mean $\mu^* \in \mb R_+$ and unit variance. The identified set for $\mu$ is $M_I = [0,\mu^*]$, which is the argmax of the population criterion function $L(\mu) = -\frac{1}{2} (( \mu - \mu^*) \vee 0)^2$ (see Figure \ref{f:Q}). The sample criterion $-\frac{1}{2 }((\mu - \bar X_n ) \vee 0)^2$ is typically used in the moment inequality literature but violates our Assumption \ref{a:quad}. However, we can rewrite the  model as the moment equality model:
$\mb E[ \mu + \eta - X_i] = 0$ where $\eta \in H = \mb R_+$ is a slackness parameter. The parameter space for $\theta = (\mu,\eta)$ is $\Theta = \mb R_+^2$. The identified set for $\theta$ is $\Theta_I = \{(\mu,\eta) \in \Theta : \mu + \eta = \mu^*\}$ and the identified set for $\mu$ is $M_I$ (see Figure \ref{f:Q}). The GMM objective function is then:
\[
 L_n(\mu,\eta) = -\frac{1}{2 } (\mu + \eta - \bar X_n)^2\,.
\]
It is straightforward to show that $2nL_n(\hat \mu,\hat \eta) = -((\mb V_n + \sqrt n \mu^* ) \wedge 0)^2$ where $\mb V_n = \sqrt n (\bar X_n - \mu^*)$. Moreover, $\sup_{\eta \in H_\mu} 2nL_n (\mu,\eta) = -((\mb V_n + \sqrt n (\mu^* - \mu)) \wedge 0)^2$ and so the profile QLR for $M_I$ is $PQ_n(M_I) = (\mb V_n \wedge 0)^2 - ((\mb V_n + \sqrt n \mu^* ) \wedge 0)^2$.

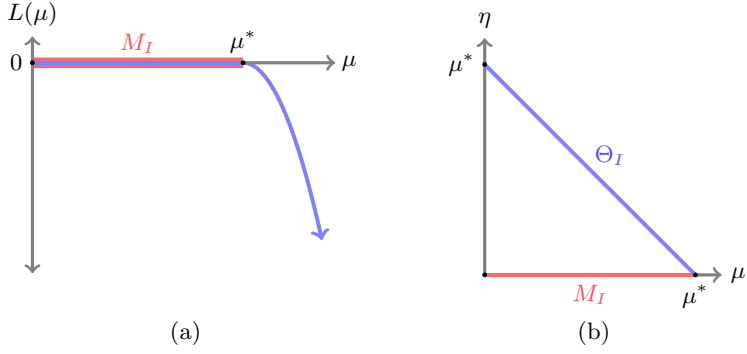
\begin{figure}[t]
	\begin{center}
	$\underset{\footnotesize \mbox{(a)}}{
		\begin{tikzpicture}[scale=0.7]
		\draw[red!60!white, line width=4pt] (0,4) -- (4,4);
		\draw[blue!50!white, line width=1.5pt] (0,4) -- (4,4);
		\draw [blue!50!white, line width=1.5pt,samples=100,domain=0:1.5, ->] plot(\x+4,{4-1.5*(\x)^2});
		\draw[gray, very thick, <->] (0,0) -- (0,4.5);
		\draw[gray, very thick, ->] (0,4) -- (5.75,4);
		\draw[blue!50!white, line width=1.5pt] (0,4) -- (4,4);
		\filldraw[black] (0,4) circle (1pt) node[anchor=east] {\footnotesize $0$};
		\filldraw[black] (6.35,4) circle (0pt) node[anchor=east] {\footnotesize $\mu$};
		\filldraw[black] (4,4) circle (1pt) node[anchor=south] {\footnotesize $\mu^*$};
		\filldraw[black] (6,0) circle (0pt) node[anchor=north] {\footnotesize $\phantom{(1)}$};
		\filldraw[black] (0,4.5) circle (0pt) node[anchor=south] {\footnotesize $L(\mu)$};
		\filldraw[red!70!white] (2,4) circle (0pt) node[anchor=south] {\footnotesize $M_I$};
		\filldraw[gray] (0,-0.76) circle (0pt)  ;
		\end{tikzpicture}}$%
		\quad \quad
		$\underset{\footnotesize \mbox{(b)}}{\begin{tikzpicture}[scale=0.7]
		\draw[gray, very thick, ->] (0,0) -- (0,4.5);
		\draw[gray, very thick, ->] (0,0) -- (4.5,0);
		\draw[blue!50!white, line width=1.5pt] (0,4) -- (4,0);
		\draw[red!60!white, line width=1.5pt] (0,0) -- (4,0);
		\filldraw[black] (0,0) circle (1pt) ;
		\filldraw[black] (0,4) circle (1pt) node[anchor=east] {\footnotesize $\mu^*$};
		\filldraw[black] (4,0) circle (1pt) node[anchor=north] {\footnotesize $\mu^*$};
		\filldraw[black] (5.15,0) circle (0pt) node[anchor=east] {\footnotesize $\mu$};
		\filldraw[black] (0,4.5) circle (0pt) node[anchor=south] {\footnotesize $\eta$};
		\filldraw[red!70!white] (2,0) circle (0pt) node[anchor=north] {\footnotesize $M_I$};
		\filldraw[blue!70!white] (1.9,1.9) circle (0pt) node[anchor=south west] {\footnotesize $\Theta_I$};
		\filldraw[gray] (0,-0.75) circle (0pt) ;
		\filldraw[gray] (-1,-0.70) circle (0pt) ;
		\end{tikzpicture}}$
		\vskip 4pt
		\parbox{12cm}{\caption{\small\label{f:Q}
		Panel (a): identified set $M_I$ for $\mu$ as the argmax of the population (moment inequality) criterion $L(\mu) = -\frac{1}{2}( (\mu - \mu^*) \vee 0)^2$. Panel (b): identified set $\Theta_I$ for $\theta =(\mu,\eta)$ for the moment equality model $  E[ \mu + \eta - X] = 0$.
		}}
	\end{center}
\end{figure}

For the posterior of the profile QLR, we also have $\Delta(\theta^b) = \{ \theta \in \Theta : \mu + \eta = \mu^b + \eta^b\}$ and $M(\theta^b) = [0,\mu^b + \eta^b]$. The profile QLR for $M(\theta^b)$ is
\[
 PQ_n(M(\theta^b)) = ((\mb V_n - \sqrt n (\mu^b + \eta^b - \mu^*)) \wedge 0)^2 - ((\mb V_n + \sqrt n \mu^* ) \wedge 0)^2
\]
This maps into our framework with the local reduced-form parameter $\gamma(\theta) = \mu + \eta - \mu^*$. Consider the case $\mu^* \in (cn^{\alpha-1/2},\infty)$ where $c > 0$ and $\alpha \in (0,\frac{1}{2}]$ are positive constants (we consider this case for the moment just to illustrate verification of our conditions). Here $T = \mb R$ and a positive continuous prior on $\mu$ and $\eta$ induces a prior on $\gamma$ that is positive and continuous at the origin. Moreover, Assumption \ref{a:qlr:profile} holds with $f(\kappa) = (\kappa \wedge 0)^2$. The regularity conditions of Theorem \ref{t:main:profile} hold, and hence $\wh M_\alpha$ has asymptotically exact coverage for $M_I$.

More generally, Appendix \ref{s:drift} shows that under very mild conditions our CS $\wh M_\alpha$ is uniformly valid over a class of DGPs $\mf P$, i.e.:
\[
 \liminf_{n \to \infty} \inf_{\p \in \mf P} \p( \mb M_I(\p) \subseteq \wh M_\alpha ) \geq \alpha
\]
where $M_I(\p) = [0,\mu^*(\p)]$ and the set $\mf P$ allows for any mean $\mu^*(\p) \in \mb R_+$ (encompassing, in particular, point-identified, partially identified, and drifting-to-point identified cases). In contrast, we construct sequences of DGPs $(\mr P_n)_{n \in \mb N} \subset \mf P$ along which bootstrap-based CSs $\wh M_\alpha^{boot}$ fail to cover with the prescribed coverage probability, i.e.:
\[
 \limsup_{n \to \infty} \mr P_n( \mb M_I(\mr P_n) \subseteq \wh M_\alpha^{boot} ) < \alpha\,.
\]
This reinforces the fact that our MC CSs for $M_I$ have very different asymptotic properties from bootstrap-based CSs for $M_I$.

\section{Conclusion}\label{sec-conclusion}

We propose new methods for constructing CSs for identified sets in partially-identified econometric models. Our CSs are relatively simple to compute and have asymptotically valid frequentist coverage uniformly over a class of DGPs, including partially- and point- identified parametric likelihood and moment based models. We show that under a set of sufficient conditions, and in broad classes of models, our set coverage is asymptotically exact. We also show that in models with singularities (such as the missing data example), our MC CSs for  $\Theta_I$ may be slightly conservative, but our MC CSs for identified sets of subvectors could still be asymptotically exact.
Simulation experiments demonstrate the good finite-sample coverage properties of our proposed CS constructions in standard difficult situations. We also illustrate our proposed CSs in two realistic empirical examples.

There are numerous extensions we plan to address in the future. The first natural extension is to allow for semiparametric likelihood or moment based models involving unknown and possibly partially-identified nuisance functions. We think this paper's MC approach could be extended to the partially-identified sieve MLE based inference in \cite{CTT}. A related, important extension is to allow for nonlinear structural models with latent state variables. Finally, we plan to study possibly misspecified and partially identified models.

\newpage

{ \small \singlespacing
\bibliographystyle{chicago}
\bibliography{mcmc-2}
}

\newpage

\appendix

\section{Additional details for the simulations and applications}\label{a:mc}

\subsection{An adaptive Sequential Monte Carlo algorithm}\label{s:smc}

We use an adaptive Sequential Monte Carlo (SMC) algorithm to sample from the quasi-posterior in (\ref{e:posterior}). Conventional MCMC algorithms such as the Metropolis-Hastings algorithm may fail to generate representative samples from the quasi-posterior in partially identified models or, more generally, models with multi-modal quasi-posteriors. For instance, the MCMC chain may get stuck exploring a single mode and fail to explore other modes if there is insufficient mass bridging the modes. In contrast, the SMC algorithm we use propagates large clouds of draws, in parallel, over a sequence of tempered distributions which begins with the prior, slowly incorporates information from the criterion, and ends with the quasi-posterior. The algorithm sequentially discards draws with relatively low mass as information is added, duplicates those with relatively high mass, then mutates the draws via a MCMC step to generate new draws (preventing particle impoverishment). Moreover, the algorithm is adaptive, i.e., the tuning parameters for the sequence of proposal distributions in the MCMC step are determined in a data-driven way.

The algorithm we use and its exposition below closely follows \cite{HS2014} who adapt a generic adaptive SMC algorithm to deal with large-scale DSGE models.\footnote{See \cite{Chopin2002,Chopin2004} and \cite{DDJ2006} for the generic SMC algorithm for estimating static model parameters, \cite{DDJ2012} and references therein for adaptive selection of tuning parameters with a SMC framework.} A similar algorithm is proposed by \cite{DG2014}, who emphasize its parallelizability. Let $J$ and $K$ be positive integers and let $\phi_1,\ldots,\phi_J$ be an increasing sequence with $\phi_1 = 0$ and $\phi_J = 1$. Set $w_1^b =  1$ for $b = 1,\ldots,B$ and draw $\theta^1_1,\ldots,\theta^B_1$ from the prior $\Pi(\theta)$. Then for $j = 2,\ldots,J$:
\begin{enumerate}
\item Correction: Let $v^b_j = e^{(\phi_j-\phi_{j-1})nL_n(\theta^b_{j-1})}$ and $w^b_j = (v^b_j w^b_{j-1})/(\frac{1}{B}\sum_{b=1}^B v^b_j w^b_{j-1})$.

\item Selection: Compute the effective sample size $ESS_j = B/(\frac{1}{B} \sum_{b=1}^B (w^b_j)^2)$. Then:
\begin{enumerate}
\item If $ESS_j > \frac{B}{2}$: set $\vartheta^b_j = \theta^b_{j-1}$ for $b = 1,\ldots,B$; or
\item If $ESS_j \leq \frac{B}{2}$: draw an i.i.d. sample $\vartheta^1_j,\ldots,\vartheta^B_j$ from the multinomial distribution with support $\theta^1_{j-1},\ldots,\theta^B_{j-1}$ and weights $w^1_{j},\ldots,w^B_{j}$, then set $w^b_j = 1$ for $b = 1,\ldots,B$.
\end{enumerate}

\item Mutation: Run $B$ separate and independent MCMC chains of length $K$ using the random-walk Metropolis-Hastings algorithm initialized at each $\vartheta_j^b$ for the tempered quasi-posterior $\Pi_j(\theta | \mf X_n) \propto e^{\phi_j n L_n(\theta)} \Pi(\theta)$ and let $\theta_j^b$ be the final draw of the $b$th chain.
\end{enumerate}

The resulting sample is $\theta^b = \theta^b_J$ for $b = 1,\ldots,B$. Multinomial resampling (step 2) and the $B$ independent MCMC chains (step 3) can both be computed in parallel, so the additional computational time relative to conventional MCMC methods is modest.

In practice, we take $J = 200$, $J = 1$, $4$ or $8$ (see below for the specific $K$ used in the simulations and empirical applications),  and $\phi_j = ( \frac{j-1}{J-1} )^{\lambda}$ with $\lambda = 2$. When the dimension of $\theta$ is low, in step 3 we use a $N(0,\sigma_j^2I)$ proposal density (all parameters are transformed to have full support) where $\sigma_j$ is chosen adaptively to target an acceptance ratio $\approx 0.35$ by setting $\sigma_2 = 1$ and
\[
 \sigma_j = \sigma_{j-1} \Big( 0.95 + 0.10 \frac{e^{16(A_{j-1}-0.35)}}{1+e^{16(A_{j-1}-0.35)}} \Big)
\]
for $j > 2$, where $A_{j-1}$ is the acceptance ratio from the previous iteration. If the dimension of $\theta$ is large, we partition $\vartheta^b_j$ into $L$ random blocks (we assign each element of $\vartheta^b_j$ to a block by drawing from the uniform distribution on $\{1,\ldots,L\}$) then apply a blockwise random-walk Metropolis-Hastings (i.e. Metropolis-within-Gibbs) algorithm. Here the proposal density we use for block $l \in \{1,\ldots,L\}$ is $N(0,\sigma_j^2\Sigma_{j-1}^l)$ where $\sigma_j$ is chosen as before, $\Sigma_{j-1}$ is the covariance of the draws from iteration $j-1$, and $\Sigma_j^l$ is the sub-matrix of $\Sigma_j$ corresponding to block $l$.

As the SMC procedure uses a particle approximation to the posterior, in practice compute quantiles for procedure 1 using:
\begin{equation} \label{e:quantile-weighted}
 \Pi( \{\theta : Q_n(\theta) \leq z\} | \mf X_n) = \frac{1}{B} \sum_{b=1}^B w^b_J \ind \{ Q_n(\theta^b) \leq z\}
\end{equation}
and similarly for the profile QLR for procedure 2.

\subsection{Example 1: missing data}\label{ax:smc:ex1}

{\bf SMC algorithm:} We implement the SMC algorithm with $K = 1$ and a $N(0,\sigma_j^2I)$ proposal in the mutation step for all simulations for this example.

{\bf Additional simulation results:} Here we present additional simulation results for the missing data example using (i) a likelihood criterion and curved prior and (ii) a continuously-updated GMM criterion and flat prior. For the ``curved'' prior, we take $\pi(\mu,\eta_1,\eta_2) = \pi(\mu|\eta_1,\eta_2)\pi(\eta_1)\pi(\eta_2)$ with $\pi(\eta_1) = \mr{Beta}(3,8)$, $\pi(\eta_2) = \mr{Beta}(8,1)$, and $\pi(\mu|\eta_1,\eta_2) = U[\eta_1(1-\eta_2),\eta_2+\eta_1(1-\eta_2)]$. Figure \ref{f:prior} plots the marginal curved priors for $\eta_1$ and $\eta_2$.

Results for the likelihood criterion with curved prior are presented in Table \ref{t:ex1-curved}, and are very similar to those presented in Table \ref{t:ex1}, though the coverage of percentile-based CSs is worse here for the partially identified cases ($c = 1,2$). Results for the CU-GMM criterion and flat prior are presented in Table \ref{t:ex1-cugmm}. Results for Procedures 2 and 3 are very similar to the results with a likelihood criterion and show coverage very close to nominal coverage in point point- and partially-identified cases. Here procedure 1 does not over-cover in the point-identified case because the weighting matrix is singular when the model is identified, which forces the draws to concentrate on the region in which $\eta_2 = 1$. This, in turn, means projection is no longer conservative in the point-identified case, though it is still very conservative in the partially-identified cases. Percentile CSs again under-cover badly in the partially-identified case.

\begin{sidewaystable}[p]{\begin{center} \footnotesize
\begin{tabular}{|c|cccccc|cccccc|cccccc|} \hline
	& \multicolumn{6}{c|}{$\eta_2 = 1-\frac{2}{\sqrt n}$} & \multicolumn{6}{c|}{$\eta_2 = 1-\frac{1}{\sqrt n}$}& \multicolumn{6}{c|}{$\eta_2 = 1$ (Point ID)}   \\
	& \multicolumn{2}{c}{0.90} & \multicolumn{2}{c}{0.95} & \multicolumn{2}{c|}{0.99} & \multicolumn{2}{c}{0.90} & \multicolumn{2}{c}{0.95} & \multicolumn{2}{c|}{0.99}  & \multicolumn{2}{c}{0.90} & \multicolumn{2}{c}{0.95} & \multicolumn{2}{c|}{0.99}  \\ \hline
 	& & \multicolumn{16}{c}{$\widehat{\Theta}_{\alpha}$ (Procedure 1)} & \\
100     &  .911  & ---   &  .957  & ---   &  .990  & ---   &  .898  & ---   &  .950  & ---   &  .989  & ---   &  .985  & ---   &  .994  & ---   &  .999  & ---  \\
250     &  .906  & ---   &  .954  & ---   &  .993  & ---   &  .899  & ---   &  .951  & ---   &  .991  & ---   &  .992  & ---   &  .997  & ---   &  1.000  & ---  \\
500     &  .908  & ---   &  .956  & ---   &  .992  & ---   &  .910  & ---   &  .957  & ---   &  .991  & ---   &  .994  & ---   &  .998  & ---   &  1.000  & ---  \\
1000    &  .896  & ---   &  .948  & ---   &  .989  & ---   &  .905  & ---   &  .954  & ---   &  .989  & ---   &  .996  & ---   &  .999  & ---   &  1.000  & ---  \\
	& & \multicolumn{16}{c}{$\widehat{M}_{\alpha}$ (Procedure 2)} & \\
100     &  .890  & [$ .33 ,\! .67 $]  &  .948  & [$ .31 ,\! .69 $]  &  .990  & [$ .29 ,\! .71 $]  &  .911  & [$ .37 ,\! .63 $]  &  .952  & [$ .36 ,\! .64 $]  &  .992  & [$ .33 ,\! .67 $]  &  .912  & [$ .41 ,\! .58 $]  &  .961  & [$ .40 ,\! .60 $]  &  .990  & [$ .37 ,\! .63 $] \\
250     &  .905  & [$ .39 ,\! .61 $]  &  .953  & [$ .38 ,\! .62 $]  &  .991  & [$ .36 ,\! .64 $]  &  .913  & [$ .42 ,\! .58 $]  &  .957  & [$ .41 ,\! .59 $]  &  .992  & [$ .39 ,\! .61 $]  &  .917  & [$ .45 ,\! .55 $]  &  .962  & [$ .44 ,\! .56 $]  &  .993  & [$ .42 ,\! .58 $] \\
500     &  .913  & [$ .42 ,\! .58 $]  &  .957  & [$ .41 ,\! .59 $]  &  .992  & [$ .40 ,\! .60 $]  &  .915  & [$ .44 ,\! .56 $]  &  .955  & [$ .43 ,\! .57 $]  &  .993  & [$ .42 ,\! .58 $]  &  .919  & [$ .46 ,\! .54 $]  &  .959  & [$ .46 ,\! .54 $]  &  .993  & [$ .44 ,\! .56 $] \\
1000    &  .898  & [$ .44 ,\! .56 $]  &  .948  & [$ .44 ,\! .56 $]  &  .988  & [$ .43 ,\! .57 $]  &  .898  & [$ .46 ,\! .54 $]  &  .948  & [$ .45 ,\! .55 $]  &  .990  & [$ .44 ,\! .56 $]  &  .913  & [$ .47 ,\! .53 $]  &  .954  & [$ .47 ,\! .53 $]  &  .992  & [$ .46 ,\! .54 $] \\
	& & \multicolumn{16}{c}{$\widehat{M}^{\chi}_{\alpha}$ (Procedure 3)} & \\
100     &  .922  & [$ .32 ,\! .68 $]  &  .950  & [$ .31 ,\! .69 $]  &  .989  & [$ .28 ,\! .72 $]  &  .912  & [$ .37 ,\! .63 $]  &  .942  & [$ .36 ,\! .64 $]  &  .989  & [$ .33 ,\! .67 $]  &  .909  & [$ .42 ,\! .58 $]  &  .941  & [$ .40 ,\! .59 $]  &  .985  & [$ .38 ,\! .62 $] \\
250     &  .910  & [$ .39 ,\! .61 $]  &  .949  & [$ .38 ,\! .62 $]  &  .989  & [$ .36 ,\! .64 $]  &  .913  & [$ .42 ,\! .58 $]  &  .951  & [$ .41 ,\! .59 $]  &  .990  & [$ .39 ,\! .61 $]  &  .891  & [$ .45 ,\! .55 $]  &  .952  & [$ .44 ,\! .56 $]  &  .992  & [$ .42 ,\! .58 $] \\
500     &  .898  & [$ .42 ,\! .58 $]  &  .952  & [$ .41 ,\! .59 $]  &  .992  & [$ .40 ,\! .60 $]  &  .910  & [$ .44 ,\! .56 $]  &  .947  & [$ .44 ,\! .56 $]  &  .990  & [$ .42 ,\! .58 $]  &  .911  & [$ .46 ,\! .54 $]  &  .948  & [$ .46 ,\! .54 $]  &  .992  & [$ .44 ,\! .56 $] \\
1000    &  .892  & [$ .44 ,\! .56 $]  &  .944  & [$ .44 ,\! .56 $]  &  .989  & [$ .43 ,\! .57 $]  &  .892  & [$ .46 ,\! .54 $]  &  .944  & [$ .45 ,\! .55 $]  &  .986  & [$ .45 ,\! .55 $]  &  .902  & [$ .48 ,\! .52 $]  &  .941  & [$ .47 ,\! .53 $]  &  .987  & [$ .46 ,\! .54 $] \\
	& & \multicolumn{16}{c}{$\widehat{M}^{proj}_{\alpha}$ (Projection)} & \\
100     &  .971  & [$ .30 ,\! .70 $]  &  .988  & [$ .29 ,\! .71 $]  &  .997  & [$ .26 ,\! .74 $]  &  .966  & [$ .35 ,\! .65 $]  &  .987  & [$ .33 ,\! .67 $]  &  .997  & [$ .31 ,\! .69 $]  &  .985  & [$ .38 ,\! .62 $]  &  .994  & [$ .36 ,\! .64 $]  &  .999  & [$ .34 ,\! .66 $] \\
250     &  .971  & [$ .37 ,\! .63 $]  &  .987  & [$ .36 ,\! .64 $]  &  .999  & [$ .34 ,\! .66 $]  &  .969  & [$ .40 ,\! .60 $]  &  .986  & [$ .39 ,\! .61 $]  &  .998  & [$ .37 ,\! .62 $]  &  .992  & [$ .42 ,\! .58 $]  &  .997  & [$ .41 ,\! .59 $]  &  1.000  & [$ .39 ,\! .61 $] \\
500     &  .975  & [$ .41 ,\! .59 $]  &  .988  & [$ .40 ,\! .60 $]  &  .998  & [$ .39 ,\! .61 $]  &  .972  & [$ .43 ,\! .57 $]  &  .989  & [$ .42 ,\! .58 $]  &  .998  & [$ .41 ,\! .59 $]  &  .994  & [$ .44 ,\! .56 $]  &  .998  & [$ .44 ,\! .57 $]  &  1.000  & [$ .42 ,\! .58 $] \\
1000    &  .965  & [$ .44 ,\! .56 $]  &  .983  & [$ .43 ,\! .57 $]  &  .997  & [$ .42 ,\! .58 $]  &  .966  & [$ .45 ,\! .55 $]  &  .985  & [$ .45 ,\! .55 $]  &  .998  & [$ .44 ,\! .56 $]  &  .996  & [$ .45 ,\! .55 $]  &  .999  & [$ .45 ,\! .55 $]  &  1.000  & [$ .44 ,\! .56 $] \\
	& & \multicolumn{16}{c}{$\widehat{M}^{perc}_{\alpha}$ (Percentile)} & \\
100     &  .000  & [$ .37 ,\! .54 $]  &  .037  & [$ .36 ,\! .56 $]  &  .398  & [$ .33 ,\! .59 $]  &  .458  & [$ .40 ,\! .56 $]  &  .646  & [$ .38 ,\! .57 $]  &  .866  & [$ .35 ,\! .60 $]  &  .902  & [$ .42 ,\! .58 $]  &  .951  & [$ .40 ,\! .59 $]  &  .989  & [$ .37 ,\! .62 $] \\
250     &  .000  & [$ .41 ,\! .53 $]  &  .075  & [$ .40 ,\! .54 $]  &  .438  & [$ .38 ,\! .56 $]  &  .480  & [$ .43 ,\! .54 $]  &  .653  & [$ .42 ,\! .55 $]  &  .867  & [$ .40 ,\! .57 $]  &  .909  & [$ .45 ,\! .55 $]  &  .954  & [$ .44 ,\! .56 $]  &  .992  & [$ .42 ,\! .58 $] \\
500     &  .000  & [$ .44 ,\! .52 $]  &  .098  & [$ .43 ,\! .53 $]  &  .468  & [$ .42 ,\! .54 $]  &  .488  & [$ .45 ,\! .53 $]  &  .660  & [$ .44 ,\! .54 $]  &  .878  & [$ .43 ,\! .55 $]  &  .910  & [$ .46 ,\! .54 $]  &  .955  & [$ .46 ,\! .54 $]  &  .991  & [$ .44 ,\! .56 $] \\
1000    &  .000  & [$ .46 ,\! .52 $]  &  .107  & [$ .45 ,\! .52 $]  &  .472  & [$ .44 ,\! .53 $]  &  .483  & [$ .47 ,\! .52 $]  &  .655  & [$ .46 ,\! .53 $]  &  .866  & [$ .45 ,\! .54 $]  &  .901  & [$ .47 ,\! .53 $]  &  .948  & [$ .47 ,\! .53 $]  &  .989  & [$ .46 ,\! .54 $] \\
	& & \multicolumn{16}{c}{Comparison with GMS CSs for $\mu$ via moment inequalities} & \\
100     &  .810  & [$ .34 ,\! .66 $]  &  .909  & [$ .32 ,\! .68 $]  &  .981  & [$ .29 ,\! .71 $]  &  .806  & [$ .39 ,\! .61 $]  &  .896  & [$ .37 ,\! .63 $]  &  .979  & [$ .34 ,\! .66 $]  &  .894  & [$ .42 ,\! .58 $]  &  .943  & [$ .40 ,\! .60 $]  &  .974  & [$ .39 ,\! .61 $] \\
250     &  .800  & [$ .40 ,\! .60 $]  &  .897  & [$ .39 ,\! .61 $]  &  .978  & [$ .36 ,\! .63 $]  &  .798  & [$ .43 ,\! .57 $]  &  .897  & [$ .42 ,\! .58 $]  &  .976  & [$ .40 ,\! .60 $]  &  .903  & [$ .45 ,\! .55 $]  &  .951  & [$ .44 ,\! .56 $]  &  .984  & [$ .43 ,\! .57 $] \\
500     &  .795  & [$ .43 ,\! .57 $]  &  .902  & [$ .42 ,\! .58 $]  &  .980  & [$ .40 ,\! .60 $]  &  .788  & [$ .45 ,\! .55 $]  &  .894  & [$ .44 ,\! .56 $]  &  .977  & [$ .43 ,\! .57 $]  &  .905  & [$ .46 ,\! .54 $]  &  .950  & [$ .46 ,\! .54 $]  &  .987  & [$ .45 ,\! .55 $] \\
1000    &  .785  & [$ .45 ,\! .55 $]  &  .885  & [$ .44 ,\! .56 $]  &  .973  & [$ .43 ,\! .57 $]  &  .785  & [$ .46 ,\! .54 $]  &  .884  & [$ .46 ,\! .54 $]  &  .973  & [$ .45 ,\! .55 $]  &  .895  & [$ .47 ,\! .53 $]  &  .943  & [$ .47 ,\! .53 $]  &  .986  & [$ .46 ,\! .54 $] \\ \hline
			\end{tabular}
			\parbox{14cm}{\caption{\small\label{t:ex1-curved} Missing data example: average coverage probabilities for $\Theta_I$ and $M_I$ and average lower and upper bounds of CSs for $M_I$ across 5000 MC replications. Procedures 1--3, Projection and Percentile are implemented using a likelihood criterion and \textbf{curved prior}.}}
	\end{center}	}	
\end{sidewaystable}

\begin{sidewaystable}[p]{\begin{center} \footnotesize
\begin{tabular}{|c|cccccc|cccccc|cccccc|} \hline
	& \multicolumn{6}{c|}{$\eta_2 = 1-\frac{2}{\sqrt n}$} & \multicolumn{6}{c|}{$\eta_2 = 1-\frac{1}{\sqrt n}$}& \multicolumn{6}{c|}{$\eta_2 = 1$ (Point ID)}   \\
	& \multicolumn{2}{c}{0.90} & \multicolumn{2}{c}{0.95} & \multicolumn{2}{c|}{0.99} & \multicolumn{2}{c}{0.90} & \multicolumn{2}{c}{0.95} & \multicolumn{2}{c|}{0.99}  & \multicolumn{2}{c}{0.90} & \multicolumn{2}{c}{0.95} & \multicolumn{2}{c|}{0.99}  \\ \hline
	& & \multicolumn{16}{c}{$\widehat{\Theta}_{\alpha}$ (Procedure 1)} & \\
100     &  .910  & ---   &  .951  & ---   &  .989  & ---   &  .888  & ---   &  .938  & ---   &  .972  & ---   &  .914  & ---   &  .945  & ---   &  .990  & ---  \\
250     &  .908  & ---   &  .958  & ---   &  .994  & ---   &  .901  & ---   &  .952  & ---   &  .988  & ---   &  .904  & ---   &  .950  & ---   &  .990  & ---  \\
500     &  .912  & ---   &  .957  & ---   &  .992  & ---   &  .923  & ---   &  .965  & ---   &  .993  & ---   &  .904  & ---   &  .954  & ---   &  .990  & ---  \\
1000    &  .909  & ---   &  .955  & ---   &  .989  & ---   &  .912  & ---   &  .963  & ---   &  .995  & ---   &  .911  & ---   &  .953  & ---   &  .989  & ---  \\
	& & \multicolumn{16}{c}{$\widehat{M}_{\alpha}$ (Procedure 2)} & \\
100     &  .921  & [$ .32 ,\! .68 $]  &  .965  & [$ .30 ,\! .70 $]  &  .993  & [$ .27 ,\! .73 $]  &  .913  & [$ .37 ,\! .63 $]  &  .949  & [$ .35 ,\! .65 $]  &  .991  & [$ .32 ,\! .68 $]  &  .914  & [$ .42 ,\! .58 $]  &  .945  & [$ .40 ,\! .60 $]  &  .990  & [$ .37 ,\! .63 $] \\
250     &  .911  & [$ .39 ,\! .61 $]  &  .956  & [$ .38 ,\! .62 $]  &  .993  & [$ .36 ,\! .64 $]  &  .913  & [$ .42 ,\! .58 $]  &  .953  & [$ .41 ,\! .59 $]  &  .993  & [$ .39 ,\! .61 $]  &  .904  & [$ .45 ,\! .55 $]  &  .950  & [$ .44 ,\! .56 $]  &  .990  & [$ .42 ,\! .58 $] \\
500     &  .908  & [$ .42 ,\! .58 $]  &  .954  & [$ .41 ,\! .59 $]  &  .991  & [$ .40 ,\! .60 $]  &  .912  & [$ .44 ,\! .56 $]  &  .953  & [$ .44 ,\! .57 $]  &  .991  & [$ .42 ,\! .58 $]  &  .904  & [$ .46 ,\! .54 $]  &  .954  & [$ .46 ,\! .54 $]  &  .990  & [$ .44 ,\! .56 $] \\
1000    &  .904  & [$ .44 ,\! .56 $]  &  .953  & [$ .44 ,\! .56 $]  &  .989  & [$ .43 ,\! .57 $]  &  .902  & [$ .46 ,\! .54 $]  &  .950  & [$ .45 ,\! .55 $]  &  .990  & [$ .44 ,\! .56 $]  &  .911  & [$ .48 ,\! .52 $]  &  .953  & [$ .47 ,\! .53 $]  &  .989  & [$ .46 ,\! .54 $] \\
	& & \multicolumn{16}{c}{$\widehat{M}^{\chi}_{\alpha}$ (Procedure 3)} & \\
100     &  .921  & [$ .32 ,\! .68 $]  &  .949  & [$ .31 ,\! .69 $]  &  .989  & [$ .28 ,\! .72 $]  &  .913  & [$ .37 ,\! .63 $]  &  .946  & [$ .35 ,\! .64 $]  &  .989  & [$ .32 ,\! .68 $]  &  .914  & [$ .42 ,\! .58 $]  &  .945  & [$ .40 ,\! .60 $]  &  .990  & [$ .37 ,\! .63 $] \\
250     &  .911  & [$ .39 ,\! .61 $]  &  .951  & [$ .38 ,\! .62 $]  &  .992  & [$ .36 ,\! .64 $]  &  .914  & [$ .42 ,\! .58 $]  &  .952  & [$ .41 ,\! .59 $]  &  .993  & [$ .39 ,\! .61 $]  &  .886  & [$ .45 ,\! .55 $]  &  .950  & [$ .44 ,\! .56 $]  &  .991  & [$ .42 ,\! .58 $] \\
500     &  .909  & [$ .42 ,\! .58 $]  &  .950  & [$ .41 ,\! .59 $]  &  .990  & [$ .40 ,\! .60 $]  &  .913  & [$ .44 ,\! .56 $]  &  .948  & [$ .44 ,\! .57 $]  &  .990  & [$ .42 ,\! .58 $]  &  .904  & [$ .46 ,\! .54 $]  &  .947  & [$ .46 ,\! .54 $]  &  .989  & [$ .44 ,\! .56 $] \\
1000    &  .902  & [$ .44 ,\! .56 $]  &  .950  & [$ .44 ,\! .56 $]  &  .989  & [$ .43 ,\! .57 $]  &  .901  & [$ .46 ,\! .54 $]  &  .950  & [$ .45 ,\! .55 $]  &  .988  & [$ .45 ,\! .55 $]  &  .902  & [$ .48 ,\! .52 $]  &  .948  & [$ .47 ,\! .53 $]  &  .988  & [$ .46 ,\! .54 $] \\
	& & \multicolumn{16}{c}{$\widehat{M}^{proj}_{\alpha}$ (Projection)} & \\
100     &  .977  & [$ .29 ,\! .71 $]  &  .990  & [$ .28 ,\! .72 $]  &  .999  & [$ .24 ,\! .76 $]  &  .970  & [$ .34 ,\! .66 $]  &  .988  & [$ .33 ,\! .67 $]  &  .998  & [$ .30 ,\! .70 $]  &  .914  & [$ .42 ,\! .58 $]  &  .945  & [$ .40 ,\! .60 $]  &  .990  & [$ .37 ,\! .63 $] \\
250     &  .972  & [$ .37 ,\! .63 $]  &  .992  & [$ .36 ,\! .64 $]  &  .999  & [$ .34 ,\! .66 $]  &  .976  & [$ .40 ,\! .60 $]  &  .992  & [$ .39 ,\! .61 $]  &  .999  & [$ .37 ,\! .64 $]  &  .904  & [$ .45 ,\! .55 $]  &  .950  & [$ .44 ,\! .56 $]  &  .990  & [$ .42 ,\! .58 $] \\
500     &  .975  & [$ .41 ,\! .59 $]  &  .988  & [$ .40 ,\! .60 $]  &  .998  & [$ .39 ,\! .61 $]  &  .979  & [$ .43 ,\! .57 $]  &  .992  & [$ .42 ,\! .58 $]  &  1.000  & [$ .40 ,\! .60 $]  &  .904  & [$ .46 ,\! .54 $]  &  .954  & [$ .46 ,\! .54 $]  &  .990  & [$ .44 ,\! .56 $] \\
1000    &  .972  & [$ .44 ,\! .57 $]  &  .986  & [$ .43 ,\! .57 $]  &  .998  & [$ .42 ,\! .58 $]  &  .975  & [$ .45 ,\! .55 $]  &  .991  & [$ .44 ,\! .56 $]  &  .999  & [$ .43 ,\! .57 $]  &  .911  & [$ .48 ,\! .52 $]  &  .953  & [$ .47 ,\! .53 $]  &  .989  & [$ .46 ,\! .54 $] \\
	& & \multicolumn{16}{c}{$\widehat{M}^{perc}_{\alpha}$ (Percentile)} & \\
100     &  .399  & [$ .38 ,\! .62 $]  &  .665  & [$ .36 ,\! .64 $]  &  .939  & [$ .31 ,\! .68 $]  &  .642  & [$ .41 ,\! .59 $]  &  .800  & [$ .39 ,\! .61 $]  &  .954  & [$ .35 ,\! .65 $]  &  .912  & [$ .42 ,\! .58 $]  &  .945  & [$ .40 ,\! .60 $]  &  .990  & [$ .37 ,\! .63 $] \\
250     &  .386  & [$ .42 ,\! .58 $]  &  .642  & [$ .41 ,\! .59 $]  &  .914  & [$ .38 ,\! .62 $]  &  .641  & [$ .44 ,\! .56 $]  &  .804  & [$ .43 ,\! .57 $]  &  .954  & [$ .41 ,\! .59 $]  &  .903  & [$ .45 ,\! .55 $]  &  .950  & [$ .44 ,\! .56 $]  &  .990  & [$ .42 ,\! .58 $] \\
500     &  .384  & [$ .44 ,\! .56 $]  &  .639  & [$ .44 ,\! .57 $]  &  .911  & [$ .42 ,\! .58 $]  &  .638  & [$ .46 ,\! .54 $]  &  .803  & [$ .45 ,\! .55 $]  &  .952  & [$ .43 ,\! .57 $]  &  .905  & [$ .46 ,\! .54 $]  &  .953  & [$ .46 ,\! .54 $]  &  .989  & [$ .44 ,\! .56 $] \\
1000    &  .392  & [$ .46 ,\! .54 $]  &  .647  & [$ .45 ,\! .55 $]  &  .908  & [$ .44 ,\! .56 $]  &  .651  & [$ .47 ,\! .53 $]  &  .803  & [$ .46 ,\! .54 $]  &  .950  & [$ .45 ,\! .55 $]  &  .909  & [$ .47 ,\! .53 $]  &  .954  & [$ .47 ,\! .53 $]  &  .989  & [$ .46 ,\! .54 $] \\
	& & \multicolumn{16}{c}{Comparison with GMS CSs for $\mu$ via moment inequalities} & \\
100     &  .806  & [$ .34 ,\! .66 $]  &  .908  & [$ .32 ,\! .68 $]  &  .978  & [$ .29 ,\! .71 $]  &  .803  & [$ .39 ,\! .61 $]  &  .900  & [$ .37 ,\! .63 $]  &  .976  & [$ .34 ,\! .66 $]  &  .899  & [$ .42 ,\! .58 $]  &  .946  & [$ .40 ,\! .60 $]  &  .975  & [$ .39 ,\! .62 $] \\
250     &  .787  & [$ .40 ,\! .60 $]  &  .898  & [$ .39 ,\! .61 $]  &  .979  & [$ .37 ,\! .64 $]  &  .798  & [$ .43 ,\! .57 $]  &  .898  & [$ .42 ,\! .58 $]  &  .978  & [$ .40 ,\! .60 $]  &  .896  & [$ .45 ,\! .55 $]  &  .946  & [$ .44 ,\! .56 $]  &  .981  & [$ .43 ,\! .57 $] \\
500     &  .788  & [$ .43 ,\! .57 $]  &  .896  & [$ .42 ,\! .58 $]  &  .978  & [$ .40 ,\! .60 $]  &  .786  & [$ .45 ,\! .55 $]  &  .898  & [$ .44 ,\! .56 $]  &  .976  & [$ .43 ,\! .57 $]  &  .898  & [$ .46 ,\! .54 $]  &  .950  & [$ .46 ,\! .54 $]  &  .985  & [$ .45 ,\! .56 $] \\
1000    &  .789  & [$ .45 ,\! .55 $]  &  .892  & [$ .44 ,\! .56 $]  &  .979  & [$ .43 ,\! .57 $]  &  .800  & [$ .46 ,\! .54 $]  &  .893  & [$ .46 ,\! .54 $]  &  .977  & [$ .45 ,\! .55 $]  &  .906  & [$ .47 ,\! .53 $]  &  .950  & [$ .47 ,\! .53 $]  &  .986  & [$ .46 ,\! .54 $] \\  \hline
			\end{tabular}
			\parbox{14cm}{\caption{\small\label{t:ex1-cugmm} Missing data example: average coverage probabilities for $\Theta_I$ and $M_I$ and average lower and upper bounds of CSs for $M_I$ across 5000 MC replications. Procedures 1--3, Projection and Percentile are implemented using a \textbf{CU-GMM criterion} and flat prior.}}
	\end{center}	}	
\end{sidewaystable}

\begin{figure}[t]
	\begin{center}
		\includegraphics[trim = 0cm 0cm 0cm 0cm, clip, width = 0.7\textwidth]{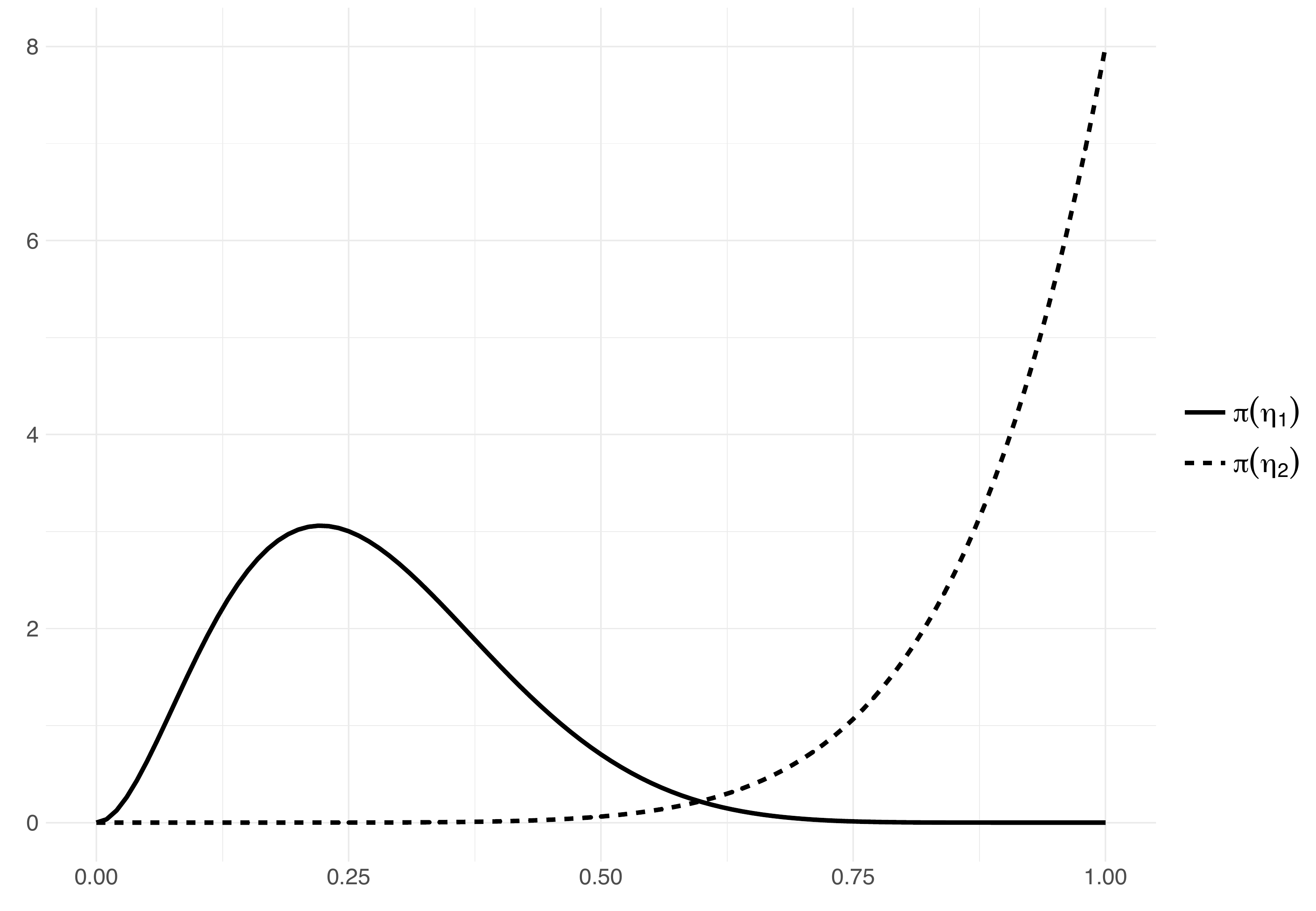}
		\parbox{12cm}{\caption{\small\label{f:prior} Missing data example: Marginal ``curved'' priors for $\eta_1$ (solid line) and $\eta_2$ (dashed line).}}
	\end{center}
\end{figure}

\subsection{Example 2: entry game with correlated shocks}\label{ax:smc:ex2}

{\bf SMC Algorithm:} As there are 6 partially-identified parameters here instead of 2 in the previous example, we initially increased $J$ to reduce the distance between the successive tempered distributions. Like \cite{HS2014}, whose DSGE examples use $(J,K) = (500,1)$, we also found the effect of increasing $K$ similar to the effect of increasing $J$. We therefore settled on $(J,K) = (200,4)$ which was more computationally efficient than using larger $J$. We again use a $N(0,\sigma_j^2I)$ proposal in the mutation step for all simulations for this example.

{\bf Procedure 2:} Unlike the missing data example, where $M(\theta)$ is known in closed form, here the set $M(\theta)$ is no longer known in closed form if $\rho \neq 0$. We therefore calculate $M(\theta^b)$ for $b=1,\ldots,B$ numerically in order to implement procedure 2 for $\mu = \Delta_1$ (in which case $\eta = (\Delta_2,\beta_1,\beta_2,\rho,s)$) and $\mu = \beta_1$ (in which case $\eta = (\Delta_1,\Delta_2,\beta_2,\rho,s)$) . Let $D_{KL}(p_\theta\|p_\vartheta)$ denote the KL distance between $p_\theta$ and $p_\vartheta$ or any $\theta,\vartheta \in \Theta$, which is given by
\[
 D_{KL}(p_\theta\|p_\vartheta) = \sum_{\{i,j\} \in \{0,1\}^2} p_\theta(a_1 = i,a_2 = j) \log \Big( \frac{p_\theta(a_1 = i,a_2 = j)}{p_\vartheta(a_1 = i,a_2 = j)} \Big)
\]
where $p_\theta(a_1 = i,a_2 = j)$ denotes the probability that player $1$ takes action $i$ and player $2$ takes action $j$ when the true structural parameter is $\theta$.
Clearly $\vartheta \in \Delta(\theta)$ if and only if $D_{KL}(p_\theta\|p_\vartheta) = 0$. We compute the endpoints of the interval $M(\theta^b)$ by solving
\begin{align} \label{e-endpoint}
 \min/\max \mu \quad \mbox{such that} \quad \inf_{\eta \in H_\mu} D_{KL}(p_{\theta^b}\|p_{(\mu,\eta)}) = 0
\end{align}
where $H_\mu = [-2,0] \times [-1,2]^2 \times [0,1]^2$ for $\mu = \Delta_1$ and $H_\mu = [-2,0]^2 \times [-1,2] \times [0,1]^2$ for $\mu = \beta_1$. The profiled distance $\inf_{\eta \in H_\mu} D_{KL}(p_\theta\|p_{(\mu,\eta)})$ is independent of the data and is very fast to compute.
Note that we do {\it not} make explicit use of the separable reparameterization in terms of reduced-form choice probabilities when computing $M(\theta^b)$. Moreover, computation of $M(\theta^b)$ can be run in parallel for $b = 1,\ldots,B$ once the draws $\theta^1,\ldots,\theta^B$ have been generated.

To accommodate a small amount of optimization error, in practice we replace the equality in (\ref{e-endpoint}) by a small tolerance $D_{KL}(p_{\theta^b}\|p_{(\mu,\eta)}) < 10^{-7}$. The effect of this slight relaxation is to make our CSs computed via procedure 2 slightly more conservative than if the interval $M(\theta^b)$ were known in closed form.

\subsection{Airline entry game application}\label{ax:smc:app1}

{\bf SMC algorithm:} We implement the adaptive SMC algorithm with $J=200$ iterations, $K=4$ blocked random-walk Metropolis-Hastings steps per iteration with $L =4$ blocks for the full model and $2$ blocks for the fixed-$s$ model

{\bf Procedure 2:} To implement procedure 2 here with any scalar subvector $\mu$ we calculate $M(\theta^b)$ numerically (in parallel), analogously to the entry game simulation example. We again compute the endpoints of $M(\theta^b)$ by solving (\ref{e-endpoint}) for the subvector of interest.

As the log-likelihood is conditional upon regressors, we replace $D_{KL}(p_{\theta^b}\|p_{(\mu,\eta)})$ by the sum of the KL distances between the conditional distributions of outcomes given regressors, namely:
\[
 \sum_{\{MS, MP_{OA}, MP_{LC}\} \in \{0,1\}^3} D_{KL}(p_{\theta^b}(\,\cdot\,|MS, MP_{OA}, MP_{LC})\|p_{(\mu,\eta)}(\,\cdot\,|MS, MP_{OA}, MP_{LC}))
\]
where $p_{\theta}(\,\cdot\,|MS, MP_{OA}, MP_{LC})$ denotes the probabilities of market outcomes conditional upon regressors when the structural parameter is $\theta$.

\begin{figure}[p]
\begin{subfigure}{.45\textwidth}
  \centering
  \includegraphics[width=.8\linewidth]{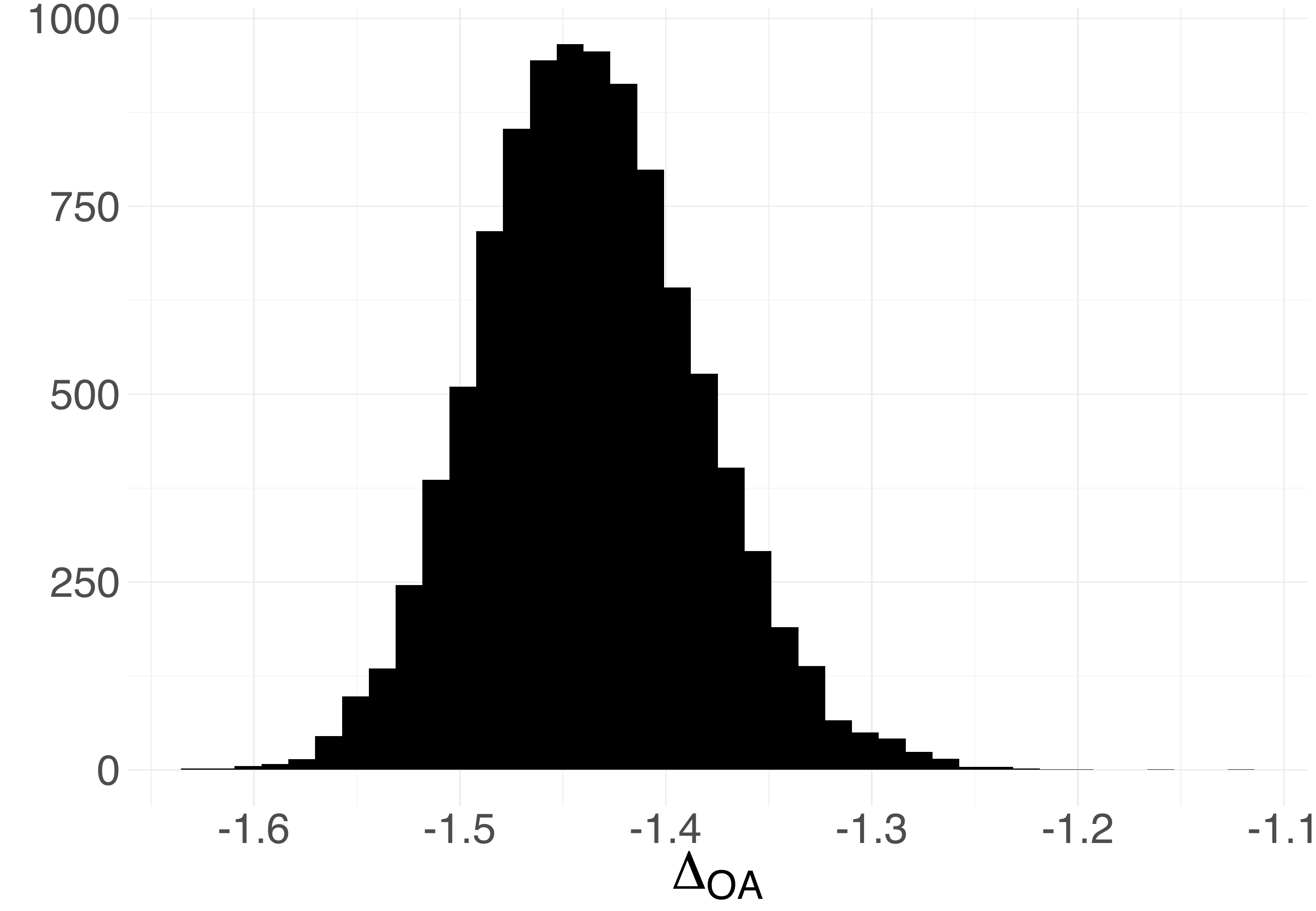}
  \end{subfigure}%
\begin{subfigure}{.45\textwidth}
  \centering
  \includegraphics[width=.8\linewidth]{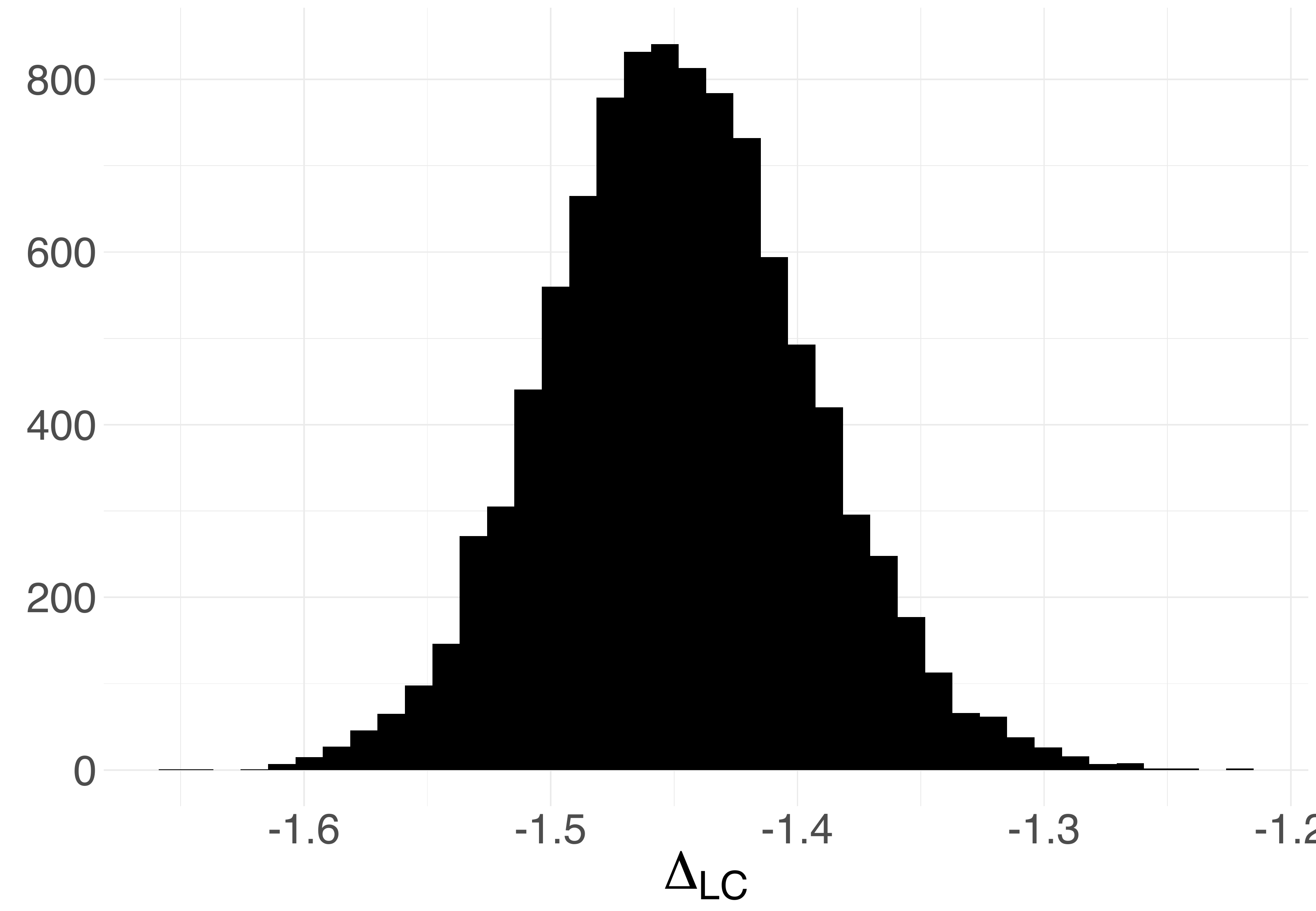}
\end{subfigure}
\begin{subfigure}{.45\textwidth}
  \centering
  \includegraphics[width=.8\linewidth]{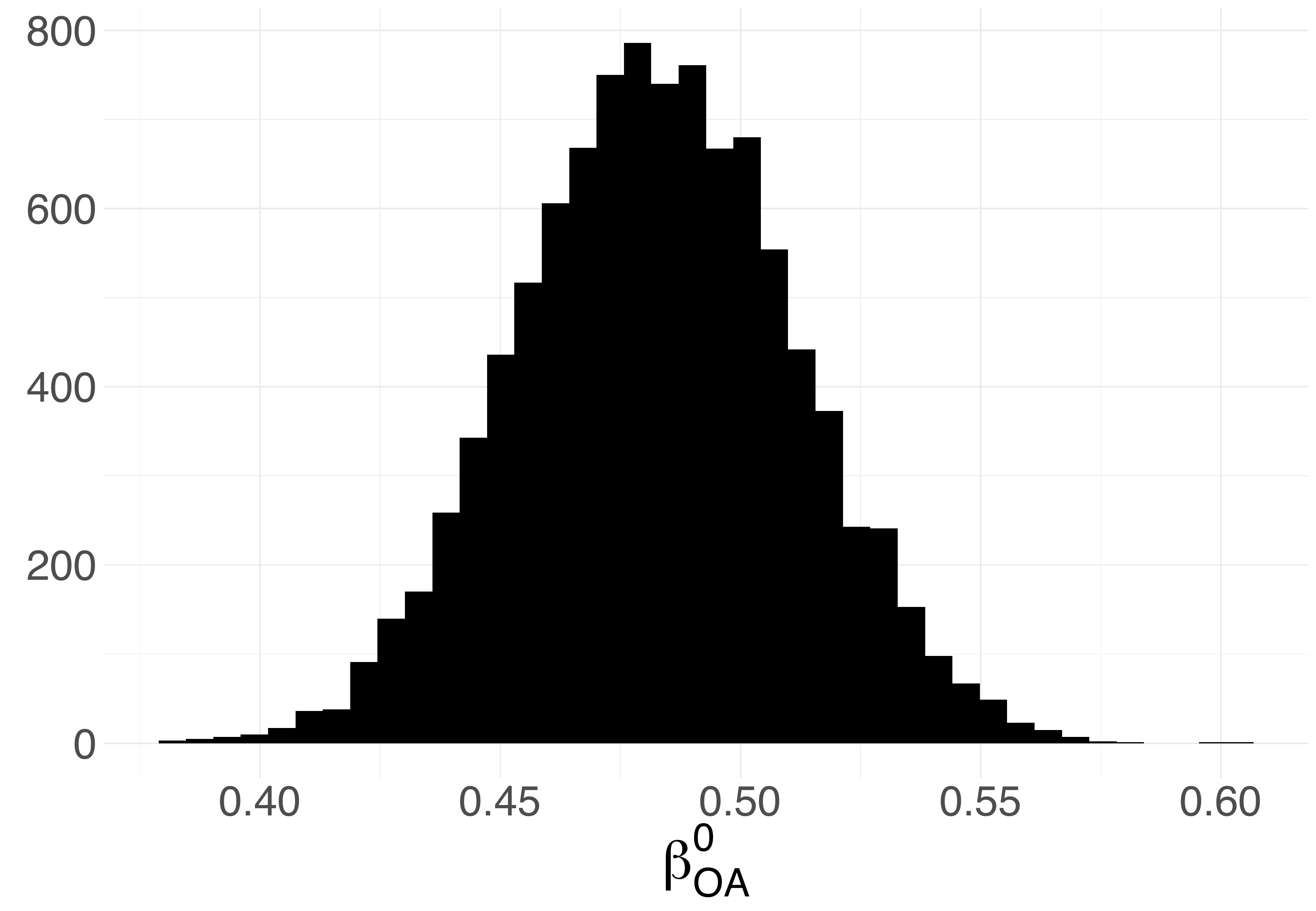}
  \end{subfigure}%
\begin{subfigure}{.45\textwidth}
  \centering
  \includegraphics[width=.8\linewidth]{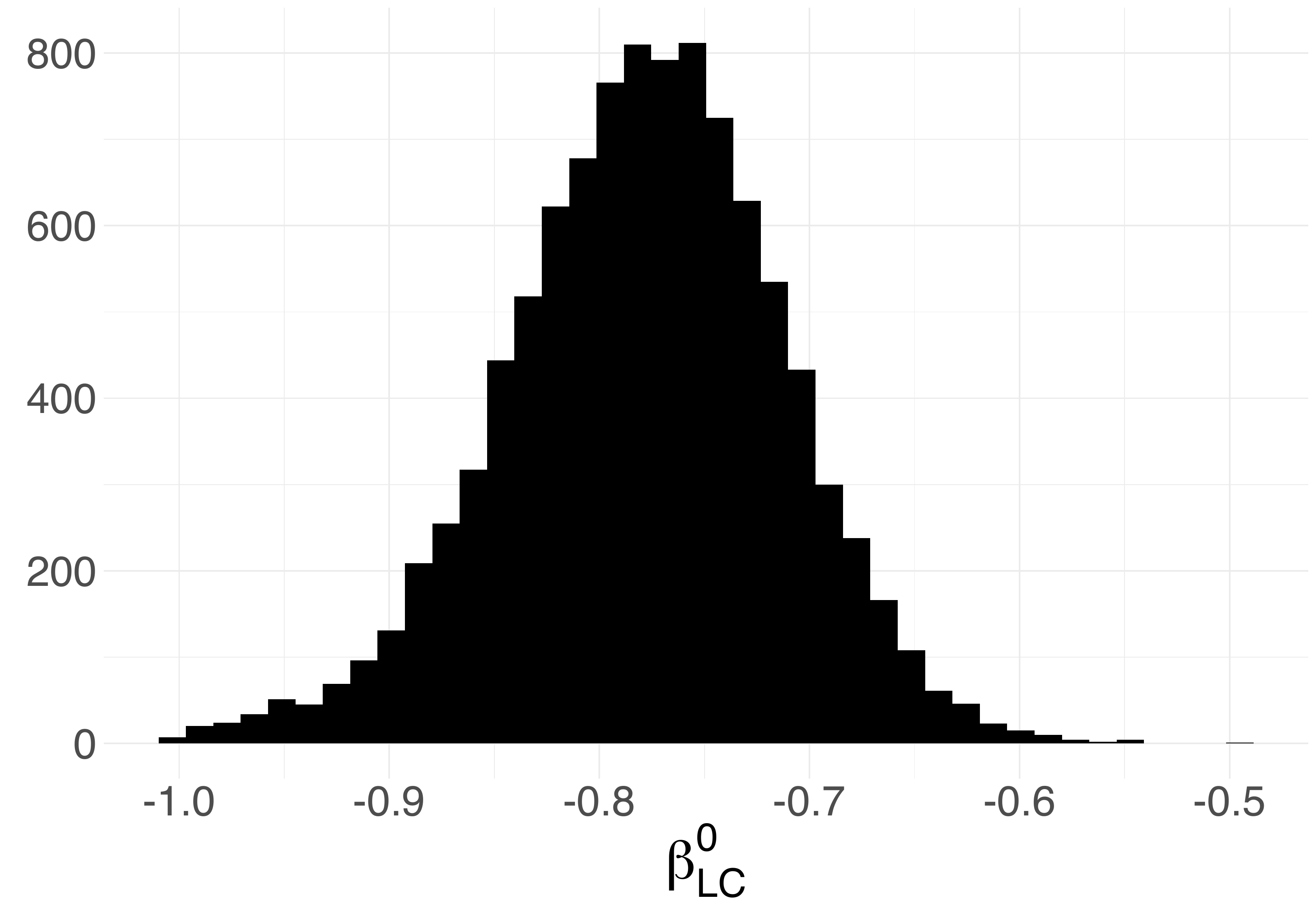}
\end{subfigure}
\begin{subfigure}{.45\textwidth}
  \centering
  \includegraphics[width=.8\linewidth]{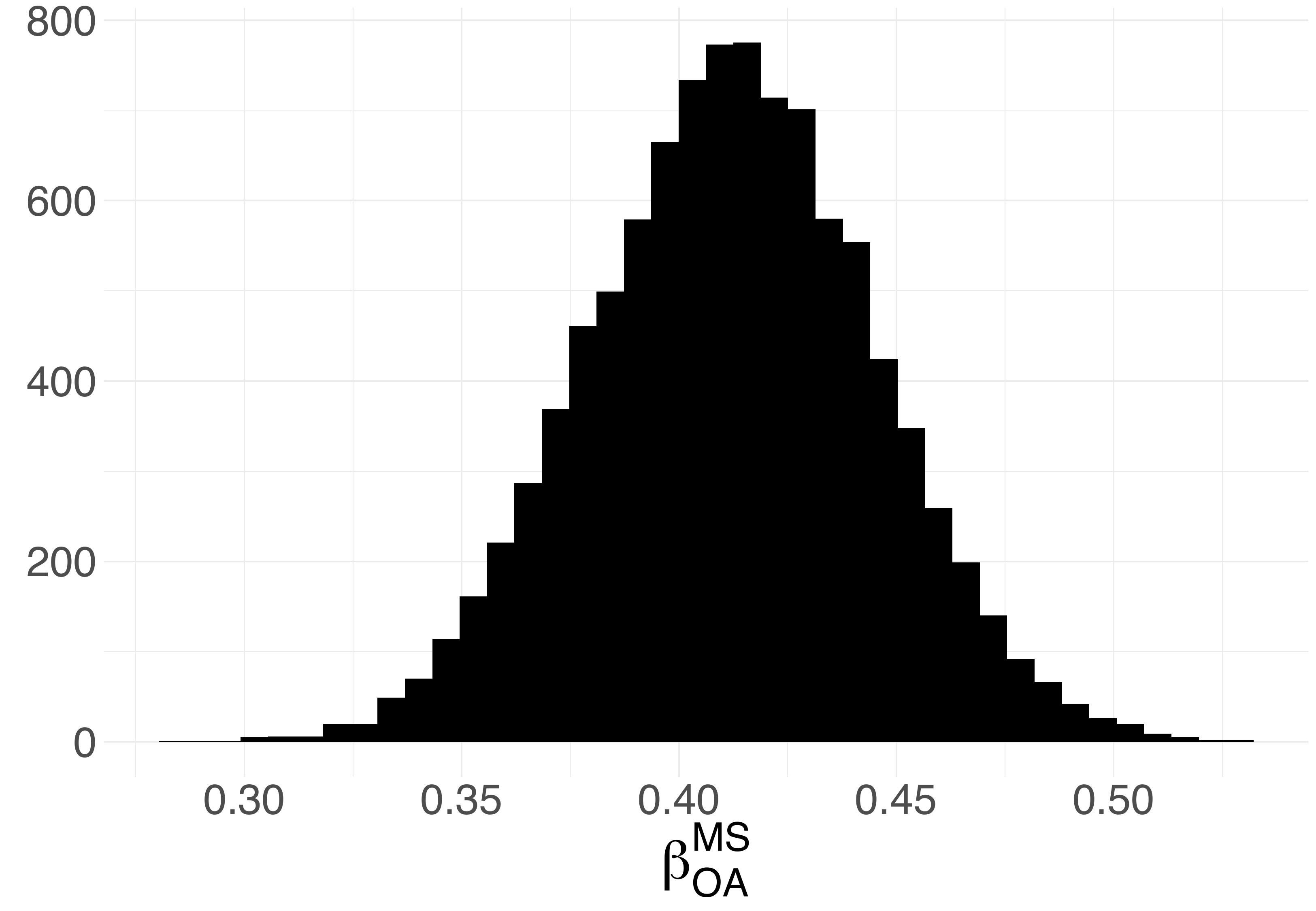}
  \end{subfigure}%
\begin{subfigure}{.45\textwidth}
  \centering
  \includegraphics[width=.8\linewidth]{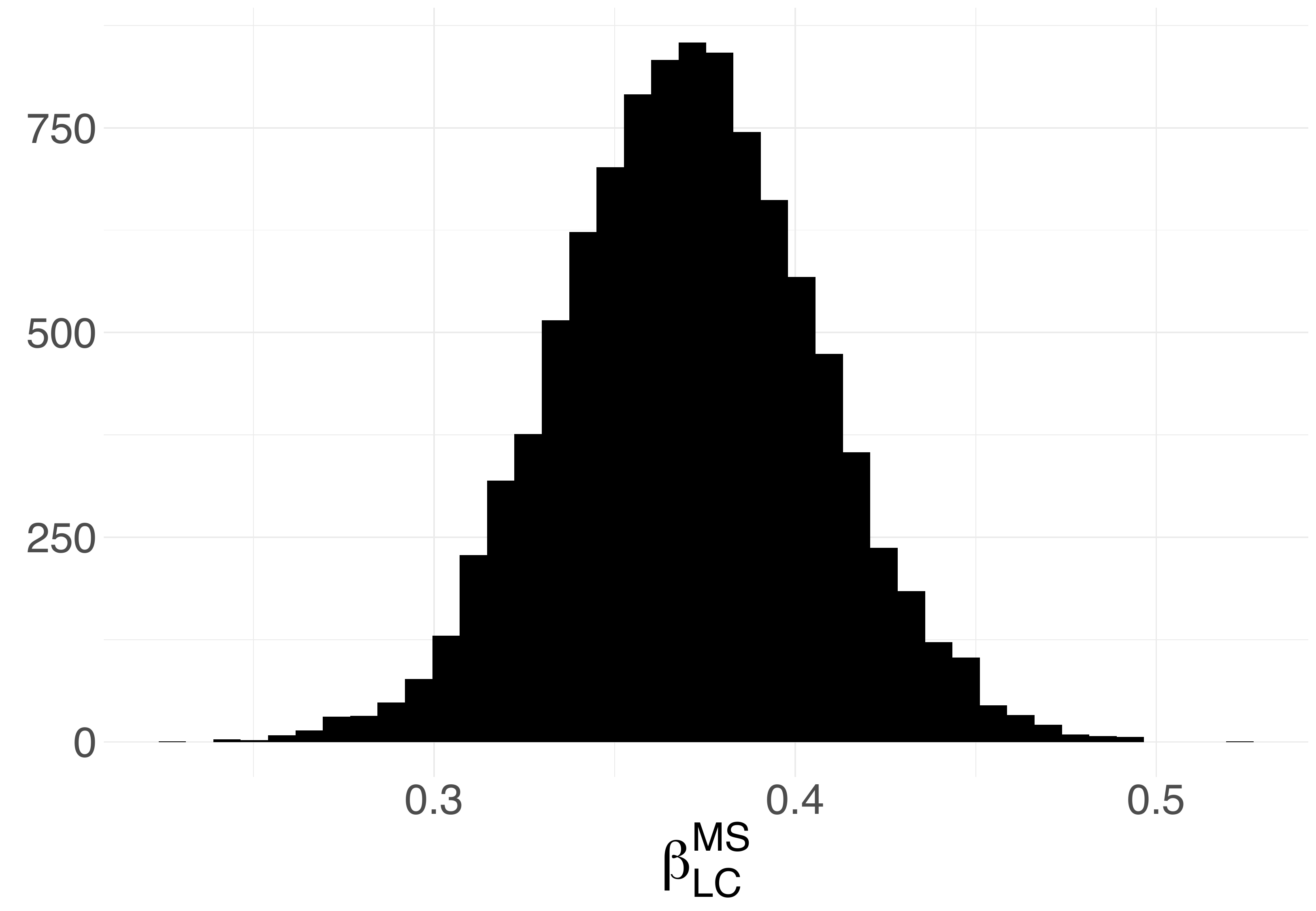}
\end{subfigure}
\begin{subfigure}{.45\textwidth}
  \centering
  \includegraphics[width=.8\linewidth]{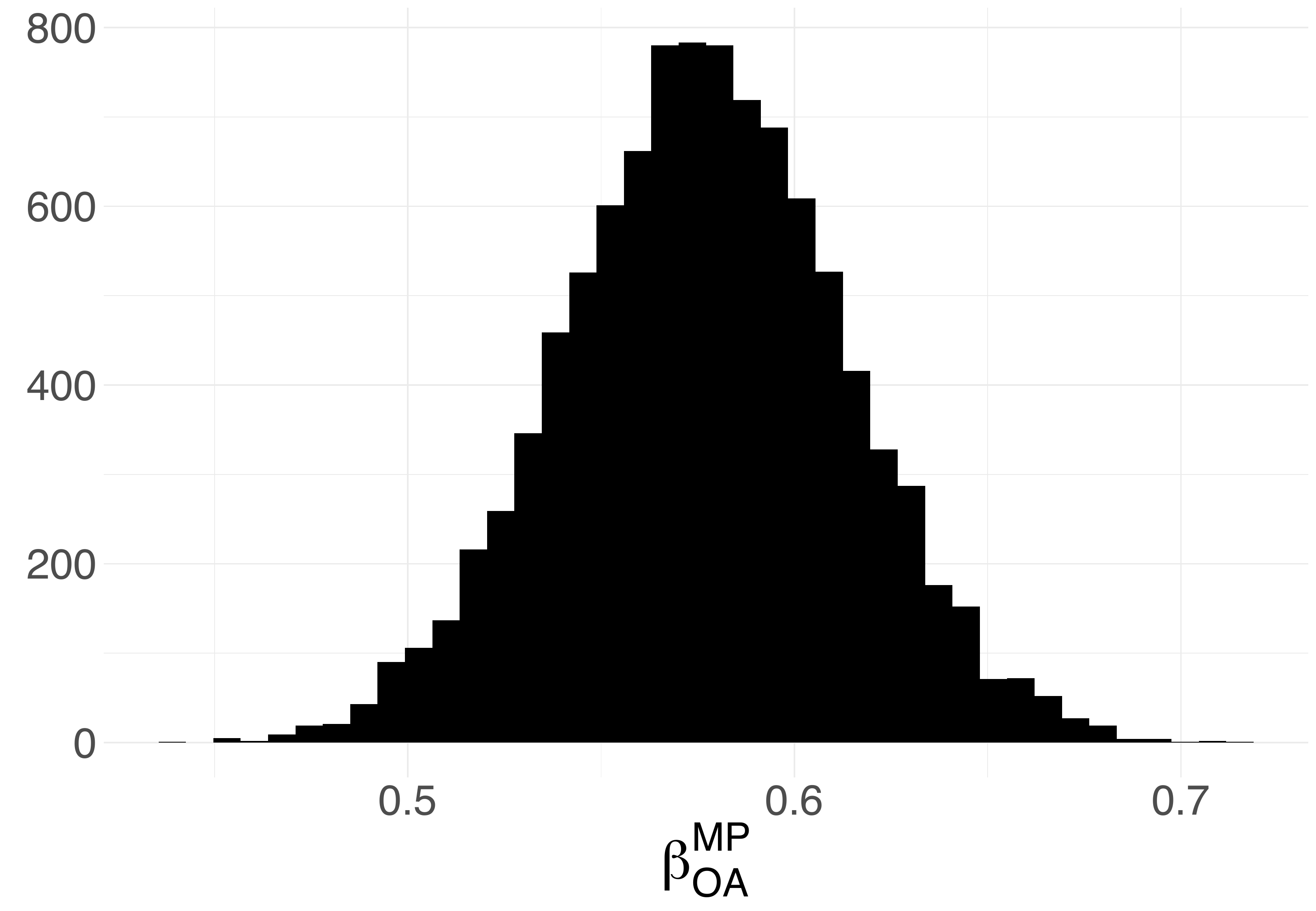}
  \end{subfigure}%
\begin{subfigure}{.45\textwidth}
  \centering
  \includegraphics[width=.8\linewidth]{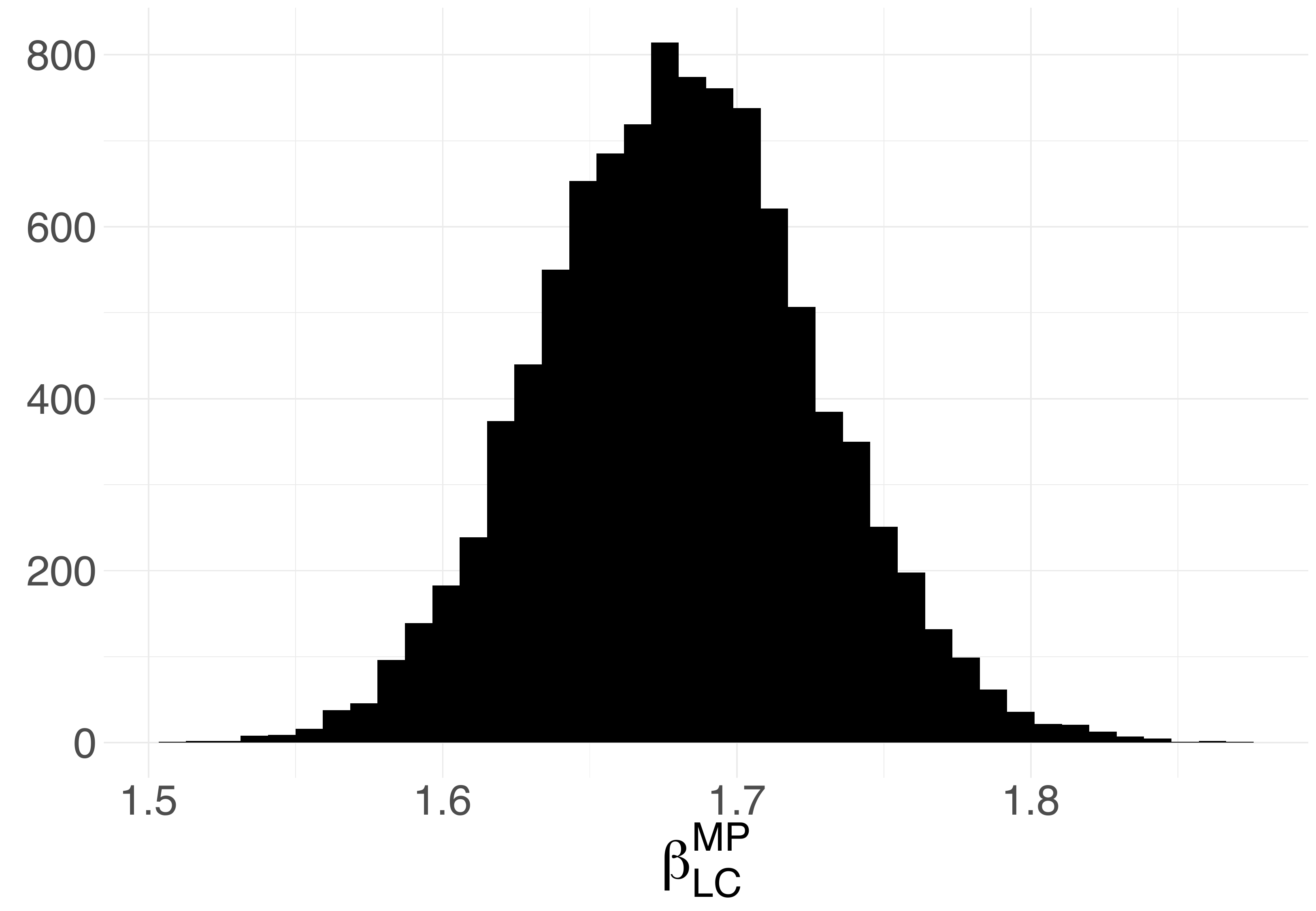}
\end{subfigure}
\begin{subfigure}{.45\textwidth}
  \centering
  \includegraphics[width=.8\linewidth]{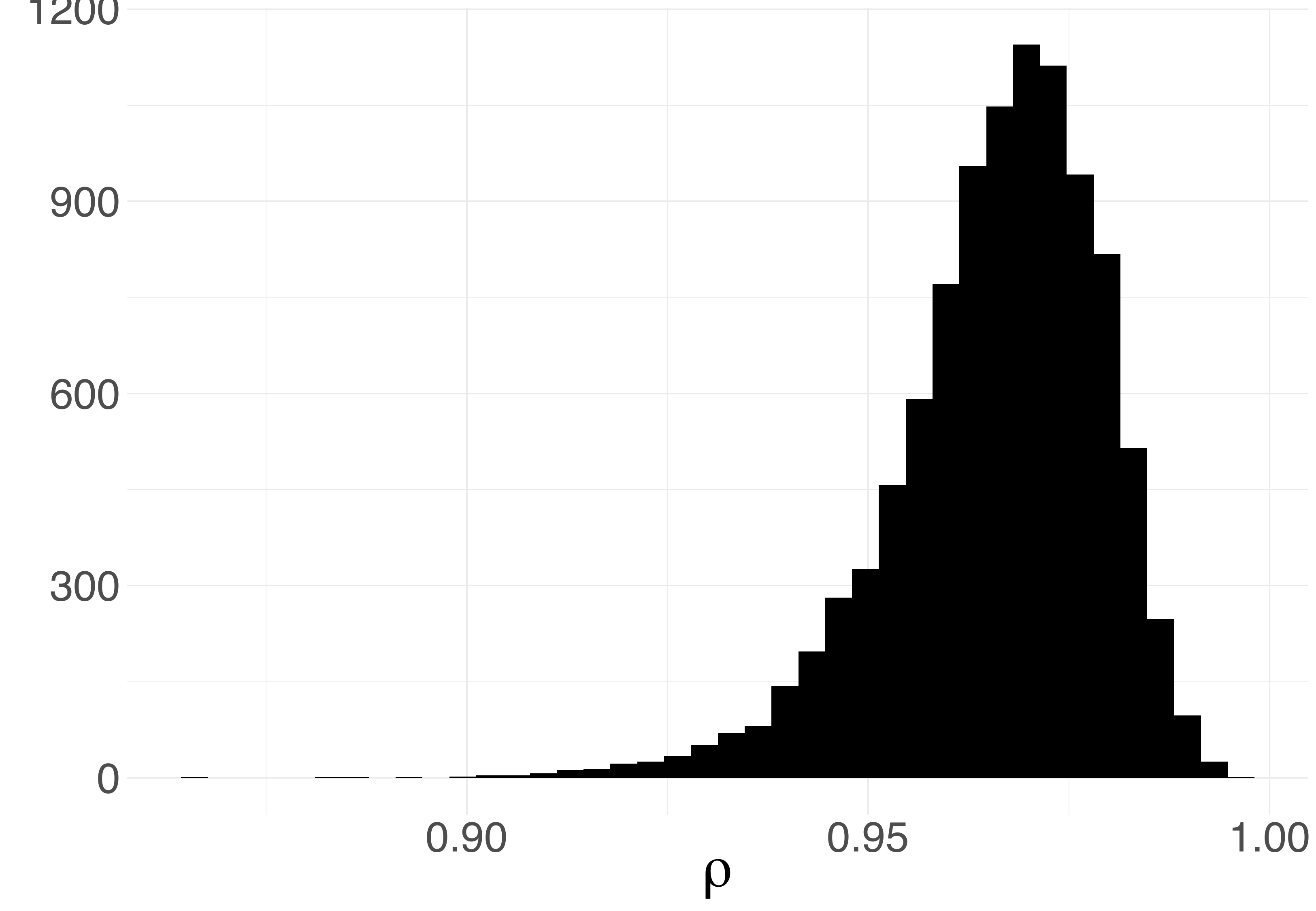}
  \end{subfigure}%
\begin{subfigure}{.45\textwidth}
  \centering
  \includegraphics[width=.8\linewidth]{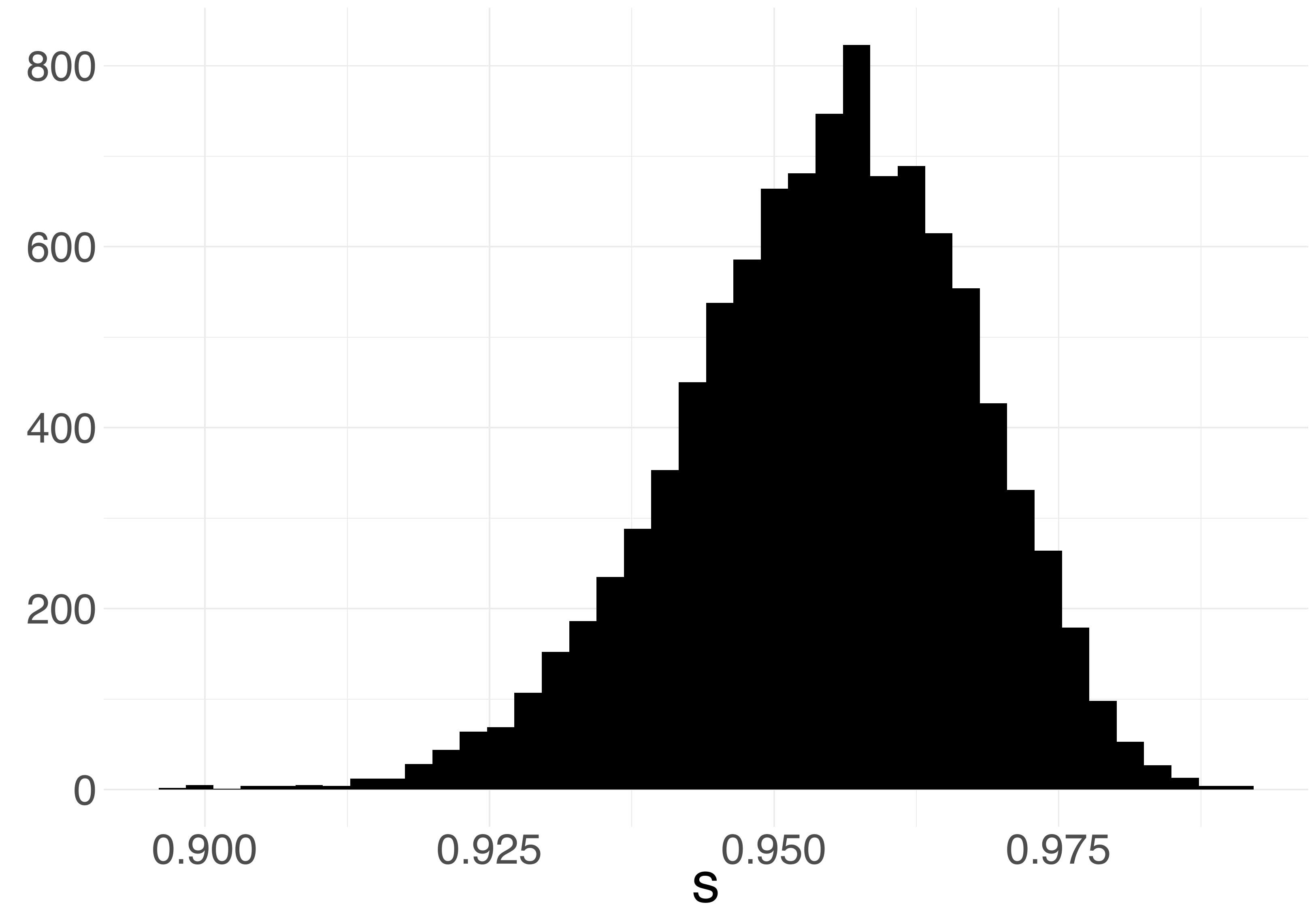}
\end{subfigure}
\centering
\parbox{12cm}{\caption{\small\label{f:game_0} Entry game application: histograms of the SMC draws for $\Delta_{OA}$, $\Delta_{LC}$, $\beta_{OA}^0$, $\beta_{LC}^0$, $\beta_{OA}^{MS}$, $\beta_{LC}^{MS}$, $\beta_{OA}^{MP}$, $\beta_{LC}^{MP}$, $\rho$ and $s$ for the \textbf{fixed-$\boldsymbol s$ model}.} }
\end{figure}

\begin{figure}[p]
\begin{subfigure}{.45\textwidth}
  \centering
  \includegraphics[width=.8\linewidth]{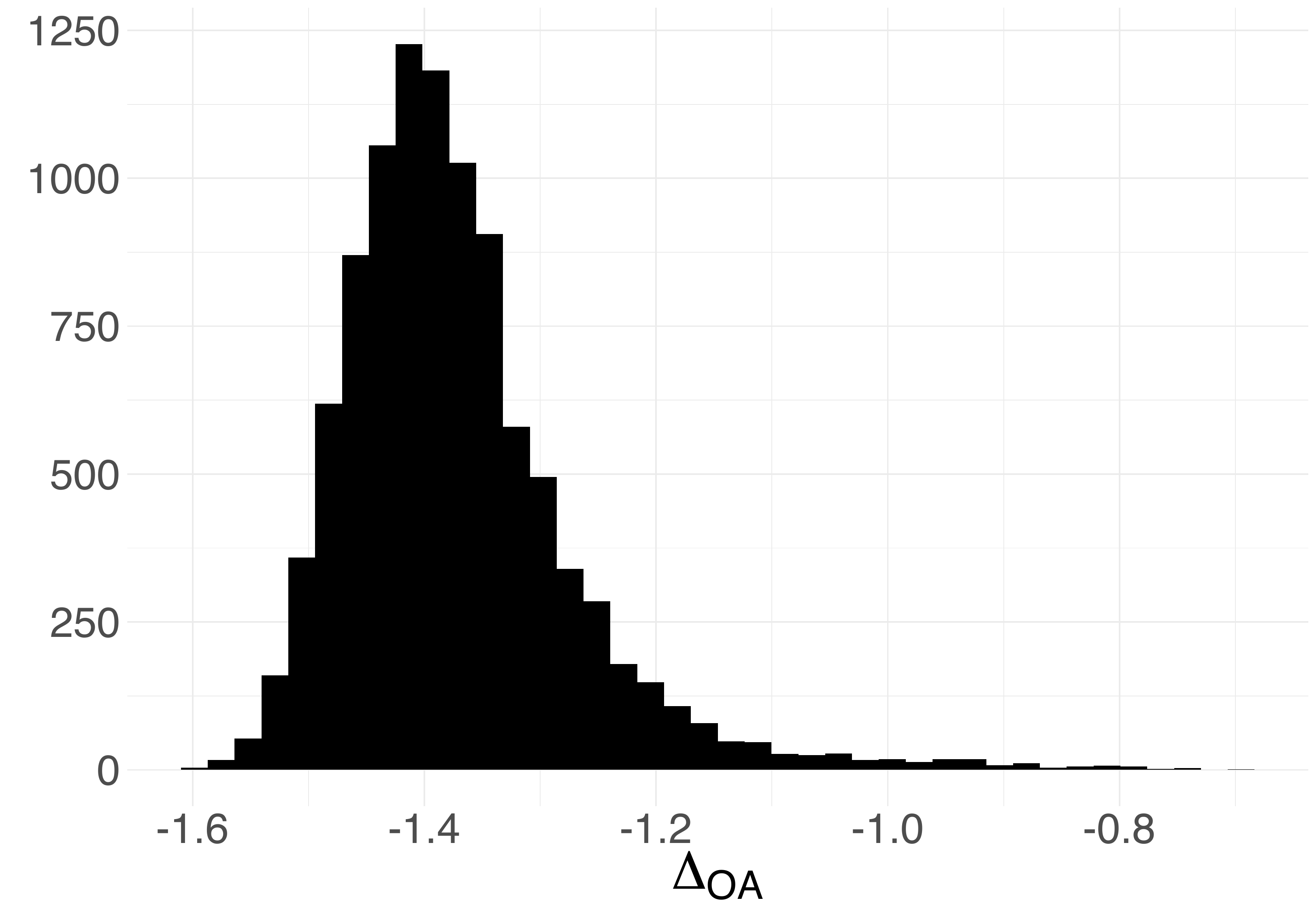}
  \end{subfigure}%
\begin{subfigure}{.45\textwidth}
  \centering
  \includegraphics[width=.8\linewidth]{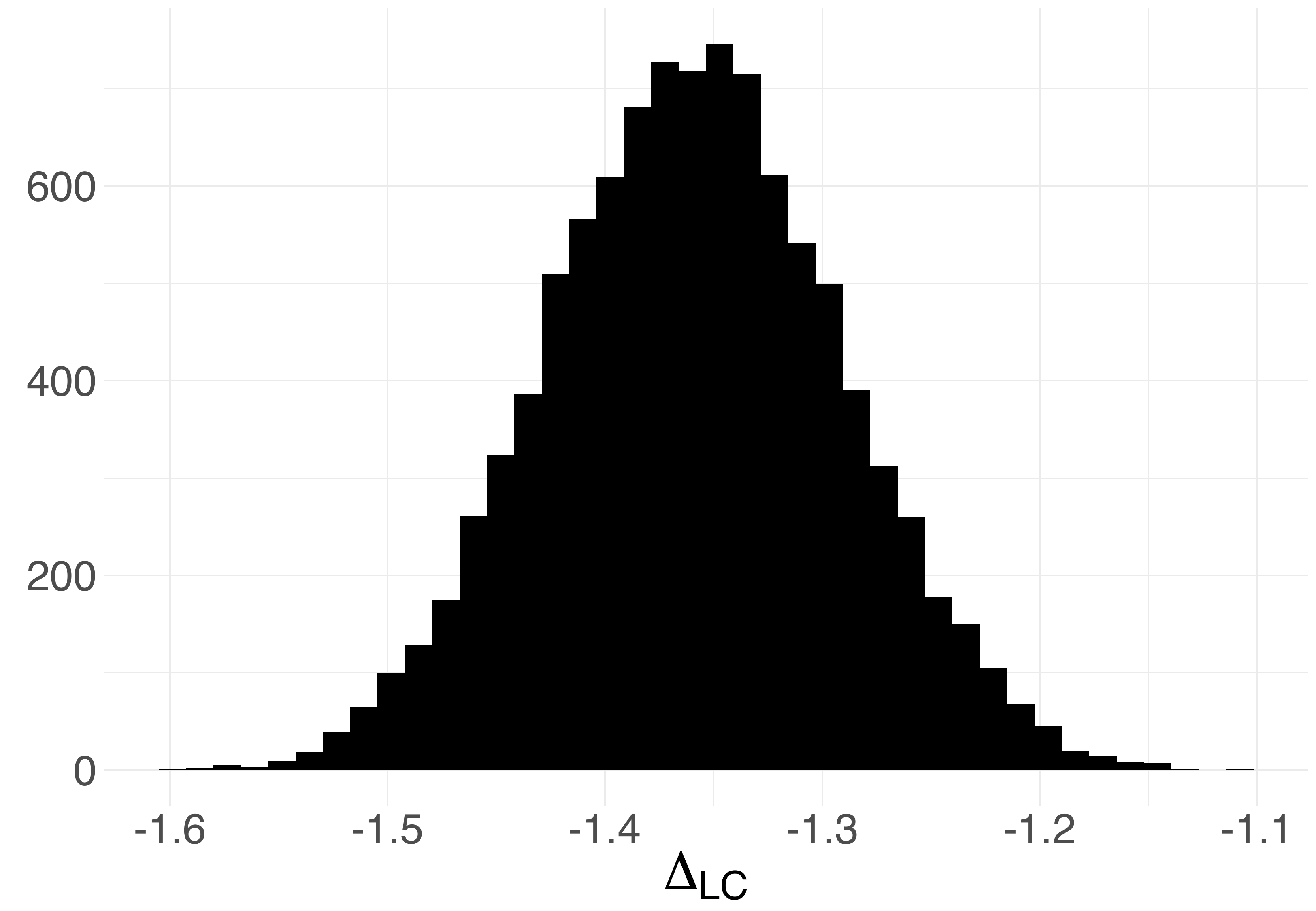}
\end{subfigure}
\begin{subfigure}{.45\textwidth}
  \centering
  \includegraphics[width=.8\linewidth]{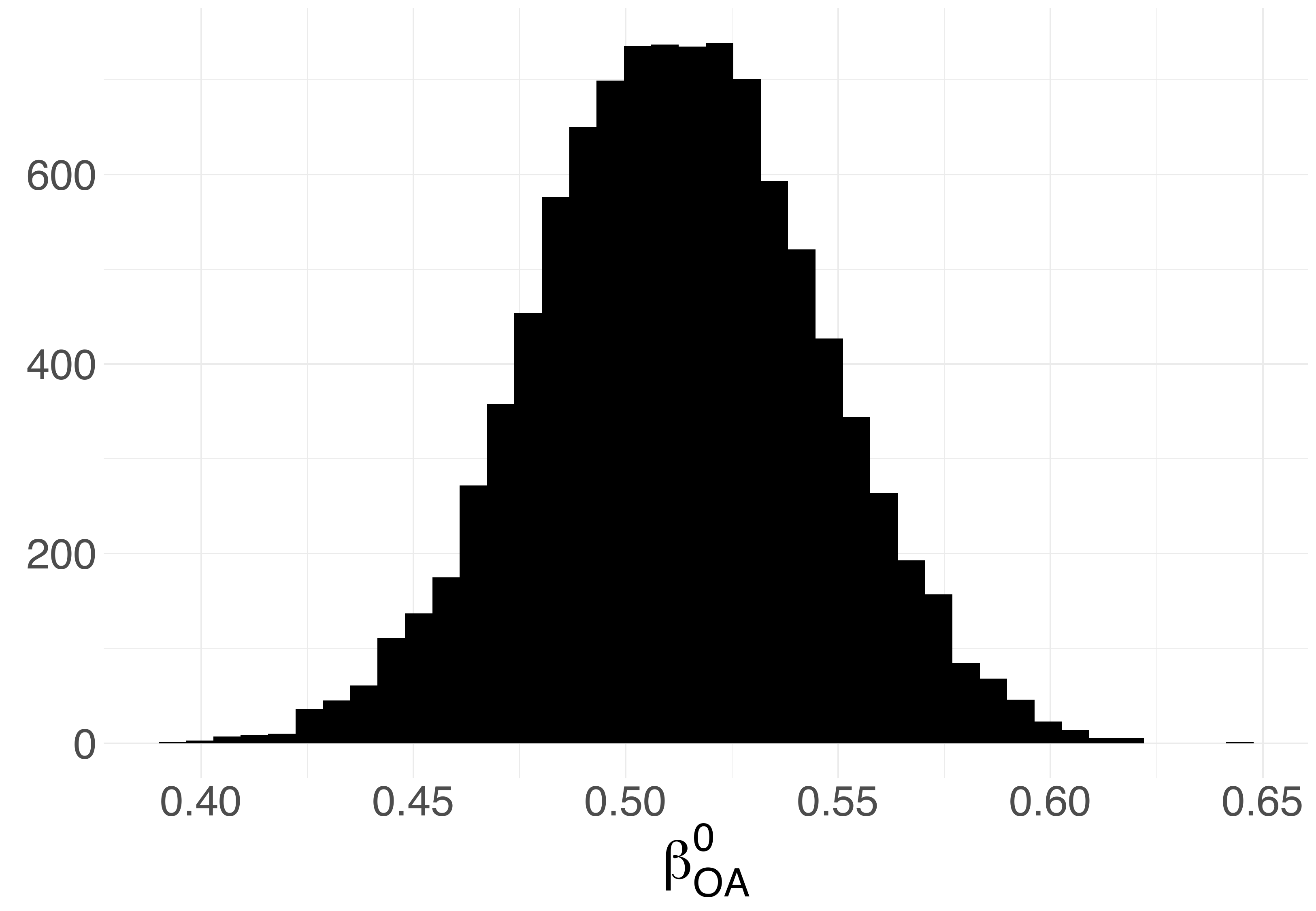}
  \end{subfigure}%
\begin{subfigure}{.45\textwidth}
  \centering
  \includegraphics[width=.8\linewidth]{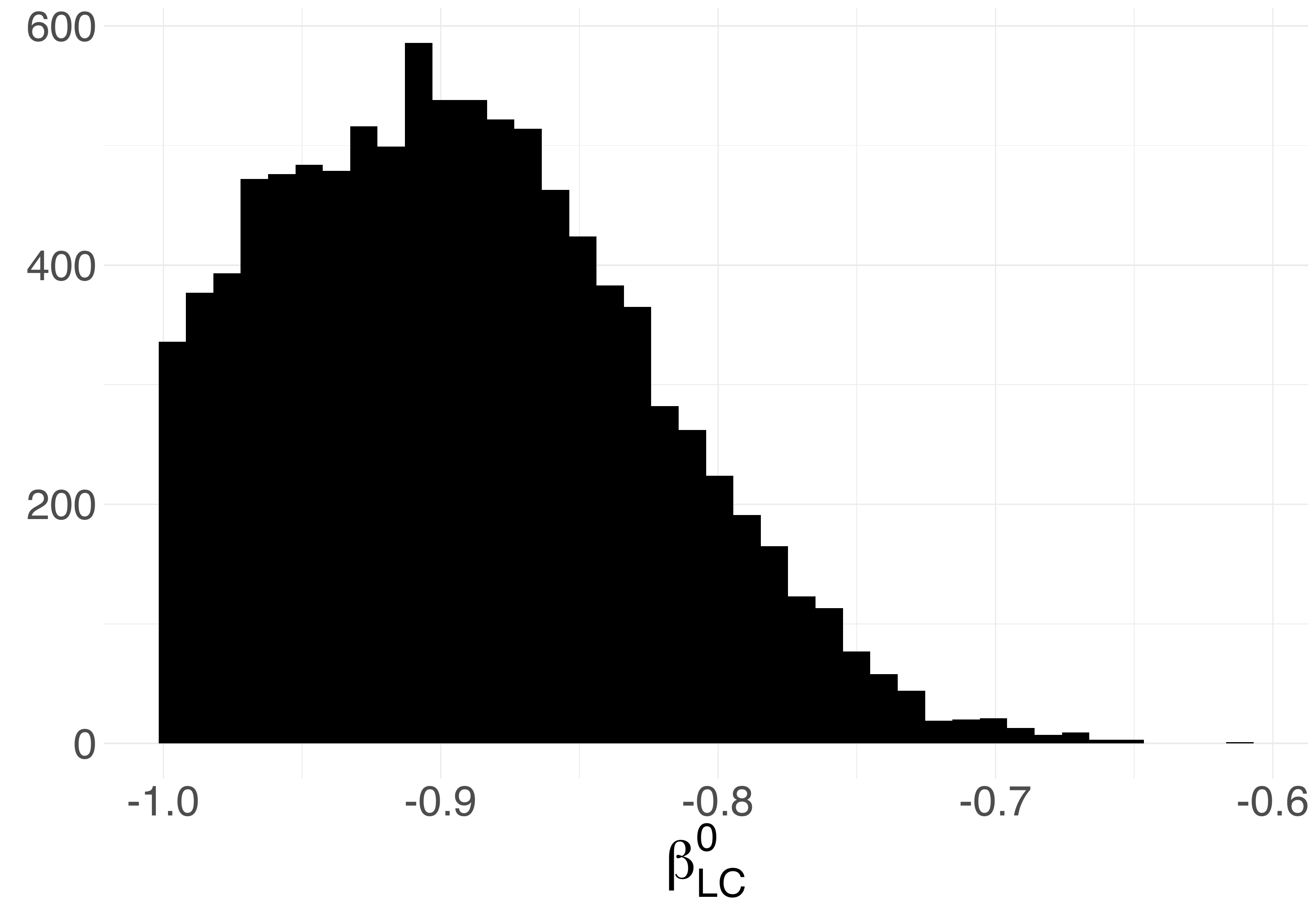}
\end{subfigure}
\begin{subfigure}{.45\textwidth}
  \centering
  \includegraphics[width=.8\linewidth]{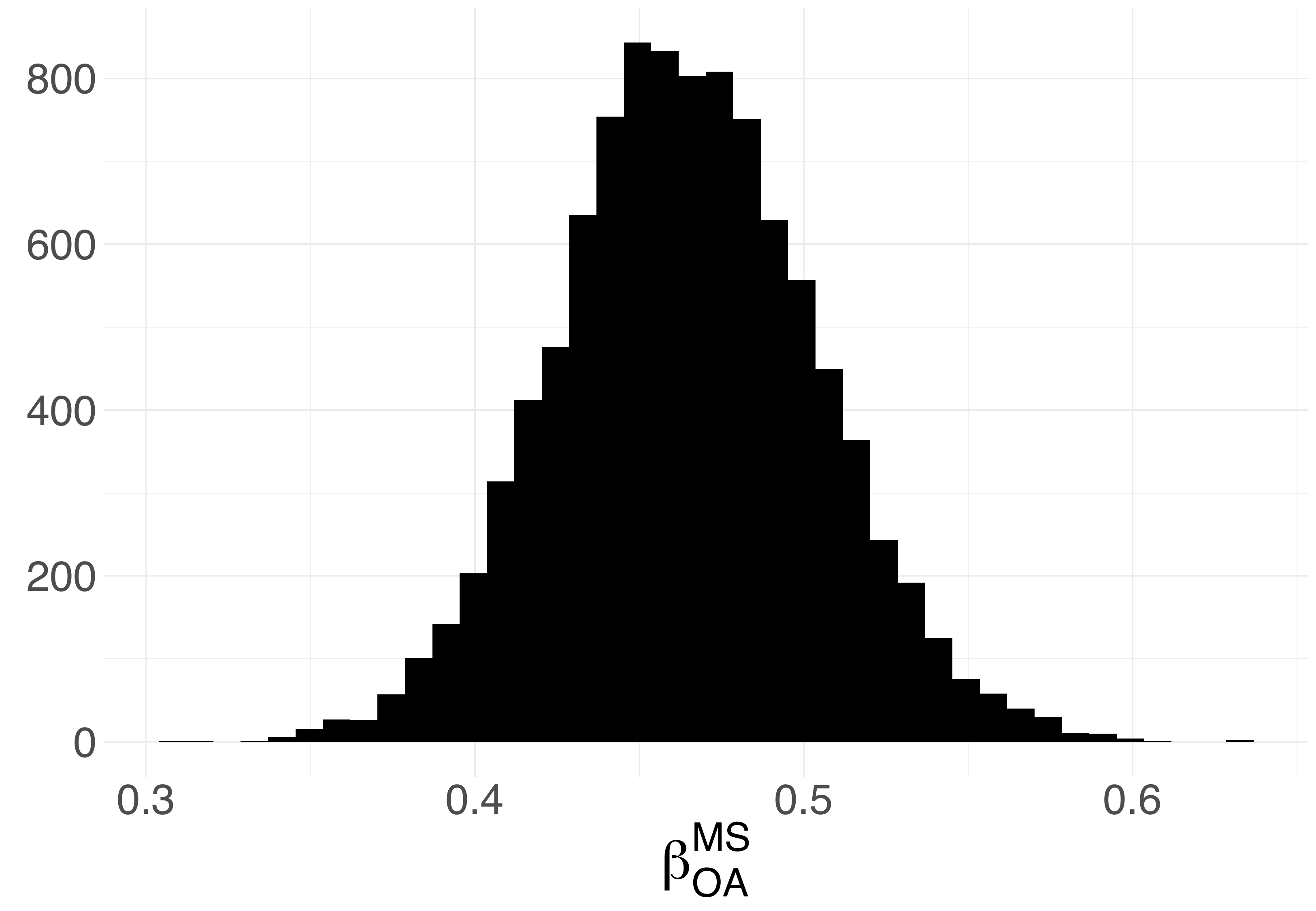}
  \end{subfigure}%
\begin{subfigure}{.45\textwidth}
  \centering
  \includegraphics[width=.8\linewidth]{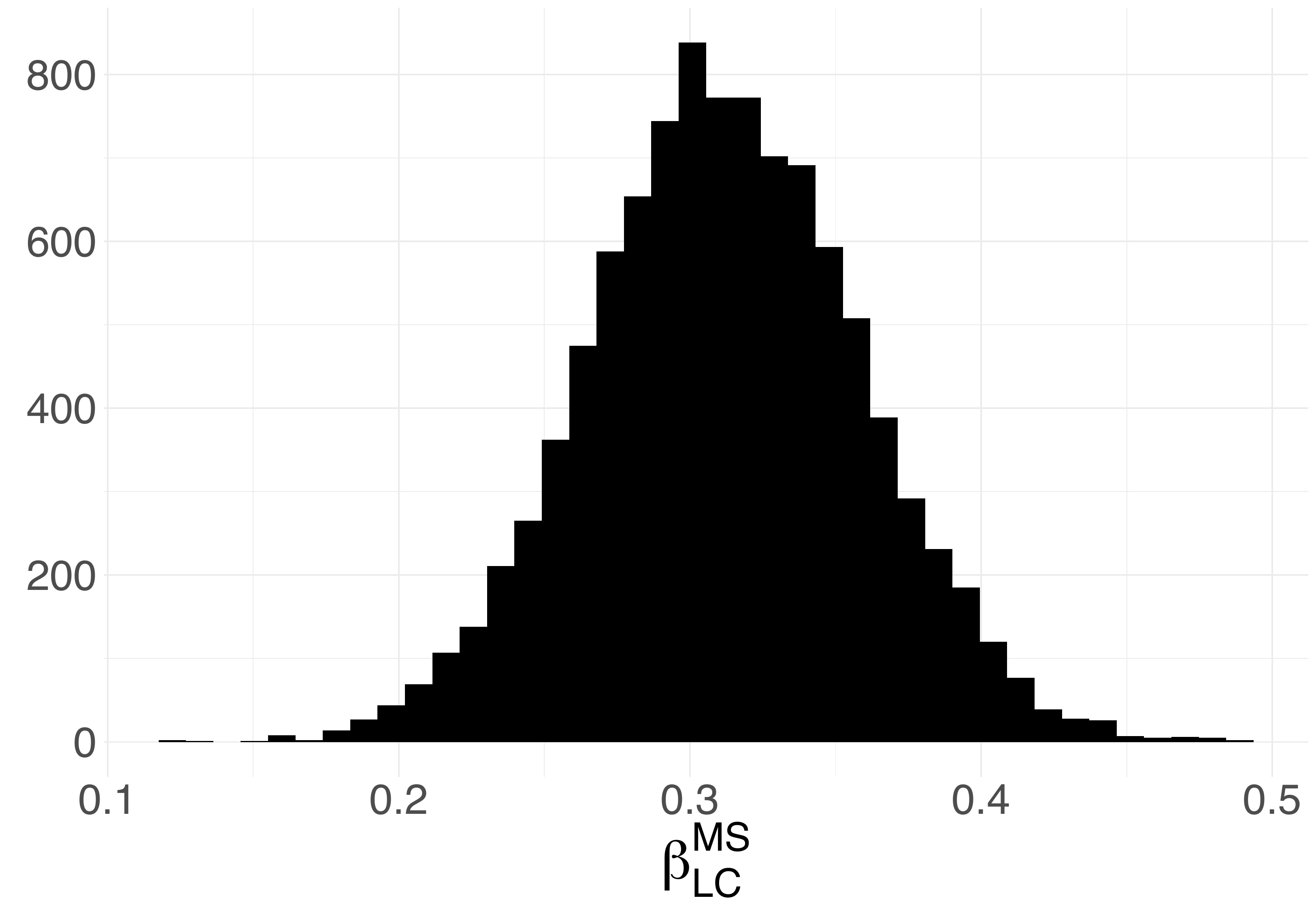}
\end{subfigure}
\begin{subfigure}{.45\textwidth}
  \centering
  \includegraphics[width=.8\linewidth]{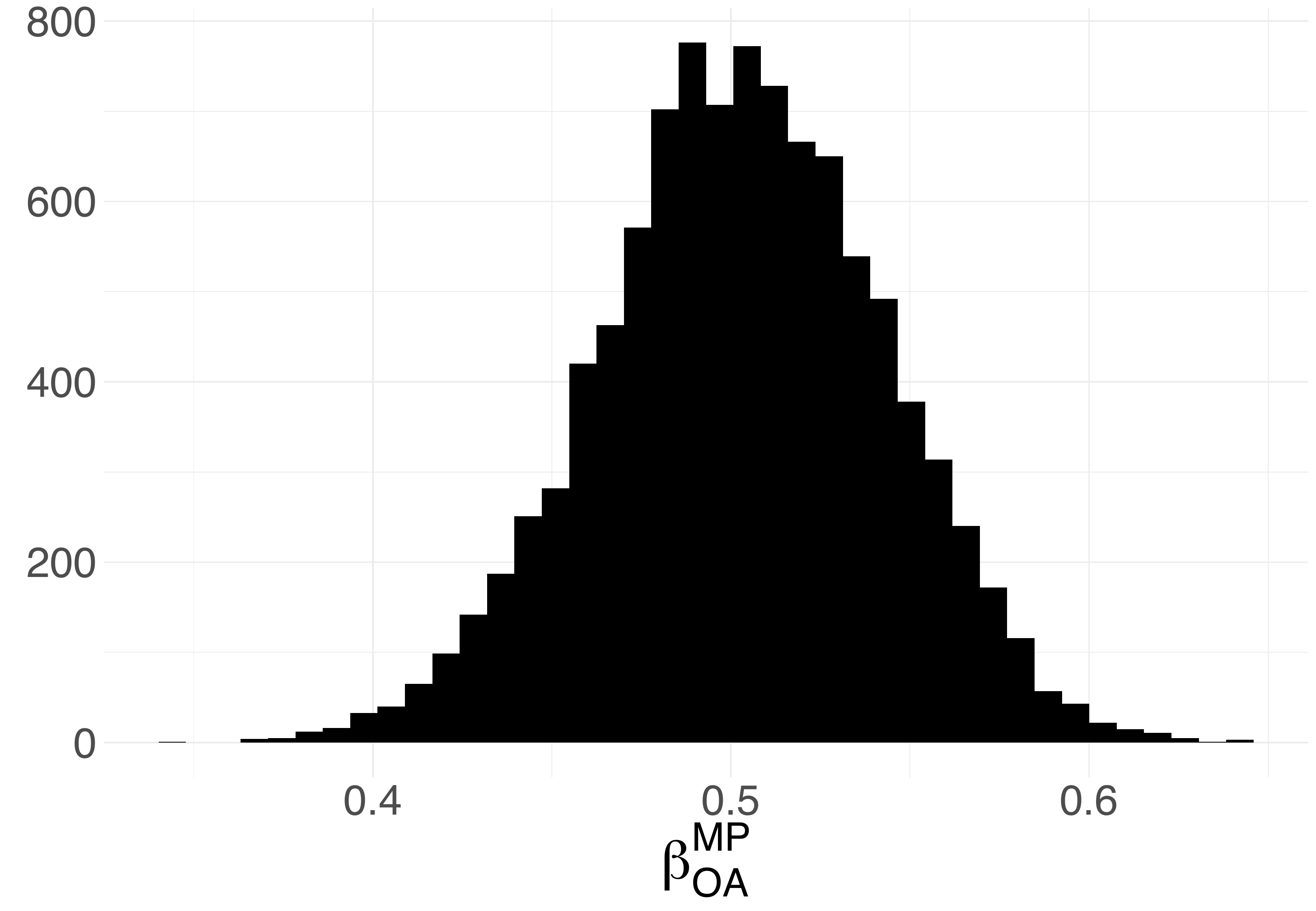}
  \end{subfigure}%
\begin{subfigure}{.45\textwidth}
  \centering
  \includegraphics[width=.8\linewidth]{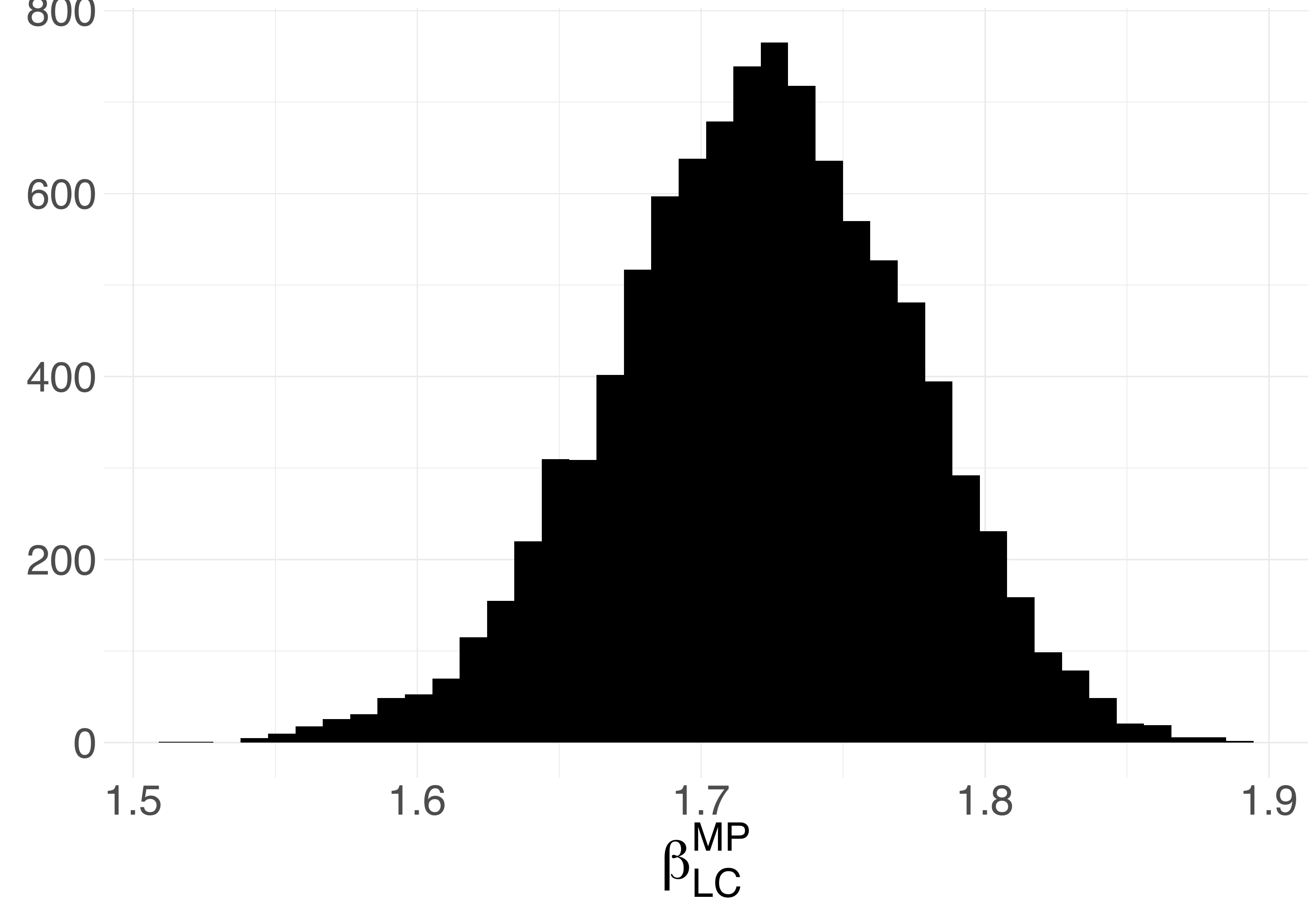}
\end{subfigure}
\begin{subfigure}{.45\textwidth}
  \centering
  \includegraphics[width=.8\linewidth]{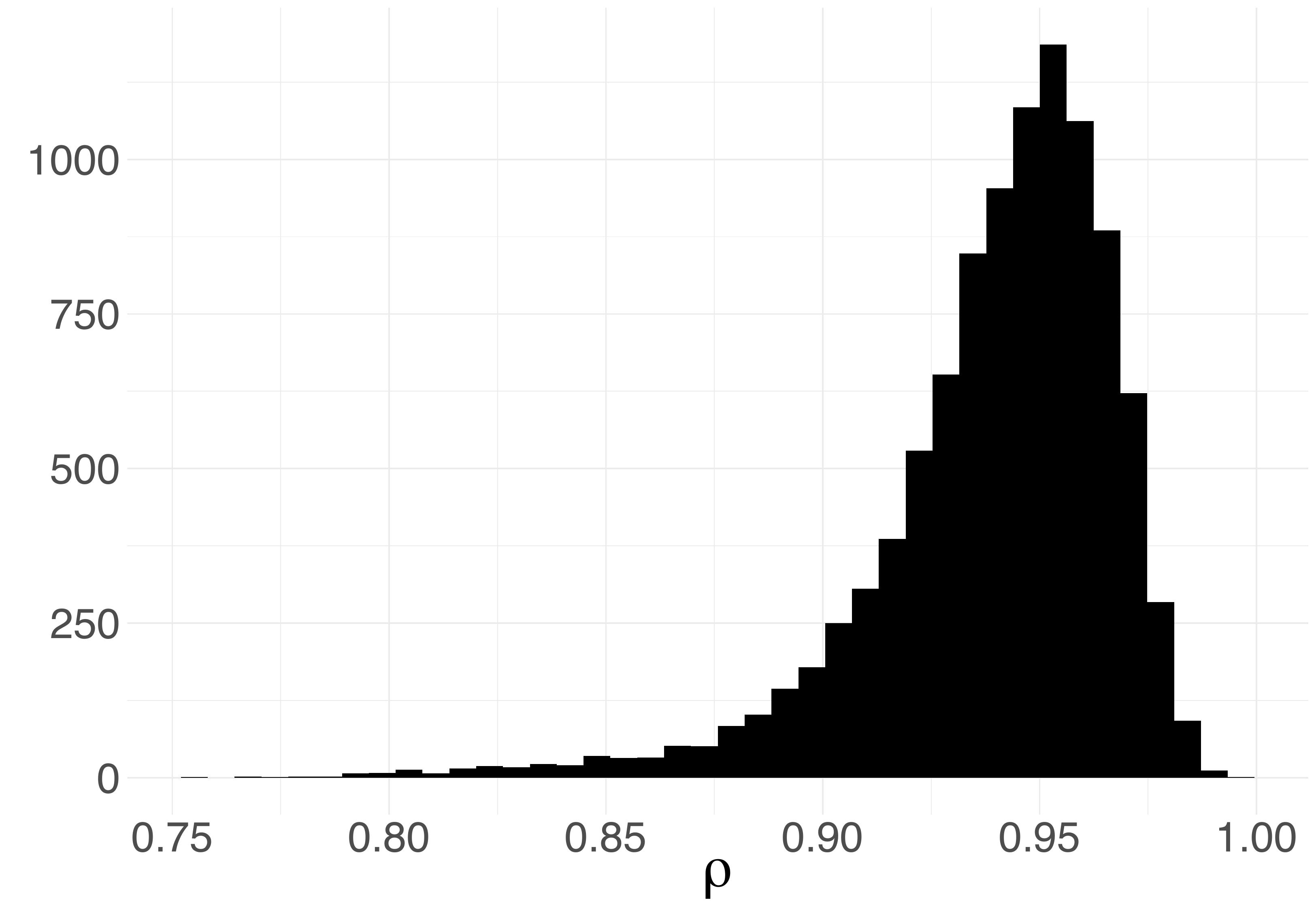}
  \end{subfigure}
\centering
\parbox{12cm}{\caption{\small\label{f:game_1} Entry game application: histograms of the SMC draws for $\Delta_{OA}$, $\Delta_{LC}$, $\beta_{OA}^0$, $\beta_{LC}^0$, $\beta_{OA}^{MS}$, $\beta_{LC}^{MS}$, $\beta_{OA}^{MP}$, $\beta_{LC}^{MP}$, $\rho$ for the \textbf{full model}.} }
\end{figure}

\subsection{Trade flow application}\label{ax:smc:app2}

{\bf Priors:} We use the change of variables $2\mr{arctanh}(\rho)$ and $\log \sigma_m^2$ and assume that the transformed correlation and variance all have full support. We specify and independent $N(0,100^2)$ priors on each of these 46 parameters.

{\bf SMC Algorithm:} Given the high dimensionality of the parameter vector and the lack of a natural restriction of the parameter space for many of the parameters, we use a slight modification of the SMC algorithm described as follows.

We initialize the procedure from drawing from the $N(\hat \theta,-\hat I(\hat \theta)^{-1})$ distribution, where $\hat \theta$ is the MLE and $-I(\hat \theta)^{-1}$ is the inverse negative hessian at the MLE.

There are two more minor modifications which need to be made to correct the particle weights from initializing the algorithm in this manner. First, in the correction step, we replace $v^b_j$ by $v^b_j = e(^{nL_n(\theta^b_{j-1})}\Pi(\theta^b_{j-1})/\mc N_n(\theta^b))^{\phi_j-\phi_{j-1}}$ where $\mc N_n(\theta^b)$ denotes the $N(\hat \theta,-I(\hat \theta)^{-1})$ density evaluated at $\theta^b$. Second, we use the tempered quasi-posterior $\Pi_j(\theta | \mf X_n) \propto (e^{n L_n(\theta)} \Pi(\theta))^{\phi_j} \mc N_n(\theta)^{1-\phi_j}$ in the updating step.

With these modifications, the algorithm is implemented  with $K = 8$ block random-walk Metropolis-Hastings steps per iteration and $L = 6$ blocks.

{\bf Procedure 2:}
To implement procedure 2 here with any scalar subvector $\mu$ we calculate $M(\theta^b)$ numerically. We find the smallest and largest values of $\mu$ for which the average (across regressors) KL divergence, namely
\[
 \frac{1}{n} \sum_{ij}^n D_{KL}( p_{\theta^b}(\,\cdot\,|X_{ij})\| p_{(\mu,\eta)}(\,\cdot\, | X_{ij}) )
\]
is approximately zero. We then set $M(\theta^b) = [\ul \mu(\theta^b),\ol \mu(\theta^b)]$ where $\ul \mu(\theta^b)$ and $\ol \mu (\theta^b)$ denote the smallest and largest such values of $\mu$ for which the average KL divergence is minimized. If $M(\theta^b)$ is not an interval then the interval $[\ul \mu(\theta^b),\ol\mu(\theta^b)]$ will be a superset of $M(\theta^b)$ and the resulting CSs will be slightly conservative.

To compute $D_{KL}( p_{\theta^b}(\,\cdot\,|X_{ij})\| p_{(\mu,\eta)}(\,\cdot\, | X_{ij}) )$, let $d_{ij}$ be a dummy variable denoting exports from $j$ to $i$ and let $m_{ij} = \log M_{ij}$. We may write the model more compactly as:
\begin{align*}
 d_{ij} m_{ij} & = \left\{ \begin{array}{ll}
  X_{ij}'(\beta_m + \delta \beta_z) + (\delta \eta_{ij}^* + u_{ij}) & \mbox{if $d_{ij} = 1$} \\
  0  & \mbox{if $d_{ij} = 0$} \end{array} \right. \\
  d_{ij} & =  \ind\{ X_{ij}'\beta_z + \eta_{ij}^* > 0\}
\end{align*}
where $X_{ij}$ collects the trade friction variables $f_{ij}$ and dummy variables for importer and exporter's continent and $\beta_z$ and $\beta_m$ collect all coefficients in the selection and outcome equations, respectively. Therefore,
\[
 \Pr(d_{ij} = 1 | X_{ij}) = \Phi \Big( \frac{X_{ij}'\beta_z}{\sigma_z(X_{ij})} \Big) \,.
\]
The likelihood is
\begin{align*}
 p_\theta(d_{ij},d_{ij}m_{ij}|X_{ij}) & = \bigg( 1- \Phi \bigg( \frac{X_{ij}'\beta_z}{\sigma_z(X_{ij})} \bigg) \bigg)^{1-d_{ij}}  \Bigg( \Phi \Bigg( \frac{\frac{X_{ij}'\beta_z}{\sigma_z(X_{ij})} + r(X_{ij})\frac{d_{ij} m_{ij}-X_{ij}'(\beta_m + \delta \beta_z)}{\sigma_v(X_{ij})}}{\sqrt{1-r^2(X_{ij})}} \Bigg) \\
 & \quad \quad \times \frac{1}{\sigma_v(X_{ij})} \phi \bigg( \frac{d_{ij}m_{ij}-X_{ij}'(\beta_m + \delta \beta_z)}{\sigma_v(X_{ij})} \bigg) \Bigg)^{d_{ij}}
\end{align*}
where
\begin{align*}
 \sigma_v^2(X_{ij}) & = \sigma_m^2 + 2 \delta \rho \sigma_m \sigma_z(X_{ij}) + \delta^2 \sigma_z^2 (X_{ij}) &
 r(X_{ij}) & = \frac{\rho \sigma_m \sigma_z (X_{ij}) + \delta \sigma_z^2(X_{ij})}{ \sigma_v(X_{ij}) \sigma_z(X_{ij}) } \,.
\end{align*}
The conditional KL divergence between $p_{\theta^b}$ and $p_{(\mu,\eta)}$ is then straightforward to compute numerically (e.g. via Gaussian quadrature). Note also that the sets $M(\theta^b)$ for $b=1,\ldots,B$ and for each subvector of interest can be computed in parallel once the draws $\theta^1,\ldots,\theta^B$ have been generated.

\section{Uniformity}\label{s:uniformity}

Here we present conditions under which our CSs $\wh \Theta_\alpha$ (Procedure 1) and $\wh M_\alpha$ (Procedure 2) are uniformly valid over a class of DGPs  $\mf P$. For each $\p \in \mf P$, let $L(\theta;\p)$ denote the population objective function under $\p$. We assume that for each $\p \in \mf P$, $L(\cdot;\p)$ and $L_n$ are upper semicontinuous and $\sup_{\theta \in \Theta} L(\theta;\p) < \infty$. The identified set is $\Theta_I(\p) = \{ \theta \in \Theta : L(\theta;\p) = \sup_{\vartheta \in \Theta} L(\vartheta;\p)\}$ and the identified set for a subvector $\mu$ is $M_I(\p) = \{\mu : (\mu,\eta) \in \Theta_I(\p)$ for some $\eta \}$.

We now show that, under a natural extension of the assumptions in Section \ref{sec-property}, the CSs $\wh \Theta_{\alpha}$ and $\wh M_{\alpha}$ are uniformly valid i.e.:
\begin{align}
\liminf_{n \to \infty} \inf_{\p \in \mf P} \p (\Theta_I (\p) \subseteq \wh \Theta_{\alpha} ) & \geq \alpha \label{e:unif:full} \\
\liminf_{n \to \infty} \inf_{\p \in \mf P} \p (M_I (\p) \subseteq \wh M_{\alpha} ) & \geq  \alpha  \label{e:unif:subvec}
\end{align}
both hold.
The following Lemmas are straightforward extensions of Lemmas \ref{l:basic} and \ref{l:basic:profile}, but are helpful to organize ideas. Let $(\upsilon_n)_{n \in \mb N}$ be a sequence of random variables. We say that $\upsilon_n = o_\p(1)$ uniformly in $\p$ if $\lim_{n \to \infty} \sup_{\p \in \mf P} \p (|\upsilon_n| > \epsilon) = 0$ for each $\epsilon > 0$, and that $\upsilon_n \leq o_\p(1)$ uniformly in $\p$ if $\lim_{n \to \infty} \sup_{\p \in \mf P} \p(\upsilon_n > \epsilon) = 0$ for each $\epsilon > 0$. Uniform $O_\p(1)$ statements are defined analogously.

\medskip

\begin{lemma}\label{l:basic:unif}
	Let there exist sequences of random variables $(W_n,v_{\alpha,n})_{n \in \mb N}$ such that: \\
	(i) $\sup_{\theta \in \Theta_I(\p)} Q_n(\theta)  - W_n \leq o_\p(1)$ uniformly in $\p$; and \\
	(ii) $\liminf_{n \to \infty} \inf_{\p \in \mf P}  \p ( W_{n} \leq v_{\alpha,n} - \varepsilon_n) \geq \alpha$ for any positive sequence $(\varepsilon_n)_{n \in \mb N}$ with $\varepsilon_n  = o(1)$. \\
	Then: (\ref{e:unif:full}) holds for $\wh \Theta_\alpha = \{ \theta \in \Theta : Q_n(\theta) \leq v_{\alpha,n}\}$.
\end{lemma}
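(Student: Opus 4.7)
\textbf{Proof plan for Lemma \ref{l:basic:unif}.}

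The plan is to mimic the argument of Lemma \ref{l:basic}, with the usual ``deterministic-$\varepsilon$'' in that proof replaced by a carefully chosen vanishing deterministic sequence $\varepsilon_n \to 0$ that is slow enough to absorb the uniform $o_\p(1)$ remainder in condition~(i). The starting observation is elementary: by definition of $\wh\Theta_\alpha$,
\[
\{\Theta_I(\p) \subseteq \wh\Theta_\alpha\} = \Big\{ \sup_{\theta \in \Theta_I(\p)} Q_n(\theta) \leq v_{\alpha,n}\Big\},
\]
so writing $R_n(\p) := \sup_{\theta \in \Theta_I(\p)} Q_n(\theta) - W_n$ and using condition (i), it suffices to bound $\p(W_n + R_n(\p) \leq v_{\alpha,n})$ from below uniformly in $\p$.

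First I would construct the sequence $\varepsilon_n$. Condition (i) says $\sup_{\p \in \mf P} \p(R_n(\p) > 1/k) \to 0$ as $n \to \infty$ for every fixed $k \in \mb N$. A standard diagonal argument then yields integers $n_1 < n_2 < \cdots$ with $\sup_{\p \in \mf P} \p(R_n(\p) > 1/k) \leq 1/k$ for all $n \geq n_k$. Setting $\varepsilon_n := 1/k$ for $n_k \leq n < n_{k+1}$ (and $\varepsilon_n := 1$ for $n < n_1$) gives a deterministic sequence with $\varepsilon_n \to 0$ and, simultaneously, $\sup_{\p \in \mf P} \p(R_n(\p) > \varepsilon_n) \to 0$.

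Next I would combine this sequence with condition~(ii). For each $\p \in \mf P$ and each $n$,
\[
\p\bigl(\Theta_I(\p) \subseteq \wh\Theta_\alpha\bigr) \;\geq\; \p\bigl(W_n \leq v_{\alpha,n} - \varepsilon_n,\; R_n(\p) \leq \varepsilon_n\bigr) \;\geq\; \p\bigl(W_n \leq v_{\alpha,n} - \varepsilon_n\bigr) - \p\bigl(R_n(\p) > \varepsilon_n\bigr).
\]
Taking the infimum over $\p \in \mf P$ and then the liminf in $n$, the first term on the right is at least $\alpha$ by condition (ii) applied to this specific vanishing sequence $\varepsilon_n$, while the subtracted term tends to $0$ by construction of $\varepsilon_n$. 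This yields (\ref{e:unif:full}).

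The only substantive point is the diagonalization step that turns the uniform-$o_\p$ statement in (i) into a deterministic slackening sequence $\varepsilon_n$; the rest is bookkeeping. Note also that, unlike the pointwise Lemma \ref{l:basic}, no continuity of a limiting distribution function is invoked — condition~(ii) is already formulated so as to deliver the required quantile control uniformly along arbitrary $\varepsilon_n \to 0$.
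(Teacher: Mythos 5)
Your argument is correct and follows essentially the same route as the paper's proof: intersect the coverage event with the event $\{\sup_{\theta \in \Theta_I(\p)} Q_n(\theta) - W_n \leq \varepsilon_n\}$ for a deterministic $\varepsilon_n = o(1)$ chosen so that the uniform $o_\p(1)$ in (i) fails with uniformly vanishing probability, then invoke (ii) and an elementary probability bound. The only difference is cosmetic: the paper simply asserts the existence of such an $\varepsilon_n$ as the standard characterization of a uniform $o_\p(1)$ statement, whereas you spell out the diagonalization that produces it.
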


\medskip

\begin{lemma}\label{l:basic:profile:unif}
Let there exist sequences of random variables $(W_n,v_{\alpha,n})_{n \in \mb N}$ such that: \\
	(i) $PQ_n(M_I(\p)) - W_n \leq o_\p(1)$ uniformly in $\p$; and \\
	(ii) $\liminf_{n \to \infty} \inf_{\p \in \mf P}  \p ( W_{n} \leq v_{\alpha,n} - \varepsilon_n) \geq \alpha$ for any positive sequence $(\varepsilon_n)_{n \in \mb N}$ with $\varepsilon_n  = o(1)$. \\
	Then: (\ref{e:unif:subvec}) holds for $\wh M_\alpha =\{ \mu \in M : \inf_{\eta \in H_\mu} Q_n(\mu,\eta) \leq v_{\alpha,n}\}$.
\end{lemma}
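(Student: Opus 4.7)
The plan is to translate the coverage event into a statement about the profile QLR statistic and then combine the two hypotheses. By definition of $\wh M_\alpha$, the inclusion $M_I(\p) \subseteq \wh M_\alpha$ holds if and only if $\inf_{\eta \in H_\mu} Q_n(\mu,\eta) \leq v_{\alpha,n}$ for every $\mu \in M_I(\p)$, which in turn is equivalent to
\[
 PQ_n(M_I(\p)) = \sup_{\mu \in M_I(\p)} \inf_{\eta \in H_\mu} Q_n(\mu,\eta) \leq v_{\alpha,n}\,.
\]
So the task reduces to bounding $\inf_{\p \in \mf P} \p(PQ_n(M_I(\p)) \leq v_{\alpha,n})$ from below by $\alpha$ in the limit.

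Next I would introduce a deterministic slack sequence $\varepsilon_n \downarrow 0$ (any choice works, e.g.\ $\varepsilon_n = n^{-1}$) and write the elementary union-bound inequality
\[
 \p\bigl(PQ_n(M_I(\p)) \leq v_{\alpha,n}\bigr) \geq \p\bigl(W_n \leq v_{\alpha,n} - \varepsilon_n\bigr) - \p\bigl(PQ_n(M_I(\p)) - W_n > \varepsilon_n\bigr)\,.
\]
Taking $\inf_{\p \in \mf P}$ on both sides and then $\liminf_{n \to \infty}$, condition (ii) applied to the chosen sequence $\varepsilon_n$ gives $\liminf_n \inf_{\p} \p(W_n \leq v_{\alpha,n} - \varepsilon_n) \geq \alpha$, while condition (i) (the uniform one-sided $o_\p$ statement) gives $\sup_{\p} \p(PQ_n(M_I(\p)) - W_n > \varepsilon_n) \to 0$. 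Combining these two limits with the display above yields $\liminf_n \inf_{\p} \p(M_I(\p) \subseteq \wh M_\alpha) \geq \alpha$, which is (\ref{e:unif:subvec}).

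There is no real obstacle here; the argument is essentially a uniform version of Lemma~\ref{l:basic:profile}, with the only subtlety being that the slack sequence $\varepsilon_n$ must be chosen once and for all (independent of $\p$) so that condition (ii) can be invoked directly and the uniform $o_\p$ statement in (i) can absorb the discrepancy between $PQ_n(M_I(\p))$ and $W_n$ uniformly across $\mf P$.
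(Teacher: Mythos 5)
Your overall strategy---reduce coverage of $M_I(\p)$ to the event $\{PQ_n(M_I(\p)) \leq v_{\alpha,n}\}$ and then combine the two hypotheses through a union bound with a slack sequence---is exactly the route the paper takes (it proves Lemma \ref{l:basic:unif} this way and notes that Lemma \ref{l:basic:profile:unif} follows by the same argument). However, there is a genuine quantifier error in how you choose $\varepsilon_n$. You fix an arbitrary sequence in advance (``any choice works, e.g.\ $\varepsilon_n = n^{-1}$'') and then assert that condition (i) gives $\sup_{\p}\p\bigl(PQ_n(M_I(\p)) - W_n > \varepsilon_n\bigr) \to 0$. This does not follow: the paper's definition of ``$\leq o_\p(1)$ uniformly in $\p$'' is that $\sup_{\p}\p(\upsilon_n > \epsilon) \to 0$ for each \emph{fixed} $\epsilon > 0$, which says nothing about exceedance of a threshold that shrinks as fast as you please. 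For instance, if $PQ_n(M_I(\p)) - W_n = n^{-1/2}$ identically, condition (i) holds, yet $\p\bigl(PQ_n(M_I(\p)) - W_n > n^{-1}\bigr) = 1$ for all $n \geq 2$, so with your choice the second term in the union bound does not vanish and the argument stalls.

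The repair is the standard diagonal extraction, and it is what the paper does: condition (i) guarantees the \emph{existence} of some positive sequence $\varepsilon_n = o(1)$, decaying slowly enough, with $\sup_{\p}\p\bigl(PQ_n(M_I(\p)) - W_n > \varepsilon_n\bigr) = o(1)$; you must work with that sequence rather than one fixed a priori. Because condition (ii) is quantified over \emph{every} positive $o(1)$ sequence, it may then be invoked for this particular $\varepsilon_n$, and your inequality
\[
 \p\bigl(PQ_n(M_I(\p)) \leq v_{\alpha,n}\bigr) \geq \p\bigl(W_n \leq v_{\alpha,n} - \varepsilon_n\bigr) - \p\bigl(PQ_n(M_I(\p)) - W_n > \varepsilon_n\bigr)
\]
goes through uniformly in $\p$, yielding (\ref{e:unif:subvec}) after taking $\inf_{\p \in \mf P}$ and $\liminf_{n\to\infty}$. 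In short, the order of choice is the reverse of what you wrote: $\varepsilon_n$ comes out of condition (i), and condition (ii) absorbs it precisely because it holds for all $o(1)$ sequences. With that correction your proof is complete and coincides with the paper's.
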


The following regularity conditions ensure that $\wh \Theta_\alpha$ and $\wh M_\alpha$ are uniformly valid over $\mf P$.
Let $(\Theta_{osn}(\p))_{n \in \mb N}$ denote  a sequence of local neighborhoods of $\Theta_I(\p)$ such that $\Theta_I(\p) \subseteq \Theta_{osn}(\p)$ for each $n$ and for each $\p \in \mf P$. In what follows we omit the dependence of $\Theta_{osn}(\p)$ on $\p$ to simplify notation.

\begin{assumption}\label{a:rate:unif} (Consistency, posterior contraction) \\
(i)	$L_n(\hat \theta) = \sup_{\theta \in \Theta_{osn}} L_n(\theta) + o_\p(n^{-1})$ uniformly in $\p$.\\
(ii) $\Pi_n(\Theta_{osn}^c |\,\mf X_n) = o_\p(1)$ uniformly in $\p$.
\end{assumption}

We restate our conditions on local quadratic approximation of the criterion allowing for singularity. Recall that a local reduced-form reparameterization is defined on a neighborhood $\Theta_I^N$ of $\Theta_I^{\phantom *}$. We require that $\Theta_{osn}(\p) \subseteq \Theta_I^N(\p)$ for all $\p \in \mf P$, for all $n$ sufficiently large. For nonsingular $\p \in \mf P$ the reparameterization is of the form $\theta \mapsto \gamma(\theta;\p)$ from $\Theta_I^N(\p)$ into $\Gamma(\p)$ where $\gamma(\theta) = 0$ if and only if $\theta \in \Theta_I(\p)$. For singular $\p \in \mf P$  the reparameterization is of the form $\theta \mapsto (\gamma(\theta;\p),\gamma_\bot(\theta;\p))$ from $\Theta_I^N(\p)$ into $\Gamma(\p)\times \Gamma_\bot(\p)$ where $(\gamma(\theta;\p),\gamma_\bot(\theta;\p)) = 0$ if and only if $\theta \in \Theta_I(\p)$. We require the dimension of $\gamma(\cdot;\p)$ to be between $1$ and $\ol d$ for each $\p \in \mf P$, with $\ol d < \infty$ independent of $\p$. Let $B_\delta$ denote a ball of radius $\delta$ centered at the origin (the dimension will be obvious depending on the context) and let $\nu_{d^*}$ denote Gaussian measure on $\mb R^{d^*}$.

To simply notation, in what follows we omit dependence of $d^*$,  $\gamma$, $\gamma_\bot$, $\Gamma$, $\Gamma_\bot$, $k_n$, $\ell_n$, $T$, $\mf T$, $T_{osn}$, $\tau$, $\Theta_I^N$, $\mb V_n$, $\Sigma$, and $f_{n,\bot}$ on $\p$.

\begin{assumption}\label{a:quad:unif} (Local quadratic approximation) \\
	(i) For each $\p \in \mf P$, there exist vectors $\tau \in T$, sequences of random variables $\ell_n$ and $\mb R^{d^*}$-valued random vectors $\hat \gamma_n$, and a sequence of non-negative measurable functions $f_{n,\bot}: \Gamma_\bot \to \mb R$ with $f_{n,\perp}(0)=0$ (we take $\gamma_\bot \equiv 0$ and $f_{n,\bot} \equiv 0$ for nonsingular $\p$), such that as $n \to \infty$:
	\begin{equation} \label{e:quad:unif}
	\sup_{\theta \in \Theta_{osn}} \left| n L_n(\theta) - \left( \ell_n + \frac{1}{2}\|\sqrt n(\hat \gamma_n - \tau)\|^2 - \frac{1}{2} \| \sqrt n (\hat \gamma_n - \tau - \gamma(\theta))\|^2 - f_{n,\perp}(\gamma_\perp(\theta)) \right)  \right| = o_\p(1)
	\end{equation}
	uniformly in $\p$, with $\sup_{\p \in \mf P} \sup_{\theta \in \Theta_{osn}} \|(\gamma(\theta),\gamma_\bot(\theta))\| \to 0$, $\sqrt n \hat \gamma_n = \mf T (\mb V_n+ \sqrt n \tau)$ and $\|\mb V_n \| = O_\p(1)$ (uniformly in $\p$); \\
	(ii) $\{\sqrt n \gamma(\theta) : \theta \in \Theta_{osn}\} \cap B_{k_n} = (T - \sqrt n \tau) \cap B_{k_n}$ where $\inf_{\p \in \mf P} k_n \to \infty$ and $\inf_{\p \in \mf P} \nu_{d^*}(T) > 0$; \\
	(iii) for each singular $\p \in \mf P$: $\{(\gamma(\theta),\gamma_{\bot}(\theta)) : \theta \in \Theta_{osn}\} = \{\gamma(\theta) : \theta \in \Theta_{osn}\} \times \{\gamma_{\bot}(\theta) : \theta \in \Theta_{osn}\}$.
\end{assumption}

Let $\Pi_{\Gamma^*}$ denote the image measure of $\Pi$ under the map $\theta \mapsto \gamma(\theta)$ if $\p$ is nonsingular and $\theta \mapsto ( \gamma(\theta),\gamma_\bot(\theta))$ if $\p $ is singular. We omit dependence of $\delta$, $\Pi_{\Gamma^*}$ and $\pi_{\Gamma^*}$ on $\p$ in what follows.

\begin{assumption}\label{a:prior:unif} (Prior) \\
	(i) $\int_\theta e^{nL_n(\theta)} \,\mr d \Pi(\theta) < \infty$ $\p$-almost surely for each $\p \in \mf P$; \\
	(ii) Each $\Pi_{\Gamma^*}$ has a density $\pi_{\Gamma^*}$  on $B_\delta \cap (\Gamma \times \Gamma_\bot)$ (or  $B_\delta \cap \Gamma $ if $\p$ is nonsingular) for some $\delta > 0$ which are uniformly (in $\mb P$) positive and continuous at the origin.
\end{assumption}

The next lemma is a uniform-in-$\p$ extension of Lemmas \ref{l:post} and \ref{l:post:prime}. Recall that $\mb P_{Z|\mf X_n}$ is the distribution of a $N(0,I_{d^*})$ random vector $Z$ (conditional on data).

\begin{lemma}\label{l:post:unif}
	Let Assumptions \ref{a:rate:unif}, \ref{a:quad:unif} and \ref{a:prior:unif} hold. Then:
 	\[
	\sup_z \biggl(   \Pi_n \big(\{ \theta : Q_n(\theta) \leq z\} \,\big|\,\mf X_n\big)  - \p_{Z|\mf X_n} \left( \|Z\|^2 \leq z | Z \in T - \sqrt n \hat \gamma_n \right) \biggr) \leq o_\p(1)
	\]
	uniformly in $\p$. If no $\p \in \mf P$ is singular, then:
	\[
	\sup_z \biggl|  \Pi_n \big(\{ \theta : Q_n(\theta) \leq z\} \,\big|\,\mf X_n\big)  - \p_{Z|\mf X_n} \left( \|Z\|^2 \leq z | Z \in T - \sqrt n \hat \gamma_n \right) \biggr| = o_\p(1)\,.
	\]
	uniformly in $\p$.
\end{lemma}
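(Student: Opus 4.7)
The plan is to mirror the proofs of Lemma \ref{l:post} (nonsingular case) and Lemma \ref{l:post:prime} (singular case), carrying the uniform-in-$\p$ hypotheses through each step. First, by Assumption \ref{a:rate:unif}(ii), the posterior mass on $\Theta_{osn}^c$ vanishes uniformly, so attention may be restricted to $\Theta_{osn}$ up to a uniform $o_\p(1)$ error. Combined with Assumption \ref{a:rate:unif}(i), one obtains $L_n(\hat\theta) = \sup_{\theta \in \Theta_{osn}} L_n(\theta) + o_\p(n^{-1})$ uniformly in $\p$, so the QLR $Q_n(\theta) = 2n(L_n(\hat\theta) - L_n(\theta))$ can be compared directly to the quadratic-plus-singular expansion provided by Assumption \ref{a:quad:unif}(i), with a remainder that is $o_\p(1)$ uniformly in $\p$ and uniformly in $\theta \in \Theta_{osn}$.

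Second, I would apply the local reparameterization $\theta \mapsto (\gamma(\theta),\gamma_\bot(\theta))$ (or just $\theta \mapsto \gamma(\theta)$ in the nonsingular case). In the transformed coordinates, Assumption \ref{a:quad:unif}(i) gives that the posterior is proportional to $\exp\bigl(-\tfrac{1}{2}\|\sqrt n(\hat\gamma_n - \tau - \gamma)\|^2 - f_{n,\bot}(\gamma_\bot)\bigr)$ times the image prior $\pi_{\Gamma^*}$, up to a uniform $o_\p(1)$ multiplicative remainder. Rescaling $u = \sqrt n \gamma$, the effective local parameter set for $u$ is $T_{osn} - \sqrt n \tau$, which by Assumption \ref{a:quad:unif}(ii) coincides with $(T-\sqrt n \tau) \cap B_{k_n}$ where $\inf_\p k_n \to \infty$. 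Assumption \ref{a:prior:unif}(ii) supplies a modulus of continuity of $\pi_{\Gamma^*}$ at the origin that is uniform in $\p$, so on the relevant $O(n^{-1/2})$-shrinking neighborhood the prior can be factored out as a constant with uniform multiplicative error.

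Third, I would rewrite the event $\{Q_n(\theta)\leq z\}$ in local coordinates. Using $\sqrt n \hat\gamma_n = \mf T(\mb V_n + \sqrt n \tau)$ and the quadratic expansion, on $\Theta_{osn}$ one has
\[
Q_n(\theta) = \|\sqrt n(\hat\gamma_n - \tau - \gamma(\theta))\|^2 + 2 f_{n,\bot}(\gamma_\bot(\theta)) - \|\mb V_n\|^2 + \|\mf T(\mb V_n+\sqrt n\tau) - \sqrt n \tau\|^2 + o_\p(1),
\]
uniformly in $\p$. Integrating out $\gamma_\bot$ (using Assumption \ref{a:quad:unif}(iii) to separate coordinates in the singular case) and normalizing, the posterior probability of $\{Q_n(\theta)\leq z\}$ reduces to a standard Gaussian probability on $u \in T - \sqrt n \tau$ that $\|u - \mf T(\mb V_n + \sqrt n \tau)\|^2 + (\|\mb V_n\|^2 - \|\mf T(\mb V_n + \sqrt n \tau) - \sqrt n \tau\|^2)\leq z$, which by standard cone-projection identities matches $\p_{Z|\mf X_n}(\|Z\|^2 \leq z \mid Z \in T - \sqrt n \hat\gamma_n)$. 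The uniform lower bound $\inf_\p \nu_{d^*}(T) > 0$ in Assumption \ref{a:quad:unif}(ii) ensures the truncation normalizing constants are uniformly bounded away from zero, so all ratios behave continuously in the errors.

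In the nonsingular case ($f_{n,\bot}\equiv 0$), these manipulations yield a two-sided $o_\p(1)$ bound. In the singular case, since $f_{n,\bot}(\gamma_\bot)\ge 0$ with $f_{n,\bot}(0)=0$, the singular factor only downweights $\gamma_\bot\neq 0$ relative to the Gaussian benchmark, which drives $\Pi_n(\{Q_n(\theta)\leq z\}|\mf X_n)$ below its nonsingular counterpart up to a uniform $o_\p(1)$ remainder, yielding the one-sided bound. The main obstacle throughout is showing that every $o_\p(1)$ term remains uniform in $\p$: this reduces to checking that the quadratic-expansion remainder (by hypothesis), the prior-ratio remainder on $B_{\delta/2}$ (via uniform continuity of $\pi_{\Gamma^*}$), and the tail contribution outside $B_{k_n}$ (via Gaussian tails and $\inf_\p k_n \to \infty$) can each be majorized by deterministic sequences depending only on $n$. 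The uniform bound $d^* \leq \ol d$ is essential here because it prevents the Gaussian truncation probabilities from degenerating as $\p$ varies in $\mf P$.
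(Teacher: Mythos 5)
Your overall route is the paper's: restrict attention to $\Theta_{osn}$ via Assumption \ref{a:rate:unif}(ii), pass to the local reduced form, factor the prior by its uniform continuity at the origin (Assumption \ref{a:prior:unif}(ii)), change variables, and control the truncation to $B_{k_n}$ using $\inf_{\p}k_n\to\infty$, $\inf_{\p}\nu_{d^*}(T)>0$ and $d^*\le \ol d$, with the one-sided singular bound coming from $f_{n,\bot}\ge 0$. But your displayed expansion of the QLR is wrong, and not harmlessly so. Under Assumptions \ref{a:rate:unif}(i) and \ref{a:quad:unif}, the maximized criterion satisfies $\sup_{\theta\in\Theta_{osn}}nL_n(\theta)=\ell_n+\frac{n}{2}\|\hat\gamma_n-\tau\|^2+o_\p(1)$ uniformly in $\p$ (because $\sqrt n(\hat\gamma_n-\tau)=\mf T(\mb V_n+\sqrt n\tau)-\sqrt n\tau$ lies in $T-\sqrt n\tau$ and, by the projection contraction of Lemma \ref{lem:proj:contract}, $\|\sqrt n(\hat\gamma_n-\tau)\|\le\|\mb V_n\|=O_\p(1)=o_\p(k_n)$, so it is approximable by points of $\{\sqrt n\gamma(\theta):\theta\in\Theta_{osn}\}$; this is Lemma \ref{l:quad:unif}). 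Hence the correct uniform expansion is $Q_n(\theta)=\|\sqrt n\gamma(\theta)-\sqrt n(\hat\gamma_n-\tau)\|^2+2f_{n,\bot}(\gamma_\bot(\theta))+o_\p(1)$. Your version adds $\|\mf T(\mb V_n+\sqrt n\tau)-\sqrt n\tau\|^2-\|\mb V_n\|^2$, an $O_\p(1)$ quantity that is strictly negative whenever the cone projection is active; it would follow only if $2nL_n(\hat\theta)=2\ell_n+\|\mb V_n\|^2+o_\p(1)$, which Assumption \ref{a:rate:unif}(i) does not deliver since the maximum is over the cone-constrained local set, not over $\mb R^{d^*}$. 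As a result your event $\{Q_n\le z\}$ carries a non-vanishing shift of the threshold, and the appeal to ``standard cone-projection identities'' to equate the resulting truncated-Gaussian probability with $\p_{Z|\mf X_n}(\|Z\|^2\le z\mid Z\in T-\sqrt n\hat\gamma_n)$ has no basis: no identity removes that shift. With the correct expansion the match is immediate after the change of variables $\kappa=\sqrt n\gamma-\sqrt n(\hat\gamma_n-\tau)$, since $T_{osn}-\sqrt n(\hat\gamma_n-\tau)$ coincides, up to a uniformly negligible part outside $B_{k_n}$, with $T-\sqrt n\hat\gamma_n$.

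A second, smaller point: in the singular case you cannot literally ``integrate out $\gamma_\bot$ and then normalize'' to reduce the posterior probability to a Gaussian one, because the event $\{Q_n\le z\}$ itself involves $f_{n,\bot}(\gamma_\bot)$ and so does not factor. The correct mechanism is the inclusion $\{\|\sqrt n\gamma(\theta)-\sqrt n(\hat\gamma_n-\tau)\|^2+2f_{n,\bot}(\gamma_\bot(\theta))\le z+\varepsilon_n\}\subseteq\{\|\sqrt n\gamma(\theta)-\sqrt n(\hat\gamma_n-\tau)\|^2\le z+\varepsilon_n\}$, after which Tonelli and the product structure of Assumption \ref{a:quad:unif}(iii) let the $\gamma_\bot$ integral cancel between numerator and denominator, giving an upper bound only—which is exactly why the singular conclusion is one-sided. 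Your final sentence gets the direction right, but the argument as written conflates the exact reduction (valid only when $f_{n,\bot}\equiv 0$) with this bound.
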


As in Section \ref{sec-property}, we let $\xi_{n,\alpha}^{post}$ denote the $\alpha$ quantile of $Q_n(\theta)$ under the posterior distribution $\Pi_n$.

\begin{assumption} \label{a:mcmc:unif}  (MC convergence)\\
	$\xi_{n,\alpha}^{mc} = \xi_{n,\alpha}^{post} + o_\p(1)$ uniformly in $\p$.
\end{assumption}

The following result is a uniform-in-$\p$ extension of Theorems \ref{t:main} and \ref{t:main:prime}. Recall that $F_{T}(z) = \p_Z (\| \mf T Z \|^2 \leq z)$ where $\p_Z$ denotes the distribution of a $N(0,I_{d^*})$ random vector. We say that the distributions $\{ F_T : \p \in \mf P\}$ are equicontinuous at their $\alpha$ quantiles (denoted $\xi_{\alpha,\p}$) if for each $\epsilon > 0$ there is $\delta > 0$ such that $F_T(\xi_{\alpha,\p} - \epsilon ) < \alpha - \delta$ for each $\p \in \mf P$ and $\inf_{\p \in \mf P} F_T(\xi_{\alpha,\p} - \epsilon ) \to \alpha$ as $\epsilon \to 0$. This is trivially true if $T = \mb R^{d^*}$ for each $\p \in \mf P$ and $\sup_{\p \in \mf P} d^* < \infty$.

\begin{theorem}\label{t:main:unif}
Let Assumptions \ref{a:rate:unif}, \ref{a:quad:unif}, \ref{a:prior:unif} and \ref{a:mcmc:unif} hold, and let
\[
 \sup_{\p \in \mf P} \sup_z | \p ( \|\mf T \mb V_n \|^2 \leq z ) - \p_Z (\|\mf T Z\|^2 \leq z) | = o(1)\,.
\]
(i) If $\| \mf T (\mb V_n + \sqrt n \tau) - \sqrt n \tau\|^2 \leq \| \mf T \mb V_n\|^2$ (almost surely) for each $\p \in \mf P$, then: (\ref{e:unif:full}) holds.\\
(ii) If no $\p \in \mf P$ is singular and $T = \mb R^{d^*}$ for each $\p$, then: (\ref{e:unif:full}) holds with equality.
\end{theorem}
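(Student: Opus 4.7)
The plan is to apply Lemma \ref{l:basic:unif} with $W_n = \|\mf T \mb V_n\|^2$ and $v_{\alpha,n} = \xi_{n,\alpha}^{mc}$, which reduces both parts of the theorem to verifying the two conditions of that lemma uniformly in $\p$.

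For condition (i), I would first derive a uniform quadratic expansion of $\sup_{\theta \in \Theta_I(\p)} Q_n(\theta)$. Since $\gamma(\theta) = \gamma_\bot(\theta) = 0$ on $\Theta_I(\p)$ and $f_{n,\bot}(0) = 0$, Assumption \ref{a:quad:unif}(i) yields $nL_n(\theta) = \ell_n + o_\p(1)$ uniformly on $\Theta_I(\p)$. Combining with Assumption \ref{a:rate:unif}(i) and maximizing the quadratic form over $\Theta_{osn}$ via Assumption \ref{a:quad:unif}(ii)--(iii)---using that $\sqrt n \hat\gamma_n - \sqrt n \tau = \mf T(\mb V_n + \sqrt n \tau) - \sqrt n \tau$ already lies in the approximating cone $T - \sqrt n \tau$ inside $B_{k_n}$ because $\|\mb V_n\| = O_\p(1)$ uniformly---I would obtain
\begin{equation*}
\sup_{\theta \in \Theta_I(\p)} Q_n(\theta) = \|\mf T(\mb V_n + \sqrt n \tau) - \sqrt n \tau\|^2 + o_\p(1)
\end{equation*}
uniformly in $\p$. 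The hypothesis of part (i) then gives $\sup_{\theta \in \Theta_I(\p)} Q_n(\theta) - W_n \leq o_\p(1)$ uniformly; under part (ii), where $\mf T$ is the identity, this holds with equality.

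For condition (ii), I would combine Lemma \ref{l:post:unif} with Assumption \ref{a:mcmc:unif} to obtain $\xi_{n,\alpha}^{mc} \geq \xi_{\alpha,\p} - o_\p(1)$ uniformly, where $\xi_{\alpha,\p} := F_T^{-1}(\alpha)$. The key ingredient, carried over from the proof of Lemma \ref{l:post}(i), is that the conditional law $\p_{Z|\mf X_n}(\|Z\|^2 \leq \cdot \mid Z \in T - \sqrt n \hat\gamma_n)$ is first-order stochastically dominated by $F_T$, with the uniform non-degeneracy $\inf_{\p \in \mf P} \nu_{d^*}(T) > 0$ from Assumption \ref{a:quad:unif}(ii) preventing the conditioning event from collapsing. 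Using the assumed uniform convergence $\sup_\p \sup_z |\p(\|\mf T \mb V_n\|^2 \leq z) - F_T(z)| = o(1)$ together with the equicontinuity of $\{F_T : \p \in \mf P\}$ at their $\alpha$-quantiles, for any positive sequence $\varepsilon_n = o(1)$ I would conclude
\begin{equation*}
\liminf_{n \to \infty} \inf_{\p \in \mf P} \p\bigl(W_n \leq \xi_{n,\alpha}^{mc} - \varepsilon_n\bigr) \geq \liminf_{n \to \infty} \inf_{\p \in \mf P} F_T\bigl(\xi_{\alpha,\p} - 2\varepsilon_n\bigr) \geq \alpha,
\end{equation*}
verifying condition (ii) of Lemma \ref{l:basic:unif} and thus part (i). For part (ii), the nonsingular, $T = \mb R^{d^*}$ setting activates the two-sided version of Lemma \ref{l:post:unif}, yielding $\xi_{n,\alpha}^{mc} = F_{\chi^2_{d^*}}^{-1}(\alpha) + o_\p(1)$ uniformly, which together with the equality version of the QLR expansion produces the matching upper bound on coverage.

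The main obstacle will be making the posterior stochastic-dominance step uniform in $\p$: the proof of Lemma \ref{l:post}(i) rests on a geometric argument about conditioning a standard Gaussian on a shifted convex cone, and lifting this to a uniform statement in the cone $T$, the shift $\sqrt n \hat\gamma_n$, and the dimension $d^*$ requires careful use of Assumption \ref{a:quad:unif}'s uniform lower bound on $\nu_{d^*}(T)$, the uniform upper bound $d^* \leq \ol d$, and the uniform tightness of $\mb V_n$. Once that step is uniform, the rest of the argument parallels the non-uniform proofs of Theorems \ref{t:main} and \ref{t:main:prime}.
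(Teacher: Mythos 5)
Your proposal is correct and follows essentially the same route as the paper's proof: verify Lemma \ref{l:basic:unif} via the uniform QLR expansion (the paper's Lemma \ref{l:quad:unif}), then bound $\xi_{n,\alpha}^{mc}$ from below by $\xi_{\alpha,\p}$ uniformly using Lemma \ref{l:post:unif}, the Chen--Gao stochastic-dominance inequality for a Gaussian conditioned on a shifted convex cone, Assumption \ref{a:mcmc:unif}, and the assumed uniform convergence plus equicontinuity of $\{F_T\}$. The only (immaterial) difference is bookkeeping: you take $W_n = \|\mf T \mb V_n\|^2$ and absorb the hypothesis $\|\mf T(\mb V_n + \sqrt n \tau) - \sqrt n \tau\|^2 \leq \|\mf T \mb V_n\|^2$ into condition (i) of Lemma \ref{l:basic:unif}, whereas the paper takes $W_n = \|\mf T(\mb V_n + \sqrt n \tau) - \sqrt n \tau\|^2$ and uses that inequality in condition (ii); also, the "uniformity obstacle" you flag is already resolved by Lemma \ref{l:post:unif}, since the dominance step itself is an exact, per-$\p$ inequality.
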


To establish (\ref{e:unif:subvec}) we require a uniform version of Assumptions \ref{a:qlr:profile} and \ref{a:mcmc:profile}. In what follows, we omit dependence of $f$ on $\p$ to simplify notation.

\begin{assumption}\label{a:qlr:unif} (Profile QL) \\
 (i) For each $\p \in \mf P$, there exists a measurable function $f : \mb R^{d^*} \to \mb R$ such that:
\begin{align*}
 \sup_{\theta \in \Theta_{osn}} \left| nPL_n(M ( \theta)) - \left( \ell_n  + \frac{1}{2}\|\sqrt n(\hat \gamma_n - \tau)\|^2 - \frac{1}{2} f \left( \sqrt n (\hat \gamma_n - \tau - \gamma(\theta))  \right) \right) \right| = o_\p(1)
\end{align*}
uniformly in $\p$, with $\hat \gamma_n$, $\ell_n$, $\tau$ and $\gamma(\cdot)$ from Assumption \ref{a:quad:unif}; \\
(ii) $f(\mf T(\mb V_n + \sqrt n \tau) - \sqrt n \tau) \leq f(\mb V_n)$ (almost surely) for each $\p \in \mf P$; \\
(iii) $\sup_z (\p_Z (f(Z) \leq z | Z \in v - T) - \p_Z (f(Z) \leq z)) \leq 0$ for all $v \in T$.
\end{assumption}

Note that parts (ii) and (iii) of Assumption \ref{a:qlr:unif} automatically hold with equality if $T = \mb R^{d^*}$. These conditions are not needed in the following result that is a uniform-in-$\p$ extension of Lemma \ref{l:post:profile}.

\begin{lemma}\label{l:post:profile:unif}
	Let Assumptions \ref{a:rate:unif}, \ref{a:quad:unif}, \ref{a:prior:unif} and \ref{a:qlr:unif}(i) hold. Then for any interval $I = I(\p) \subseteq \mb R$ such that $\p_{Z } ( f(Z) \leq z   )$ is uniformly continuous on $I$ (in both $z$ and $\p$):
	\[
	\sup_{z \in I} \left|   \Pi_n \big(\{ \theta : PQ_n ( M (\theta)) \leq z\} \,\big|\,\mf X_n\big)  -  \p_{Z|\mf X_n} (f(Z) \leq z | Z \in \sqrt n \hat \gamma_n - T) \right| = o_\p(1)\,.
	\]
	uniformly in $\p$.
\end{lemma}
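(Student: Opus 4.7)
The plan is to mirror the structure of the proof of Lemma \ref{l:post:profile}, tracking at each step the uniformity in $\p \in \mf P$ afforded by the uniform-in-$\p$ assumptions. First I would use Assumption \ref{a:rate:unif}(ii) to localize: since $\Pi_n(\Theta_{osn}^c \mid \mf X_n) = o_\p(1)$ uniformly in $\p$, the posterior probability on the left-hand side differs by $o_\p(1)$ from its restriction to $\Theta_{osn}$, uniformly in both $\p$ and $z$.

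Next I would combine Assumptions \ref{a:rate:unif}(i), \ref{a:quad:unif}(i)--(iii), and \ref{a:qlr:unif}(i) to derive, uniformly over $\theta \in \Theta_{osn}$ and uniformly in $\p$,
\[
PQ_n(M(\theta)) \;=\; f\!\left(\sqrt n(\hat\gamma_n - \tau - \gamma(\theta))\right) + o_\p(1)\,.
\]
The crucial algebraic step here is to show $nL_n(\hat\theta) = \ell_n + \tfrac{1}{2}\|\sqrt n(\hat\gamma_n - \tau)\|^2 + o_\p(1)$: one picks an approximate maximizer $\theta^\star$ of the right-hand side of Assumption \ref{a:quad:unif}(i) by exploiting the cone identity in Assumption \ref{a:quad:unif}(ii), noting that $\sqrt n\hat\gamma_n - \sqrt n\tau \in T - \sqrt n\tau$ and lies in $B_{k_n}$ eventually (since $\|\mb V_n\| = O_\p(1)$ uniformly in $\p$ and $\inf_\p k_n \to \infty$), and, in the singular case, uses the product decoupling in Assumption \ref{a:quad:unif}(iii) to pick $\gamma_\perp(\theta^\star) = 0$ so that $f_{n,\perp}(\gamma_\perp(\theta^\star)) = 0$. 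Subtracting the expansion of Assumption \ref{a:qlr:unif}(i) then yields the displayed expansion of $PQ_n$.

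Third, I would push the posterior through the reparameterization $\theta \mapsto (\gamma(\theta),\gamma_\perp(\theta))$ to obtain the image measure $\Pi_{\Gamma^*}$, rescale $v = \sqrt n(\hat\gamma_n - \tau - \gamma)$, and integrate out $\gamma_\perp$ using the product structure from Assumption \ref{a:quad:unif}(iii). Assumption \ref{a:prior:unif}(ii) ensures that the prior ratios $\pi_{\Gamma^*}(\cdot/\sqrt n)/\pi_{\Gamma^*}(0)$ converge to one uniformly on compacta, uniformly in $\p$. Standard Laplace-type asymptotics (as in the proof of Lemma \ref{l:post}) then give the pointwise-in-$z$ conclusion
\[
\Pi_n\!\bigl(\{\theta : PQ_n(M(\theta)) \leq z\} \,\big|\, \mf X_n\bigr) \;-\; \p_{Z|\mf X_n}\!\bigl(f(Z)\leq z \,\big|\, Z \in \sqrt n\hat\gamma_n - T\bigr) \;=\; o_\p(1)
\]
uniformly in $\p$, where the conditioning denominator $\p_Z(Z \in \sqrt n\hat\gamma_n - T)$ is bounded away from zero by $\inf_{\p \in \mf P} \nu_{d^*}(T) > 0$ (Assumption \ref{a:quad:unif}(ii)).

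Finally, to upgrade this pointwise-in-$z$ statement to the claimed uniform-in-$z$ statement on $I$, I would invoke the uniform continuity hypothesis on $z \mapsto \p_Z(f(Z) \leq z)$ via a Polya-type argument: fix a finite grid $z_1 < \cdots < z_K$ in $I$ whose mesh is controlled by the uniform modulus of continuity, apply the pointwise $o_\p(1)$ statement at each $z_k$, and bound the oscillations between grid points using equicontinuity of the conditional family $\{z \mapsto \p_{Z|\mf X_n}(f(Z)\leq z \mid Z \in \sqrt n\hat\gamma_n - T)\}$, which inherits equicontinuity from the unconditional family through the uniform lower bound on the denominator. The main obstacle is precisely this final step: transferring uniform equicontinuity to the conditional distributions while maintaining uniformity simultaneously across singular and nonsingular $\p \in \mf P$, with potentially varying $d^*$ and different singular components. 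Assumption \ref{a:quad:unif}(iii) is what keeps the $\gamma_\perp$-integration from spoiling uniform-in-$\p$ control, and the uniform lower bound $\inf_\p \nu_{d^*}(T) > 0$ is what keeps the conditional distributions from degenerating as $\sqrt n\hat\gamma_n$ moves around.
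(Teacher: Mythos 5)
Your proposal is correct and follows essentially the same route as the paper's proof: localization via the uniform posterior contraction in Assumption \ref{a:rate:unif}(ii), the uniform profile expansion $PQ_n(M(\theta)) = f(\sqrt n(\hat\gamma_n-\tau-\gamma(\theta))) + o_\p(1)$ obtained by combining Assumptions \ref{a:quad:unif}(i)--(iii) and \ref{a:qlr:unif}(i) (this is exactly Lemma \ref{l:quad:unif}), the change of variables with Tonelli and the uniform prior continuity of Assumption \ref{a:prior:unif}(ii), the replacement of $T_{osn}$ by $T-\sqrt n \tau$ via the cone identity, uniform tightness of $\sqrt n(\hat\gamma_n-\tau)$, and the lower bound $\inf_{\p}\nu_{d^*}(T)>0$ for the conditioning denominator. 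The only cosmetic difference is the final step in $z$: the paper's sandwich bounds are already uniform in $z$, so it absorbs the $\pm\varepsilon_n$ smoothing directly with the uniform continuity hypothesis on $I$, whereas you first prove pointwise-in-$z$ convergence and then upgrade by a Polya-type grid argument, which also works here because both quantities are monotone in $z$ and the conditional family inherits equicontinuity from the uniform lower bound on the denominator.
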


Let $\xi_{n,\alpha}^{post,p}$ denote the $\alpha$ quantile of $PQ_n ( M (\theta))$ under the posterior distribution $\Pi_n$.

\begin{assumption}\label{a:mcmc:profile:unif} (MC convergence) \\
	$\xi_{n,\alpha}^{mc,p} = \xi_{n,\alpha}^{post,p} + o_\p(1)$ uniformly in $\p$.
\end{assumption}

The following result is a uniform-in-$\p$ extension of Theorem \ref{t:main:profile}.

\begin{theorem}\label{t:profile:unif}
Let Assumptions \ref{a:rate:unif}, \ref{a:quad:unif}, \ref{a:prior:unif}, \ref{a:qlr:unif} and \ref{a:mcmc:profile:unif} hold, and let
\[
 \sup_{\p \in \mf P} \sup_z | \p (f (\mb V_n) \leq z) - \p_Z (f(Z) \leq z) | = o(1)
\]
where the distributions $\{ \p_Z(f(Z) \leq z) : \p \in \mf P\}$ are equicontinuous at their $\alpha$ quantiles.\\
(i) Then: (\ref{e:unif:subvec}) holds. \\
(ii) If Assumption \ref{a:qlr:unif}(ii)(iii) holds with equality for all $\p \in \mf P$, then:  (\ref{e:unif:subvec}) holds with equality.
\end{theorem}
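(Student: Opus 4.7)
My plan is to invoke Lemma \ref{l:basic:profile:unif} with $W_n := f(\mb V_n)$ and $v_{\alpha,n} := \xi_{n,\alpha}^{mc,p}$, and verify its two hypotheses uniformly in $\p$. The strategy parallels the (pointwise) proof of Theorem \ref{t:main:profile}, but each approximation must now be tracked uniformly over $\mf P$, leveraging the uniform versions of the underlying assumptions.

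For hypothesis (i) of Lemma \ref{l:basic:profile:unif}, first observe that $M_I(\p) = M(\theta)$ for any $\theta \in \Theta_I(\p)$, so $PQ_n(M_I(\p)) = PQ_n(M(\theta))$ for any such $\theta$. Picking $\theta^*_\p \in \Theta_I(\p)$ (so $\gamma(\theta^*_\p) = 0$ and, if applicable, $\gamma_\bot(\theta^*_\p) = 0$), and using that $\hat \theta \in \Theta_{osn}$ by Assumption \ref{a:rate:unif}(i), I combine Assumption \ref{a:quad:unif}(i) evaluated at $\hat\theta$ with Assumption \ref{a:qlr:unif}(i) evaluated at $\theta^*_\p$ to obtain
\begin{equation*}
PQ_n(M_I(\p)) = f\!\left(\mf T(\mb V_n + \sqrt n \tau) - \sqrt n \tau\right) + o_\p(1)
\end{equation*}
uniformly in $\p$. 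Assumption \ref{a:qlr:unif}(ii) then gives $PQ_n(M_I(\p)) \leq f(\mb V_n) + o_\p(1) = W_n + o_\p(1)$, as required.

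For hypothesis (ii), the uniform MC convergence Assumption \ref{a:mcmc:profile:unif} yields $\xi_{n,\alpha}^{mc,p} = \xi_{n,\alpha}^{post,p} + o_\p(1)$ uniformly. By Lemma \ref{l:post:profile:unif}, the posterior distribution of $PQ_n(M(\theta))$ is uniformly close to $\p_{Z \mid \mf X_n}(f(Z) \leq z \mid Z \in \sqrt n \hat \gamma_n - T)$. Since $\sqrt n \hat \gamma_n = \mf T(\mb V_n + \sqrt n \tau) \in T$, Assumption \ref{a:qlr:unif}(iii) implies that this conditional distribution is first-order stochastically dominated by the unconditional $\p_Z(f(Z) \leq z)$, whose $\alpha$ quantile I denote $\xi_{\alpha,\p}$. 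Hence $\xi_{n,\alpha}^{mc,p} \geq \xi_{\alpha,\p} + o_\p(1)$ uniformly. Combining the hypothesis $\sup_{\p} \sup_z |\p(f(\mb V_n) \leq z) - \p_Z(f(Z) \leq z)| \to 0$ with the equicontinuity of $\{\p_Z(f(Z) \leq \cdot)\}_{\p \in \mf P}$ at $\{\xi_{\alpha,\p}\}$, I conclude $\liminf_{n \to \infty} \inf_{\p \in \mf P} \p(W_n \leq v_{\alpha,n} - \varepsilon_n) \geq \alpha$ for any $\varepsilon_n = o(1)$. Lemma \ref{l:basic:profile:unif} then delivers part (i). For part (ii), if Assumption \ref{a:qlr:unif}(ii)--(iii) hold with equality, the two inequality steps above become equalities, so $PQ_n(M_I(\p)) = f(\mb V_n) + o_\p(1)$ and $\xi_{n,\alpha}^{mc,p} = \xi_{\alpha,\p} + o_\p(1)$ uniformly; the same equicontinuity argument then yields the matching upper bound $\limsup_{n} \sup_\p \p(M_I(\p) \subseteq \wh M_\alpha) \leq \alpha$.

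The main obstacle is guaranteeing that all $o_\p(1)$ and $o(1)$ error terms chained above are genuinely uniform in $\p \in \mf P$: the posterior approximation from Lemma \ref{l:post:profile:unif}, the MC quantile convergence, and in particular the passage from a uniform bound on the CDF distance to a uniform bound on the quantile $\xi_{\alpha,\p}$ must all be stitched together carefully, since $\xi_{\alpha,\p}$, $T$, and $f$ all depend on $\p$. This is precisely why the equicontinuity hypothesis on $\{F_{f(Z)}\}_{\p \in \mf P}$ is needed; without it the translation from distributional dominance to a uniform lower bound on $\p(f(\mb V_n) \leq \xi_{\alpha,\p} - \varepsilon_n)$ would break down at sequences of DGPs where the limit distribution flattens near the quantile.
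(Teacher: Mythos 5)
Your proposal is correct and follows essentially the same route as the paper: verify Lemma \ref{l:basic:profile:unif} via the uniform profile expansion (Lemma \ref{l:quad:unif}), Assumption \ref{a:qlr:unif}(ii), the posterior dominance from Lemma \ref{l:post:profile:unif} together with Assumption \ref{a:qlr:unif}(iii) (valid since $\sqrt n \hat\gamma_n \in T$), Assumption \ref{a:mcmc:profile:unif}, and the uniform CDF convergence plus equicontinuity to pass to the quantiles. The only cosmetic difference is that you take $W_n = f(\mb V_n)$ and absorb the inequality of Assumption \ref{a:qlr:unif}(ii) into condition (i) of Lemma \ref{l:basic:profile:unif}, whereas the paper sets $W_n = f(\mf T(\mb V_n + \sqrt n \tau) - \sqrt n \tau)$ and applies that inequality when verifying condition (ii); just note that the step $nL_n(\hat\theta) = \ell_n + \tfrac n2\|\hat\gamma_n-\tau\|^2 + o_\p(1)$ relies on Assumption \ref{a:quad:unif}(ii) and the uniform tightness of $\mb V_n$ (as in Lemma \ref{l:quad:unif}), not merely on evaluating the expansion at $\hat\theta$.
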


\subsection{A uniform quadratic expansion for discrete distributions}\label{s:uniformity-qe}

In this subsection we present low-level conditions that show the uniform quadratic expansion assumption is satisfied over a large class of DGPs in discrete models. Let $\mf P$ (possibly depending on $n$) be a class of distributions such that for each $\p_\theta \in \mf P$, $X_1,\ldots,X_n$ are i.i.d. discretely distributed on sample space $\{1,\ldots,k\}$ where $k \geq 2$. Let the $k$-vector $p_\theta$ denote the probabilities $p_\theta(j) = \p_\theta(X_i=j)$ for $j=1,\ldots,k$ and write $p_\theta > 0$ if $p_\theta(j) > 0$ for all $1 \leq j \leq k$. We identify a vector $\p_\theta$ with its probability vector $p_\theta$ and a generic distribution $\p \in \mf P$ with the $k$-vector $p$.

Our uniform quadratic approximation result  encompasses a large variety of drifting sequence asymptotics, allowing $p(j)$ to drift towards $0$ at rate up to (but not including) $n^{-1}$. That is, the first set of results concern any class of distributions $\mf P$ for which
\begin{equation} \label{e:p:unif}
 \sup_{\p \in \mf P} \max_{1 \leq j \leq k} \frac{1}{p(j)} = o(n)\,.
\end{equation}

For any $\p \in \mf P$ with $p > 0$ and any $\theta$, define the (squared) chi-square distance of $\p_\theta$ from $\p$ as
\[
 \chi^2(p_\theta;p) = \sum_{j=1}^k \frac{(p_\theta(j)-p(j))^2}{p(j)} \,.
\]
For each $\p$, let $\Theta_{osn}(\p) = \{ \theta : p_\theta > 0, \chi^2(p_\theta;p) \leq r_n^2 n^{-1}\}$ where $(r_n)_{n \in \mb N}$ is a positive sequence to be defined below. Also let $e_x$ denote a $k$-vector with $1$ in its $x$th entry and $0$ elsewhere, let $\mb J_p = \mr{diag}(p(1)^{-1/2},\ldots,p(k)^{-1/2})$, and let $\sqrt p = (\sqrt{p(1)},\ldots,\sqrt{p(k)})'$.

\begin{lemma} \label{lem:quad:unif:disc}
Let (\ref{e:p:unif}) hold. Then: there exists a positive sequence $(r_n)_{n \in \mb N}$ with $r_n \to \infty$ as $n \to \infty$ such that:
\[
 \sup_{\theta \in \Theta_{osn}(\p)} \left| n L_n (p_\theta) - \left( \ell_n - \frac{1}{2} \| \sqrt n \tilde \gamma_{\theta;p}\|^2 + (\sqrt n \tilde \gamma_{\theta;p})'\tilde{\mb V}_{n;p} \right) \right| = o_\p(1)
\]
uniformly in $\p$, where for each $\p \in \mf P$:
\begin{align*}
 \ell_n = \ell_n(\p) & = n L_n(p) &
 \tilde \gamma_{\theta;p} & = \left[ \begin{array}{c}
 \frac{p_\theta(1)-p(1)}{\sqrt{p(1)}}  \\
 \vdots \\
 \frac{p_\theta(k)-p(k)}{\sqrt{p(k)}}  \end{array} \right] &
 \tilde{\mb V}_{n;p} & = \mb G_n( \mb J_p e_x) \overset{\p}{\rightsquigarrow} N(0,I-\sqrt p \sqrt p')\,.
\end{align*}
\end{lemma}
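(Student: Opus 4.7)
The plan is to exploit the local reparameterization $t(x) := p_\theta(x)/p(x) - 1$ and Taylor expand the log-likelihood pointwise. Since $Pt = \sum_j (p_\theta(j) - p(j)) = 0$ and $Pt^2 = \chi^2(p_\theta;p) = \|\tilde\gamma_{\theta;p}\|^2$, writing $\log(1+t) = t - t^2/2 + r(t)$ yields
\[
nL_n(p_\theta) - nL_n(p) \;=\; \sqrt n\, \mb G_n t \;-\; \tfrac{n}{2}\|\tilde\gamma_{\theta;p}\|^2 \;-\; \tfrac{\sqrt n}{2}\, \mb G_n t^2 \;+\; \sum_{i=1}^n r(t(X_i)).
\]
Using the identity $t(x) = \tilde\gamma_{\theta;p}'\mb J_p e_x$, the linear term equals $\sqrt n\, \tilde\gamma_{\theta;p}'\tilde{\mb V}_{n;p}$, producing the claimed main terms together with $\ell_n = nL_n(p)$. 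It remains to show that $\sqrt n\, \mb G_n t^2$ and $\sum_i r(t(X_i))$ are $o_\p(1)$ uniformly in $\theta \in \Theta_{osn}(\p)$ and in $\p \in \mf P$.

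The critical quantitative input is the sup-norm bound $\|t\|_\infty^2 \leq \|\tilde\gamma_{\theta;p}\|^2\,\max_j p(j)^{-1} \leq r_n^2 a_n$, where $a_n := \sup_{\p \in \mf P} \max_j (np(j))^{-1} = o(1)$ by hypothesis. To control $\mb G_n t^2$ uniformly over $\theta$, I write $t(x)^2 = \tilde\gamma_{\theta;p}' g(x)g(x)'\tilde\gamma_{\theta;p}$ with $g_j(x) = \ind\{x=j\}/\sqrt{p(j)}$; the matrix $\mb G_n(gg')$ is diagonal with $j$th entry $\sqrt n(\bar p_n(j) - p(j))/p(j)$, whose variance $(1-p(j))/p(j) \leq n a_n$ gives (via Chebyshev) $\|\mb G_n(gg')\|_{\mathrm{op}} = O_\p(\sqrt{n a_n})$ uniformly in $\p$. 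Hence
\[
\sup_\theta \bigl|\sqrt n\, \mb G_n t^2\bigr| \;\leq\; \sqrt n \cdot \|\tilde\gamma_{\theta;p}\|^2 \cdot \|\mb G_n(gg')\|_{\mathrm{op}} \;=\; \sqrt n \cdot (r_n^2/n)\cdot O_\p(\sqrt{n a_n}) \;=\; O_\p(r_n^2\sqrt{a_n}).
\]
For the cubic remainder, $\|t\|_\infty \leq r_n\sqrt{a_n} \to 0$ ensures $|r(t)| \leq C|t|^3$ for $n$ large, so $|\sum_i r(t(X_i))| \leq C\|t\|_\infty \cdot n\p_n t^2 = C\|t\|_\infty (nPt^2 + \sqrt n\, \mb G_n t^2) = O_\p(r_n^3\sqrt{a_n})$ uniformly.

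Choosing the localizing sequence $(r_n)$ to tend to infinity slowly enough that $r_n^3\sqrt{a_n} = o(1)$---feasible because $a_n = o(1)$ uniformly in $\p$---delivers $o_\p(1)$ bounds uniformly in both $\theta$ and $\p$; convergence $\tilde{\mb V}_{n;p} \rightsquigarrow N(0, I-\sqrt p\sqrt p\,')$ then follows from the multivariate CLT applied to $\mb J_p e_{X_i}$, whose covariance equals $\mb J_p\,\mathrm{diag}(p)\,\mb J_p - \sqrt p\sqrt p\,' = I - \sqrt p\sqrt p\,'$. The main obstacle is precisely the simultaneous uniformity in $\p$: the error estimates must depend on $\p$ only through the scalar $a_n$, which is why the finite-dimensional diagonal representation of $\mb G_n(gg')$ is essential (a generic Donsker argument would introduce entropy terms depending on $\p$), and the hypothesis $\sup_{\p \in \mf P}\max_j 1/p(j) = o(n)$ is exactly what permits a single sequence $r_n$ to dominate every remainder uniformly over $\mf P$.
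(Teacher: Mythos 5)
Your proposal is correct and follows essentially the same route as the paper's proof: a pointwise second-order Taylor expansion of $\log(1+u)$ at $u=p_\theta/p-1$, with the linear term rewritten as $(\sqrt n\,\tilde\gamma_{\theta;p})'\tilde{\mb V}_{n;p}$, the empirical quadratic term replaced by its population value using the uniform sup-norm bound implied by $\sup_{\p}\max_j 1/p(j)=o(n)$, and a slowly diverging $r_n$ absorbing all remainders. The only difference is cosmetic: the paper normalizes via the generalized score $S_{\theta;p}$ and controls cell-frequency fluctuations with a two-sided Chernoff bound, whereas you use the unnormalized ratio and Chebyshev plus a union bound on the diagonal matrix $\mb G_n(\mb J_p e_x^{\phantom\prime} e_x'\mb J_p)$; with fixed $k$ these are interchangeable.
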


We are not quite done, as the covariance matrix is a rank $k-1$ orthogonal projection matrix.
Let $v_{1,p},\ldots,v_{k-1,p}$ denote an orthonormal basis for $\{ v \in \mb R^k : v'\sqrt p = 0\}$ and define the matrix $V_p$ by $V_p' = [ v_{1,p} \; \cdots \; v_{k-1,p} \; \sqrt p ]$. Notice that $V_p$ is orthogonal (i.e.  $V_p^{\phantom \prime} V_p' = V_p' V_p^{\phantom \prime} = I$) and
\begin{align} \label{e:quad:unif:disc:1}
 V_p \tilde \gamma_{\theta;p} & = \left[ \begin{array}{c}
 v_{1,p}'\tilde \gamma_{\theta;p} \\
 \vdots \\
 v_{k-1,p}'\tilde \gamma_{\theta;p} \\
 0 \end{array} \right] &
 V_p \mb G_n (\mb J_p e_x)  & = \left[ \begin{array}{c}
 v_{1,p}'\mb G_n (\mb J_p e_x) \\
 \vdots \\
 v_{k-1,p}'\mb G_n (\mb J_p e_x) \\
 0 \end{array} \right] \,.
\end{align}
Let $\gamma(\theta) = \gamma(\theta;\p)$ and $\mb V_n = \mb V_n(\p)$ denote the upper $k-1$ entries of  $V_p \tilde \gamma_{\theta;p}$ and $V_p \mb G_n (\mb J_p e_x)$:
\begin{align} \label{e:quad:unif:disc:2}
 \gamma(\theta) & = \left[ \begin{array}{c}
 v_{1,p}'\tilde \gamma_{\theta;p} \\
 \vdots \\
 v_{k-1,p}'\tilde \gamma_{\theta;p} \end{array} \right] &
 \mb V_n & =  \left[ \begin{array}{c}
 v_{1,p}'\mb G_n (\mb J_p e_x) \\
 \vdots \\
 v_{k-1,p}'\mb G_n (\mb J_p e_x) \end{array} \right] \,.
\end{align}
We say that $\mb V_n \overset{\p}{\rightsquigarrow} N(0,I_{k-1})$ uniformly in $\p$ if $\sup_{\p \in \mf P} d_{\pi}(\mb V_n,N(0,I_{k-1})) \to 0$ where $d_{\pi}$ denotes the distance (in the Prokhorov metric) between the sampling distribution of $\mb V_n$ and the $N(0,I_{k-1})$ distribution.

\begin{proposition}\label{p:quad:unif:disc}
Let (\ref{e:p:unif}) hold and $\Theta_{osn}(\p)$ be as described in Lemma \ref{lem:quad:unif:disc}. Then:
\[
 \sup_{\theta \in \Theta_{osn}(\p)} \left| n L_n (p_\theta) - \left( \ell_n - \frac{1}{2} \| \sqrt n \gamma(\theta)\|^2 + (\sqrt n \gamma(\theta))'\mb V_n \right) \right| = o_\p(1)
\]
uniformly in $\p$, where $\mb V_n \overset{\p}{\rightsquigarrow} N(0,I_{k-1})$ uniformly in $\p$.
\end{proposition}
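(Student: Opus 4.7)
My plan is to deduce Proposition \ref{p:quad:unif:disc} from Lemma \ref{lem:quad:unif:disc} by a deterministic orthogonal change of coordinates in $\mb R^k$, followed by a uniform (in $\p$) multivariate CLT for the reduced $(k-1)$-dimensional score $\mb V_n$. The only substantive step is the CLT under drifting probabilities; the rest is algebra.

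The algebraic step rests on two exact identities. First, since $p_\theta$ and $p$ are probability vectors, $(\sqrt p)'\tilde\gamma_{\theta;p} = \sum_{j=1}^k (p_\theta(j)-p(j)) = 0$, so the last coordinate of $V_p\tilde\gamma_{\theta;p}$ is identically zero and $V_p \tilde\gamma_{\theta;p} = (\gamma(\theta)', 0)'$. Second, $(\sqrt p)'\mb J_p = \mf 1_k'$, so $(\sqrt p)'\mb J_p e_x = 1$ and hence $(\sqrt p)'\tilde{\mb V}_{n;p} = \mb G_n(1) = 0$ almost surely, giving $V_p \tilde{\mb V}_{n;p} = (\mb V_n', 0)'$. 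Because $V_p$ is orthogonal, this yields $\|\sqrt n \tilde\gamma_{\theta;p}\|^2 = \|\sqrt n \gamma(\theta)\|^2$ and $(\sqrt n \tilde\gamma_{\theta;p})'\tilde{\mb V}_{n;p} = (\sqrt n \gamma(\theta))'\mb V_n$ pointwise in $\theta$. The quadratic expansion in Lemma \ref{lem:quad:unif:disc} thus transfers term-by-term to the one asserted in the proposition with the same $o_\p(1)$ remainder, uniformly in $\p$.

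For the CLT, the exact covariance of $\mb V_n$ is $I_{k-1}$: applying orthogonality of $V_p$ to $I - \sqrt p \sqrt p'$ gives $V_p(I-\sqrt p \sqrt p')V_p' = I - e_k e_k' = \mr{diag}(1,\ldots,1,0)$. To get uniform convergence in the Prokhorov metric, I would write $\mb V_n = n^{-1/2}\sum_{i=1}^n (\tilde v_p(X_i) - E_{\p}\tilde v_p(X_i))$, with $\tilde v_p(x)$ the first $k-1$ entries of $V_p\mb J_p e_x$, and apply a multivariate Berry--Esseen bound to obtain $d_\pi(\mb V_n,N(0,I_{k-1})) \leq C_k n^{-1/2} E_{\p}[\|\tilde v_p(X_i)\|^3]$. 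Since orthogonality of $V_p$ gives $\|\tilde v_p(x)\| \leq \|\mb J_p e_x\| = p(x)^{-1/2}$, the third moment is
\[
E_{\p}[\|\tilde v_p(X_i)\|^3] \leq \sum_{j=1}^k p(j)\cdot p(j)^{-3/2} = \sum_{j=1}^k p(j)^{-1/2} \leq k\bigl(\max_j 1/p(j)\bigr)^{1/2},
\]
which is $o(n^{1/2})$ uniformly over $\mf P$ by (\ref{e:p:unif}); hence $d_\pi(\mb V_n,N(0,I_{k-1})) = o(1)$ uniformly in $\p$.

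The main obstacle is selecting a multivariate Berry--Esseen (or equivalent Prokhorov-metric) bound whose constant $C_k$ is independent of $\p$; a Bentkus-type inequality for centered sums in $\mb R^{k-1}$ suffices. If a direct Prokhorov bound is inconvenient, an equivalent route is to combine uniform tightness (immediate from the unit covariance) with uniform convergence of characteristic functions on compact sets via a truncation argument that uses $\|\mb J_p e_x\|^2 = 1/p(x)$ together with the moment calculation above. No further technicality is needed, because the condition (\ref{e:p:unif}) was tailored precisely so that the worst third moment is $o(n^{1/2})$, and the quadratic-expansion part is a purely deterministic manipulation of what Lemma \ref{lem:quad:unif:disc} already provides.
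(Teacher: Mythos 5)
Your proposal is correct, and its skeleton is the same as the paper's: both obtain the expansion from Lemma \ref{lem:quad:unif:disc} purely algebraically, using orthogonality of $V_p$ so that $\|\tilde\gamma_{\theta;p}\|^2=\|\gamma(\theta)\|^2$ and $\tilde\gamma_{\theta;p}'\tilde{\mb V}_{n;p}=\gamma(\theta)'\mb V_n$ (you additionally spell out why the $k$th coordinates vanish, i.e.\ $(\sqrt p)'\tilde\gamma_{\theta;p}=0$ and $(\sqrt p)'\mb J_p e_x=1$ so $\mb G_n$ of it is zero, which the paper leaves implicit in (\ref{e:quad:unif:disc:1})). Where you differ is the stochastic step: the paper simply cites the uniform-in-$\p$ Lindeberg CLT (Proposition A.5.2 of van der Vaart and Wellner) after noting $\sup_{\p}|v_{j,p}'\mb J_p e_x|\leq 1/\min_j\sqrt{p(j)}=o(n^{1/2})$, whereas you use a quantitative Bentkus/Berry--Esseen bound with the third-moment calculation $E_\p\|\tilde v_p(X_i)\|^3\leq k(\max_j 1/p(j))^{1/2}=o(n^{1/2})$, which buys an explicit rate at the price of two small bookkeeping items: the bound should be stated for the centered summands $\tilde v_p(X_i)-E_\p\tilde v_p(X_i)$ (same order, since $\|E_\p\tilde v_p(X_i)\|\leq\sqrt k$), and Berry--Esseen-type inequalities control discrepancy over convex sets rather than the Prokhorov metric used in the paper's definition of uniform convergence, so one needs a (standard, dimension-dependent) conversion to $d_\pi$ for the fixed $N(0,I_{k-1})$ limit --- or your fallback route via uniform tightness plus uniform convergence of characteristic functions, which is essentially the argument behind the vdVW result the paper invokes. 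Neither caveat is a gap; both are routine, and your condition usage matches the paper's: (\ref{e:p:unif}) enters only through the $o(n^{1/2})$ envelope on the rotated scores.
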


We may generalize Proposition \ref{p:quad:unif:disc} to allow for the support $k = k(n) \to \infty$ as $n \to \infty$ under a very mild condition on the growth rate of $k$. This result would be very useful in extending our procedures to semi/nonparametric models via discrete approximations of growing dimension. As before, let $\Theta_{osn}(\p) = \{ \theta : p_\theta > 0, \chi^2(p_\theta;p) \leq r_n^2 n^{-1}\}$ where $(r_n)_{n \in \mb N}$ is a positive sequence to be defined below.

\begin{proposition}\label{p:quad:unif:disc:sieve}
Let $\sup_{\p \in \mf P} \max_{1 \leq j \leq k} (1/p(j)) = o(n/\log k)$. Then: there exists a positive sequence $(r_n)_{n \in \mb N}$ with $r_n \to \infty$ as $n \to \infty$ such that:
\[
 \sup_{\theta \in \Theta_{osn}(\p)} \left| n L_n (p_\theta) - \left( \ell_n - \frac{1}{2} \| \sqrt n \gamma(\theta)\|^2 + (\sqrt n \gamma(\theta))'\mb V_n \right) \right| = o_\p(1)
\]
uniformly in $\p$.
\end{proposition}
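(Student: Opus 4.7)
The plan is to follow the chain of reasoning used for Lemma \ref{lem:quad:unif:disc} and Proposition \ref{p:quad:unif:disc}, but replace the pointwise tail estimates by uniform ones that pay the standard $\log k$ cost of a union bound over the $k = k(n)$ cells. The starting point is the identity $n L_n(p_\theta) - n L_n(p) = \sum_{j=1}^k n \bar p_n(j) \log(1 + u_j(\theta))$ with $u_j(\theta) := (p_\theta(j) - p(j))/p(j)$, and the three-term expansion $\log(1+u) = u - u^2/2 + R(u)$ with $|R(u)| \leq C|u|^3$ on $|u| \leq 1/2$. Using $\sum_j p_\theta(j) = \sum_j p(j) = 1$ to kill the pure linear-in-$p$ term, this rearranges into $(\tilde{\mb V}_{n;p})'\sqrt n \tilde \gamma_{\theta;p} - \tfrac{1}{2}\|\sqrt n \tilde \gamma_{\theta;p}\|^2$ plus two error terms: a Taylor remainder $E_1(\theta) := n\sum_j \bar p_n(j) R(u_j(\theta))$ and an empirical/population gap $E_2(\theta) := \tfrac{1}{2}\sum_j n (\bar p_n(j) - p(j)) u_j(\theta)^2$. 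The passage to the stated expression in terms of $\gamma(\theta)$ and $\mb V_n$ is then identical to the orthogonal projection step in (\ref{e:quad:unif:disc:1})--(\ref{e:quad:unif:disc:2}).

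The core estimate is the following uniform concentration bound: under $\sup_{\p \in \mf P} \max_j (1/p(j)) = o(n/\log k)$, Bernstein's inequality combined with a union bound across $j = 1, \ldots, k$ yields
\[
\p\!\Big( \max_j \frac{|\bar p_n(j) - p(j)|}{p(j)} \geq C \sqrt{\tfrac{\log k}{n \min_j p(j)}} + \tfrac{C\log k}{n\min_j p(j)}\Big) \to 0
\]
uniformly in $\p \in \mf P$, so $\max_j |\bar p_n(j)/p(j) - 1| = o_\p(1)$ uniformly. On $\Theta_{osn}(\p)$ there is also the deterministic bound $\max_j u_j(\theta)^2 \leq \chi^2(p_\theta;p)/\min_j p(j) \leq r_n^2/(n\min_j p(j))$, so $\max_j |u_j(\theta)| \leq r_n/\sqrt{n\min_j p(j)}$. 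Choosing $r_n \to \infty$ slowly enough that both $r_n^2 \sqrt{\log k/(n \min_j p(j))}$ and $r_n^3/\sqrt{n \min_j p(j)}$ tend to $0$ uniformly in $\p$, which is possible precisely because of the assumed growth rate, one obtains
\[
|E_1(\theta)| \lesssim \max_j |u_j(\theta)| \cdot (1+o_\p(1)) \cdot n\chi^2(p_\theta;p) \leq o(1) \cdot r_n^2\cdot (1+o_\p(1))
\]
after replacing $\bar p_n(j)$ by $p(j)$ inside $E_1$ using the uniform concentration above, and
\[
|E_2(\theta)| \leq \max_j \tfrac{|\bar p_n(j)-p(j)|}{p(j)} \cdot n\chi^2(p_\theta;p) \leq o_\p(1)\cdot r_n^2,
\]
both $o_\p(1)$ uniformly over $\theta \in \Theta_{osn}(\p)$ and uniformly in $\p$.

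The main obstacle is the simultaneous balancing of the rate $r_n \to \infty$ against the $\log k$ cost of the union bound controlling $\max_j |\bar p_n(j)/p(j) - 1|$, and against the supremum bound $\max_j |u_j(\theta)| \leq r_n/\sqrt{n\min_j p(j)}$ controlling the Taylor remainder. The hypothesis $\min_j p(j) \gg \log k / n$ is exactly what makes such a choice of $r_n$ feasible uniformly in $\p \in \mf P$, so the two error terms can be absorbed into $o_\p(1)$ and the expansion holds over the chi-square neighborhoods $\Theta_{osn}(\p)$ of radius $r_n/\sqrt n$. Once this is in hand, the rest of the proof is cosmetic: apply the $V_p$ projection to separate the $\sqrt p$ direction (on which $\tilde \gamma_{\theta;p}$ vanishes) from its orthogonal complement, giving the displayed form with $\gamma(\theta)$ and $\mb V_n$ as in Proposition \ref{p:quad:unif:disc}.
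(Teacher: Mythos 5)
Your proposal is correct and follows essentially the same route as the paper: a Taylor expansion of the multinomial log-likelihood around $p$, uniform control of the cell-frequency ratios $\max_j|\bar p_n(j)/p(j)-1|$ via a union bound whose $\log k$ cost is exactly absorbed by the hypothesis $\max_j 1/p(j)=o(n/\log k)$, a slowly diverging $r_n$ to kill the cubic remainder and the empirical--population gap in the quadratic term, and finally the $V_p$ rotation to pass from $\tilde\gamma_{\theta;p},\tilde{\mb V}_{n;p}$ to $\gamma(\theta),\mb V_n$. The only differences are cosmetic (Bernstein in place of the paper's two-sided Chernoff bound, and working directly with the $u_j$'s rather than the generalized score $S_{\theta;p}$), so the argument matches the paper's, which simply notes that the strengthened condition keeps display (\ref{e:chernoff:unif}) valid when $k=k(n)\to\infty$ and reuses the proof of Proposition \ref{p:quad:unif:disc}.
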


We now present two lemmas which are helpful in verifying the other conditions of Assumptions \ref{a:quad:unif} and \ref{a:qlr:unif}, respectively. Often, models may be parametrized such that $\{p_\theta : \theta \in \Theta, p_\theta > 0\} = \mr{int}(\Delta^{k-1})$ where $\Delta^{k-1}$ denotes the unit simplex in $\mb R^k$. The following  result shows that the sets $\{\sqrt n \gamma(\theta) : \theta \in \Theta_{osn}(\p)\}$ each cover a ball of radius $\rho_n$ (not depending on $\p$) with $\rho_n \to \infty$.

\begin{lemma}\label{lem:disc:cov}
Let (\ref{e:p:unif}) hold, $\{p_\theta : \theta \in \Theta, p_\theta > 0\} = \mr{int}(\Delta^{k-1})$ and $\Theta_{osn}(\p)$ be as described in Lemma \ref{lem:quad:unif:disc}. Then: for each $\p \in \mf P$, $\{\sqrt n \gamma(\theta) : \theta \in \Theta_{osn}(\p)\}$ covers a ball of radius $\rho_n \to \infty$ (with $\rho_n$ not depending on $\p$) as $n \to \infty$.
\end{lemma}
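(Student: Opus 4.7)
The key observation is that, by the orthogonality of $V_p$ and the identity $\sqrt p\,{}' \tilde\gamma_{\theta;p} = \sum_j(p_\theta(j)-p(j))=0$, the last coordinate of $V_p\tilde\gamma_{\theta;p}$ vanishes and consequently
\[
 \|\gamma(\theta)\|^2 \;=\; \|V_p\tilde\gamma_{\theta;p}\|^2 \;=\; \|\tilde\gamma_{\theta;p}\|^2 \;=\; \chi^2(p_\theta;p)\,.
\]
Hence $\Theta_{osn}(\p) = \{\theta : p_\theta > 0,\ \|\gamma(\theta)\| \leq r_n/\sqrt n\}$. The plan is to invert this map explicitly and show that every sufficiently small $v\in\mb R^{k-1}$ is hit by some $\sqrt n\gamma(\theta)$ with $\theta\in\Theta_{osn}(\p)$.

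Given $v\in \mb R^{k-1}$, define the candidate reduced-form perturbation $\tilde\gamma = V_p' (v/\sqrt n, 0)' \in \mb R^k$, which satisfies $\|\tilde\gamma\| = \|v\|/\sqrt n$ and $\sqrt p\,{}'\tilde\gamma = 0$. Then set
\[
 q(j) \;=\; p(j) + \sqrt{p(j)}\,\tilde\gamma_j\,, \qquad j=1,\ldots,k.
\]
Since $\sqrt p\,{}'\tilde\gamma =0$, we have $\sum_j q(j) = 1$, and by the coordinate bound $|\tilde\gamma_j|\le\|\tilde\gamma\|$, positivity $q(j)>0$ holds whenever $\|\tilde\gamma\| < \min_j\sqrt{p(j)}$. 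Moreover, by construction $\chi^2(q;p) = \|\tilde\gamma\|^2 = \|v\|^2/n$ and the first $k-1$ entries of $V_p \tilde\gamma_{q;p}$ recover $v/\sqrt n$, i.e.\ $\sqrt n\gamma(\theta) = v$ for any $\theta$ with $p_\theta = q$ (such a $\theta$ exists because $\{p_\theta:p_\theta>0\}=\mr{int}(\Delta^{k-1})$).

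It remains to choose a uniform radius. By assumption (\ref{e:p:unif}), there exists a sequence $a_n\to\infty$ with $a_n=o(n)$ such that $\min_j p(j) \geq 1/a_n$ for every $\p\in\mf P$. Setting
\[
 \rho_n \;=\; \tfrac{1}{2}\min\!\left(r_n,\ \sqrt{n/a_n}\right)\,,
\]
we have $\rho_n\to\infty$ and $\rho_n$ does not depend on $\p$. For any $\|v\|\le\rho_n$, the construction above yields $\|\tilde\gamma\|=\|v\|/\sqrt n \leq 1/(2\sqrt{a_n}) < \min_j \sqrt{p(j)}$ (positivity) and $\|\gamma(\theta)\| = \|v\|/\sqrt n \le r_n/(2\sqrt n) \le r_n/\sqrt n$ (so $\theta\in\Theta_{osn}(\p)$). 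This shows the ball of radius $\rho_n$ centered at the origin lies in $\{\sqrt n\gamma(\theta):\theta\in\Theta_{osn}(\p)\}$.

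The only real substance is the explicit inversion via $V_p$ and the bookkeeping that simultaneously respects the simplex constraint and the $\chi^2$-neighborhood; the uniformity then drops out of (\ref{e:p:unif}) directly.
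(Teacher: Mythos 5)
Your proof is correct and takes essentially the same route as the paper's: both invert the reparameterization through $q(j)=p(j)+\sqrt{p(j)}\,[V_p'(\gamma',0)']_j$ and use (\ref{e:p:unif}) to show the simplex-positivity constraint only binds at $\|\gamma\|$ of larger order than $n^{-1/2}$, so the rescaled image covers a ball of diverging radius. The only difference is presentational: you build the ball explicitly with $\rho_n=\tfrac{1}{2}\min\bigl(r_n,\sqrt{n/a_n}\bigr)$, while the paper phrases the same estimate as the map $p_\theta\mapsto\gamma(\theta)$ being a homeomorphism whose inscribed radius $\epsilon(\p)$ satisfies $\inf_{\p\in\mf P}\sqrt n\,\epsilon(\p)\to\infty$.
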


For the next result, let $\Theta_{osn}'(\p) = \{ \theta : p_\theta > 0, \chi^2(p_\theta;p) \leq (r_n')^2 n^{-1}\}$ where $(r_n')_{n \in \mb N}$ is a positive sequence to be defined below.

\begin{lemma}\label{lem:profile:disc}
Let (\ref{e:p:unif}) hold. Then: there exists a positive sequence $(r_n')_{n \in \mb N}$ with $r_n' \to \infty$ as $n \to \infty$ such that:
\[
 \sup_{\theta \in \Theta_{osn}'(\p)} \sup_{\mu \in M(\theta)} \left| \sup_{\eta \in H_\mu} n L_n(p_{\mu,\eta}) - \sup_{\eta \in H_\mu : (\mu,\eta) \in \Theta_{osn}'(\p)} nL_n(p_{\mu,\eta}) \right| = o_\p(1)
\]
uniformly in $\p$.
\end{lemma}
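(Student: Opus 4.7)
The plan is to show that the unrestricted profile maximizer over $\eta \in H_\mu$ lies in $\Theta_{osn}'(\p)$ with probability tending to one, uniformly over $\theta \in \Theta_{osn}'(\p)$, $\mu \in M(\theta)$, and $\p \in \mf P$. On that event the two suprema coincide and the difference is identically zero, yielding the conclusion. I would choose $r_n'$ with $r_n' \to \infty$ slowly enough that $r_n' = o(r_n)$ (where $r_n$ comes from Proposition \ref{p:quad:unif:disc}) and that $\|\mb V_n\| = o_\p(r_n')$ uniformly in $\p$, which is available because $\mb V_n \overset{\p}{\rightsquigarrow} N(0,I_{k-1})$ uniformly.

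First, for any $\theta \in \Theta_{osn}'(\p)$ and $\mu \in M(\theta)$, the definition of $M(\theta)$ produces an $\eta^* \in H_\mu$ with $p_{\mu,\eta^*}=p_\theta$, so $\chi^2(p_{\mu,\eta^*};p)=\chi^2(p_\theta;p)\le (r_n')^2/n$ and $(\mu,\eta^*)\in\Theta_{osn}'(\p)$. This yields the lower bound
\[
 \sup_{\eta\in H_\mu:(\mu,\eta)\in\Theta_{osn}'(\p)} nL_n(p_{\mu,\eta}) \;\ge\; nL_n(p_\theta),
\]
and in particular any unrestricted near-maximizer $\hat\eta(\mu)$ satisfies $nL_n(p_{\mu,\hat\eta(\mu)}) \ge nL_n(p_\theta)$. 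Next, assuming $(\mu,\hat\eta(\mu))\in\Theta_{osn}(\p)$, I would apply the uniform quadratic expansion of Proposition \ref{p:quad:unif:disc} to both sides of that inequality to obtain
\[
 \|\sqrt n\gamma(\mu,\hat\eta(\mu))-\mb V_n\|^2 \;\le\; \|\sqrt n\gamma(\theta)-\mb V_n\|^2 + o_\p(1) \;\le\; (r_n'+\|\mb V_n\|)^2 + o_\p(1),
\]
and then a triangle-inequality step using the decomposition of $\gamma(\mu,\cdot)$ as an affine slice shifted by $\sqrt n\gamma(\theta)$ (with $\|\sqrt n\gamma(\theta)\|\le r_n'$) pins $\|\sqrt n\gamma(\mu,\hat\eta(\mu))\|$ inside $B_{r_n'}$ with probability approaching one, exploiting $\|\mb V_n\|=o_\p(r_n')$.

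The main obstacle is the prior step of establishing that $(\mu,\hat\eta(\mu))\in\Theta_{osn}(\p)$, i.e., that the unrestricted profile maximizer cannot escape to values with very large chi-square distance from $p$. I would address this through a global upper bound on the likelihood: combining $nL_n(p_{\mu,\eta}) \le nL_n(\hat p_n) = \ell_n + O_\p(1)$ (for $k$ fixed, from standard multinomial asymptotics) with the Pinsker-type inequality $D_{KL}(\hat p_n\|p_{\mu,\eta}) \ge \tfrac{1}{2}\|\hat p_n - p_{\mu,\eta}\|_1^2$ and the control $\|\hat p_n - p\|_1 = O_\p(n^{-1/2})$, together with the lower bound $\min_j p(j) \ge 1/c_n = 1/o(n)$ from condition (\ref{e:p:unif}) to convert an $L^1$ or $L^2$ bound into a chi-square bound, any $(\mu,\eta)$ with $\chi^2(p_{\mu,\eta};p) > r_n^2/n$ is ruled out as a near-maximizer as soon as $r_n\to\infty$ sufficiently slowly. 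The delicate piece is making every constant explicit in $\p$ so that the preceding displays hold uniformly; this is where assumption (\ref{e:p:unif}) is essential, since it bounds $1/\min_j p(j) = o(n)$ uniformly in $\p\in\mf P$ and thereby transfers $L^1$ or $L^2$ tightness of $\hat p_n-p$ into the chi-square control of the profile maximizer.
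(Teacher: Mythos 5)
Your strategy is genuinely different from the paper's, and it has two concrete gaps. First, the global localization step does not deliver the required radius. From $nL_n(p_{\mu,\hat\eta(\mu)}) \ge nL_n(p_\theta)$ and $nL_n(\hat p_n) = \ell_n + O_\p(1)$ you can indeed conclude $D_{KL}(\hat p_n\|p_{\mu,\hat\eta(\mu)}) = O_\p((r_n')^2/n)$ and hence, via Pinsker, $\|p - p_{\mu,\hat\eta(\mu)}\|_1 = O_\p(r_n'/\sqrt n)$. But the conversion $\chi^2(p_{\mu,\hat\eta(\mu)};p) \le \|p-p_{\mu,\hat\eta(\mu)}\|_1^2/\min_j p(j)$ with $1/\min_j p(j) = o(n)$ from (\ref{e:p:unif}) only yields $\chi^2(p_{\mu,\hat\eta(\mu)};p) = o_\p((r_n')^2)$, whereas membership in $\Theta_{osn}'(\p)$ requires $\chi^2 \le (r_n')^2/n$: you are short by (nearly) a factor of $n$. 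The regime (\ref{e:p:unif}) is designed to cover (cells as small as order $(\log n)/n$) is exactly where this conversion is fatally lossy; your argument closes only if $\min_j p(j)$ is bounded away from zero uniformly over $\mf P$, which defeats the purpose of the uniformity appendix. Second, even the local pinning step cannot conclude containment: from $\|\sqrt n\gamma(\mu,\hat\eta(\mu)) - \mb V_n\| \le \|\sqrt n \gamma(\theta) - \mb V_n\| + o_\p(1)$ the triangle inequality gives only $\|\sqrt n\gamma(\mu,\hat\eta(\mu))\| \le r_n' + 2\|\mb V_n\| + o_\p(1)$, and invoking $\|\mb V_n\| = o_\p(r_n')$ turns this into $r_n'(1+o_\p(1))$, not $r_n'$. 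When $\theta$ sits near the boundary of $\Theta_{osn}'(\p)$ and $\mu$ near the edge of $M(\theta)$ — configurations included in the suprema — the slice $\{\sqrt n\gamma(\mu,\eta):\eta\in H_\mu\}$ may merely graze $B_{r_n'}$, and the unrestricted profile maximizer then typically lies a distance $O_\p(1)$ outside the ball. So the event on which you want the two suprema to coincide exactly does not hold with probability approaching one uniformly; your route proves a statement stronger than the lemma, and that stronger statement fails.

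The paper's proof never locates the maximizer. Using the uniform Chernoff bound (\ref{e:chernoff:unif}) it replaces the empirical criterion differences by population Kullback--Leibler differences up to a uniform $(1+o_\p(1))$ factor, shows via Hellinger-type inequalities that small KL neighborhoods are contained in small $\chi^2$ neighborhoods so that the uniform local equivalence $D_{KL}(p\|\cdot) \approx \tfrac12\chi^2(\cdot\,;p)$ applies, and then exploits a self-referential feature of the limiting problem: the constraint defining $\Theta_{osn}'(\p)$ is a sublevel set of the very function $\chi^2(\cdot\,;p)$ being minimized, and the feasible point $p_{\mu,\eta^*}=p_\theta$ already satisfies it, so the restricted and unrestricted profile infima of $\chi^2(\cdot\,;p)$ coincide exactly; the freedom to let $r_n'$ diverge arbitrarily slowly is what absorbs the $(1+o_\p(1))$ conversion errors. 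To repair your argument you would need to replace the containment-of-the-argmax claim by a value comparison of this kind, which is essentially the paper's route.
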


\section{Verification of main conditions for uniformity in examples}\label{s:uniform-ex}

\subsection{Example 1: uniform validity for missing data}

Here we apply Proposition \ref{p:quad:unif:disc} to establish uniform validity of our procedures. To make the missing data example fit the preceding notation, let $p_\theta = (\tilde \gamma_{11}(\theta),\tilde \gamma_{00}(\theta),1-\tilde\gamma_{00}(\theta)-\tilde\gamma_{11}(\theta))'$ and let $p = (\tilde \gamma_{11},\tilde \gamma_{00},1-\tilde\gamma_{00}-\tilde\gamma_{11})'$ denote the true probabilities under $\p$. The only requirement on $\mf P$ is that (\ref{e:p:unif}) holds. Therefore, Proposition \ref{p:quad:unif:disc} holds uniformly over a set of DGPs under which the probability of missing data can drift to zero at rate up to $n^{-1}$. As $\{p_\theta : \theta \in \Theta, p_\theta > 0\} = \mr{int}(\Delta^2)$, Lemma \ref{lem:disc:cov} implies that $\{\sqrt n \gamma(\theta) : \theta \in \Theta_{osn}(\p)\}$ covers a ball of radius $\rho_n$ (independently of $\p$) with $\rho_n \to \infty$ as $n \to \infty$. This verifies parts (i)--(iv) of Assumption \ref{a:quad:unif}.

By concavity, the infimum in the definition of the profile likelihood $PL_n(M(\theta))$ is attained at either the lower or upper bound of $M_I(\theta) = [\tilde \gamma_{11}(\theta),\tilde \gamma_{11}(\theta) + \tilde \gamma_{00}(\theta)]$. Moreover, at both $\mu = \tilde \gamma_{11}(\theta)$ and $\mu = \tilde \gamma_{11}(\theta) + \tilde \gamma_{00}(\theta)$, the profile likelihood is
\[
\sup_{ \substack{ 0 \leq g_{11} \leq \mu \\
		\mu  \leq g_{11}+g_{00} \leq 1 } } \!\!\!\!\! \Big( n \p_n \ind\{ yd = 1\} \log g_{11} + n \p_n \ind\{ 1-d = 1\} \log g_{00} + n \p_n \ind\{ d-yd = 1\} \log (1-g_{11}-g_{00}) \Big) \,.
\]
The constraint $g_{11} \leq \mu$ will be the binding constraint at the lower bound and the constraint $\mu  \leq g_{11}+g_{00} $ will be the binding constraint at the upper bound (wpa1, uniformly in $\p$). These constraints are equivalent to $a_1'(\gamma - \gamma(\theta)) \leq 0$ and $a_2'(\gamma - \gamma(\theta)) \leq 0$ for some $a_1 = a_1(\p) \in \mb R^2$ and $a_2 = a_2(\p) \in \mb R^2$. It now follows from Proposition \ref{p:quad:unif:disc} and Lemmas \ref{lem:disc:cov} and \ref{lem:profile:disc} that
\[
 \left| n PL_n(M_I) - \min_{j \in \{1,2\}} \sup_{ \gamma : a_j'\gamma  \leq 0 } \left( \ell_n - \frac{1}{2} \| \sqrt n \gamma\|^2 + (\sqrt n \gamma)'\mb V_n \right) \right| = o_\p(1)
\]
and
\[
 \sup_{\theta \in \Theta_{osn}'(\p)} \left| n PL_n(M(\theta)) - \min_{j \in \{1,2\}} \sup_{  \gamma : a_j'(\gamma - \gamma(\theta)) \leq 0 } \left( \ell_n - \frac{1}{2} \| \sqrt n \gamma\|^2 + (\sqrt n \gamma)'\mb V_n \right) \right| = o_\p(1)
\]
uniformly in $\p$. Let $T_j$ denote the closed convex cone in $\mb R^2$ defined by the inequality $a_j'\gamma \leq 0$ for $j = 1,2$. We may write the above as
\begin{align*}
 \left| n PL_n(M_I) - \left( \ell_n + \frac{1}{2} \| \mb V_n\|^2 - \max_{j \in \{1,2\}}  \inf_{t \in T_j} \| \mb V_n - t\|^2 \right) \right| & = o_\p(1) \\
 \sup_{\theta \in \Theta_{osn}'(\p)} \left| n PL_n(M(\theta)) - \left( \ell_n + \frac{1}{2} \| \mb V_n\|^2 - \max_{j \in \{1,2\}}  \inf_{t \in T_j} \| (\mb V_n - \sqrt n \gamma(\theta)) - t\|^2  \right) \right| & = o_\p(1)
\end{align*}
uniformly in $\p$. This verifies the uniform expansion of the profile criterion.

\subsection{Example 3: uniform validity of Procedure 2 vs the bootstrap}\label{s:drift}

We return to Example 3 considered in Subsection \ref{s:miq} and show that our MC CSs (based on the posterior distribution of the profile QLR) are uniformly valid under very mild conditions while bootstrap-based CSs (based on the bootstrap distribution of the profile QLR) can undercover along certain sequences of DGPs. This reinforces the fact that our MC CSs and bootstrap-based CSs have different asymptotic properties.

Recall that $X_1,\ldots,X_n$ are i.i.d. with unknown mean $\mu^* \in \mb R_+$ and $\mu \in \mb R_+$ is identified by the moment inequality $\mb E[\mu - X_i] \leq 0$.  The identified set for $\mu$ is $M_I = [0,\mu^*]$. We consider coverage of the CS for $M_I = [0,\mu^*]$
We introduce a slackness parameter $\eta \in \mb R_+$ to write this model as a moment equality model $\mb E[ \mu + \eta - X_i] = 0$. The parameter space for $\theta = (\mu,\eta)$ is
$\Theta = \mb R_+^2$.
The GMM objective function and profile QLR are
\begin{align}
 L_n(\mu,\eta) & = -\frac{1}{2} (\mu + \eta - \bar X_n)^2 \notag \\
 PQ_n(M_I) & = (\mb V_n \wedge 0)^2 - ((\mb V_n + \sqrt n \mu^* ) \wedge 0)^2 \label{e:pq:appunif} \\
 PQ_n(M(\theta)) & = ((\mb V_n - \sqrt n \gamma(\theta)) \wedge 0)^2 - ((\mb V_n + \sqrt n \mu^* ) \wedge 0)^2 \notag
\end{align}
where $\sqrt n \gamma(\theta) = \sqrt n( \mu + \eta - \mu^*) \in [-\sqrt n \mu^*,\infty)$.

\subsubsection{Uniform validity of Procedures 2 and 3}

Let $\mf P$ be the family of distributions under which the $X_i$ are i.i.d. with mean $\mu^* = \mu^*(\p) \in \mb R_+$ and unit variance and for which
\begin{equation} \label{e:unif-clt}
 \lim_{n \to \infty} \sup_{\p \in \mf P} \sup_{z \in \mb R} |\p( \mb V_n \leq z) - \Phi(z)| = 0
\end{equation}
holds, where $\mb V_n = \mb V_n(\p) = \sqrt n (\bar X_n - \mu^*)$. We first consider uniform coverage of our MC CSs $\wh M_\alpha$ for the identified set $M_I = M_I(\p) = [0,\mu^*(\p)]$.

 To focus solely on the essential ideas, assume the prior on $\theta$ induces a uniform prior on $\gamma$ (the posterior is still proper); this could be relaxed at the cost of more cumbersome notation without changing the results that follow. Letting $z \geq 0$, $\kappa = \sqrt n \gamma$ and $v_n = v_n(\p) = \mb V_n + \sqrt n \mu^*$, we have:
\[
 \Pi_n (\{ \theta : PQ_n(M(\theta)) \leq z\} | \mf X_n)
 = \frac{\int_{-\sqrt n\mu^*}^{\infty} \ind\{ ((\mb V_n - \kappa) \wedge 0)^2 - (v_n \wedge 0)^2 \leq z \}e^{-\frac{1}{2}(\mb V_n -\kappa) ^2} \mr d\kappa }{\int_{-\sqrt n\mu^*}^{\infty} e^{-\frac{1}{2}(\mb V_n -\kappa) ^2} \mr d\kappa } \,.
\]
A change of variables with $x = \mb V_n - \kappa$ yields:
\begin{align*}
\Pi_n (\{ \theta : PQ_n(M(\theta)) \leq z\} | \mf X_n)
 & = \frac{\int_{-\infty}^{v_n} \ind\{ ( x \wedge 0)^2 \leq z + (v_n \wedge 0)^2 \}e^{-\frac{1}{2}x^2} \mr d x }{\int_{-\infty}^{v_n} e^{-\frac{1}{2}x^2} \mr d x } \\
 & =  \p_{ Z | \mf X_n} (  - \sqrt{z + (v_n \wedge 0)^2} \leq Z | Z \leq v_n) = G(v_n;z) \,.
\end{align*}
As we have an explicit form for the posterior distribution of the profile QLR, we can compute the posterior critical value directly rather than resorting to MC sampling. Therefore, Assumption \ref{a:mcmc:profile:unif} is not required here (since we can trivially set $\xi_{n,\alpha}^{post,p} = \xi_{n,\alpha}^{mc,p}$). If MC sampling were to be used, we would require that Assumption \ref{a:mcmc:profile:unif}  holds.

For $v_n \geq 0$, we have
\[
 G(v_n;z)  = \p_{ Z | \mf X_n} (  - \sqrt{z} \leq Z | Z \leq v_n)
\]
and so the posterior $\alpha$-critical value $\xi_{n,\alpha}^{post,p} = \Phi^{-1}((1-\alpha) \Phi(v_n))^2$. Therefore,
\begin{align}
\p( PQ_n(M_I) \leq \xi_{n,\alpha}^{post,p} | v_n \geq 0) & = \p( (\mb V_n \wedge 0)^2 \leq \Phi^{-1}((1-\alpha) \Phi(v_n))^2 | v_n \geq 0) \notag \\
& = \p( \Phi^{-1}((1-\alpha) \Phi(v_n)) \leq \mb V_n | v_n \geq 0) \label{e:ex3:unif1} \,.
\end{align}
Now suppose that $v_n < 0$. Here we have
\[
 G(v_n;z) = \p_{ Z | \mf X_n} (  - \sqrt{z + v_n^2} \leq Z | Z \leq v_n) = \frac{\Phi(v_n)-\Phi( - \sqrt{z + v_n^2})}{\Phi(v_n)}
\]
from which it follows that $\xi_{n,\alpha}^{post,p} = \Phi^{-1}((1-\alpha) \Phi(v_n))^2 - v_n^2$ and hence:
\begin{align}
 \p( PQ_n(M_I) \leq \xi_{n,\alpha}^{post,p} | v_n < 0) & = \p( (\mb V_n \wedge 0)^2  \leq \Phi^{-1}((1-\alpha) \Phi(v_n))^2 | v_n < 0) \notag \\
 & = \p(  \Phi^{-1}((1-\alpha) \Phi(v_n)) \leq \mb V_n | v_n < 0) \label{e:ex3:unif2} \,.
\end{align}
Combining (\ref{e:ex3:unif1}) and (\ref{e:ex3:unif2}), we obtain:
\[
 \p( PQ_n(M_I) \leq \xi_{n,\alpha}^{post,p})
  = \p((1-\alpha) \Phi(v_n) \leq \Phi(\mb V_n) )
  \geq \p( (1-\alpha) \leq \Phi(\mb V_n))
\]
which, together with (\ref{e:unif-clt}), delivers the uniform coverage result for Procedure 2:
\[
 \liminf_{n \to \infty} \inf_{\p \in \mf P} \p( \mb M_I(\p) \subseteq \wh M_\alpha ) \geq \alpha\,.
\]

For uniform validity of Procedure 3, first note that (\ref{e:pq:appunif}) implies that the inequality
\[
 \p( PQ_n(M_I) \leq \chi^2_{1,\alpha}) \geq \p ( (\mb V_n \wedge 0)^2 \leq \chi^2_{1,\alpha})
\]
holds uniformly in $\p$. It follows by (\ref{e:unif-clt}) that:
\[
 \liminf_{n \to \infty} \inf_{\p \in \mf P} \p( \mb M_I(\p) \subseteq \wh M_\alpha^\chi ) > \alpha\,.
\]

\subsubsection{Lack of uniformity of the bootstrap}

We now show that bootstrap-based CSs for $M_I$ are not uniformly valid when the standard (i.e. nonparametric) bootstrap is used. The bootstrap criterion function $L_n^\star(\mu,\eta)$ is
\[
  L_n^\star(\mu,\eta) = -\frac{1}{2}(\mu + \eta - \bar X_n^\star)^2
\]
where $\bar X_n^\star$ is the bootstrap sample mean. Let $\wh M_I = [0,(\bar X_n \vee 0)]$. Consider a subsequence $(\mr P_n)_{n \in \mb N} \subset \mf P$ with $\mu^*(\mr P_n) = c/\sqrt n$ for some $c > 0$ (chosen below). By similar calculations to Subsection \ref{s:miq}, along this sequence of DGPs,
the bootstrapped profile QLR statistic for $M_I$ is:
\begin{align*}
 PQ_n^\star(M_I) & = 2n L_n^\star(\hat \mu^\star,\hat \eta^\star)  - \inf_{\mu \in \wh M_I} \sup_{\eta \in H_\mu} 2n L_n^\star(\mu , \eta) \\
 & = ((\mb V_n^\star + ((\mb V_n + c) \wedge 0) ) \wedge 0)^2  - ((\mb V_n^\star + \mb V_n + c ) \wedge 0)^2\,.
\end{align*}
Let $\xi_{n,\alpha}^{boot,p}$ denote the $\alpha$-quantile of the distribution of $PQ_n^\star(M_I)$. Consider
\[
 \wh M_\alpha^{boot} = \{ \mu : {\textstyle \sup_{\eta \in H_\mu} Q_n(\mu,\eta)} \leq \xi_{n,\alpha}^{boot,p}\}
\]

We now show that for any $\alpha \in (\frac{1}{2},1)$ we may choose $c>0$ in the definition of $(\mr P_n)_{n \in \mb N}$ such that the asymptotic coverage of $\wh M_\alpha^{boot}$ is strictly less than $\alpha$ along this sequence of DGPs. Since
\[
 PQ_n^\star(M_I) = ( ( \mb V_n^\star \wedge 0 )^2 - ((\mb V_n^\star + \mb V_n + c) \wedge 0)^2 ) \ind \{ \mb V_n + c \geq 0\}
\]
it follows that whenever $\mb V_n + c < 0$ the bootstrap distribution of the profile QLR for $M_I$ is point mass at the origin, and the $\alpha$-quantile of the bootstrap distribution is $\xi_{n,\alpha}^{boot,p} = 0$. However, the QLR statistic for $M_I$ is $PQ_n(M_I) =(\mb V_n \wedge 0)^2 - ((\mb V_n + c ) \wedge 0)^2$. So whenever $\mb V_n + c < 0$ we also have that $PQ_n(M_I) = \mb V_n^2 - (\mb V_n + c)^2 > 0$. Therefore,
\[
 \mr P_n( M_I(\mr P_n) \subseteq \wh M_\alpha^{boot} | \mb V_n + c < 0) = 0\,.
\]
It follows by (\ref{e:unif-clt}) that for any $c$ for which $\Phi(c) < \alpha$, we have:
\[
 \limsup_{n \to \infty} \mr P_n( M_I(\mr P_n) \subseteq \wh M_\alpha^{boot} ) \leq \lim_{n \to \infty} \mr P_n( \mb V_n + c \geq 0) < \alpha\,.
\]

\subsubsection{An alternative recentering}

An alternative is to recenter the criterion function at $(\bar X_n \vee 0)$, that is, one could use instead
\[
 L_n(\mu,\eta) = - \frac{1}{2} (\mu + \eta - (\bar X_n \vee 0))^2
\]
similar to the idea of a sandwich (quasi-)likelihood with $(\bar X_n \vee 0) = \hat \gamma_n$. This maps into the setup described in Appendix \ref{s:uniformity}, where
\[
 nL_n(\theta) = \ell_n - \frac{1}{2} ( \sqrt n \gamma(\theta))^2 + \sqrt n (\gamma(\theta))( \sqrt n (\hat \gamma_n - \tau))
\]
where $\ell_n = - \frac{1}{2}(\sqrt n (\hat \gamma_n - \tau))^2$, $\theta = (\mu,\eta)$ and
\begin{align*}
 \gamma(\theta) & = \mu + \eta - \mu^*  &
 \tau & = \mu^* &
 \hat \gamma_n & = (\bar X_n \vee 0) &
 \sqrt n (\hat \gamma_n - \tau) & = (\mb V_n \wedge -\sqrt n \mu^*)
\end{align*}
where $\mb V_n = \sqrt n (\bar X_n - \mu^*)$, $\gamma(\theta) \in [- \mu^*,\infty)$, and $\mu^* \in \mb R_+$.

Assumption \ref{a:rate:unif} and \ref{a:quad:unif}(i)--(iii) hold with $\Theta_{osn} = \Theta$, $k_n = +\infty$, $ T = \mb R_+$, and $\mf T v = (v \vee 0)$ (none of the models are singular).  We again take a prior on $\theta$ that induces a flat prior on $\gamma$ to concentrate on the essential ideas, verifying Assumption \ref{a:prior:unif}.

For inference on $M_I = [0,\mu^*(\p)]$, observe that
\begin{align*}
 PQ_n(M(\theta)) & = f(\sqrt n (\hat \gamma_n - \tau) - \sqrt n \gamma(\theta)) &
 PQ_n(M_I) & = f(\sqrt n (\hat \gamma_n - \tau))
\end{align*}
where $f(v) = ( v \wedge 0)^2$ for each $\p$, verifying Assumption \ref{a:qlr:unif}(i). Assumption \ref{a:qlr:unif}(ii) also holds for this $f$. Finally, for Assumption \ref{a:qlr:unif}(iii), for any $z,v \geq 0$ we have
\begin{align*}
 \p_Z (f(Z) \leq z | Z  \in  v - T ) = \frac{\Phi(v) - \Phi(-\sqrt z)}{\Phi(v)} \leq 1- \Phi(-\sqrt z) = \p_Z (f(Z) \leq z)\,.
\end{align*}
Theorem \ref{t:profile:unif}, together with (\ref{e:unif-clt}), delivers uniform coverage for Procedure 2.

Similarly, for uniform validity of Procedure 3 we have:
\[
 \p( PQ_n(M_I) \leq \chi^2_{1,\alpha}) \geq \p ( (\mb V_n \wedge 0)^2 \leq \chi^2_{1,\alpha})
\]
which, together with (\ref{e:unif-clt}), delivers uniform coverage for Procedure 3.

Now consider bootstrap-based inference. As before, let $\wh M_I = [0,(\bar X_n \vee 0)]$ and consider a subsequence $(\mr P_n)_{n \in \mb N} \subset \mf P$ with $\mu^*(\mr P_n) = c/\sqrt n$ for some $c > 0$. Under $\mr P_n$, we then have:
\begin{align*}
  L_n^\star(\mu,\eta) & = -\frac{1}{2}(\mu + \eta - (\bar X_n^\star \vee 0) )^2 \\
 PQ_n^\star(M_I) & = ([((\mb V_n^\star + \mb V_n) \vee -c)  -  (\mb V_n \vee -c)] \wedge 0)^2
\end{align*}
 and the true QLR statistic is $PQ_n(M_I) = ((\mb V_n \vee -c)   \wedge 0)^2$.  We again show that for any $\alpha \in (\frac{1}{2},1)$ we may choose $c>0$ in the definition of $(\mr P_n)_{n \in \mb N}$ such that the asymptotic coverage of $\wh M_\alpha^{boot}$ is strictly less than $\alpha$ along this sequence of DGPs. Observe that when $\mb V_n < -c$ we have $PQ_n(M_I) = c^2 > 0$ and $PQ_n^\star(M_I) = 0$. Therefore,
\[
 \mr P_n( M_I(\mr P_n) \subseteq \wh M_\alpha^{boot} | \mb V_n + c < 0) = 0\,.
\]
It follows by (\ref{e:unif-clt}) that for any $c$ for which $\Phi(c) < \alpha$, we again have:
\[
 \limsup_{n \to \infty} \mr P_n( M_I(\mr P_n) \subseteq \wh M_\alpha^{boot} ) \leq \lim_{n \to \infty} \mr P_n( \mb V_n + c \geq 0) < \alpha\,.
\]

\section{Local power}\label{s:lp}

In this appendix we study the behavior of the CSs $\wh \Theta_\alpha$ and $\wh M_\alpha$ under $n^{-1/2}$-local (contiguous) alternatives. We maintain the same setup as in Section \ref{sec-property}. Fix $a \in \mb R^{d^*}$.

\begin{assumption} \label{a:lp}
There exist sequences of distributions $(\mr P_{n,a})_{n \in \mb N}$ such that as $n \to \infty$: \\
(i) $L_n(\hat \theta) = \sup_{\theta \in \Theta_{osn}} L_n(\theta) + o_{\mr P_{n,a}}(n^{-1})$; \\
(ii) $\Pi_n(\Theta_{osn}^c | \mf X_n) = o_{\mr P_{n,a}}(1)$; \\
(iii) There exist sequences of random variables $\ell_n$ and $\mb R^{d^*}$-valued random vectors $\hat \gamma_n$ (both measurable in $\mf X_n$) such that:
	\begin{equation} \label{e:quad:lp}
	\sup_{\theta \in \Theta_{osn}} \left| n L_n(\theta) - \left(\ell_n +\frac{1}{2} \| \sqrt n \hat \gamma_n \|^2 - \frac{1}{2} \|\sqrt n (\hat \gamma_n - \gamma(\theta))\|^2 \right) \right| = o_{\mr P_{n,a}}(1)
	\end{equation}
	with $\sup_{\theta \in \Theta_{osn}} \|\gamma(\theta)\| \to 0$, $\sqrt n \hat \gamma_n = \mb V_n$ where $\mb V_n \overset{\mr P_{n,a}}{\rightsquigarrow} N(a,I_{d^*})$ and $T=\mb R^{d^*}$; \\
(iv)  $\int_\Theta e^{nL_n(\theta)} \,\mr d \Pi(\theta) < \infty $ holds $\mr P_{n,a}$-almost surely; \\
(v) $\Pi_\Gamma$ has a continuous, strictly positive density $\pi_\Gamma$ on $B_\delta \cap \Gamma$ for some $\delta > 0$; \\
(vi) $\xi_{n,\alpha}^{mc} = \xi_{n,\alpha}^{post} + o_{\mr P_{n,a}}(1)$.
\end{assumption}

Assumption \ref{a:lp} is essentially a restatement of Assumptions \ref{a:rate} to \ref{a:mcmc} with a modified quadratic expansion. Notice that with $a = 0$ we obtain $\mr P_{n,a} = \mb P$ and Assumption \ref{a:lp} corresponds to Assumptions \ref{a:rate} to \ref{a:mcmc} with generalized information equality $\Sigma = I_{d^*}$ and $T = \mb R^{d^*}$.

Let  ${\chi^2_{d^*}(a'a)}$ denote the noncentral $\chi^2$ distribution with $d^*$ degrees of freedom and noncentrality parameter $a'a$ and let $F_{\chi^2_{d^*}(a'a)}$ denote its cdf. Let $\chi^2_{d^*,\alpha}$ denote the $\alpha$ quantile of the (standard) $\chi^2_{d^*}$ distribution $F_{\chi^2_{d^*}}$.

\begin{theorem}\label{t:lp}
Let Assumption \ref{a:lp}(i)(iii) hold. Then:
\[
 \sup_{\theta \in \Theta_I} Q_n(\theta) \overset{\mr P_{n,a}}{\rightsquigarrow} \chi^2_{d^*}(a'a);
\]
if further Assumption \ref{a:lp}(ii)(iv)(v) holds, then:
\[
\sup_z \left| \Pi_n \big(\{ \theta : Q_n(\theta) \leq z \}\big|\,\mf X_n\big) - F_{\chi^2_{d^*}}(z) \right| = o_{\mr P_{n,a}}(1);
\]
and if further Assumption \ref{a:lp}(vi) holds, then:
\[
 \lim_{n \to \infty} \mr P_{n,a} (\Theta_I \subseteq \wh \Theta_\alpha) = F_{\chi^2_{d^*}(a'a)}( \chi^2_{d^*,\alpha}) < \alpha~~\mbox{whenever}~~ a \neq 0.
\]
\end{theorem}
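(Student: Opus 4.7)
The plan is to exploit the quadratic expansion in Assumption \ref{a:lp}(iii) twice: once to pin down the sampling distribution of the QLR under the local alternative, and once (essentially re-running the argument of Lemma \ref{l:post}) to pin down the quasi-posterior distribution of the QLR. The fact that $T = \mb R^{d^*}$ makes both limits free of truncation, which in turn leaves us with an elementary comparison between a central and a noncentral chi-square.

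First, for the frequentist statement, I would plug $\theta \in \Theta_I$ into \eqref{e:quad:lp}. Since $\gamma(\theta) = 0$ on $\Theta_I$, this gives $nL_n(\theta) = \ell_n + o_{\mr P_{n,a}}(1)$ uniformly over $\Theta_I$. Combined with Assumption \ref{a:lp}(i) and the fact that (with $T = \mb R^{d^*}$) the maximizer of the RHS of \eqref{e:quad:lp} over the local neighborhood $\Theta_{osn}$ is attained at $\sqrt n \gamma(\theta) = \mb V_n$ (modulo $o_{\mr P_{n,a}}(1)$), this yields $n L_n(\hat\theta) = \ell_n + \tfrac{1}{2}\|\mb V_n\|^2 + o_{\mr P_{n,a}}(1)$, and hence
\begin{equation*}
\sup_{\theta \in \Theta_I} Q_n(\theta) \;=\; \|\mb V_n\|^2 + o_{\mr P_{n,a}}(1).
\end{equation*}
Since $\mb V_n \overset{\mr P_{n,a}}{\rightsquigarrow} N(a,I_{d^*})$, the continuous mapping theorem delivers $\sup_{\theta\in\Theta_I} Q_n(\theta) \overset{\mr P_{n,a}}{\rightsquigarrow} \chi^2_{d^*}(a'a)$.

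Next, for the quasi-posterior, I would repeat the proof of Lemma \ref{l:post} but with \eqref{e:quad:lp} in place of \eqref{e:quad}, noting that the only role played by the mean of $\mb V_n$ in that argument is to center the local Gaussian approximation and therefore does not affect the asymptotic \emph{distribution} of $\|Z\|^2$ conditional on $Z \in T - \sqrt n \hat\gamma_n$. Concretely, reparametrize by $u = \sqrt n \gamma(\theta)$; Assumption \ref{a:lp}(ii)(iv)(v) guarantees that the prior mass on $\Theta_{osn}^c$ is negligible, that the normalizing constant is well-defined, and that $\pi_\Gamma$ is continuous and positive at the origin, so that after the change of variables the posterior density of $u$ is asymptotically proportional to $\exp(-\tfrac{1}{2}\|u - \mb V_n\|^2)$ on compacts. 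Because $T = \mb R^{d^*}$, no truncation appears, and under this posterior $u - \mb V_n \sim N(0,I_{d^*})$ in the large-sample limit, so $Q_n(\theta) = \|\mb V_n - u\|^2 + o_{\mr P_{n,a}}(1)$ is asymptotically $\chi^2_{d^*}$. This gives the uniform convergence of the posterior cdf to $F_{\chi^2_{d^*}}$ and hence $\xi_{n,\alpha}^{post} = \chi^2_{d^*,\alpha} + o_{\mr P_{n,a}}(1)$; combining with Assumption \ref{a:lp}(vi) yields $\xi_{n,\alpha}^{mc} = \chi^2_{d^*,\alpha} + o_{\mr P_{n,a}}(1)$.

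Finally, the coverage statement follows by Slutsky: $\mr P_{n,a}(\Theta_I \subseteq \wh\Theta_\alpha) = \mr P_{n,a}(\sup_{\theta \in \Theta_I} Q_n(\theta) \leq \xi_{n,\alpha}^{mc})$ converges to $F_{\chi^2_{d^*}(a'a)}(\chi^2_{d^*,\alpha})$ by the two displays above and continuity of $F_{\chi^2_{d^*}(a'a)}$. The strict inequality $F_{\chi^2_{d^*}(a'a)}(\chi^2_{d^*,\alpha}) < \alpha$ for $a \neq 0$ is then immediate from the well-known strict stochastic ordering $\chi^2_{d^*}(a'a) \succ \chi^2_{d^*}$ when $a'a > 0$, together with $F_{\chi^2_{d^*}}(\chi^2_{d^*,\alpha}) = \alpha$. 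The main subtlety I anticipate is bookkeeping in the second step: one must argue that the $o_\p(1)$ remainders in Lemma \ref{l:post} remain $o_{\mr P_{n,a}}(1)$, which is why I would prefer to redo the Laplace-type argument directly under $\mr P_{n,a}$ rather than appeal to contiguity; since Assumption \ref{a:lp} bundles all the regularity conditions already stated under $\mr P_{n,a}$, this re-derivation is essentially a mechanical transcription of the proof of Lemma \ref{l:post}.
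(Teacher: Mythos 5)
Your proposal is correct and follows essentially the same route as the paper: the paper also obtains $\sup_{\theta \in \Theta_I} Q_n(\theta) = \|\mb V_n\|^2 + o_{\mr P_{n,a}}(1) \rightsquigarrow \chi^2_{d^*}(a'a)$ via the quadratic expansion (arguing as in Theorem \ref{t:main} and Lemma \ref{l:quad}), re-runs the Laplace-type argument of Lemma \ref{l:post} under $\mr P_{n,a}$ to get the central $\chi^2_{d^*}$ posterior limit, and then combines $\xi_{n,\alpha}^{mc} = \chi^2_{d^*,\alpha} + o_{\mr P_{n,a}}(1)$ with the noncentral-versus-central stochastic ordering for the coverage claim. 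Your remark that one should redo the posterior argument directly under $\mr P_{n,a}$ rather than invoke contiguity is exactly how the paper handles it, since Assumption \ref{a:lp} restates all regularity conditions under the local alternatives.
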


We now present a similar result for $\wh M_\alpha$. To do so, we extend the conditions in Assumption \ref{a:lp}.

\setcounter{assumption}{0}

\begin{assumption}
Let the following also hold under the local alternatives: \\
(vii) 	There exists a measurable  $f : \mb R^{d^*} \to \mb R_{+}$ such that:
	\begin{align*}
	& \sup_{\theta \in \Theta_{osn}} \left|  nP L_n(M(\theta)) - \left( \ell_n + \frac{1}{2} \|\mb V_n\|^2 - \frac{1}{2} f \left( \mb V_n - \sqrt n \gamma(\theta) \right)  \right) \right| = o_{\mr P_{n,a}}(1) \,
	\end{align*} with $\mb V_n$ from Assumption \ref{a:lp}(iii).\\
(vi$\,^\prime\!$) $\xi_{n,\alpha}^{mc,p} = \xi_{n,\alpha}^{post,p} + o_{\mr P_{n,a}}(1)$.
\end{assumption}

Assumption \ref{a:lp}(vii) and (vi$^\prime$) are essentially Assumptions \ref{a:qlr:profile} and \ref{a:mcmc:profile}.

Let $Z \sim N(0,I_{d^*})$ and $\mb P_Z$ denote the distribution of $Z$. Let the distribution of $f(Z)$ be continuous at its $\alpha$-quantile, which we denote by $z_\alpha$.
\begin{theorem}\label{t:lp:profile}
Let Assumption \ref{a:lp}(i)(iii)(vii) hold. Then:
\[
 PQ_n(M_I) \overset{\mr P_{n,a}}{\rightsquigarrow} f(Z + a)~;
\]
if further Assumption \ref{a:lp}(ii)(iv)(v) holds, then for a neighborhood $I$ of $z_\alpha$:
\[
 \sup_{z  \in I} \left| { \Pi_n \big( \{\theta:PQ_n( M(\theta)) \leq  z \} \,\big|\, \mf X_n \big) } - \p_{Z | \mf X_n} \big( f(Z) \leq z  \big)  \right| = o_{\mr P_{n,a}}(1)
\]
and if further Assumption \ref{a:lp}(vi$\,^\prime\!$) holds, then:
\[
 \lim_{n \to \infty} \mr P_{n,a} (M_I \subseteq \wh M_\alpha) = \p_Z (f(Z + a) \leq z_\alpha)~.
\]
\end{theorem}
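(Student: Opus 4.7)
The proof plan mirrors the architecture of Theorem \ref{t:lp}, but with the profile expansion of Assumption \ref{a:lp}(vii) replacing the full-parameter quadratic. I would split the argument into three steps, one for each conclusion chained in the statement.

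First I would establish the local power of the profile QLR itself. Assumption \ref{a:lp}(i) together with (iii) pins down $nL_n(\hat\theta) = \ell_n + \tfrac12\|\mb V_n\|^2 + o_{\mr P_{n,a}}(1)$ (the quadratic on $\Theta_{osn}$ is maximized at $\gamma(\theta) = \hat\gamma_n$, which lies in the image because $T = \mb R^{d^*}$). For any $\theta \in \Theta_I$ we have $\gamma(\theta) = 0$, so (vii) yields $nPL_n(M(\theta)) = \ell_n + \tfrac12\|\mb V_n\|^2 - \tfrac12 f(\mb V_n) + o_{\mr P_{n,a}}(1)$ uniformly in $\theta \in \Theta_I$. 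Subtracting gives $PQ_n(M_I) = f(\mb V_n) + o_{\mr P_{n,a}}(1)$, and since $\mb V_n \overset{\mr P_{n,a}}{\rightsquigarrow} Z + a$ with $Z \sim N(0,I_{d^*})$, the portmanteau theorem yields $PQ_n(M_I) \overset{\mr P_{n,a}}{\rightsquigarrow} f(Z+a)$ (the discontinuity set of $f$ is Lebesgue-null at points where $F_{f(Z)}$ is continuous, hence null under both $N(0,I_{d^*})$ and $N(a,I_{d^*})$).

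For the second conclusion, I would replay the proof of Lemma \ref{l:post:profile} under $\mr P_{n,a}$. That proof only invokes posterior contraction, properness of the prior, and the two expansions, all of which are granted by Assumption \ref{a:lp}(ii)(iii)(iv)(v)(vii), regardless of the sampling law of $\mb V_n$. Because Assumption \ref{a:lp}(iii) sets $T = \mb R^{d^*}$, the conditioning event $\{Z \in \sqrt n \hat\gamma_n - T\}$ in Lemma \ref{l:post:profile} becomes vacuous, and the posterior of $PQ_n(M(\theta))$ converges uniformly on a neighborhood $I$ of $z_\alpha$ to the unconditional law $\p_Z(f(Z) \leq z)$, which is the second display of the theorem. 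Combined with Assumption \ref{a:lp}(vi$'$) and continuity of $\p_Z(f(Z)\leq \cdot)$ at $z_\alpha$, this gives $\xi_{n,\alpha}^{mc,p} = z_\alpha + o_{\mr P_{n,a}}(1)$ (by the standard quantile-convergence argument used in Lemma \ref{l:basic:profile}).

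For the third conclusion, I would write $\mr P_{n,a}(M_I \subseteq \wh M_\alpha) = \mr P_{n,a}(PQ_n(M_I) \leq \xi_{n,\alpha}^{mc,p})$ and apply Slutsky's lemma, using the weak convergence from step one together with $\xi_{n,\alpha}^{mc,p} \to_{\mr P_{n,a}} z_\alpha$ from step two. The main obstacle is purely technical: since $f$ need not be continuous, I must ensure that the limit $\p_Z(f(Z+a) \leq z_\alpha)$ is attained, i.e.\ that $F_{f(Z+a)}$ is continuous at $z_\alpha$. This follows from the hypothesized continuity of $F_{f(Z)}$ at $z_\alpha$ because $N(a,I_{d^*}) \ll N(0,I_{d^*})$, so any $f^{-1}(\{z_\alpha\})$ that is $N(0,I_{d^*})$-null is also $N(a,I_{d^*})$-null. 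Finally, the strict inequality claim of the analogous Theorem \ref{t:lp} is not asserted here, so once the limit is identified the proof is complete.
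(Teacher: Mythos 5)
Your proposal is correct and follows essentially the same route as the paper's proof: the expansion $PQ_n(M_I) = f(\mb V_n) + o_{\mr P_{n,a}}(1)$ with $\mb V_n \overset{\mr P_{n,a}}{\rightsquigarrow} N(a,I_{d^*})$, a rerun of Lemma \ref{l:post:profile} under the local alternatives (with the conditioning on $T = \mb R^{d^*}$ vacuous), and then quantile convergence $\xi_{n,\alpha}^{mc,p} = z_\alpha + o_{\mr P_{n,a}}(1)$ combined with the limit law to identify the coverage probability. Your additional observation that continuity of $F_{f(Z)}$ at $z_\alpha$ transfers to $F_{f(Z+a)}$ by equivalence of the Gaussian measures is a point the paper leaves implicit, and it is handled correctly.
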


It follows from Anderson's lemma \cite[Lemma 8.5]{vdV} that$$\lim_{n \to \infty} \mr P_{n,a} (M_I \subseteq \wh M_\alpha) \leq \alpha$$ whenever $f$ is subconvex. In particular, this includes the case in which $M_I$ is a singleton.

\section{Parameter-dependent support}\label{ax:pds}

In this appendix we briefly describe how our procedure may be applied to models with parameter dependent support under loss of identifiability. Parameter-dependent support is a feature of certain auction models (e.g., \cite{HP}, \cite{CH04}) and some structural models in labor economics (e.g., \cite{FlinnHeckman}). For simplicity we just deal with inference on the full vector, though the following results could be extended to subvector inference in this context.

We again presume the existence of a local reduced-form parameter $\gamma$ such that $\gamma(\theta) = 0$ if and only if $\theta \in \Theta_I$. In what follows we assume without loss of generality that $L_n(\hat \theta) = \sup_{\theta \in \Theta_{osn}} L_n(\theta)$ since $\hat \theta$ is not required in order to compute the confidence set. We replace Assumption \ref{a:quad} (local quadratic approximation) with the following assumption, which permits the support of the data to depend on certain components of the local reduced-form parameter $\gamma$.

\setcounter{assumption}{1}

\begin{assumption}\label{a:quad:gamma}
(i) There exist functions $\gamma : \Theta^N_I \to \Gamma \subseteq \mb R^{d^*}$ and $h : \Gamma \to \mb R_+$, a sequence of $\mb R^{d^*}$-valued random vectors $\hat \gamma_n$, and a positive sequence $(a_n)_{n \in \mb N}$ with $a_n \to 0$ such that:
\[
 \sup_{\theta \in \Theta_{osn}} \left| \frac{ \frac{a_n}{2} Q_n(\theta) - h(\gamma(\theta) - \hat \gamma_n)}{h(\gamma(\theta) - \hat \gamma_n)} \right| = o_\p(1)
\]
with $\sup_{\theta \in \Theta_{osn}} \|\gamma(\theta)\| \to 0$ and $\inf\{h(\gamma) : \|\gamma\| = 1\} > 0$;\\
(ii) there exist $r_1,\ldots,r_{d^*} > 0$ such that $th(\gamma) = h(t^{r_1} \gamma_1,t^{r_2} \gamma_2,\ldots,t^{r_{d^*}} \gamma_{d^*})$ for each $t >0$;\\
(iii) the sets $K_{osn} = \{ (b_n^{-r_1} (\gamma_1(\theta) - \hat \gamma_{n,1}),\ldots,b_n^{-r_{d^*}} (\gamma_{d^*}(\theta) - \hat \gamma_{n,d^*}))' : \theta \in \Theta_{osn}\}$ cover $\mb R^{d^*}_+$ for any positive sequence $(b_n)_{n \in \mb N}$ with $b_n \to 0$ and $a_n/b_n \to 1$.
\end{assumption}

This assumption is similar to Assumptions 2-3 in \cite{FHW} but has been modified to allow for non-identifiable parameters $\theta$. Let $F_\Gamma$ denote a Gamma distribution with shape parameter $r^* = \sum_{i=1}^{d^*} r_i$ and scale parameter $2$. The following lemma shows that the posterior distribution of the QLR converges to $F_\Gamma$.

\begin{lemma}\label{l:post:gamma}
Let Assumptions \ref{a:rate}, \ref{a:quad:gamma}, and \ref{a:prior} hold. Then:
\[
 \sup_z \left|  \Pi_n( \{\theta : Q_n(\theta ) \leq z \} | \mf X_n) - F_\Gamma(z) \right| = o_\p(1)\,.
\]
\end{lemma}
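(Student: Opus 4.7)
The plan is to express the posterior CDF of $Q_n$ as a ratio of prior integrals against $e^{-Q_n/2}$, localize to $\Theta_{osn}$, push the integrals forward to the local reduced-form parameter $\gamma$, rescale so as to exploit the homogeneity of $h$, and identify the resulting ratio with the Gamma$(r^*,2)$ CDF. Writing $Q_n(\theta) = 2n[L_n(\hat\theta)-L_n(\theta)]$, the posterior CDF takes the form
\[
\Pi_n(\{\theta : Q_n(\theta) \le z\} \mid \mf X_n) = \frac{\int_\Theta \ind\{Q_n(\theta) \le z\} e^{-Q_n(\theta)/2}\, d\Pi(\theta)}{\int_\Theta e^{-Q_n(\theta)/2}\, d\Pi(\theta)},
\]
and Assumption \ref{a:rate}(ii) allows me to replace both integrals by integrals over $\Theta_{osn}$ at the cost of an additive $o_\p(1)$ error in the ratio.

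On $\Theta_{osn}$, I would push forward by $\gamma$ onto $\Gamma$, replacing $d\Pi_\Gamma$ by $\pi_\Gamma(0)\,dg$ on the shrinking image $\gamma(\Theta_{osn})$ using Assumption \ref{a:prior}(ii) together with $\sup_{\theta \in \Theta_{osn}} \|\gamma(\theta)\| \to 0$. Next I would rescale $g = \hat\gamma_n + (a_n^{r_1}u_1,\ldots,a_n^{r_{d^*}}u_{d^*})$, whose Jacobian is $a_n^{r^*}$; the homogeneity in Assumption \ref{a:quad:gamma}(ii) gives $h(g-\hat\gamma_n) = a_n h(u)$, and combined with the relative-error expansion in Assumption \ref{a:quad:gamma}(i) this yields $\tfrac{1}{2}Q_n(\theta) = h(u)(1+o_\p(1))$ uniformly for $u \in K_{osn}$. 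The common prefactor $\pi_\Gamma(0) a_n^{r^*}$ cancels between numerator and denominator. By Assumption \ref{a:quad:gamma}(iii), $K_{osn}$ covers $\mb R^{d^*}_+$, and the integrand $e^{-h(u)(1+o_\p(1))}$ is bounded by the fixed integrable envelope $e^{-h(u)/2}$ (integrability follows from $\inf_{\|\gamma\|=1} h(\gamma) > 0$ together with the homogeneity, which force $h(u)\to\infty$ as $\|u\|\to\infty$). Dominated convergence then yields
\[
\Pi_n(\{Q_n \le z\} \mid \mf X_n) = \frac{\int_{\mb R^{d^*}_+} \ind\{h(u) \le z/2\}\, e^{-h(u)}\, du}{\int_{\mb R^{d^*}_+} e^{-h(u)}\, du} + o_\p(1).
\]

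To identify the right-hand side with $F_\Gamma$, observe that by Assumption \ref{a:quad:gamma}(ii) the diagonal scaling $u_i \mapsto c^{r_i} u_i$ sends $\{u \in \mb R^{d^*}_+ : h(u) \le 1\}$ onto $\{u \in \mb R^{d^*}_+ : h(u) \le c\}$ with Jacobian $c^{r^*}$, so the Lebesgue volume $V(c)$ of the latter equals $c^{r^*} V(1)$ with $V(1)<\infty$. Stieltjes integration then gives
\[
\int_{\mb R^{d^*}_+} \ind\{h(u) \le c\}\, e^{-h(u)}\, du = \int_0^c e^{-s}\, dV(s) = V(1)\, r^* \int_0^c s^{r^*-1} e^{-s}\, ds = V(1)\, \Gamma(r^*+1)\, F^*(c),
\]
where $F^*$ is the Gamma$(r^*,1)$ CDF. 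Taking $c = z/2$ and $c = \infty$ and dividing yields the limit $F^*(z/2) = F_\Gamma(z)$, the Gamma$(r^*,2)$ CDF. Pointwise (in probability) convergence then upgrades to the claimed uniform-in-$z$ statement via continuity of $F_\Gamma$, monotonicity of the posterior CDFs, and Polya's theorem.

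The main obstacle is controlling the Laplace-type ratio in a non-standard regime: the $\theta$-neighborhood $\Theta_{osn}$ shrinks while the rescaled $u$-domain $K_{osn}$ expands to fill $\mb R^{d^*}_+$, and Assumption \ref{a:quad:gamma}(i) provides only a \emph{relative} error in approximating $Q_n$ by $(2/a_n)h(\gamma(\theta)-\hat\gamma_n)$ rather than an absolute one. The resolution is that this relative error sits multiplicatively in the exponent, producing integrands $e^{-h(u)(1+o_\p(1))}$ that are uniformly controlled by the fixed envelope $e^{-h(u)/2}$ on $\mb R^{d^*}_+$; this is precisely what licenses dominated convergence and converts the pointwise expansion into the ratio identity yielding the Gamma distribution.
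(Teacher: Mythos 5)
Your proposal is correct and follows essentially the same route as the paper's proof: localize via Assumptions \ref{a:rate} and \ref{a:prior}, change variables to $\gamma$, rescale using the weighted homogeneity of $h$ so the Jacobian cancels, invoke the covering condition with a dominated-convergence argument, and identify the limiting ratio with the $\Gamma(r^*,2)$ CDF. The only differences are cosmetic (you keep the relative error multiplicatively in the exponent and compute the Gamma limit via the volume-scaling/Stieltjes identity plus Polya's theorem, whereas the paper brackets with deterministic $(1\pm\varepsilon_n)$ factors and uses the change of variables $t=2h(\kappa)$).
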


By modifying appropriately the arguments in \cite{FHW} one can show that, under Assumption \ref{a:quad:gamma}, $\sup_{\theta \in \Theta_I} Q_n(\theta) \rightsquigarrow F_\Gamma$. The following theorem states that one still obtains asymptotically correct frequentist coverage of $\wh \Theta_\alpha$.

\begin{theorem} \label{t:main:gamma}
Let Assumptions \ref{a:rate}, \ref{a:quad:gamma}, \ref{a:prior}, and \ref{a:mcmc} hold and $\sup_{\theta \in \Theta_I} Q_n(\theta) \rightsquigarrow F_\Gamma$. Then:
\[
 \lim_{n \to \infty} \p(\Theta_I \subseteq \wh \Theta_{\alpha}) = \alpha\,.
\]
\end{theorem}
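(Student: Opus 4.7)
The plan is to invoke Lemma \ref{l:basic} with $w_{n,\alpha} = \xi_{n,\alpha}^{mc}$ and $W \sim F_\Gamma$, and then show equality of limits. The two conditions of Lemma \ref{l:basic} amount to (i) a frequentist limit for $\sup_{\theta \in \Theta_I} Q_n(\theta)$ continuous at its $\alpha$-quantile, and (ii) consistency of the MC cutoff for that quantile. Condition (i) is handed to us directly: by hypothesis $\sup_{\theta \in \Theta_I} Q_n(\theta) \rightsquigarrow F_\Gamma$, and since $F_\Gamma$ is a Gamma distribution with shape $r^* > 0$ and scale $2$, its cdf is continuous and strictly increasing on $(0,\infty)$, hence continuous at every $\alpha$-quantile $w_\alpha$.

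The substantive step is (ii). By Assumption \ref{a:mcmc} it suffices to establish $\xi_{n,\alpha}^{post} = w_\alpha + o_\p(1)$, where $w_\alpha = F_\Gamma^{-1}(\alpha)$. Lemma \ref{l:post:gamma}, whose hypotheses are precisely Assumptions \ref{a:rate}, \ref{a:quad:gamma}, and \ref{a:prior} (all assumed here), yields
\[
 \sup_z \bigl| \Pi_n(\{\theta : Q_n(\theta) \leq z\} \mid \mf X_n) - F_\Gamma(z) \bigr| = o_\p(1).
\]
A standard argument then converts this uniform convergence of cdfs into convergence of quantiles: for any $\varepsilon > 0$, by strict monotonicity of $F_\Gamma$ at $w_\alpha$ we can pick $\delta = \tfrac{1}{2}\min\{F_\Gamma(w_\alpha + \varepsilon)-\alpha,\; \alpha - F_\Gamma(w_\alpha-\varepsilon)\} > 0$; on the event that the uniform distance between the posterior cdf and $F_\Gamma$ is smaller than $\delta$ (which has probability tending to one), the posterior cdf lies in $(\alpha - \delta, \alpha + \delta)$ only for $z \in [w_\alpha - \varepsilon, w_\alpha + \varepsilon]$, and therefore $|\xi_{n,\alpha}^{post} - w_\alpha| \leq \varepsilon$. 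This gives $\xi_{n,\alpha}^{post} \to_\p w_\alpha$, and combining with Assumption \ref{a:mcmc} yields $\xi_{n,\alpha}^{mc} \to_\p w_\alpha$.

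With (i) and (ii) in hand, Lemma \ref{l:basic} delivers $\liminf_{n \to \infty} \p(\Theta_I \subseteq \wh\Theta_\alpha) \geq \alpha$, and the sharpened version of the same lemma (under $w_{n,\alpha} = w_\alpha + o_\p(1)$) yields $\lim_{n\to\infty} \p(\Theta_I \subseteq \wh\Theta_\alpha) = \alpha$. The only step that requires any care is the quantile-convergence argument above; it is essentially routine but relies crucially on continuity and strict monotonicity of $F_\Gamma$ at $w_\alpha$, properties that hold automatically for the Gamma distribution for any $\alpha \in (0,1)$. No use is made of $T = \mb R^{d^*}$ or of the information equality, as both the frequentist limit and the posterior limit have been assumed (the frequentist one as a hypothesis of the theorem, the posterior one via Lemma \ref{l:post:gamma}) to be $F_\Gamma$.
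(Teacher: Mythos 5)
Your proposal is correct and follows essentially the same route as the paper's proof: verify the conditions of Lemma \ref{l:basic} by combining the assumed weak convergence $\sup_{\theta \in \Theta_I} Q_n(\theta) \rightsquigarrow F_\Gamma$, the posterior convergence from Lemma \ref{l:post:gamma} (giving $\xi_{n,\alpha}^{post} = w_\alpha + o_\p(1)$), and Assumption \ref{a:mcmc} (giving $\xi_{n,\alpha}^{mc} = w_\alpha + o_\p(1)$). The only difference is that you spell out the routine cdf-to-quantile convergence step that the paper compresses into a single ``hence,'' which is a harmless elaboration rather than a different argument.
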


We finish this section with a simple example. Consider a model in which $X_1,\ldots,X_n$ are i.i.d. $U[0,(\theta_1 \vee \theta_2)]$ where $(\theta_1,\theta_2) \in \Theta = \mb R_+^2$. Let the true distribution of the data be $U[0,\tilde \gamma]$. The identified set is $\Theta_I = \{ \theta \in \Theta : \theta_1 \vee \theta_2 = \tilde \gamma\}$.

Then we use the reduced-form parameter $\gamma(\theta) = (\theta_1 \vee \theta_2) - \tilde \gamma$. Let $\hat \gamma_n = \max_{1 \leq i \leq n} X_i - \tilde \gamma$. Here we take $\Theta_{osn} = \{ \theta : (1+\varepsilon_n) \hat \gamma_n \geq \gamma(\theta) \geq \hat \gamma_n\}$ where $\varepsilon_n \to 0$ slower than $n^{-1}$ (e.g. $\varepsilon_n = (\log n)/n$). It is straightforward to show that:
\[
 \sup_{\theta \in \Theta_I} Q_n(\theta) = 2 n \log \left( \frac{\tilde \gamma}{\hat \gamma_n + \tilde \gamma} \right) \rightsquigarrow F_\Gamma
\]
where $F_\Gamma$ denotes the Gamma distribution with shape parameter $r^*=1$ and scale parameter $2$. Furthermore, taking $a_n= n^{-1}$ and $h(\gamma(\theta)-\hat \gamma_n) = \tilde \gamma^{-1} (\gamma(\theta)-\hat \gamma_n)$ we may deduce that:
\[
 \sup_{\theta \in \Theta_{osn}} \left| \frac{ \frac{1}{2n} Q_n(\theta) - h(\gamma(\theta) - \hat \gamma_n)}{h(\gamma(\theta) - \hat \gamma_n)} \right| = o_\p(1)\,.
\]
Notice that $r^*=1$ and that the sets $K_{osn} = \{ n (\gamma(\theta) - \hat \gamma_n) : \theta \in \Theta_{osn}\} = \{ n (\gamma - \hat \gamma_n) : (1+\varepsilon_n) \hat \gamma \geq \gamma \geq \hat \gamma_n\}$ cover $\mb R^+$. A smooth prior on $\Theta$ will induce a smooth prior on $\gamma(\theta)$, and the result follows from Theorem \ref{t:main:gamma}.

\newpage

\section{Proofs and Additional Results}\label{a:proofs}

\subsection{Proofs and Additional Lemmas for Sections \ref{sec-procedure} and \ref{sec-property}}

\begin{proof}[\textbf{Proof of Lemma \ref{l:basic}}]
	By (ii), there is a positive sequence $(\varepsilon_n)_{n \in \mb N}$ with $\varepsilon_n = o(1)$ such that $w_{n,\alpha} \geq w_\alpha-\varepsilon_n$ holds wpa1. Therefore:
	\[
	\begin{array}{rcl}
	\p (\Theta_I \subseteq \wh \Theta_\alpha) & = & \p (\sup_{\theta \in \Theta_I} Q_n(\theta) \leq w_{n,\alpha}) \\
	& \geq & \p (\sup_{\theta \in \Theta_I} Q_n(\theta) \leq w_{\alpha} - \varepsilon_n) + o(1)
	\end{array}
	\]
	and the result follows by part (i). If $w_{n,\alpha} = w_\alpha + o_\p(1)$ then we may replace the preceding inequality by an equality.
\end{proof}

\begin{proof}[\textbf{Proof of Lemma \ref{l:basic:profile}}]
	Follows by similar arguments to the proof of Lemma \ref{l:basic}.
\end{proof}

\begin{lemma}\label{l:quad}
	Let Assumptions \ref{a:rate}(i) and \ref{a:quad} hold. Then:
	\begin{align}
	\sup_{\theta \in \Theta_{osn}} \left | Q_n(\theta) - \| \sqrt n \gamma(\theta) - \mf T\mb V_n \|^2 \right| &= o_\p(1) \label{e:quad:1}  \,.
	\end{align}
And hence $\sup_{\theta \in \Theta_I} Q_n(\theta) = \|\mf T \mb V_n\|^2 + o_\p(1)$.
\end{lemma}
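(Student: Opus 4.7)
The plan is to combine Assumption \ref{a:quad} (the uniform quadratic expansion of $nL_n$) with Assumption \ref{a:rate}(i) (the near-optimality of $\hat\theta$), and then to exploit the covering property of $T_{osn}=\{\sqrt n\gamma(\theta):\theta\in\Theta_{osn}\}$ to evaluate the supremum of the quadratic.

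First I would write, uniformly over $\theta\in\Theta_{osn}$,
\[
 nL_n(\theta) \;=\; \ell_n + \tfrac{1}{2}\|\mf T\mb V_n\|^2 - \tfrac{1}{2}\|\mf T\mb V_n - \sqrt n\gamma(\theta)\|^2 + o_\p(1),
\]
using Assumption \ref{a:quad} together with $\sqrt n\hat\gamma_n = \mf T\mb V_n$. Next, Assumption \ref{a:rate}(i) gives $nL_n(\hat\theta)=n\sup_{\theta\in\Theta_{osn}}L_n(\theta)+o_\p(1)$, so substituting the quadratic expansion into the supremum yields
\[
 nL_n(\hat\theta) \;=\; \ell_n + \tfrac{1}{2}\|\mf T\mb V_n\|^2 - \tfrac{1}{2}\inf_{\theta\in\Theta_{osn}}\|\mf T\mb V_n-\sqrt n\gamma(\theta)\|^2 + o_\p(1).
\]

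The crux will be to show that this infimum is $o_\p(1)$. Since $\mb V_n\rightsquigarrow N(0,\Sigma)$, $\|\mb V_n\|=O_\p(1)$ and hence $\|\mf T\mb V_n\|=O_\p(1)$, because the metric projection onto the closed convex cone $T$ is $1$-Lipschitz and fixes the origin. By property (i) of the covering definition applied with any $M$ for which $\|\mf T\mb V_n\|\leq M$ with probability close to one,
\[
 \inf_{\theta\in\Theta_{osn}}\|\mf T\mb V_n-\sqrt n\gamma(\theta)\|^2 \;=\; \inf_{a\in T_{osn}}\|\mf T\mb V_n-a\|^2 \;=\; \inf_{a\in T}\|\mf T\mb V_n-a\|^2 + o_\p(1),
\]
and the right-hand infimum equals $0$ because $\mf T\mb V_n\in T$. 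Subtracting the two expansions then gives
\[
 Q_n(\theta) \;=\; 2n[L_n(\hat\theta)-L_n(\theta)] \;=\; \|\sqrt n\gamma(\theta)-\mf T\mb V_n\|^2 + o_\p(1)
\]
uniformly in $\theta\in\Theta_{osn}$, which is (\ref{e:quad:1}). The ``hence'' statement follows immediately by evaluating on $\Theta_I\subseteq\Theta_{osn}$ (for $n$ large), where $\gamma(\theta)=0$ by construction of the local reduced-form reparameterization, so that $\sup_{\theta\in\Theta_I}Q_n(\theta)=\|\mf T\mb V_n\|^2+o_\p(1)$.

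The only nonroutine step is the covering argument used to annihilate the infimum; everything else is substitution. The potential subtlety there is making the $o_\p(1)$ in property (i) of the cover uniform on the random radius $\|\mf T\mb V_n\|$, which I would handle by a standard conditioning-on-a-bounded-event argument using the tightness of $\mb V_n$.
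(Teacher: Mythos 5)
Your proposal is correct and follows essentially the same route as the paper's own proof: expand $nL_n$ via Assumption \ref{a:quad}, use Assumption \ref{a:rate}(i) to reduce $2nL_n(\hat\theta)$ to $2\ell_n+\|\mf T\mb V_n\|^2$ by killing the infimum through the covering property of $T_{osn}$ and $\mf T\mb V_n\in T$, subtract, and evaluate at $\gamma(\theta)=0$ on $\Theta_I$. Your explicit treatment of the covering step (tightness of $\mf T\mb V_n$ plus conditioning on a bounded event) is exactly the argument the paper leaves implicit, so there is nothing to add.
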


\begin{proof}[\textbf{Proof of Lemma \ref{l:quad}}]
	By Assumptions \ref{a:rate}(i) and \ref{a:quad}, we obtain:
	\begin{align}
	2nL_n(\hat \theta) & = \sup_{\theta \in \Theta_{osn}} 2n L_n(\theta) + o_\p(1) \notag \\
	& =  2\ell_n + \|\sqrt n \hat \gamma_n\|^2 - \inf_{\theta \in \Theta_{osn}} \|\sqrt n \gamma(\theta) - \mf T \mb V_n\|^2 + o_\p(1) \notag \\
	& =  2\ell_n + \|\mf T \mb V_n\|^2 - \inf_{t \in T}  \|t -\mf T \mb V_n\|^2 + o_\p(1)  \label{e:lnhattheta}
	\end{align}
	where $\inf_{t \in T}  \|t -\mf T \mb V_n\|^2 = 0$ because $\mf T \mb V_n \in T$.
	Now by Assumption \ref{a:quad},
	\begin{align*}
	Q_n(\theta)  & = \left( 2\ell_n + \|\mf T \mb V_n\|^2 + o_\p(1)\right) - \left( 2 \ell_n +\|\mf T \mb V_n\|^2 - \| \sqrt n \gamma(\theta) - \mf T \mb V_n \|^2 + o_\p(1) \right) \\
	& =  \| \sqrt n \gamma(\theta) - \mf T\mb V_n \|^2  + o_\p(1)
	\end{align*}
	where the $o_\p(1)$ term holds uniformly over $\Theta_{osn}$. This proves expression (\ref{e:quad:1}). Finally, since $\gamma(\theta) = 0$ for $\theta \in \Theta_I$, we have $\sup_{\theta \in \Theta_I} Q_n(\theta) = \|\mf T \mb V_n \|^2 + o_\p(1)$.
\end{proof}

\begin{proof}[\textbf{Proof of Lemma \ref{l:post}}]
	We first prove equation (\ref{e:c:post:1}). Since $|\Pr(A) - \Pr (A \cap B)| \leq \Pr(B^c)$, we have:
	\begin{align}
	\sup_{z} \big|\Pi_n (\{ \theta : Q_n(\theta) \leq z \} |\mf X_n) - \Pi_n ( \{ \theta : Q_n(\theta) \leq z \} \cap \Theta_{osn}|\mf X_n) \big| \leq \Pi_n (\Theta_{osn}^c|\mf X_n) = o_\p(1) \label{e-post-target}
	\end{align}
	by Assumption \ref{a:rate}(ii). Moreover by Assumptions \ref{a:rate}(ii) and \ref{a:prior}(i),
	\begin{align*}
	\left| \frac{ \int_{\Theta_{osn}}  e^{nL_n(\theta)} \mr d\Pi(\theta) }{\int_{\Theta} e^{nL_n(\theta)} \mr d\Pi(\theta)} -1 \right| = \Pi_n (\Theta_{osn}^c|\mf X_n) = o_\p(1)
	\end{align*}
	and hence:
	\begin{equation} \label{e-denombd}
	\sup_{z} \left| \Pi_n ( \{ \theta : Q_n(\theta) \leq z \} \cap \Theta_{osn}\,|\,\mf X_n)  - \frac{ \int_{\{ \theta : Q_n(\theta) \leq z \} \cap \Theta_{osn}}  e^{nL_n(\theta)} \mr d\Pi(\theta) }{\int_{\Theta_{osn}} e^{nL_n(\theta)} \mr d\Pi(\theta)}  \right|  = o_\p(1) \,.
	\end{equation}
	In view of (\ref{e-post-target}) and (\ref{e-denombd}), it suffices to characterize the large-sample behavior of:
	\begin{equation} \label{e-Rn}
	R_{n}(z) :=  \frac{\int_{\{\theta:Q_n(\theta) \leq  z \} \cap \Theta_{osn}}\! e^{nL_n(\theta) - \ell_n - \frac{1}{2}\|\mf T \mb V_n\|^2} \mr d\Pi(\theta)}{\int_{\Theta_{osn}} \!e^{n L_n(\theta) - \ell_n - \frac{1}{2}\|\mf T \mb V_n\|^2} \mr d\Pi(\theta)} \,.
	\end{equation}
	Lemma \ref{l:quad} and Assumption \ref{a:quad} imply that there exists a positive sequence $(\varepsilon_n)_{n \in \mb N}$ independent of $z$ with $\varepsilon_n = o(1)$ such that the inequalities:
	\begin{align*}
	\sup_{\theta \in \Theta_{osn}} \left | Q_n(\theta) - \| \sqrt n \gamma(\theta) - \mf T\mb V_n \|^2 \right| & \leq \varepsilon_n  \\
	\sup_{\theta \in \Theta_{osn}} \left| nL_n(\theta) - \ell_n - \frac{1}{2}\|\mf T \mb V_n\|^2 + \frac{1}{2} \|\sqrt n \gamma(\theta) - \mf T \mb V_n\|^2  \right| & \leq \varepsilon_n
	\end{align*}
	both hold wpa1. Therefore, wpa1 we have:
	\begin{align}
	&  e^{-2\varepsilon_n} \frac{\int_{\{\theta: \|\sqrt n \gamma(\theta) - \mf T \mb V_n\|^2 \leq  z -  \varepsilon_n\} \cap \Theta_{osn}}\! e^{ - \frac{1}{2} \|\sqrt n \gamma(\theta) - \mf T \mb V_n\|^2} \mr d\Pi(\theta)}{\int_{\Theta_{osn}} \!e^{ - \frac{1}{2} \|\sqrt n \gamma(\theta) - \mf T \mb V_n\|^2} \mr d\Pi(\theta)} \notag \\
	& \leq R_n(z)  \leq  e^{2\varepsilon_n} \frac{\int_{\{\theta:\|\sqrt n \gamma(\theta) - \mf T \mb V_n\|^2 \leq  z +  \varepsilon_n\} \cap \Theta_{osn}}\! e^{ - \frac{1}{2}\|\sqrt n \gamma(\theta) - \mf T \mb V_n\|^2} \mr d\Pi(\theta)}{\int_{\Theta_{osn}} \!e^{ - \frac{1}{2} \|\sqrt n \gamma(\theta) -\mf T \mb V_n\|^2} \mr d\Pi(\theta)}  \notag
	\end{align}
	uniformly in $z$. Let $\Gamma_{osn} = \{ \gamma(\theta) : \theta \in \Theta_{osn}\}$. A change of variables yields:
	\begin{align}
	&  e^{-2 \varepsilon_n} \frac{\int_{\{\gamma: \|\sqrt n \gamma - \mf T \mb V_n\|^2 \leq  z  -  \varepsilon_n\} \cap \Gamma_{osn}}\! e^{ - \frac{1}{2} \|\sqrt n \gamma - \mf T \mb V_n\|^2} \mr d\Pi_\Gamma(\gamma)}{\int_{\Gamma_{osn}} \!e^{ - \frac{1}{2} \|\sqrt n \gamma - \mf T \mb V_n\|^2} \mr d\Pi_\Gamma(\gamma)} \notag \\
	& \leq R_n(z)  \leq  e^{2 \varepsilon_n} \frac{\int_{\{\gamma:\|\sqrt n \gamma - \mf T \mb V_n\|^2 \leq  z +  \varepsilon_n\} \cap \Gamma_{osn}}\! e^{ - \frac{1}{2}\|\sqrt n \gamma - \mf T \mb V_n\|^2} \mr d\Pi_\Gamma(\gamma)}{\int_{\Gamma_{osn}} \!e^{ - \frac{1}{2} \|\sqrt n \gamma - \mf T \mb V_n\|^2} \mr d\Pi_\Gamma(\gamma)}   \,.\label{e:rn:bound}
	\end{align}
	
	Recall $B_\delta$ from Assumption \ref{a:prior}(ii). The inclusion $\Gamma_{osn} \subset B_\delta \cap \Gamma$ holds for all $n$ sufficiently large by Assumption \ref{a:quad}. Taking $n$ sufficiently large and using Assumption \ref{a:prior}(ii), we may deduce that there exists a positive sequence $(\bar \varepsilon_n)_{n \in \mb N}$ with $\bar \varepsilon_n= o(1)$ such that:
	\[
	\left| \frac{\sup_{\gamma \in \Gamma_{osn}} \pi_{\Gamma}(\gamma)}{\inf_{\gamma \in \Gamma_{osn}}  \pi_{\Gamma}( \gamma) } - 1 \right| \leq \bar \varepsilon_n
	\]
	for each $n$.
	Substituting into (\ref{e:rn:bound}):
	\begin{align*}
	&  (1-\bar \varepsilon_n)e^{-2\varepsilon_n} \frac{\int_{\{\gamma: \|\sqrt n \gamma - \mf T \mb V_n\|^2 \leq  z -  \varepsilon_n\} \cap \Gamma_{osn}}\! e^{ - \frac{1}{2} \|\sqrt n \gamma - \mf T \mb V_n\|^2} \mr d\gamma}{\int_{\Gamma_{osn}} \!e^{ - \frac{1}{2} \|\sqrt n \gamma - \mf T \mb V_n\|^2} \mr d\gamma} \notag \\
	& \leq R_n(z)  \leq  (1+\bar \varepsilon_n)e^{2 \varepsilon_n} \frac{\int_{\{\gamma:\|\sqrt n \gamma - \mf T \mb V_n\|^2 \leq  z + \varepsilon_n\} \cap \Gamma_{osn}}\! e^{ - \frac{1}{2}\|\sqrt n \gamma - \mf T \mb V_n\|^2} \mr d\gamma}{\int_{\Gamma_{osn}} \!e^{ - \frac{1}{2} \|\sqrt n \gamma -\mf T \mb V_n\|^2} \mr d\gamma}  \notag
	\end{align*}
	uniformly in $z$, where ``$\mr d \gamma$'' denotes integration with respect to Lebesgue measure on $\mb R^{d^*}$.

	Let $T_{osn} = \{ \sqrt n \gamma : \gamma \in \Gamma_{osn}\}$ and let $B_z$ denote a ball of radius $z$ in $\mb R^{d^*}$ centered at the origin. Using the change of variables $\sqrt n \gamma - \mf T \mb V_n \mapsto \kappa$, we can rewrite the preceding inequalities as:
	\begin{align*}
	& (1-\bar \varepsilon_n) e^{-2 \varepsilon_n} \frac{\int_{B_{\sqrt {z - \varepsilon_n}} \cap (T_{osn}- \mf T \mb V_n) }\! e^{ - \frac{1}{2} \| \kappa \|^2} \mr d\kappa }{\int_{(T_{osn}- \mf T \mb V_n)} \!e^{ - \frac{1}{2} \| \kappa \|^2} \mr d\kappa}
	 \leq R_n(z)
	 \leq (1+\bar \varepsilon_n) e^{2 \varepsilon_n} \frac{\int_{B_{\sqrt{z  + \varepsilon_n}} \cap (T_{osn}- \mf T \mb V_n) }\! e^{ - \frac{1}{2} \| \kappa \|^2} \mr d\kappa }{\int_{(T_{osn}-\mf T \mb V_n)} \!e^{ - \frac{1}{2} \| \kappa \|^2} \mr d\kappa}
	\end{align*}
	with the understanding that $B_{\sqrt{z - \varepsilon_n}}$ is empty if $\varepsilon_n > z$.

	Let $\nu_{d^*}(A) = (2\pi)^{-d^*/2} \int_A e^{-\frac{1}{2} \|\kappa\|^2} \, \mr d \kappa$ denote Gaussian measure. We now show that:
	\begin{align}
	\sup_z \left| \frac{ \nu_{d^*}( B_{\sqrt {z \pm \varepsilon_n}} \cap (T_{osn}- \mf T \mb V_n) ) }{ \nu_{d^*}( T_{osn} - \mf T \mb V_n ) } - \frac{ \nu_{d^*}( B_{\sqrt {z \pm \varepsilon_n}} \cap (T- \mf T \mb V_n)  ) }{ \nu_{d^*}( T - \mf T \mb V_n ) } \right| & =  o_\p(1) \label{e:denom:1} \\
	\sup_z \left| \frac{ \nu_{d^*}( B_{\sqrt {z \pm \varepsilon_n}} \cap (T- \mf T \mb V_n)  ) }{ \nu_{d^*}( T - \mf T \mb V_n ) } - \frac{ \nu_{d^*}( B_{\sqrt z} \cap (T- \mf T \mb V_n) ) }{ \nu_{d^*}( T -\mf T \mb V_n ) } \right| & =  o_\p(1)\,. \label{e:denom:2}
	\end{align}
	Consider (\ref{e:denom:1}). To simplify presentation, we assume wlog that $T_{osn} \subseteq T$. As
	\begin{align}
	 \left| \frac{\Pr(A \cap B)}{\Pr(B)} - \frac{\Pr(A \cap C)}{\Pr(C)} \right| \leq 2 \frac{\Pr(C \setminus B)}{\Pr(C)} \label{e:abc:ineq}
	\end{align}
	holds for events $A,B,C$ with $B \subseteq C$, we have:
	\begin{align*}
	& \sup_z  \left|\frac{ \nu_{d^*}( B_{\sqrt {z \pm \varepsilon_n}} \cap (T_{osn}- \mf T \mb V_n) ) }{ \nu_{d^*}( T_{osn} - \mf T \mb V_n ) } - \frac{ \nu_{d^*}( B_{\sqrt {z \pm \varepsilon_n}} \cap (T- \mf T \mb V_n)  ) }{ \nu_{d^*}( T - \mf T \mb V_n ) } \right|
	\leq 2 \frac{ \nu_{d^*}( (T  \setminus T_{osn}) -\mf T \mb V_n ) }{ \nu_{d^*}( T -\mf T\mb V_n ) }
	\end{align*}
	As $\mb V_n$ is tight and $T\subseteq \mb R^{d^*}$ has positive volume, we may deduce that
	\begin{align} \label{e:gmeasbd:2}
	 1/{\nu_{d^*}( T -\mf T\mb V_n )}= O_\p(1)\,.
	\end{align}
	It also follows by tightness of $\mb V_n$ and Assumption \ref{a:quad} that $\nu_{d^*}( (T  \setminus T_{osn}) -\mf T \mb V_n ) = o_\p(1)$, which proves (\ref{e:denom:1}).
 Result (\ref{e:denom:2}) now follows by (\ref{e:gmeasbd:2})  and the fact that:
	\begin{align*}
	 \sup_z | \nu_{d^*}( B_{\sqrt {z \pm \varepsilon_n}} \cap (T_{osn}- \mf T \mb V_n)  ) - \nu_{d^*}( B_{\sqrt {z }} \cap (T_{osn}- \mf T \mb V_n)  ) |
	& \leq \sup_z  |F_{\chi^2_{d^*}}(z\pm \varepsilon_n) - F_{\chi^2_{d^*}}(z) | = o(1)
	\end{align*}
	since $\nu_{d^*}( B_{\sqrt {z }}  ) = F_{\chi^2_{d^*}}(z)$. This completes the proof of result (\ref{e:c:post:1}).

Part (i) follows by combining (\ref{e:c:post:1}) and the inequality:
\begin{equation} \label{e:c:post:2}
 \sup_z \left( \p_Z \Big( \|Z\|^2 \leq z  \Big| Z \in T - \mf T v\Big) - \p_Z (\| \mf T Z \|^2 \leq z) \right) \leq 0 \quad \mbox{for all $v \in \mb R^{d^*}$}
\end{equation}
(see Theorem 2 in \cite{ChenGao}). Part (ii) also follows from (\ref{e:c:post:1}) by observing that if $T = \mb R^{d^*}$ then $T - \mb V_n = \mb R^{d^*}$.
\end{proof}

\begin{proof}[\textbf{Proof of Theorem \ref{t:main}}]
	We verify the conditions of Lemma \ref{l:basic}. We may assume without loss of generality that $L_n(\hat \theta) = \sup_{\theta \in \Theta_{osn}} L_n(\theta) + o_\p(n^{-1})$ because $\wh \Theta_\alpha$ does not depend on the precise $\hat \theta$ used (cf. Remark \ref{rmk:full}). By Lemma \ref{l:quad} we have:
\[
\sup_{\theta \in \Theta_I} Q_n(\theta)  =  \|\mf T \mb V_n\|^2 + o_\p(1) \rightsquigarrow  \| \mf T Z\|^2
\]
with $Z \sim N(0,I_{d^*})$  when $\Sigma=I_{d^*}$. Let $z_{\alpha}$ denote the $\alpha$ quantile of the distribution of $ \| \mf T Z\|^2$.

For part (i), Lemma \ref{l:post}(i) shows that the posterior distribution of the QLR asymptotically (first-order) stochastically dominates the distribution of $ \| \mf T Z\|^2$ which implies that $\xi_{n,\alpha}^{post} \geq z_\alpha + o_\p(1)$. Therefore:
\begin{align*}
 \xi_{n,\alpha}^{mc} & = z_\alpha + (\xi_{n,\alpha}^{post} - z_\alpha) + (\xi_{n,\alpha}^{mc} - \xi_{n,\alpha}^{post})
 \geq z_\alpha + (\xi_{n,\alpha}^{mc} - \xi_{n,\alpha}^{post}) + o_\p(1)  = z_\alpha + o_\p(1)
\end{align*}
where the final equality is by Assumption \ref{a:mcmc}.

For part (ii), when $T = \mb R^{d^*}$ and $\Sigma=I_{d^*}$, we have:
	\begin{align*}
	\sup_{\theta \in \Theta_I} Q_n(\theta)  = \|\mb V_n \|^2 + o_\p(1)  \rightsquigarrow \chi^2_{d^*}\,,~~\mbox{and hence}~~z_{\alpha} = \chi^2_{d^*,\alpha}\,.
	\end{align*}
	Further:
	\begin{align*}
	\xi_{n,\alpha}^{mc} =  \chi^2_{d^*,\alpha} + (\xi_{n,\alpha}^{post} - \chi^2_{d^*,\alpha}) + (\xi_{n,\alpha}^{mc} - \xi_{n,\alpha}^{post})  = \chi^2_{d^*,\alpha} + o_\p(1)
	\end{align*}
	by Lemma \ref{l:post}(ii) and Assumption \ref{a:mcmc}.
\end{proof}

\begin{lemma}\label{l:quad:prime}
	Let Assumptions \ref{a:rate}(i) and \ref{a:quad:prime}' hold. Then:
	\begin{align}
	\sup_{\theta \in \Theta_{osn}} \left| Q_n(\theta) - \left( \|\sqrt n \gamma(\theta) - \mf T \mb V_n\|^2 +2 f_{n,\bot}(\gamma_\bot(\theta)) \right) \right|  & = o_\p(1) \label{e:quad:1:prime} \,.
	\end{align}
And hence $ \sup_{\theta \in \Theta_I} Q_n(\theta) = \|\mf T \mb V_n\|^2 + o_\p(1)$.
\end{lemma}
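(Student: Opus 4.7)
The plan is to mirror the proof of Lemma \ref{l:quad} as closely as possible, with the extra singular term $f_{n,\perp}(\gamma_\perp(\theta))$ handled by the product-set condition in Assumption \ref{a:quad:prime}'(ii) together with the non-negativity and normalization $f_{n,\perp}\ge 0$, $f_{n,\perp}(0)=0$. The identity $Q_n(\theta)=2n[L_n(\hat\theta)-L_n(\theta)]$ means it suffices to obtain a uniform expansion for $2nL_n(\hat\theta)$ and then subtract the local expansion of $2nL_n(\theta)$ supplied by Assumption \ref{a:quad:prime}'(i).

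The steps I would carry out, in order, are: (i) combine Assumption \ref{a:rate}(i) with Assumption \ref{a:quad:prime}'(i) to write
\[
 2nL_n(\hat\theta)=2\ell_n+\|\mf T\mb V_n\|^2-\inf_{\theta\in\Theta_{osn}}\Bigl(\|\sqrt n(\hat\gamma_n-\gamma(\theta))\|^2+2f_{n,\perp}(\gamma_\perp(\theta))\Bigr)+o_\p(1);
\]
(ii) invoke Assumption \ref{a:quad:prime}'(ii) to split the joint infimum over $\theta$ into the sum of an infimum over $\{\gamma(\theta):\theta\in\Theta_{osn}\}$ and an infimum over $\{\gamma_\perp(\theta):\theta\in\Theta_{osn}\}$; (iii) handle the reduced-form infimum by the same covering argument used in Lemma \ref{l:quad}, namely that $\sqrt n\hat\gamma_n=\mf T\mb V_n\in T$, $\mb V_n$ is tight, and the sets $T_{osn}$ cover $T$, so that $\inf_{t\in T_{osn}}\|t-\mf T\mb V_n\|^2=o_\p(1)$; (iv) handle the singular infimum by noting that $\Theta_I\subseteq\Theta_{osn}$ forces $0\in\{\gamma_\perp(\theta):\theta\in\Theta_{osn}\}$ and $f_{n,\perp}\ge 0$ with $f_{n,\perp}(0)=0$, so this infimum equals $0$ almost surely. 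This yields $2nL_n(\hat\theta)=2\ell_n+\|\mf T\mb V_n\|^2+o_\p(1)$. Subtracting the uniform expansion of $2nL_n(\theta)$ from this identity gives
\[
 Q_n(\theta)=\|\sqrt n\gamma(\theta)-\mf T\mb V_n\|^2+2f_{n,\perp}(\gamma_\perp(\theta))+o_\p(1)
\]
uniformly in $\theta\in\Theta_{osn}$, which is the first assertion.

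For the ``hence'' part, I would simply observe that any $\theta\in\Theta_I$ has $\gamma(\theta)=0$ and $\gamma_\perp(\theta)=0$, so $f_{n,\perp}(\gamma_\perp(\theta))=0$ almost surely, and the displayed expansion collapses to $\sup_{\theta\in\Theta_I}Q_n(\theta)=\|\mf T\mb V_n\|^2+o_\p(1)$ after taking the supremum over the (at most) residual $o_\p(1)$ term.

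The only conceptually delicate step — and what I would call the main obstacle — is the splitting of the joint infimum in step (ii). In general, infima of sums do not decompose, but Assumption \ref{a:quad:prime}'(ii) was introduced precisely so that the joint image $\{(\gamma(\theta),\gamma_\perp(\theta)):\theta\in\Theta_{osn}\}$ is a product set, which turns the joint infimum into a sum of marginal infima. Once that decoupling is justified, the reduced-form part is handled verbatim by the Lemma \ref{l:quad} covering argument, and the singular part is trivial because $f_{n,\perp}$ attains its minimum value $0$ at $\gamma_\perp=0\in\{\gamma_\perp(\theta):\theta\in\Theta_{osn}\}$. No appeal to the generalized-information equality $\Sigma=I_{d^*}$ or to the shape of $T$ is needed for this lemma.
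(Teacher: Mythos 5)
Your proposal is correct and follows essentially the same route as the paper's proof: expand $2nL_n(\hat\theta)$ via Assumptions \ref{a:rate}(i) and \ref{a:quad:prime}', use the product structure in \ref{a:quad:prime}'(ii) together with $\mf T\mb V_n\in T$ (covering plus tightness) and $f_{n,\bot}\ge 0$, $f_{n,\bot}(0)=0$ with $\gamma_\bot(\theta)=0$ on $\Theta_I\subseteq\Theta_{osn}$ to show the joint infimum is $o_\p(1)$, then subtract the local expansion and evaluate on $\Theta_I$. Your identification of the decoupling of the joint infimum as the one delicate step is exactly the role Assumption \ref{a:quad:prime}'(ii) plays in the paper's argument, so nothing is missing.
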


\begin{proof}[\textbf{Proof of Lemma \ref{l:quad:prime}}]
	Using Assumptions \ref{a:rate}(i) and \ref{a:quad:prime}', we obtain:
	\begin{align}
	2nL_n(\hat \theta)
	& = \sup_{\theta \in \Theta_{osn}}\left( 2\ell_n + \|\mf T \mb V_n\|^2 - \| \sqrt n \gamma(\theta) - \mf T \mb V_n \|^2 - 2 f_{n,\bot}(\gamma_\bot(\theta))\right)  + o_\p(1)  \notag \\
	& =    2 \ell_n +  \|\mf T \mb V_n\|^2 - \inf_{t \in T_{osn}} \| t - \mf T \mb V_n \|^2   -\inf_{\theta \in \Theta_{osn}}2 f_{n,\bot}(\gamma_\bot(\theta)) + o_\p(1) \notag \\
	& =  2 \ell_n +   \|\mf T \mb V_n\|^2  + o_\p(1) \label{e:lnhattheta:prime} \,,
	\end{align}
because $\mf T \mb V_n \in T$ and $ f_{n,\bot}(\cdot ) \geq 0$ with $f_{n,\bot}(0 )=0$, $\gamma_\bot(\theta)=0$ for all $\theta \in \Theta_I$ thus:
\[
0\leq \inf_{\theta \in \Theta_{osn}}f_{n,\bot}(\gamma_\bot(\theta)) \leq f_{n,\bot}(\gamma_\bot(\bar \theta))=0~~\mbox{for any}~\bar \theta \in \Theta_I\,.
\]
 Then by Assumption \ref{a:quad:prime}'(i) and definition of $Q_n$, we obtain:
	\begin{align*}
	Q_n(\theta)  & =  2 \ell_n +   \|\mf T \mb V_n\|^2  + o_\p(1)  - \left(  2\ell_n +  \|\mf T \mb V_n\|^2 -   \| \sqrt n \gamma(\theta) - \mf T \mb V_n \|^2 - 2f_{n,\bot}(\gamma_\bot(\theta)) + o_\p(1) \right) \\
	& =    \| \sqrt n \gamma(\theta) - \mf T \mb V_n \|^2  +2 f_{n,\bot}(\gamma_\bot(\theta)) + o_\p(1)
	\end{align*}
	where the $o_\p(1)$ term holds uniformly over $\Theta_{osn}$. This proves expression (\ref{e:quad:1:prime}).

Since $\gamma(\theta) = 0$ and $\gamma_\bot (\theta) = 0$ for $\theta \in \Theta_I$, and $f_{n,\bot}(0)=0$ (almost surely), we therefore have $\sup_{\theta \in \Theta_I} Q_n(\theta) =\|\mf T \mb V_n\|^2 + o_\p(1)$.
\end{proof}

\begin{proof}[\textbf{Proof of Lemma \ref{l:post:prime}}]
We first show that relation (\ref{e:post:qlr:prime}) holds.	By identical arguments to the proof of Lemma \ref{l:post}, it is enough to characterize the large-sample behavior of $R_n (z)$ defined in (\ref{e-Rn}).
	By Lemma \ref{l:quad:prime} and Assumption \ref{a:quad:prime}', there exists a positive sequence $(\varepsilon_n)_{n \in \mb N}$ independent of $z$ with $\varepsilon_n = o(1)$ such that:
	\begin{align*}
	\sup_{\theta \in \Theta_{osn}} \left| Q_n(\theta) - \left( \|\sqrt n \gamma(\theta) - \mf T \mb V_n\|^2 +2 f_{n,\bot}(\gamma_\bot(\theta)) \right) \right| & \leq \varepsilon_n  \\
	\sup_{\theta \in \Theta_{osn}} \left| nL_n(\theta) - \ell_n - \frac{1}{2}\|\mf T \mb V_n\|^2  + \frac{1}{2} \|\sqrt n \gamma(\theta) - \mf T \mb V_n\|^2 + f_{n,\bot}(\gamma_\bot(\theta)) \right| & \leq \varepsilon_n
	\end{align*}
	both hold wpa1. Also note that for any $z$, we have
	\begin{align*}
	\left\{\theta \in \Theta_{osn} :   \|\sqrt n \gamma(\theta) - \mf T \mb V_n\|^2 +2 f_{n,\bot}(\gamma_\bot(\theta)) \pm \varepsilon_n \leq z \right\}
	& \subseteq \left\{\theta \in \Theta_{osn} :  \|\sqrt n \gamma(\theta) - \mf T \mb V_n\|^2 \pm \varepsilon_n \leq z \right\}
	\end{align*}
	because $f_{n,\bot}(\cdot ) \geq 0$. Therefore, wpa1 we have:
	\begin{align*}
	R_n(z)
	& \leq  e^{2\varepsilon_n} \frac{\int_{\{\theta:\|\sqrt n \gamma(\theta) - \mf T \mb V_n\|^2 \leq  z+  \varepsilon_n\} \cap \Theta_{osn}}\! e^{ - \frac{1}{2}\|\sqrt n \gamma(\theta) - \mf T \mb V_n\|^2 - f_{n,\bot}(\gamma_\bot(\theta))} \mr d\Pi(\theta)}{\int_{\Theta_{osn}} \!e^{ - \frac{1}{2} \|\sqrt n \gamma(\theta) -\mf T \mb V_n\|^2 - f_{n,\bot}(\gamma_\bot(\theta)) } \mr d\Pi(\theta)}
	\end{align*}
	uniformly in  $z$.
	Define $\Gamma_{osn} = \{\gamma(\theta) : \theta \in \Theta_{osn}\}$ and $\Gamma_{\bot,osn} = \{\gamma_\bot(\theta) : \theta \in \Theta_{osn}\}$. By similar arguments to the proof of Lemma \ref{l:post}, Assumption \ref{a:prior:prime}'(ii) and a change of variables yield:
	\[
	R_n(z)
	\leq e^{2 \varepsilon_n}(1+\bar \varepsilon_n) \frac{\int_{(\{ \gamma : \| \sqrt n \gamma - \mf T \mb V_n \|^2 \leq z  + \varepsilon_n\} \cap \Gamma_{osn} ) \times  \Gamma_{\bot,osn} }\! e^{ - \frac{1}{2} \|\sqrt n \gamma - \mf T \mb V_n\|^2 - f_{n,\bot}(\gamma_\bot)} \mr d( \gamma,\gamma_\bot)}{\int_{ \Gamma_{osn} \times \Gamma_{\bot,osn}} \!e^{ - \frac{1}{2} \|\sqrt n \gamma - \mf T \mb V_n\|^2 - f_{n,\bot}(\gamma_\bot)}  \mr d( \gamma,\gamma_\bot)}
	\]
	which holds uniformly in $z$ (wpa1) for some $\bar \varepsilon_n = o(1)$. By Tonelli's theorem and Assumption \ref{a:quad:prime}'(ii), the preceding inequality becomes:
	\[
	R_n(z) \leq e^{2 \varepsilon_n}(1+\bar \varepsilon_n)   \frac{\int_{(\{ \gamma : \| \sqrt n \gamma - \mf T \mb V_n \|^2 \leq z + \varepsilon_n) \cap \Gamma_{osn}  }\! e^{ - \frac{1}{2} \|\sqrt n \gamma - \mf T \mb V_n\|^2 } \mr d \gamma }{\int_{ \Gamma_{osn} } \!e^{ - \frac{1}{2} \|\sqrt n \gamma - \mf T \mb V_n\|^2 }  \mr d \gamma}  \,.\label{e:rbd:prime:1}
	\]
	The rest of the proof of inequality (\ref{e:post:qlr:prime}) follows  by similar arguments to the proof of Lemma \ref{l:post}. The conclusion now follows by combining inequalities (\ref{e:post:qlr:prime}) and (\ref{e:c:post:2}).
\end{proof}

\begin{proof}[\textbf{Proof of Theorem \ref{t:main:prime}}]
	We verify the conditions of Lemma \ref{l:basic}. Again, we assume wlog that $L_n(\hat \theta) = \sup_{\theta \in \Theta_{osn}} L_n(\theta) + o_\p(n^{-1})$. By Lemma \ref{l:quad:prime}, when $\Sigma=I_{d^*}$, we have:
	\begin{align} \label{e:dist:prime}
	\sup_{\theta \in \Theta_I} Q_n(\theta)  = \|\mf T \mb V_n \|^2 + o_\p(1)  \rightsquigarrow \| \mf T Z\|^2
	\end{align}
where $Z \sim N(0,I_{d^*})$.
Lemma \ref{l:post:prime} shows that
the posterior distribution of the QLR asymptotically (first-order) stochastically dominates the $F_T$ distribution. The result follows by the same arguments as the proof of Theorem \ref{t:main}(i).
\end{proof}

\begin{lemma}\label{l:quad:profile}
	Let Assumptions \ref{a:rate}(i) and \ref{a:quad} or \ref{a:quad:prime}' and \ref{a:qlr:profile} hold. Then:
	\begin{align*}
	& \sup_{\theta \in \Theta_{osn}} \left| PQ_n ( M (\theta)) - f \left( \mf T \mb V_n - \sqrt n \gamma(\theta) \right)  \right| = o_\p(1) \,.
	\end{align*}
\end{lemma}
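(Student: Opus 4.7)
The plan is to prove this by a direct subtraction of two quadratic approximations already in hand. First I would expand $PQ_n(M(\theta)) = 2n[L_n(\hat\theta) - PL_n(M(\theta))]$ into two pieces and handle each separately over the local neighborhood $\Theta_{osn}$.

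For the $L_n(\hat\theta)$ piece, I would invoke Lemma \ref{l:quad} under Assumption \ref{a:quad} (or Lemma \ref{l:quad:prime} under Assumption \ref{a:quad:prime}$'$), both of which were proved earlier via Assumption \ref{a:rate}(i) and the local quadratic expansion. These give the key identity
\[
2nL_n(\hat\theta) \;=\; 2\ell_n + \|\mf T \mb V_n\|^2 + o_\p(1),
\]
using that $\inf_{t\in T}\|t-\mf T\mb V_n\|^2 = 0$ (and, in the singular case, that $f_{n,\bot}(0)=0$ with $\gamma_\bot(\bar\theta)=0$ for $\bar\theta \in \Theta_I$).

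For the $PL_n(M(\theta))$ piece, I would invoke Assumption \ref{a:qlr:profile} directly, which states that uniformly over $\theta \in \Theta_{osn}$,
\[
2nPL_n(M(\theta)) \;=\; 2\ell_n + \|\sqrt n\hat\gamma_n\|^2 - f\!\left(\sqrt n(\hat\gamma_n - \gamma(\theta))\right) + o_\p(1).
\]
Substituting $\sqrt n \hat\gamma_n = \mf T \mb V_n$ (from Assumption \ref{a:quad} or \ref{a:quad:prime}$'$) gives $\|\sqrt n\hat\gamma_n\|^2 = \|\mf T\mb V_n\|^2$ and $\sqrt n(\hat\gamma_n - \gamma(\theta)) = \mf T \mb V_n - \sqrt n \gamma(\theta)$.

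Subtracting the two expansions, the $2\ell_n$ and $\|\mf T\mb V_n\|^2$ terms cancel exactly, leaving
\[
PQ_n(M(\theta)) \;=\; f\!\left(\mf T\mb V_n - \sqrt n \gamma(\theta)\right) + o_\p(1)
\]
uniformly over $\theta \in \Theta_{osn}$, as desired. There is no serious obstacle here: both inputs already deliver uniform-in-$\theta$ control over $\Theta_{osn}$, and the $o_\p(1)$ remainder for $L_n(\hat\theta)$ does not depend on $\theta$, so uniformity of the difference is immediate. The only mild care needed is in the singular case, where one must note that Lemma \ref{l:quad:prime} already absorbs $\inf_{\theta\in\Theta_{osn}} f_{n,\bot}(\gamma_\bot(\theta)) = 0$ into the error, so the cancellation with the profile expansion proceeds unchanged.
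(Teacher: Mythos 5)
Your proof is correct and follows essentially the same route as the paper: the identity $2nL_n(\hat\theta) = 2\ell_n + \|\mf T \mb V_n\|^2 + o_\p(1)$ you invoke is exactly the display established in the proofs of Lemmas \ref{l:quad} and \ref{l:quad:prime} (including the singular case via $f_{n,\bot}\geq 0$, $f_{n,\bot}(0)=0$), and the paper likewise subtracts the Assumption \ref{a:qlr:profile} expansion with $\sqrt n\hat\gamma_n = \mf T\mb V_n$ so that $2\ell_n$ and $\|\mf T\mb V_n\|^2$ cancel uniformly over $\Theta_{osn}$. No gaps.
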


\begin{proof}[\textbf{Proof of Lemma \ref{l:quad:profile}}]
	By display (\ref{e:lnhattheta}) in the proof of Lemma \ref{l:quad} or display (\ref{e:lnhattheta:prime}) in the proof of Lemma \ref{l:quad:prime} and Assumption \ref{a:qlr:profile}, we obtain:
	\begin{align*}
	PQ_n (  M (\theta))
	& = 2nL_n(\hat \theta) - 2n PL_n (  M (\theta)) \\
	& =  2\ell_n +  \|\mf T \mb V_n\|^2  - \bigg( 2 \ell_n + \|\mf T \mb V_n\|^2  -  f \left( \mf T \mb V_n -  \sqrt n \gamma(\theta)  \right) \bigg) + o_\p(1)
	\end{align*}
	where the $o_\p(1)$ term holds uniformly over $\Theta_{osn}$.
\end{proof}

\begin{proof}[\textbf{Proof of Lemma \ref{l:post:profile}}]
	We prove the result under Assumptions  \ref{a:rate}, \ref{a:quad:prime}', \ref{a:prior:prime}', and \ref{a:qlr:profile}. The proof under Assumptions  \ref{a:rate}, \ref{a:quad}, \ref{a:prior}, and \ref{a:qlr:profile} follows similarly. By the same arguments as the proof of Lemma \ref{l:post}, it suffices to characterize the large-sample behavior of:
	\begin{equation} \label{e-Rn-profile}
	R_{n}(z)  := \frac{\int_{\{\theta:PQ_n (  M (\theta)) \leq  z \} \cap \Theta_{osn}}\! e^{nL_n(\theta)  }\,\mr d \Pi(\theta)}{\int_{\Theta_{osn}} \!e^{nL_n(\theta)  }\,\mr d \Pi(\theta)} \,.
	\end{equation}
	By Lemma \ref{l:quad:profile} and Assumption \ref{a:quad:prime}', there exists a positive sequence $(\varepsilon_n)_{n \in \mb N}$ independent of $z$ with $\varepsilon_n = o(1)$ such that the inequalities:
	\begin{align*}
	\sup_{\theta \in \Theta_{osn}} \left| PQ_n (  M (\theta))  - f(\mf T \mb V_n - \sqrt n \gamma(\theta)) \right| & \leq \varepsilon_n  \\
	\sup_{\theta \in \Theta_{osn}} \left| n L_n(\theta) - \ell_n - \frac{1}{2}\|\mf T \mb V_n\|^2 - \left( - \frac{1}{2} \|\sqrt n \gamma(\theta) - \mf T \mb V_n\|^2  - f_{n,\bot}(\gamma_\bot(\theta))  \right) \right| & \leq \varepsilon_n
	\end{align*}
	both hold wpa1. 	Therefore, wpa1 we have:
	\begin{align*}
	&  e^{-2 \varepsilon_n} \frac{\int_{\{\theta: f( \mf T \mb V_n - \sqrt n \gamma(\theta) ) \leq  z - \varepsilon_n\} \cap \Theta_{osn}}\! e^{ - \frac{1}{2} \|\sqrt n \gamma(\theta) - \mf T \mb V_n\|^2  - f_{n,\bot}(\gamma_\bot(\theta)) } \mr d\Pi(\theta)}{\int_{\Theta_{osn}} \!e^{ - \frac{1}{2} \|\sqrt n \gamma(\theta) - \mf T \mb V_n\|^2  - f_{n,\bot}(\gamma_\bot(\theta)) } \mr d\Pi(\theta)} \notag \\
	& \leq R_n(z)
	\leq  e^{2\varepsilon_n} \frac{\int_{\{\theta: f( \mf T \mb V_n - \sqrt n \gamma(\theta)  ) \leq  z +  \varepsilon_n\} \cap \Theta_{osn}}\! e^{ - \frac{1}{2}\|\sqrt n \gamma(\theta) - \mf T \mb V_n\|^2 - f_{n,\bot}(\gamma_\bot(\theta)) } \mr d\Pi(\theta)}{\int_{\Theta_{osn}} \!e^{ - \frac{1}{2} \|\sqrt n \gamma(\theta) - \mf T \mb V_n\|^2 - f_{n,\bot}(\gamma_\bot(\theta)) } \mr d\Pi(\theta)}
	\end{align*}
	uniformly in $z$. By similar arguments to the proof of Lemma \ref{l:post:prime}, we may use the change of variables $\theta \mapsto ( \gamma(\theta),\gamma_\bot(\theta)) $, continuity of $\pi_{\Gamma^*}$ (Assumption \ref{a:prior:prime}'(ii)), and Tonelli's theorem to restate the preceding inequalities as:
	\begin{align*}
	&  (1-\bar \varepsilon_n) e^{-2 \varepsilon_n} \frac{\int_{\{  \gamma :  f(\mf T \mb V_n - \sqrt n \gamma) \leq z - \varepsilon_n  \} \cap \Gamma_{osn} } e^{ - \frac{1}{2} \|\sqrt n \gamma - \mf T {\mb V_n}\|^2} \mr d \gamma}{\int_{\Gamma_{osn}} \!e^{ - \frac{1}{2} \|\sqrt n \gamma - \mf T \mb V_n \|^2 } \mr d \gamma}   \notag \\
	& \leq R_n(z)
	\leq (1+\bar \varepsilon_n) e^{2 \varepsilon_n} \frac{\int_{\{  \gamma :  f(\mf T \mb V_n - \sqrt n  \gamma) \leq z + \varepsilon_n  \} \cap \Gamma_{osn} } e^{ - \frac{1}{2} \|\sqrt n \gamma - \mf T {\mb V_n}\|^2} \mr d \gamma}{\int_{\Gamma_{osn}} \!e^{ - \frac{1}{2} \|\sqrt n \gamma - \mf T \mb V_n \|^2 } \mr d \gamma}
	\end{align*}
	which holds (wpa1) for some $\bar \varepsilon_n = o(1)$. Let $f^{-1}(z) = \{ \kappa \in \mb R^{d^*} : f(\kappa) \leq z\}$. A second change of variables $\mf T \mb V_n - \sqrt n  \gamma \mapsto \kappa$ yields:
	\begin{align*}
	&  (1-\bar \varepsilon_n) e^{-2 \varepsilon_n} \frac{\nu_{d^*} ( (f^{-1}( z - \varepsilon_n )) \cap (\mf T \mb V_n - T_{osn}))}{ \nu_{d^*} (\mf T \mb V_n - T_{osn})}  \\
	& \leq R_n(z)
	\leq  (1+\bar \varepsilon_n) e^{2 \varepsilon_n} \frac{\nu_{d^*} ( (f^{-1}( z + \varepsilon_n ) )\cap (\mf T \mb V_n - T_{osn}))}{ \nu_{d^*} (\mf T \mb V_n - T_{osn})}
	\end{align*}
	uniformly in $z$, where it should be understood that $\mf T \mb V_n - T_{osn}$ is the Minkowski sum $\mf T \mb V_n + (-T_{osn})$ with $-T_{osn} = \{-\kappa : \kappa \in T_{osn}\}$.
	
	The remainder of the proof follows by similar arguments to the proof of Lemma \ref{l:post}, noting that
	\begin{align*}
	& \sup_{z \in I} \left|\frac{\nu_{d^*}( (f^{-1}(z \pm \varepsilon_n)) \cap (\mf T \mb V_n - T))}{\nu_{d^*}( \mf T \mb V_n - T)} - \p_{Z | \mf X_n} \Big( f(Z) \leq z  \Big| Z \in \mf T \mb V_n - T \Big) \right| \\
	& \leq \sup_{z \in I} |\nu_{d^*}(f^{-1}(z \pm \varepsilon_n)) - \nu_{d^*}(f^{-1}(z))| = o(1)
	\end{align*}
	where the final equality is by uniform continuity of bounded, monotone continuous functions.
\end{proof}

\begin{proof}[\textbf{Proof of Theorem \ref{t:main:profile}}]
	We verify the conditions of Lemma \ref{l:basic:profile}.  Again, we assume wlog that $L_n(\hat \theta) = \sup_{\theta \in \Theta_{osn}} L_n(\theta) + o_\p(n^{-1})$.
	
	To prove Theorem \ref{t:main:profile}(i), let $\xi_{\alpha}$ denote the $\alpha$ quantile of $f( \mf T Z )$. By Lemma  \ref{l:post:profile} we need to show that
	\[
	\p_Z ( f(Z) \leq z  | Z \in \mf T v - T ) \leq \p_Z (f( \mf T Z ) \leq z)
	\]
	holds for all $v \in \mb R^{d^*}$ and all $z$ in a neighborhood of $\xi_\alpha$. To prove this, i is sufficient to show that
	\[
	 \nu_{d^*}( f^{-1}(z) \cap (\mf T v - T))  \leq \nu_{d^*}( \mf T v - T ) \times \nu_{d^*} ( \{ \kappa \in \mb R^{d^*} : f(\mf T \kappa) \leq z\})
	\]
	holds for each $z$ and each $v \in \mb R^{d^*}$. But notice that
	\[
	 \nu_{d^*}( f^{-1}(z) \cap (\mf T v - T)) \leq \nu_{d^*}( (f^{-1}(z)- T^o) \cap (\mf T v - T))  \leq \nu_{d^*}( f^{-1}(z)- T^o) \times \nu_{d^*} (\mf T v - T)
	\]
	where the first inequality is because $f^{-1}(z) \subseteq f^{-1}(z) - T^o = \{\kappa_1 + \kappa_2 : \kappa_1 \in f^{-1}(z) , -\kappa_2 \in T^o\}$ since $0 \in T^o$ and the second inequality is by Theorem 1 of \cite{ChenGao} (taking  $A = \{ \mf T v\}$, $B = f^{-1}(z)$, $C = -T$ and $D = -T^o$ in their notation). 
	
	Whenever $\nu_{d^*}( f^{-1}(z)- T^o) \leq \nu_{d^*} ( \{ \kappa \in \mb R^{d^*} : f(\mf T \kappa) \leq z\})$ holds (which it does, in particular, when $f$ is subconvex), we therefore have:
	\begin{align*}
 \xi_{n,\alpha}^{mc,p} & = \xi_\alpha + (\xi_{n,\alpha}^{post,p} - \xi_\alpha) + (\xi_{n,\alpha}^{mc,p} - \xi_{n,\alpha}^{post,p})
 \geq \xi_\alpha + (\xi_{n,\alpha}^{mc,p} - \xi_{n,\alpha}^{post,p}) + o_\p(1)  = \xi_\alpha + o_\p(1)
\end{align*}
where the final equality is by Assumption \ref{a:mcmc:profile}.
	
	To prove Theorem \ref{t:main:profile}(ii), when $T = \mb R^{d^*}$ we have $PQ_n (  M_I)  \rightsquigarrow f ( Z )$. Let $\xi_{\alpha}$ denote the $\alpha$ quantile of $f( Z )$. Then:
	\[
	\xi_{n,\alpha}^{mc} = \xi_\alpha + ( \xi_{n,\alpha}^{post} - \xi_\alpha) + (\xi_{n,\alpha}^{mc} - \xi_{n,\alpha}^{post}) = \xi_\alpha + o_\p(1)
	\]
	by Lemma \ref{l:post:profile} and Assumption \ref{a:mcmc:profile}.
\end{proof}

\begin{proof}[\textbf{Proof of Theorem \ref{t:profile:chi}}]
	By Lemma \ref{l:basic:profile}, it is enough to show that $\Pr(W^* \leq w) \geq F_{\chi^2_1}(w)$ holds for $w \geq 0$, where $W^* = \max_{i \in \{1,2\}} \inf_{t \in T_i} \| Z - t\|^2$.

\textbf{Case 1: $d^* = 1$.} Wlog let $T_1 = [0,\infty)$ and $T_1^o = (-\infty,0]$. If $T_2 = T_1$ then $\mf T_1^o Z = \mf T_2^o Z = (Z \wedge 0)$ so $W^* = (Z \wedge 0)^2 \leq Z^2 \sim \chi^2_1$. If $T_2 = T_1^o$ then $\mf T_1^o Z = (Z \wedge 0)$ and $\mf T_2^o Z = (Z \vee 0)$, so $W^* = Z^2 \sim \chi^2_1$. In either case, we have: $\Pr(W^* \leq w) \geq F_{\chi^2_1}(w)$ for any $w \geq 0$.

\textbf{Case 2: $d^* = 2$.} Wlog let $T_1 = \{(x,y) : y \leq 0\}$ then $T_1^o$ is the positive $y$-axis. Let $Z = (X,Y)'$. If $T_1 = T_2$ then $\mf T_1^o Z = \mf T_2^o Z = (Y \vee 0)$, so $W^* = (Y \vee 0)^2 \leq Y^2 \sim \chi^2_1$. If $T_2 = \{(x,y) : y \geq 0\}$ then $T_2^o$ is the negative $y$-axis. So, in this case, $\mf T_1^o Z = (Y \vee 0)$, $\mf T_2^o = (Y \wedge 0)$ and so $W^* = Y^2 \sim \chi^2_1$.

Now let $T_2$ be the rotation of $T_1$ by $\varphi \in (0,\pi)$ radians. This is plotted in Figure \ref{f:cones} for $\varphi \in (0,\pi/2)$ (left panel) and $\varphi \in (\frac{\pi}{2},\pi)$ (right panel). The axis of symmetry is the line $y = -x \cot (\frac{\varphi}{2})$, which bisects the angle between $T_1^o$ and $T_2^o$.

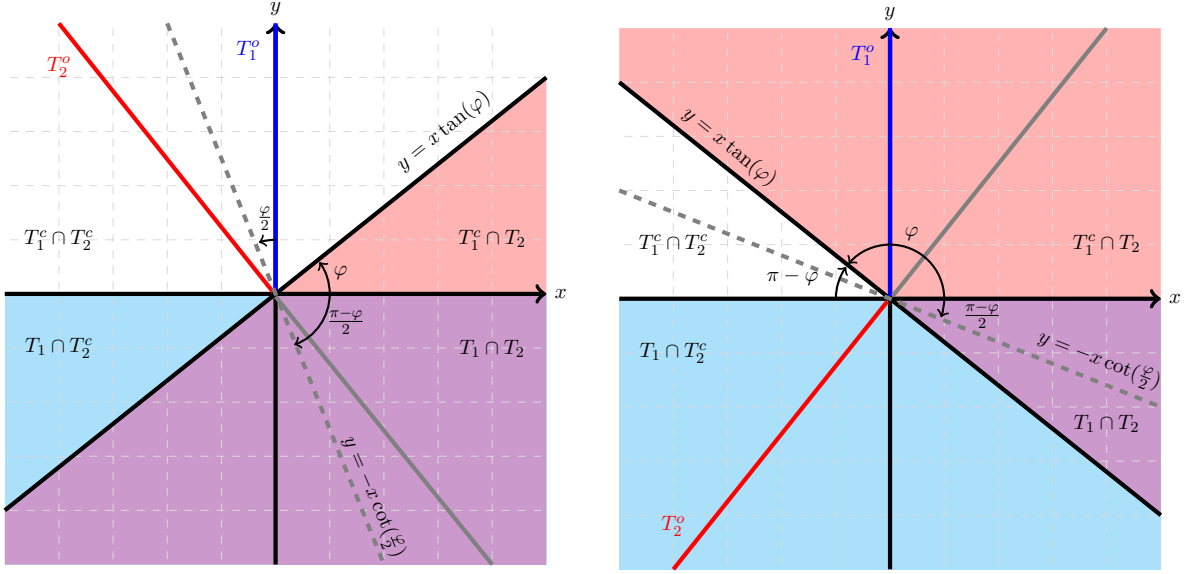
\begin{figure}[t]
\begin{center}
\begin{tikzpicture}[thick,scale=.72, every node/.style={transform shape}]
\fill[cyan!30!white] (-5,0) rectangle (5,-5);
   \node (r0) at (0,0) {}; 
   \node (u1) at (5,4) {}; 
   \node (u2) at (5,0) {}; 
\fill[red!30] (r0.center)--(u1.center)--(u2.center);
   \node (l1) at (-5,-4) {}; 
   \node (l2) at (-5,-5) {}; 
   \node (l3) at (0,-5) {}; 
\fill[violet!40] (r0.center)--(l1.center)--(l2.center)--(l3.center);
\fill[violet!40!white] (0,0) rectangle (5,-5);
\draw[help lines, color=gray!30, dashed] (-4.9,-4.9) grid (4.9,4.9);
\draw[->,ultra thick] (-5,0)--(5,0) node[right]{$x$};
\draw[->,ultra thick] (0,-5)--(0,5) node[above]{$y$};
\draw[ultra thick,domain=-5:5,smooth,variable=\x,black] plot ({\x},{0.8*\x});
\draw[-,red,ultra thick] (-4,5)--(-0.05,0.04);
\draw[-,blue,ultra thick] (0,0.05)--(0,5);
\draw[-,gray,ultra thick] (0,-0.05)--(4,-5);
\draw[dashed,gray,ultra thick] (-2,5)--(2,-5);
\node[red] at (-4,4.2){$T_2^o$};
\node[blue] at (-0.5,4.5){$T_1^o$};
\node[black] at (-4,-1){$T_1 \cap T_2^c$};
\node[black] at (4,1){$T_1^c \cap T_2$};
\node[black] at (4,-1){$T_1 \cap T_2$};
\node[black] at (-4,1){$T_1^c \cap T_2^c$};
\draw[->] (1,0) arc (0:36:1);
\node[black] at (1.2,0.4){$\varphi$};
\draw[->] (0,1) arc (90:108:1);
\node[black] at (-0.2,1.4){$\frac{\varphi}{2}$};
\draw[->] (1,0) arc (0:-70:1);
\node[black] at (1.3,-0.5){$\frac{\pi-\varphi}{2}$};
\node[label={[label distance=0cm,text depth=0,rotate=39]right:$y = x\tan (\varphi)$}] at (2,2.3) {};
\node[label={[label distance=0cm,text depth=0,rotate=-67]right:$y = -x \cot(\frac{\varphi}{2})$}] at (1.2,-2.5) {};
\end{tikzpicture}
\quad
\begin{tikzpicture}[thick,scale=0.72, every node/.style={transform shape}]
\fill[cyan!30!white] (-5,0) rectangle (5,-5);
   \node (r0) at (0,0) {}; 
   \node (l1) at (5,-4) {}; 
   \node (l2) at (5,0) {}; 
\fill[violet!40] (r0.center)--(l1.center)--(l2.center);
   \node (u1) at (-5,4) {}; 
   \node (u2) at (-5,5) {}; 
   \node (u3) at (5,5) {}; 
   \node (u4) at (5,0) {}; 
\fill[red!30] (r0.center)--(u1.center)--(u2.center)--(u3.center)--(u4.center);
\draw[help lines, color=gray!30, dashed] (-4.9,-4.9) grid (4.9,4.9);
\draw[->,ultra thick] (-5,0)--(5,0) node[right]{$x$};
\draw[->,ultra thick] (0,-5)--(0,5) node[above]{$y$};
\draw[ultra thick,domain=-5:5,smooth,variable=\x,black] plot ({\x},{-0.8*\x});
\draw[-,red,ultra thick] (-4,-5)--(-0.05,-0.05);
\draw[-,blue,ultra thick] (0,0.05)--(0,5);
\draw[-,gray,ultra thick] (0.05,0.05)--(4,5);
\draw[dashed,gray,ultra thick] (-5,2)--(5,-2);
\node[red] at (-4,-4.2){$T_2^o$};
\node[blue] at (-0.5,4.5){$T_1^o$};
\draw[->] (1,0) arc (0:140:1);
\node[black] at (0.4,1.2){$\varphi$};
\draw[->] (-1,0) arc (180:144:1);
\node[black] at (-1.8,0.4){$\pi-\varphi$};
\draw[->] (1,0) arc (0:-20:1);
\node[black] at (1.7,-0.3){$\frac{\pi-\varphi}{2}$};
\node[black] at (4,-2.3){$T_1 \cap T_2$};
\node[black] at (-4,1){$T_1^c \cap T_2^c$};
\node[black] at (-4,-1){$T_1 \cap T_2^c$};
\node[black] at (4,1){$T_1^c \cap T_2$};
\node[label={[label distance=0cm,text depth=0,rotate=-39]right:$y = x\tan (\varphi)$}] at (-4,3.6) {};
\node[label={[label distance=0cm,text depth=0,rotate=-21]right:$y = -x \cot(\frac{\varphi}{2})$}] at (2.5,-0.6) {};
\end{tikzpicture}
\end{center}
\centering

\parbox{12cm}{\caption{\small\label{f:cones} Cones and polar cones for the proof of Theorem \ref{t:profile:chi}.} }
\end{figure}

Suppose $Z = (X,Y)'$ lies in the half-space $Y \geq -X \cot(\frac{\varphi}{2})$. There are three options:
\begin{itemize}
\item $Z \in (T_1^{\phantom c} \cap T_2^{\phantom c})$ (purple region): $\mf T_1^o Z = 0$, $\mf T_2^o Z = 0$, so $W^* = 0$
\item $Z \in (T_1^c \cap T_2^{\phantom c})$ (red region): $\mf T_1^o Z = (0,Y)'$, $\mf T_2^o Z = 0$, so $W^* = Y^2$
\item $Z \in (T_1^c \cap T_2^c)$ (white region): $\mf T_1^o Z = (0,Y)'$. To calculate $\mf T_2^o Z$, observe that if we rotate about the origin by $-\varphi$ then the polar cone $\mf T_2^o$ becomes the positive $y$ axis.
Under the rotation, $\mf T_2^o Z = (0,Y^*)$ where $Y^*$ is the $y$-value of the rotation of $(X,Y)$ by negative $\varphi$. The point $(X,Y)$ rotates to $(X \cos\varphi + Y \sin\varphi, Y \cos\varphi - X \sin\varphi)$, so we get $\| \mf T_2^o Z\|^2 = (Y \cos\varphi - X \sin\varphi)^2$. We assumed $Y \geq -X \cot (\frac{\varphi}{2})$. By the half-angle formula $\cot (\frac{\varphi}{2}) = \frac{\sin \varphi}{1-\cos \varphi}$, this means that $Y \geq Y \cos \varphi - X \sin \varphi$. But $Y \cos \varphi - X \sin \varphi \geq 0$ as $Y \geq X \tan \varphi$. Therefore, $(Y \cos\varphi - X \sin\varphi)^2 \leq Y^2$ and so $W^* = Y^2$.
\end{itemize}

We have shown that $W^* \leq Y^2$ whenever $Y \geq -X \cot (\frac{\varphi}{2})$. Now, for any $w \geq 0$:
\begin{align} \label{e-p3pf-1}
 \Pr(W^* \leq w | Y \geq -X\cot({\textstyle \frac{\varphi}{2}})) & \geq \Pr(Y^2 \leq w |Y \geq -X\cot({\textstyle \frac{\varphi}{2}}))  = \Pr(Y^2 \leq w | V \geq 0)
\end{align}
where $V = Y \sin (\frac{\varphi}{2}) + X \cos(\frac{\varphi}{2})$. Note that $Y$ and $V$ are jointly normal with mean $0$, unit variance, and correlation $\rho = \sin(\frac{\varphi}{2})$.
The pdf of $Y$ given $V \geq 0$ is:
\begin{align*}
 f(y|V \geq 0) & = \frac{\int_0^\infty f_{Y|V}(y|v) f_V(v) \mr dv}{\int_0^\infty f_V(v) \mr dv }
  = 2 f_Y(y)  (1 - F_{V|Y}(0|y)) \,.
\end{align*}
As $V|Y=y \sim N(\rho y , (1-\rho^2) )$, we have:
\[
 F_{V|Y}(0|y) = \Phi \Big( \frac{- \rho y}{\sqrt{1-\rho^2}} \Big) = 1- \Phi \Big( \frac{ \rho}{\sqrt{1-\rho^2}} y \Big)
\]
and so
\[
 f(y|V \geq 0) = 2 \phi(y) \Phi \Big( \frac{ \rho}{\sqrt{1-\rho^2}} y \Big)\,.
\]
Therefore:
\begin{align}
 \Pr(Y^2 \leq w | V \geq 0) & = \Pr(-\sqrt w \leq y \leq \sqrt w  | V \geq 0)
  = \int_{-\sqrt w}^{\sqrt w} 2 \phi(y) \Phi \Big( \frac{ \rho}{\sqrt{1-\rho^2}} y \Big) \, \mr dy \label{e-p3pf-2}  \,.
\end{align}
But differentiating the right-hand side of (\ref{e-p3pf-2}) with respect to $\rho$ gives:
\[
 \frac{\mr d}{\mr d \rho} \int_{-\sqrt w}^{\sqrt w} 2 \phi(y) \Phi \Big( \frac{ \rho}{\sqrt{1-\rho^2}} y \Big) \, \mr dy =  \frac{1}{(1-\rho^2)^{3/2}} \int_{-\sqrt w}^{\sqrt w} 2 y \phi(y) \phi \Big( \frac{ \rho}{\sqrt{1-\rho^2}} y \Big) \, \mr d y = 0
\]
for any $\rho \in (-1,1)$, because $y \phi(y) \phi \big( \rho y/\sqrt{1-\rho^2} \big)$ is an odd function. Therefore, the probability in display (\ref{e-p3pf-2}) doesn't depend on the value of $\rho$. Setting $\rho = 0$, we obtain:
\begin{align*}
 \Pr(Y^2 \leq w | V \geq 0) & = \int_{-\sqrt w}^{\sqrt w} 2 \phi(y) \Phi ( 0 ) \, \mr dy = \Phi(\sqrt w) - \Phi(-\sqrt w) = F_{\chi^2_1}(w) \,.
\end{align*}
Therefore, by inequality (\ref{e-p3pf-1}) we have:
\[
 \Pr(W^* \leq w | Y \geq -X\cot({\textstyle \frac{\varphi}{2}})) \geq F_{\chi^2_1}(w)\,.
\]
By symmetry, we also have $\Pr(W^* \leq w | Y < -X\cot({\textstyle \frac{\varphi}{2}})) \geq F_{\chi^2_1}(w)$. Therefore, we have shown that $\Pr(W^* \leq w) \geq F_{\chi^2_1}(w)$ holds for each $w \geq 0$.
A similar argument applies when $T_2$ is the rotation of $T_1$ by $\varphi \in (-\pi,0)$ radians. This completes the proof of the case $d^* = 2$.

\textbf{Case 3: $d^* \geq 3$.} As $T_1$ and $T_2$ are closed half-spaces we have $T_1 = \{z \in \mb R^{d^*} : a'z \leq 0\}$ and $T_2 = \{z \in \mb R^{d^*} : b'z \leq 0\}$ for some $a,b \in \mb R^{d^*}\setminus \{0\}$. The polar cones are the rays $T_1^o = \{ s  a : s \geq 0\}$ and $T_2^o = \{s b : s \geq 0\}$. There are three sub-cases to consider.

Case 3a: $a = s b$ for some $s > 0$. Let $u_a = \frac{a}{\|a\|}$. Here $T_1 = T_2$, $\mf T_1^o Z = \mf T_2^o Z = 0$ if $Z \in T_1$, and
\[
 \mf T_1^o Z = \mf T_2^o Z = u_a (Z'u_a) \quad \mbox{if $Z \not \in T_1$ (i.e. if $Z'u_a > 0$)}\,.
\]
Therefore, $W^* = (Z'u_a \vee 0)^2 \leq (Z'u_a)^2 \sim \chi^2_1$.

Case 3b: $a =s b$ for some $s < 0$. Here $T_1 = -T_2$ and $T_1^o = -T_2^o$, so $\mf T_1^o Z = 0$ and $\mf T_2^o Z = u_a(Z'u_a)$ if $Z \in T_1$ (i.e. if $Z'u_a \leq 0$) and $\mf T_1^o Z = u_a (Z'u_a)$ and $\mf T_2^o Z = 0$ if $Z \not \in T_1$ (i.e. if $Z'u_a > 0$). Therefore $W^* = (Z'u_a)^2\sim \chi^2_1$.

Case 3c: $a$ and $b$ are linearly independent. Without loss of generality,\footnote{By Gram-Schmidt, we can always define a new set of coordinate vectors $e_1,e_2,\ldots,e_{d^*}$ for $\mb R^{d^*}$ with $e_2=u_a$ and such that $b$ is in the span of $e_1$ and $e_2$.} we can take $T_1^o$ to be the positive $y$-axis (i.e. $a = (0, a_2, 0, \ldots, 0)'$ for some $a_2 > 0)$ and take $T_2^o$ to lie in the $(x,y)$-plane (i.e. $b = (b_1, b_2, 0 , \ldots, 0)'$ for some $b_1 \neq 0$).

Now write $Z = (X,Y,U)$ where $U \in \mb R^{d^*-2}$. Note that $a'Z = a_2 Y$ and $b'Z = b_1 X + b_2 Y$. So only the values of $X$ and $Y$ matter in determining whether or not $Z$ belongs to $T_1$ and $T_2$.

Without loss of generality we may assume that $(b_1,b_2)'$ is, up to scale, a rotation of $(0,a_2)'$ by $\varphi \in (0,\pi)$ (the case $(-\pi,0)$ can be handled by similar arguments, as in Case 2).

Suppose that $Y \geq -X \cot(\frac{\varphi}{2})$. As in Case 2, there are three options:
\begin{itemize}
\item $Z \in (T_1^{\phantom c} \cap T_2^{\phantom c})$: $\mf T_1^o Z = 0$, $\mf T_2^o Z = 0$, so $W^* = 0$
\item $Z \in (T_1^c \cap T_2^{\phantom c})$: $\mf T_1^o Z = (0,Y,0,\ldots,0)'$, $\mf T_2^o Z = 0$, so $W^* = Y^2$
\item $Z \in (T_1^c \cap T_2^c)$: $\|\mf T_1^o Z\|^2 = Y^2$ and $\| \mf T_2^o Z\|^2 = (Y \cos\varphi - X \sin\varphi)^2 \leq Y^2$, so $W^* = Y^2$.
\end{itemize}
Arguing as in Case 2, we obtain $\Pr(W^* \leq w | Y \geq -X\cot({\textstyle \frac{\varphi}{2}})) \geq F_{\chi^2_1}(w)$. By symmetry, we also have $\Pr(W^* \leq w | Y < -X\cot({\textstyle \frac{\varphi}{2}})) \geq F_{\chi^2_1}(w)$. Therefore, $\Pr(W^* \leq w) \geq F_{\chi^2_1}(w)$.
\end{proof}

\begin{proof}[\textbf{Proof of Proposition \ref{p:qlr:chi}}]
	It follows from condition (i) and display (\ref{e:lnhattheta}) or display (\ref{e:lnhattheta:prime}) that:
	\[
	2n L_n (\hat \theta) = 2\ell_n + \|\mb V_n\|^2  + o_\p(1) \,.
	\]
	Moreover, applying conditions (ii) and (iii), we obtain:
	\begin{align*}
	\inf_{\mu \in M_I} \sup_{\eta \in H_\mu} 2n L_n(\mu,\eta)
	& = \min_{\mu \in \{\ul \mu,\ol \mu\}} \sup_{\eta \in H_\mu} 2n L_n(\mu,\eta) + o_\p(1) \\
	& = \min_{\mu \in \{\ul \mu,\ol \mu\}} \left( 2\ell_n + \|\mb V_n\|^2 - \inf_{t \in T_\mu} \| \mb V_n - t \|^2 \right) + o_\p(1) \,.
	\end{align*}
	Therefore:
	\[
	\sup_{\mu \in M_I} \inf_{\eta \in H_\mu} Q_n(\mu,\eta) = \max_{\mu \in \{ \ul \mu,\ol \mu\}} \inf_{t \in T_{\mu}} \|\mb V_n - t\|^2  + o_\p(1) \,.
	\]
	The result now follows from $\Sigma = I_{d^*}$.
\end{proof}

\subsection{Proofs and Additional Lemmas for Section \ref{s:suff}}

\begin{proof}[\textbf{Proof of Proposition \ref{p:rfr}}]
	Wlog we can take $\tilde \gamma_0 = 0$. Also take $n$ large enough that $\{\tilde \gamma : \|\tilde \gamma\| \leq n^{-1/4}\} \subseteq U$. Then by condition (b), for any such $\tilde \gamma$ we have:
	\begin{align*}
	n L_n(\tilde \gamma) & =n L_n(\tilde \gamma_0) + (\sqrt n \tilde \gamma)'(\sqrt n \p_n \dot \ell_{\tilde \gamma_0}) + \frac{1}{2}(\sqrt n \tilde \gamma)' (\p_n \ddot \ell_{\tilde  \gamma^*}) (\sqrt n \tilde \gamma)
	\end{align*}
	where $\tilde \gamma^*$ is in the segment between $\tilde \gamma$ and $\tilde \gamma_0$ for each element of $\p_n \ddot \ell_{\tilde \gamma^*}$. We may deduce from Lemma 2.4 of \cite{NeweyMcFadden} that $\sup_{\tilde \gamma : \|\tilde \gamma \| \leq n^{-1/4}} \| (\p_n \ddot \ell_{\tilde \gamma^*})  - P_0(\ddot \ell_{\gamma_0})\| = o_\p(1)$ holds under conditions (a) and (b). Since this term is $o_\p(1)$, we can choose a positive sequence $(r_n)_{n \in \mb N}$ with $r_n \to \infty$, $r_n = o(n^{-1/4})$ such that $r_n^2 \sup_{\tilde \gamma : \|\tilde \gamma \| \leq n^{1/4}} \| (\p_n \ddot \ell_{\tilde \gamma^*})  - P_0(\ddot \ell_{\tilde \gamma_0})\| = o_\p(1)$. Assumption \ref{a:quad} then holds over $\Theta_{osn} = \{ \theta \in \Theta : \|\tilde \gamma(\theta)\| \leq r_n/\sqrt n\}$ with $\ell_n = n L_n(\tilde \gamma_0)$, $\gamma(\theta) = \mb I_{\tilde \gamma_0}^{1/2} \tilde \gamma(\theta)$, $\sqrt n \hat \gamma_n = \mb V_n =\mb I_{\tilde \gamma_0}^{-1/2} \mb P_n (\dot \ell_{\gamma_0})$, $\Sigma = I_{d^*}$.
	
	It remains to show that the posterior concentrates on $\Theta_{osn}$. Choose $\varepsilon$ sufficiently small that $U_\varepsilon = \{\tilde \gamma : \|\tilde \gamma\| < \varepsilon\} \subseteq U$. By a similar expansion to the above and condition (c), we have $D_{KL}(p_0\|q_{\tilde \gamma}) = -\frac{1}{2} \tilde \gamma' P_0(\ddot \ell_{\tilde \gamma^*}) \tilde \gamma$ where $\tilde \gamma^*$ is in the segment between $\tilde \gamma^*$ and $\tilde \gamma_0$. Since $\|P_0(\ddot \ell_{\tilde \gamma^*}) + \mb I_{\tilde \gamma_0}\| \to 0$ as $\|\tilde \gamma\| \to 0$, we may reduce $\varepsilon$ so that $\inf_{\tilde \gamma \in U_\varepsilon} \|P_0(\ddot \ell_{\tilde \gamma^*}) + \mb I_{\tilde \gamma_0}\| \leq \frac{1}{2} \lambda_{\min} (\mb I_{\tilde \gamma_0})$. On $U_\varepsilon$ we then have that there exist finite positive constants $\ul c$ and $\ol c$ such that $\ul c \|\tilde \gamma\|^2 \leq D_{KL}(p_0\|q_{\tilde \gamma}) \leq \ol c \|\tilde \gamma\|^2$. Also note that $\inf_{\tilde \gamma \in \wt \Gamma \setminus U_\varepsilon} D_{KL}(p_0 \| q_{\tilde \gamma})=: \delta$ with $\delta > 0$ by identifiability of $\tilde \gamma_0$, continuity of the map $\tilde \gamma \mapsto P_0 \ell_{\tilde \gamma}$, and compactness of $\wt \Gamma$. Standard consistency arguments (e.g. the Corollary to Theorem 6.1 in \cite{Schwartz1965})) then imply that $\Pi_n( U_{\varepsilon} | \mf X_n) \to_{a.s.} 1$. Since the posterior concentrates on $U_\varepsilon$ and $\Theta_{osn} \subset U_\varepsilon$ for all $n$ sufficiently large, it's enough to confine attention to $U_\varepsilon$. We have shown that $\ul c \|\tilde \gamma\|^2 \leq D_{KL}(p_0\|q_{\tilde \gamma}) \leq \ol c \|\tilde \gamma\|^2$ holds on $U_\varepsilon$. It now follows by the parametric Bernstein-von Mises theorem (e.g. Theorem 10.1 in \cite{vdV}) that the posterior contracts at a $\sqrt n$-rate, verifying Assumption \ref{a:rate}(ii).
\end{proof}

For the following lemma, let $(r_n)_{n \in \mb N}$ be a positive sequence with $r_n \to \infty$ and $r_n = o(n^{1/2})$, $\mc P_{osn} = \{ p \in \mc P : h(p,p_0) \leq r_n/\sqrt n\}$ and $\Theta_{osn} = \{ \theta \in \Theta : h(p_\theta,p_0) \leq r_n/\sqrt n\}$. For each $p \in \mc P$ with $p \neq p_0$, define $S_p = \sqrt{p/p_0} - 1$ and $s_p = {S_p}/{h(p,p_0)}$. Recall the definitions of $\ol{\mc D}_{\varepsilon}$, the tagent cone $\mc T$ and the projection $\mb T$ from Section \ref{s:genscore}. We say $\mc P$ is {$r_n$-DQM} if each $p$ is absolutely continuous with respect to $p_0$ and for each $p \in \mc P$ there are $g_p \in \mc T$ and $R_p \in L^2(\lambda)$ such that:
\[
\sqrt{p_{\phantom{.}}} - \sqrt{p_0} = g_p  \sqrt{p_0} + h(p,p_0) R_p
\]
with $\sup\{ r_n\| R_p \|_{L^2(\lambda)} : h(p,p_0) \leq r_n/\sqrt n \} \to 0$ as $n \to \infty$. Let $\ol{\mc D}_\varepsilon^2 = \{d^2 : d \in \ol{\mc D}_\varepsilon\}$.

\begin{lemma}\label{l:genscore}
	Let the following conditions hold.\\
	(i) $\mc P$ is $r_n$-DQM\\
	(ii) there exists $\varepsilon > 0$ such that  $\ol{\mc D}_{\varepsilon}^2$ is $P_0$-Glivenko Cantelli and $\ol{\mc D}_\varepsilon$ has envelope $D \in L^2(P_0)$ with $\max_{i \leq i \leq n } D(X_i) = o_\p(\sqrt n/r_n^3)$ \\
	(iii) $\sup_{p \in \mc P_{osn}} |\mb G_n (S_p - \mb T  S_p)| = o_\p(n^{-1/2})$ \\
	(iv) $\sup_{p \in \mc P_{osn}} |(\p_n-P_0)S_p^2| = o_\p(n^{-1})$. \\
	Then:
	\[
	\sup_{\theta \in \Theta_{osn}} \left|n L_n(\theta) - \left( n \p_n \log p_0  - \frac{1}{2} n P_0( (2\mb T S_{p_\theta})^2) +  n \p_n (2\mb T S_{p_\theta}) \right) \right| = o_\p(1)\,.
	\]
	If, in addition, $\mathrm{Span}(\mc T)$ has finite dimension $d^* \geq 1$ then Assumption \ref{a:quad} holds over $\Theta_{osn}$ with $\ell_n = n\p_n \log p_0$, $\sqrt n \hat \gamma_n = \mb V_n = \mb G_n(\psi)$, $\Sigma = I_{d^*}$ and $\gamma(\theta)$ defined in (\ref{e-gamma-gen}).
\end{lemma}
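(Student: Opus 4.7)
The plan is to start from the identity $\log p_\theta = \log p_0 + 2\log(1+S_{p_\theta})$ and Taylor-expand: for $|x|\le 1/2$, the standard remainder estimate gives $2\log(1+x) = 2x - x^2 + \rho(x)$ with $|\rho(x)|\le \tfrac{4}{3}|x|^3$. Consequently, whenever $\max_i |S_{p_\theta}(X_i)|\le 1/2$,
\[
 nL_n(\theta) = n\p_n\log p_0 + 2n\p_n S_{p_\theta} - n\p_n S_{p_\theta}^2 + n\p_n \rho(S_{p_\theta}),
\]
and I want to show (a) the truncation hypothesis holds uniformly over $\Theta_{osn}$ wpa1, (b) the remainder $n\p_n\rho(S_{p_\theta})$ is $o_\p(1)$ uniformly, and (c) the linear and quadratic terms combine to produce the stated expansion. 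For (a)--(b), since $|S_{p_\theta}(X_i)| = h(p_\theta,p_0)|s_{p_\theta}(X_i)| \le (r_n/\sqrt n) D(X_i)$ on $\Theta_{osn}$, the envelope condition $\max_i D(X_i) = o_\p(\sqrt n/r_n^3)$ yields $\max_i |S_{p_\theta}(X_i)| = o_\p(1/r_n^2)$ uniformly. For the cubic term, bound $n\p_n|S_{p_\theta}|^3 \le (\max_i|S_{p_\theta}(X_i)|)\cdot (n\p_n S_{p_\theta}^2)$; the Glivenko--Cantelli property of $\ol{\mc D}_\varepsilon^2$ from (ii) plus $P_0 s_p^2 = 1$ give $n\p_n S_{p_\theta}^2 = O_\p(r_n^2)$ uniformly, so the remainder is $o_\p(1/r_n^2)\cdot O_\p(r_n^2) = o_\p(1)$.

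Next I linearize. Condition (iv) directly gives $n(\p_n - P_0)S_{p_\theta}^2 = o_\p(1)$ uniformly. Writing $2n\p_n S_{p_\theta} = 2nP_0 S_{p_\theta} + 2\sqrt n\mb G_n S_{p_\theta}$ and splitting $S_{p_\theta} = \mb T S_{p_\theta} + (S_{p_\theta} - \mb T S_{p_\theta})$, condition (iii) absorbs the orthogonal part into $o_\p(1)$. Since every $t\in\mc T$ is a $P_0$-mean-zero $L^2$-function (limits of $(\sqrt{p/p_0}-1)/h(p,p_0)$, whose $P_0$-mean is $-h(p,p_0)/2 \to 0$), one has $P_0\mb T S_{p_\theta} = 0$ and hence $2\sqrt n\mb G_n\mb T S_{p_\theta} = 2n\p_n\mb T S_{p_\theta}$.

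The remaining $P_0$-moment terms are computed directly: $P_0 S_p = -\tfrac{1}{2}h(p,p_0)^2$ and $P_0 S_p^2 = h(p,p_0)^2$, so $2nP_0 S_{p_\theta} - nP_0 S_{p_\theta}^2 = -2nh(p_\theta,p_0)^2$. To connect this with $P_0(\mb T S_{p_\theta})^2$, I invoke $r_n$-DQM: from $S_{p_\theta} = g_{p_\theta} + h(p_\theta,p_0)R_{p_\theta}/\sqrt{p_0}$ (on $\{p_0>0\}$) and the fact that $g_{p_\theta}\in\mc T$ while $\mb T$ is a contraction in $L^2(P_0)$, I get $\|S_{p_\theta} - \mb T S_{p_\theta}\|_{L^2(P_0)} \le 2h(p_\theta,p_0)\|R_{p_\theta}\|_{L^2(\lambda)}$, which is $o(1/\sqrt n)$ uniformly on $\mc P_{osn}$. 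Combined with the projection identity $P_0 S_{p_\theta}^2 - P_0(\mb T S_{p_\theta})^2 = \|S_{p_\theta} - \mb T S_{p_\theta}\|_{L^2(P_0)}^2$, this yields $nh(p_\theta,p_0)^2 = nP_0(\mb T S_{p_\theta})^2 + o(1)$, whence $2nP_0 S_{p_\theta} - nP_0 S_{p_\theta}^2 = -\tfrac{1}{2}nP_0\bigl((2\mb T S_{p_\theta})^2\bigr) + o(1)$ uniformly. Assembling all the pieces produces the claimed expansion.

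For the finite-dimensional conclusion, writing $2\mb T S_{p_\theta} = \gamma(\theta)'\psi$ and using orthonormality of $\psi$ in $L^2(P_0)$ rewrites $n\p_n(2\mb T S_{p_\theta}) = (\sqrt n\gamma(\theta))'\mb V_n$ with $\mb V_n = \mb G_n\psi$, and $\tfrac{1}{2}nP_0((2\mb T S_{p_\theta})^2) = \tfrac{1}{2}\|\sqrt n\gamma(\theta)\|^2$, so completing the square gives the quadratic form in Assumption \ref{a:quad} with $\ell_n = n\p_n\log p_0$, $\sqrt n\hat\gamma_n = \mb V_n$, and $\Sigma = I_{d^*}$ by the CLT; $\sup_{\theta\in\Theta_{osn}}\|\gamma(\theta)\|\to 0$ follows from $\|\gamma(\theta)\|^2 = P_0((2\mb T S_{p_\theta})^2) \le 4h(p_\theta,p_0)^2 \le 4r_n^2/n$. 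The main obstacle is making every $o_\p$ bound uniform over $\mc P_{osn}$ in a way that is consistent; in particular, the seemingly unusual rate $o_\p(\sqrt n/r_n^3)$ in the envelope assumption is exactly what is needed to simultaneously enforce the $|x|\le 1/2$ Taylor regime and kill the cubic remainder $n(r_n/\sqrt n)\max_i D(X_i)\cdot O_\p(r_n^2/n)$.
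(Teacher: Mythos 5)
Your proposal is correct and follows essentially the same route as the paper's proof: the Taylor expansion of $2\log(1+S_p)$ with the cubic remainder killed by the envelope and Glivenko--Cantelli conditions, condition (iii) together with $P_0(\mb T S_p)=0$ to swap the centered linear term for $\p_n \mb T S_p$, condition (iv) plus the Moreau/Pythagoras identity and the $r_n$-DQM remainder bound to replace $h^2(p_\theta,p_0)$ by $P_0((\mb T S_{p_\theta})^2)$, and the orthonormal-basis rewriting for the finite-dimensional conclusion. The only differences are cosmetic (e.g., your slightly loose factor $2h(p_\theta,p_0)\|R_{p_\theta}\|_{L^2(\lambda)}$ where the paper gets $h(p_\theta,p_0)\|R_{p_\theta}\|_{L^2(\lambda)}$), and they do not affect the argument.
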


\begin{proof}[\textbf{Proof of Lemma \ref{l:genscore}}]
	We first prove
	\begin{equation} \label{e:lin:0}
	\sup_{p \in \mc P_{osn}} \left| n \p_n \log (p/p_0) - 2n \p_n (S_p -P_0(S_p)) + n (\p_n S_p^2 + h^2(p,p_0)) \right| = o_\p(1)
	\end{equation}
	by adapting arguments used in Theorem 1 of \cite{AGM}, Theorem 3.1 in \cite{Gassiat2002}, and Theorem 2.1 in \cite{LiuShao}.
	
	Take $n$ large enough that $r_n/\sqrt n \leq \varepsilon$. Then for each $p \in \mc P_{osn}\setminus \{p_0\}$:
	\begin{equation} \label{e:lik:expand}
	n \p_n \log (p/p_0) = 2n \p_n S_p -  n \p_n S_p^2 + 2 n \p_n S_p^2 r(S_p)
	\end{equation}
	where $r(u) = (\log(1 + u) - u - \frac{1}{2}u^2)/u^2$ and $\lim_{u \to 0} |r(u)/( \frac{1}{3} u)-1| = 0$. By condition (ii), $\max_{1 \leq i \leq n} |S_p(X_i)| \leq r_n/\sqrt n \times \max_{1 \leq i \leq n} D(X_i)  = o_\p(r_n^{-2})$ uniformly for $p \in \mc P_{osn}$. This implies that $\sup_{p \in \mc P_{osn}} \max_{1 \leq i \leq n} |r(S_p(X_i))|  = o_\p(r_n^{-2})$. Therefore, by the Glivenko-Cantelli condition in (ii):
	\[
	\sup_{p \in \mc P_{osn}} | 2 n \p_n S_p^2 r(S_p) |  \leq 2 r_n^2 \times o_\p(r_n^{-2}) \times \sup_{p \in \mc P_{osn}} \p_n s_p^2 = o_\p(1) \times (1+o_\p(1)) =  o_\p(1)\,.
	\]
	Display  (\ref{e:lin:0}) now follows by adding and subtracting $2n P_0(S_p) = -n h^2(p,p_0)$ to (\ref{e:lik:expand}).

	Each element of $\mc T$ has mean zero and so $P_0(\mb T S_p) = 0$ for each $p$. By Condition (iii):
	\[
	\sup_{p \in \mc P_{osn}} \left| \p_n (S_p -P_0(S_p)- \mb T S_p) \right| = n^{-1/2} \times \sup_{p \in \mc P_{osn}} | \mb G_n (S_p - \mb T S_p) | = o_\p(1)\,.
	\]
	
	It remains to show:
	\begin{align}
	\sup_{p \in \mc P_{osn}} \left| \p_n (S_p^2) + h^2(p,p_0) - 2P_0((\mb T S_p)^2) \right| & = o_\p(n^{-1}) \,. \label{e:emp:2}
	\end{align}
	By condition (iv) and $P_0(S_p^2) = h^2(p,p_0)$, to establish (\ref{e:emp:2}) it is enough to show:
	\[
    \sup_{p \in \mc P_{osn}} |P_0 (S_p^2) - P_0((\mb T S_p)^2) | = o_\p(n^{-1}) \,.
	\]
	Observe by definition of $\mb T$ and condition (i), for each $p \in \mc P$ there is a $g_p \in \mc T$ and remainder $R^*_p = R_p/\sqrt{p_0}$ such that $S_p = g_p + h(p,p_0) R^*_p$, and so:
	\begin{equation} \label{e:contract:r}
	\| S_p - \mb T S_p\|_{L^2(P_0)} \leq \| S_p - g_p\|_{L^2(P_0)} = h(p,p_0) \| R^*_p\|_{L^2(P_0)} = h(p,p_0) \| R_p\|_{L^2(\lambda)}
	\end{equation}
	By Moreau's decomposition theorem and inequality (\ref{e:contract:r}), we may deduce:
	\begin{align*}
	\sup_{p \in \mc P_{osn}} |P_0 (S_p^2) - P_0((\mb T S_p)^2 )|
	& = \sup_{p \in \mc P_{osn}} \|S_p - \mb T S_p\|^2_{L^2(P_0)}  \leq \sup_{p \in \mc P_{osn}}  h(p,p_0)^2 \| R_p\|^2_{L^2(\lambda)}
	\end{align*}
	which is $o_\p(n^{-1})$ by condition (i) and definition of $\mc P_{osn}$. This proves the first result.

	The second result is immediate by defining $\mb V_n = \mb G_n(\psi)$ with $\psi = (\psi_1,\ldots,\psi_{d^*})'$ where $\psi_1,\ldots,\psi_{d^*}$ is an orthonormal basis for $\mathrm{Span}(\mc T)$, and $\gamma(\theta)$ as in (\ref{e-gamma-gen}), then noting that $P_0((\mb T (2 S_{p_\theta}) )^2) = \gamma(\theta)' P_0(\psi \psi') \gamma(\theta) = \|\gamma(\theta)\|^2$.
\end{proof}

\begin{proof}[\textbf{Proof of Proposition \ref{p:genscore}}]
	We verify the conditions of Lemma \ref{l:genscore}. By DQM (condition (b)) we have $\sup\{ \| R_p \|_{L^2(\lambda)} : h(p,p_0) \leq n^{-1/4} \} \to 0$ as $n \to \infty$. Therefore, we may choose a sequence $(a_n)_{n \in \mb N}$ with  $a_n \leq n^{1/4}$ but $a_n \to \infty$ slowly enough that
	\[
	\sup\{ a_n \| R_p \|_{L^2(\lambda)} : h(p,p_0) \leq a_n/\sqrt n \} \to 0 \quad \mbox{as } n \to \infty
	\]
	and hence $\sup\{ r_n \| R_p \|_{L^2(\lambda)} : h(p,p_0) \leq r_n/\sqrt n \} \to 0$ as $n \to \infty$ for any slowly diverging positive sequence $(r_n)_{n \in \mb N}$ with $r_n \leq a_n$. This verifies condition (i) of Lemma \ref{l:genscore}.

	For condition (ii), $\ol{\mc D}_{\varepsilon}^2$ is Glivenko-Cantelli by condition (c) and Lemma 2.10.14 of \cite{vdVW}. Moreover, it follows from the envelope condition (in condition (c)) that $\max_{1 \leq i \leq n} D(X_i) = o_\p(n^{1/2})$. We can therefore choose a positive sequence $(c_n)_{n \in \mb N}$ with $c_n \to \infty$ such that $c_n^3 \max_{1 \leq i \leq n} D(X_i) = o_\p(n^{1/2})$ and so $\max_{1 \leq i \leq n} D(X_i) = o_\p(n^{1/2}/r_n^3)$ for any $0 <r_n \leq c_n$.

	For condition (iv), since $\ol{\mc D}_{\varepsilon}^2$ is Glivenko-Cantelli we may choose a positive sequence $(b_n)_{n \in \mb N}$ with $b_n \to \infty$ such that $b_n^2 \sup_{s_p \in \ol{\mc D}_{\varepsilon}} |(\p_n-P_0)s_p^2| = o_\p(1)$. Therefore, for any $0 < r_n \leq b_n$ we have:
	\[
	\sup_{p : h(p,p_0) \leq r_n/\sqrt n} | (\p_n - P_0)S_p^2 | \leq  \sup_{p : h(p,p_0) \leq r_n/\sqrt n} r_n^2| (\p_n - P_0)s_p^2 |/n = o_\p(n^{-1}) \,.
	\]

	Finally, for condition (iii), note that condition (c) implies that $\ol {\mc D}_\varepsilon^o := \{ s_p - \mb T s_p : s_p \in \ol{\mc D}_\varepsilon\}$ is Donsker. Also note that the singleton $\{0\}$ is the only limit point of $\ol {\mc D}_{\varepsilon}^o$ as $\varepsilon \searrow 0$ because:
	\[
	\sup\{ \| s_p - \mb T s_p \|_{L^2(P_0)} : h(p,p_0) \leq \varepsilon \}
 \leq \sup\{ \| R_p \|_{L^2(\lambda)} : h(p,p_0) \leq \varepsilon \}  \to 0 \quad (\mbox{as } \varepsilon \to 0)
	\]
	by DQM (condition (b)). Asymptotic equicontinuity of $\mb G_n$ on $\ol {\mc D}_\varepsilon^o$ then implies that
	\[
	 \sup_{p : h(p,p_0) \leq n^{-1/4}} | \mb G_n (s_p - \mb T s_p) | = o_\p(1) \,.
	\]
	We can therefore choose a positive sequence $(d_n)_{n \in \mb N}$ with $d_n \leq n^{1/4}$ but $d_n \to \infty$ slowly enough that $d_n \sup_{p : h(p,p_0) \leq n^{-1/4}}  |\mb G_n (s_p - \mb T s_p)| = o_\p(1)$ and so for any $0 < r_n \leq d_n$:
	\begin{align*}
	\sup_{p : h(p,p_0) \leq r_n/\sqrt n} |\mb G_n (S_p - \mb T S_p)| & \leq \frac{r_n}{\sqrt n} \sup_{p  : h(p,p_0) \leq n^{-1/4}} \mb G_n (s_p - \mb T s_p) = o_\p(n^{-1/2})\,.
	\end{align*}
	The result follows by taking $r_n = (a_n \wedge b_n \wedge c_n \wedge d_n)$.
\end{proof}

\begin{proof}[\textbf{Proof of Proposition \ref{p:gmm:cue}}]
	We first show that:
	\begin{equation} \label{e:gmm:obj:1}
	\sup_{\theta: \|g(\theta)\| \leq r_n/\sqrt n} \left| n L_n(\theta) - \left(-\frac{1}{2} ( \mb T(\sqrt n g(\theta)) + Z_n )' \Omega^{-1} (\mb T (\sqrt n g(\theta)) + Z_n ) \right) \right| = o_\p(1)
	\end{equation}
	holds  for a positive sequence $(r_n)_{n \in \mb N}$ with $r_n \to \infty$ with $Z_n = \mb G_n(\rho_{\theta^*})$. Take $n$ large enough that $n^{-1/4} \leq \varepsilon_0$. By conditions (a)--(c) and Lemma 2.10.14 of \cite{vdVW}, we have that $\sup_{\theta : \|g(\theta)\| \leq n^{-1/4}} \| \p_n (\rho_\theta^{\phantom \prime} \rho_\theta') - \Omega\| = o_\p(1)$. Therefore, we may choose a positive sequence $(a_n)_{n \in \mb N}$ with $a_n \to \infty$, $a_n =o( n^{1/4})$ such that $\sup_{\theta : \|g(\theta)\| \leq n^{-1/4}} a_n^2 \| \p_n (\rho_\theta^{\phantom \prime} \rho_\theta') - \Omega\| = o_\p(1)$
	and hence:
	\begin{align} \label{e:emp:omg}
	\sup_{\theta : \|g(\theta)\| \leq r_n/\sqrt n}\| \p_n (\rho_\theta^{\phantom \prime} \rho_\theta') - \Omega\| = o_\p(r_n^{-2})
	\end{align}
	for any $0 < r_n \leq a_n$.

	Notice that $Z_n \rightsquigarrow N(0,\Omega)$ by condition (a) and that the covariance of each element of $\rho_\theta(X_i)-\rho_{\theta^*}(X_i)$ vanishes uniformly over $\Theta_I^\varepsilon$ as $\varepsilon \to 0$ by condition (c). Asymptotic equicontinuity of $\mb G_n$ (which holds under (a))  then implies that $\sup_{\theta : \|g(\theta)\| \leq n^{-1/4}} \| \mb G_n(\rho_\theta) - Z_n\| = o_\p(1)$. We can therefore choose a positive sequence $(b_n)_{n \in \mb N}$ with $b_n \to \infty$, $b_n = o(n^{1/4})$ as $n \to \infty$ such that $b_n \sup_{\theta : \|g(\theta)\| \leq b_n/\sqrt n} \| \mb G_n(\rho_\theta) - Z_n\| = o_\p(1)$ and hence:
	\begin{align} \label{e:emp:rho:unif}
	\sup_{\theta : \|g(\theta)\| \leq r_n/\sqrt n} | \sqrt n \p_n \rho_\theta - ( \sqrt n g(\theta) + Z_n) |= o_\p(r_n^{-1}) \,.
	\end{align}
	for any $0 < r_n \leq b_n$.

	Condition (d) implies that we may choose a sequence $(c_n)_{n \in \mb N}$ with $c_n \to \infty$, $c_n = o(n^{1/4})$ such that $\sup_{\theta : \|g(\theta)\| \leq c_n/\sqrt n} \sqrt n \| g(\theta) - \mb T  g(\theta)\| = o(c_n^{-1})$ and so:
	\begin{align} \label{e:proj:control}
	\sup_{\theta : \|g(\theta)\| \leq r_n/\sqrt n}  \|\sqrt n g(\theta) -  \mb T (\sqrt n g(\theta))\| = o(r_n^{-1})
	\end{align}
	for any $0 < r_n \leq c_n$.

	Result  (\ref{e:gmm:obj:1}) now follows by taking $r_n = (a_n \wedge b_n \wedge c_n)$ and using (\ref{e:emp:omg}), (\ref{e:emp:rho:unif}) and (\ref{e:proj:control}). To complete the proof, expanding the quadratic in (\ref{e:gmm:obj:1}) we obtain:
	\begin{align*}
	-\frac{1}{2}( \mb T(\sqrt n g(\theta)) + Z_n )' \Omega^{-1} (\mb T (\sqrt n g(\theta)) + Z_n )
	& = -\frac{1}{2} Z_n'\Omega^{-1} Z_n -\frac{1}{2} \| [\Omega^{-1/2} \mb T(\sqrt n g(\theta)) ]_1\|^2 \\
	& \quad   -  [ \Omega^{-1/2}Z_n]'_1 [\Omega^{-1/2} \mb T(\sqrt n g(\theta)) ]_1
	\end{align*}
	and the result follows with $\ell_n = Z_n'\Omega^{-1} Z_n$,  $\gamma(\theta) = [\Omega^{-1/2} \mb Tg(\theta)]_1$, and $\mb V_n = -[\Omega^{-1/2}Z_n]_1$.
\end{proof}

\begin{proof}[\textbf{Proof of Proposition \ref{p:gmm:opt}}]
	Follows by similar arguments to the proof of Proposition \ref{p:gmm:cue}, noting that by condition (e) we may choose a positive sequence $(a_n)_{n \in \mb N}$ with $a_n \to \infty$ slowly such that $a_n^2 \|\wh W - \Omega^{-1}\| = o_\p(1)$. Therefore $ \|\wh W - \Omega^{-1}\| = o_\p(r_n^{-2})$ holds for any $0 < r_n \leq a_n$.
\end{proof}

\begin{lemma} \label{l:md:singular}
	Consider the missing data model with a flat prior on $\Theta$. Suppose that the model is point identified (i.e. the true $\eta_2 = 1$). Then Assumption \ref{a:rate}(ii) holds for $$\Theta_{osn} = \{ \theta : |\tilde\gamma_{11}(\theta)-\tilde\gamma_{11}| \leq r_n/\sqrt n, \tilde\gamma_{00}(\theta) \leq r_n/n\}$$ for any positive sequence $(r_n)_{n \in \mb N}$ with $r_n \to \infty$, $r_n/\sqrt n = o(1)$
\end{lemma}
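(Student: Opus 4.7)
The argument will exploit the very special structure of the likelihood under $\eta_2 = 1$: then $D_i \equiv 1$, $Y_i \sim \mathrm{Bern}(\tilde\gamma_{11})$, and
\[
L_n(\theta) = \bar Y_n \log \tilde\gamma_{11}(\theta) + (1-\bar Y_n)\log\bigl(1-\tilde\gamma_{11}(\theta)-\tilde\gamma_{00}(\theta)\bigr)
\]
depends on $\theta$ only through $(g_{11}, g_{00}) := (\tilde\gamma_{11}(\theta), \tilde\gamma_{00}(\theta)) = (\mu - \eta_1(1-\eta_2),\, 1-\eta_2)$. The map $(\mu, \eta_1, \eta_2) \mapsto (g_{11}, g_{00}, \eta_1)$ has unit Jacobian and sends $\Theta$ bijectively onto $\{(g_{11}, g_{00}, \eta_1) \in [0,1]^3 : g_{11}+g_{00} \leq 1\}$, so under the flat prior both the numerator and denominator of $\Pi_n(\Theta_{osn}^c \mid \mf X_n)$ reduce, after trivially integrating out the likelihood-free coordinate $\eta_1 \in [0,1]$, to two-dimensional integrals over the triangle $\{(g_{11}, g_{00}) : g_{11}, g_{00} \geq 0,\; g_{11} + g_{00} \leq 1\}$ with uniform weight.

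A second reparameterization $(g_{11}, g_{00}) \mapsto (u, v) := (g_{11}/(1-g_{00}),\, g_{00})$, with Jacobian $(1-v)$, straightens the triangle into the unit square and yields the key factorized identity
\[
L_n(\theta) \;=\; L_n(\bar Y_n, 0) + \log(1-v) - \mathrm{kl}(\bar Y_n, u), \qquad \mathrm{kl}(p,q) := p\log(p/q) + (1-p)\log((1-p)/(1-q)).
\]
Consequently the posterior normalizer decomposes, up to multiplicative constants independent of the data, as
\[
e^{nL_n(\bar Y_n, 0)} \int_0^1 (1-v)^{n+1}\,dv \int_0^1 e^{-n\,\mathrm{kl}(\bar Y_n, u)}\,du,
\]
with the $v$-integral equal to $1/(n+2)$ and the $u$-integral of order $\Theta_\p(n^{-1/2})$ by Laplace approximation (using $\bar Y_n \to \tilde\gamma_{11} \in (0,1)$ with probability approaching one). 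This delivers the lower bound $\Theta_\p(n^{-3/2})\, e^{nL_n(\bar Y_n, 0)}$ for the denominator of $\Pi_n(\Theta_{osn}^c \mid \mf X_n)$.

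For the numerator I would split $\Theta_{osn}^c$ into $E_1 := \{v > r_n/n\}$ and $E_2 := \{v \leq r_n/n,\; |u(1-v)- \tilde\gamma_{11}| > r_n/\sqrt n\}$. On $E_1$ the bound $\int_{r_n/n}^1 (1-v)^{n+1}\,dv \leq e^{-r_n}/(n+2)$ produces an $E_1$-contribution to the posterior ratio of $O_\p(e^{-r_n})$. On $E_2$, since $v = O(r_n/n) = o(1)$ and $|\bar Y_n - \tilde\gamma_{11}| = O_\p(n^{-1/2})$ while $r_n \to \infty$, the constraint forces $|u - \bar Y_n| \geq r_n/(3\sqrt n)$ with probability approaching one; Pinsker's inequality $\mathrm{kl}(\bar Y_n, u) \geq 2(u - \bar Y_n)^2$ combined with a standard Gaussian tail bound then yields $\int_{|u - \bar Y_n| > r_n/(3\sqrt n)} e^{-n\,\mathrm{kl}(\bar Y_n, u)}\,du = O_\p(n^{-1/2} e^{-r_n^2/C})$ for some constant $C > 0$, producing an $E_2$-contribution of $O_\p(e^{-r_n^2/C})$. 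Summing, $\Pi_n(\Theta_{osn}^c \mid \mf X_n) = O_\p(e^{-r_n}) + O_\p(e^{-r_n^2/C}) = o_\p(1)$ for every $r_n \to \infty$.

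The main obstacle is the asymmetric contraction rates: the regular $n^{-1/2}$ rate for $\tilde\gamma_{11}$ versus the non-regular, faster $n^{-1}$ rate for $\tilde\gamma_{00}$ forced by the boundary $\eta_2 = 1$. The $(u,v)$-reparameterization is the crux because it decouples the two scales: $u$ is then governed by a standard Bernoulli KL with quadratic (Gaussian-type) tails, while $v$ is governed by the boundary factor $(1-v)^{n+1}$ with purely exponential tails at rate $n^{-1}$. This clean factorization is what allows the argument to go through for \emph{arbitrary} $r_n \to \infty$, rather than only for $r_n$ growing at least logarithmically, as a direct joint Taylor expansion around the identified set would yield.
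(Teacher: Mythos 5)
Your proof is correct, but it takes a somewhat different route from the paper's. The paper also starts from the induced flat prior on the triangle $\{(a,b): a,b\ge 0,\ a+b\le 1\}$ of reduced-form probabilities, but then it integrates $b=\tilde\gamma_{00}(\theta)$ out \emph{exactly}, so that the marginal posterior of $a=\tilde\gamma_{11}(\theta)$ is $\mathrm{Beta}(S_n+1,\,n-S_n+2)$ with $S_n=\sum_i Y_i$; the $\{|a-\tilde\gamma_{11}|>r_n/\sqrt n\}$ piece is then killed by Chebyshev's inequality (giving $O_\p(r_n^{-2})$), and the $\{b>r_n/n\}$ piece by a one-line change of variables giving the factor $(1-r_n/n)^{n+2}\le e^{-r_n}$ — the same bound you obtain for your $E_1$. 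Your alternative is the multiplicative reparameterization $(u,v)=(g_{11}/(1-g_{00}),g_{00})$, under which the posterior factorizes exactly into a $(1-v)^{n+1}$ law for $v$ and an $e^{-n\,\mathrm{kl}(\bar Y_n,u)}$ law for $u$, handled by Pinsker plus a Gaussian tail and a Laplace lower bound for the normalizer. What each approach buys: yours cleanly decouples the $n^{-1/2}$ and $n^{-1}$ scales and yields a sharper, exponential-in-$r_n^2$ tail for the $\tilde\gamma_{11}$ direction; the paper's is more elementary (no Laplace step, no Pinsker) and, in particular, does not need $\bar Y_n$ bounded away from $0$ and $1$, whereas your denominator lower bound $\int_0^1 e^{-n\,\mathrm{kl}(\bar Y_n,u)}\,du\gtrsim n^{-1/2}$ implicitly requires $\tilde\gamma_{11}\in(0,1)$. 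That requirement is harmless here — the paper's own local expansion for this example divides by $\tilde\gamma_{11}(1-\tilde\gamma_{11})$, so interiority is assumed throughout — but you should state it explicitly. Your closing remark that only the $(u,v)$ factorization permits arbitrary $r_n\to\infty$ is overstated: the paper's Chebyshev argument also handles any $r_n\to\infty$, just with a polynomial rather than exponential rate.
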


\begin{proof}[\textbf{Proof of Lemma \ref{l:md:singular}}]
	 The flat prior on $\Theta$ induces a flat prior on $\{ (a,b) \in [0,1] : 0 \leq a \leq 1-b\}$  under the map $\theta \mapsto (\tilde \gamma_{11}(\theta),\tilde \gamma_{00}(\theta))$. Take $n$ large enough that $[\tilde \gamma_{11} - r_n/\sqrt n,\tilde \gamma_{11} + r_n/\sqrt n] \subseteq [0,1]$ and $r_n/n <1$. Then with $S_n := \sum_{i=1}^n Y_i$, we have:
	\begin{align*}
	\Pi_n ( \Theta_{osn}^c | \mf X_n )
	& = \frac{\int_{[0,\tilde \gamma_{11} - r_n/\sqrt n] \cup [\tilde \gamma_{11} + r_n/\sqrt n]} \int_{0}^{1-a} (a)^{S_n}( 1-a-b )^{n-S_n}   \, \mr d b \, \mr d a}{\int_{0}^{1} \int_0^{1-a} (a )^{S_n}( 1-a-b \big)^{n-S_n}   \, \mr d b \, \mr d a} \\
	& \quad \quad  +  \frac{\int_{\kappa_{11} - k_n/\sqrt n}^{\kappa_{11} + k_n/\sqrt n} \int_{k_n/n}^{1-a} ( a)^{S_n} ( 1-a-b )^{n-S_n}   \, \mr d b \, \mr d a}{\int_{0}^{1} \int_0^{1-a} (a)^{S_n}(1-a-b)^{n-S_n}   \, \mr d b \, \mr d a}  =:  I_1 + I_2  \,.
	\end{align*}
	Integrating $I_1$ first with respect to $b$ yields:
	\begin{align*}
	I_1 & = \frac{\int_{[0,\tilde \gamma_{11} - r_n/\sqrt n] \cup [\tilde \gamma_{11} + r_n/\sqrt n]} \int_{0}^{1-a} (a)^{S_n}( 1-a )^{n-S_n+1} \, \mr d a}{\int_{0}^{1}  (a )^{S_n}( 1-a \big)^{n-S_n+1}  \, \mr d a}
	= \mb P_{U|S_n} ( | U - \tilde \gamma_{11} | > r_n/\sqrt n)
	\end{align*}
	where $U|S_n \sim \mr{Beta}(S_n+1,n-S_n+2)$. Note that this implies:
	\begin{align*}
	\mb E[ U | S_n ]  & = \frac{S_n+1}{n+3}  &
	\mr{Var}[ U|S_n ] & = \frac{(S_n+1)(n-S_n+2)}{(n+3)^2(n+4)}\,.
	\end{align*}
	By the triangle inequality, the fact that $\mb E[ U | S_n ] = \tilde \gamma_{11} + O_\p(n^{-1/2})$, and Chebyshev's inequality:
	\begin{align*}
	I_1 & \leq \mb P_{U|S_n} \Big( | U - \mb E[ U | S_n ] | > r_n/(2\sqrt n)\Big) + \ind\Big\{ |\mb E[ U | S_n ] - \tilde \gamma_{11}| > r_n/(2\sqrt n)\Big\} \\
	& = \mb P_{U|S_n} \Big( | U - \mb E[ U | S_n ] | > r_n/(2\sqrt n)\Big) + o_\p(1) \\
	& \leq \frac{4}{r_n^2} \frac{(\frac{S_n}{n}+\frac{1}{n})(1-\frac{S_n}{n}+\frac{2}{n})}{(1+\frac{3}{n})^2(1+\frac{4}{n})}  + o_\p(1) = o_\p(1) \,.
	\end{align*}

	Similarly:
	\[
	I_2  = \frac{\int_{\tilde \gamma_{11} - r_n/\sqrt n}^{\tilde \gamma_{11} + r_n/\sqrt n} ( a)^{S_n} ( 1-a-(r_n/n) )^{n-S_n+1}  \, \mr d a}{\int_{0}^{1}  (a)^{S_n}(1-a)^{n-S_n+1}   \, \mr d a}
	 \leq \frac{\int_{0}^{1-r_n/n} ( a)^{S_n} ( 1-a-(r_n/n) )^{n-S_n+1}  \, \mr d a}{\int_{0}^{1}  (a)^{S_n}(1-a)^{n-S_n+1}   \, \mr d a} \,.
	\]
	Using the change of variables $a \mapsto c(a) := \frac{1-a-r_n/n}{1-r_n/n}$ in the numerator yields:
	\begin{align*}
	I_2 & \leq (1-(r_n/n))^{n+2} \frac{\int_{0}^{1} ( 1-c)^{S_n} ( c )^{n-S_n+1}  \, \mr d c}{\int_{0}^{1}  (a)^{S_n}(1-a)^{n-S_n+1}   \, \mr d a} = (1-(r_n/n))^{n+2} \to 0\,.
	\end{align*}
Therefore, $\Pi_n ( \Theta_{osn}^c | \mf X_n ) = o_\p(1)$, as required.
\end{proof}

\subsection{Proofs and Additional Lemmas for Appendix \ref{s:uniformity}}

\begin{proof}[\textbf{Proof of Lemma \ref{l:basic:unif}}]
	By condition (i), there exists a positive sequence $(\varepsilon_n)_{n \in \mb N}$, $\varepsilon_n = o(1)$ such that $\sup_{\p \in \mf P} \p( \sup_{\theta \in \Theta_I(\p)} Q_n(\theta) - W_{n} > \varepsilon_n) = o(1)$. Let $\mc A_{n,\p}$ denote the event on which $\sup_{\theta \in \Theta_I(\p)} Q_n(\theta) - W_n \leq \varepsilon_n$. Then:
	\begin{align*}
	\inf_{\p \in \mf P} \p (\Theta_I (\p) \subseteq \wh \Theta_{\alpha} )
	& \geq \inf_{\p \in \mf P}  \p (\{ \Theta_I (\p) \subseteq \wh \Theta_{\alpha}\} \cap \mc A_{n,\p} ) \\
	& = \inf_{\p \in \mf P}  \p (\{ \textstyle \sup_{\theta \in \Theta_I(\p)} Q_n(\theta) \leq v_{\alpha,n} \} \cap \mc A_{n,\p} )  \\
	& \geq \inf_{\p \in \mf P}  \p (\{ W_n \leq v_{\alpha,n} - \varepsilon_n \} \cap \mc A_{n,\p} )  \,,
	\end{align*}
	where the second equality is by the definition of $\wh \Theta_{\alpha}$. Since $\p( A \cap B) \geq 1- \p(A^c) - \p(B^c)$, we therefore have:
	\begin{align*}
	\inf_{\p \in \mf P} \p (\Theta_I (\p) \subseteq \wh \Theta_{\alpha} )
	& \geq 1 - \sup_{\p \in \mf P} \p ( W_n > v_{\alpha,n} - \varepsilon_n ) - \sup_{\p \in \mf P}  \p (\mc A_{n,\p}^c ) \\
	& = 1 - (1-\inf_{\p \in \mf P} \p ( W_n \leq v_{\alpha,n} - \varepsilon_n )) - o(1) \\
	& = \inf_{\p \in \mf P} \p ( W_n \leq v_{\alpha,n} - \varepsilon_n ) - o(1) \\
	& \geq \alpha - o(1)\,,
	\end{align*}
	where the final line is by condition (ii) and definition of $\mc A_{n,\p}$.
\end{proof}

\begin{proof}[\textbf{Proof of Lemma \ref{l:basic:profile:unif}}]
	Follows by similar arguments to the proof of Lemma \ref{l:basic:unif}.
\end{proof}

We use the next Lemma several times in the following proofs.

\begin{lemma}\label{lem:proj:contract}
Let $T \subseteq \mb R^{d}$ be a closed convex cone and let $\mf T$ denote the projection onto $T$. Then:
\[
 \| \mf T (x + t) - t\| \leq \|x\|
\]
 for any $x \in \mb R^d$ and $t \in T$.
\end{lemma}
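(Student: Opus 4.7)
The plan is to invoke two standard facts about orthogonal projections onto closed convex sets (which include closed convex cones as a special case). First, since $T$ is closed and convex, the projection map $\mf T : \mb R^d \to T$ is well-defined and single-valued, and it is non-expansive (i.e., 1-Lipschitz): $\|\mf T u - \mf T v\| \leq \|u - v\|$ for all $u,v \in \mb R^d$. Second, any $t \in T$ is fixed by the projection: $\mf T t = t$.

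With these in hand, the proof is essentially one line. For any $x \in \mb R^d$ and $t \in T$, write
\[
\|\mf T(x+t) - t\| = \|\mf T(x+t) - \mf T t\| \leq \|(x+t) - t\| = \|x\|,
\]
where the first equality uses $\mf T t = t$ (since $t \in T$) and the inequality uses non-expansiveness of $\mf T$.

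If one wishes to make the write-up fully self-contained, one can derive non-expansiveness from the variational characterization of the projection: $z = \mf T u$ iff $z \in T$ and $(u - z)'(y - z) \leq 0$ for all $y \in T$. Applying this characterization at $u = x+t$ with test point $y = \mf T t = t \in T$, and at $u = t$ with test point $y = \mf T(x+t) \in T$, and adding the two resulting inequalities yields $\|\mf T(x+t) - t\|^2 \leq (x+t - t)'(\mf T(x+t) - t) = x'(\mf T(x+t) - t)$, from which Cauchy–Schwarz gives $\|\mf T(x+t) - t\| \leq \|x\|$.

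There is no real obstacle here — the lemma is a packaging of the non-expansiveness of projection onto closed convex sets combined with the fact that points in $T$ are fixed — so the only decision is whether to cite non-expansiveness as a known fact or to derive it from the variational inequality in a couple of lines.
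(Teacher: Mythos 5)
Your proof is correct, but it follows a different route from the paper's. The paper works with cone-specific machinery: it writes $x+t = \mf T(x+t) + \mf T^o(x+t)$ via the Moreau decomposition, uses the polar-cone inequality $(\mf T^o(x+t))'t \leq 0$ together with $\|\mf T v\| \leq \|v\|$, and then completes the square to get $\|\mf T(x+t) - t\|^2 \leq \|x\|^2$. You instead combine the non-expansiveness of the projection onto a closed convex set with the fixed-point identity $\mf T t = t$, so the whole argument is the single chain $\|\mf T(x+t) - t\| = \|\mf T(x+t) - \mf T t\| \leq \|x\|$. Your approach is shorter and strictly more general — it never uses that $T$ is a cone, only that it is closed and convex — and your fallback derivation from the variational inequality is also fine (in fact only the inequality at $u = x+t$ with test point $y = t$ is needed; the one at $u = t$ is trivially $0 \leq 0$, so adding it is harmless). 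What the paper's version buys is self-containedness in terms of the polar-cone and Moreau-decomposition facts it already deploys elsewhere in the same appendix (e.g., in the proof of Lemma C.6), whereas yours leans on the standard 1-Lipschitz property of convex projections, either cited or derived in two lines as you do.
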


\begin{proof}[Proof of Lemma \ref{lem:proj:contract}]
Let $\mf T^o$ denote the projection onto the polar cone $T^o$ of $T$. Since $u't \leq 0$ holds for any $u \in T^o$ and $\| \mf T v \| \leq \|v\|$ holds for any $v \in \mb R^d$, we obtain:
\[
 \| \mf T (x + t)\|^2 + 2 (\mf T^o (x + t))'t \leq \| \mf T (x + t)\|^2  \leq \|x + t\|^2\,.
\]
Subtracting $2 (x+t)'t$ from both sides and using the fact that $v = \mf T v + \mf T^o v$ yields:
\[
 \| \mf T (x + t)\|^2 - 2 (\mf T (x + t))'t \leq \|x + t\|^2 - 2 (x+t)'t \,.
\]
Adding $\|t\|^2$ to both sides and completing the square gives $\| \mf T (x + t) - t\|^2 \leq \|x + t - t\|^2 = \|x\|^2$.
\end{proof}

In view of Lemma \ref{lem:proj:contract} and Assumption \ref{a:quad:unif}(i), for each $\p \in \mf P$ we have:
\begin{equation} \label{e:proj:contract}
 \| \sqrt n (\hat \gamma_n - \tau)\| \leq \| \mb V_n\|\,.
\end{equation}

\begin{lemma}\label{l:quad:unif}
	Let Assumptions \ref{a:rate:unif}(i) and \ref{a:quad:unif} hold. Then:
	\begin{align}
	\sup_{\theta \in \Theta_{osn}} \left| Q_n(\theta) - \|\sqrt n \gamma(\theta) - \sqrt n (\hat \gamma_n - \tau)\|^2 - 2 f_{n,\bot}(\gamma_\bot(\theta)) \right|  & = o_\p(1) \label{e:quad:1:unif}
	\end{align}
	 uniformly in $\p$.
	
	If, in addition, Assumption \ref{a:qlr:unif}(i) holds, then:
	\begin{align} \label{e:quad:qlr:unif}
	& \sup_{\theta \in \Theta_{osn}} \left| PQ_n (M( \theta) ) - f \left( \sqrt n (\hat \gamma_n - \tau) - \sqrt n \gamma(\theta)  \right)    \right| = o_\p(1)
	\end{align}
	 uniformly in $\p$.
\end{lemma}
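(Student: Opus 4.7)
The proof plan is essentially a uniform-in-$\p$ adaptation of the arguments used in Lemmas \ref{l:quad}, \ref{l:quad:prime} and \ref{l:quad:profile}, built on top of the bound $\|\sqrt n(\hat\gamma_n-\tau)\| \leq \|\mb V_n\| = O_\p(1)$ (uniform in $\p$) that follows from Lemma \ref{lem:proj:contract} applied with $x=\mb V_n$, $t=\sqrt n\tau$, since $\sqrt n\hat\gamma_n = \mf T(\mb V_n+\sqrt n\tau)$.

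First I would obtain a one-sided uniform expansion of $nL_n(\hat\theta)$. By Assumption \ref{a:quad:unif}(i) there exists a positive sequence $\varepsilon_n=o_\p(1)$ uniformly in $\p$ such that, on an event of probability tending to one uniformly in $\p$,
\[
\sup_{\theta\in\Theta_{osn}}\left| nL_n(\theta) - \ell_n - \tfrac{1}{2}\|\sqrt n(\hat\gamma_n-\tau)\|^2 + \tfrac{1}{2}\|\sqrt n(\hat\gamma_n-\tau)-\sqrt n\gamma(\theta)\|^2 + f_{n,\perp}(\gamma_\perp(\theta)) \right| \leq \varepsilon_n.
\]
Combining with Assumption \ref{a:rate:unif}(i), and since $f_{n,\perp}\geq 0$ with $f_{n,\perp}(0)=0$,
\[
2nL_n(\hat\theta) = 2\ell_n + \|\sqrt n(\hat\gamma_n-\tau)\|^2 - \inf_{\theta\in\Theta_{osn}}\!\Bigl(\|\sqrt n(\hat\gamma_n-\tau)-\sqrt n\gamma(\theta)\|^2 + 2f_{n,\perp}(\gamma_\perp(\theta))\Bigr) + o_\p(1)
\]
uniformly in $\p$.

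Next I would show the infimum is $o_\p(1)$ uniformly in $\p$. Because $v:=\sqrt n(\hat\gamma_n-\tau) = \mf T(\mb V_n+\sqrt n\tau)-\sqrt n\tau \in T-\sqrt n\tau$ and $\|v\|\leq\|\mb V_n\|=O_\p(1)$ uniformly in $\p$, while $\inf_{\p\in\mf P}k_n\to\infty$, we have $v\in (T-\sqrt n\tau)\cap B_{k_n}$ wpa1 uniformly in $\p$. By Assumption \ref{a:quad:unif}(ii) there is $\theta^\star\in\Theta_{osn}$ with $\sqrt n\gamma(\theta^\star)=v$; by Assumption \ref{a:quad:unif}(iii) (or the trivial nonsingular case with $\gamma_\perp\equiv 0$) we may simultaneously pick $\theta^\star$ with $\gamma_\perp(\theta^\star)=0$, so that both terms in the infimand vanish at $\theta^\star$. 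Hence the infimum is zero wpa1, and we conclude
\[
2nL_n(\hat\theta) = 2\ell_n + \|\sqrt n(\hat\gamma_n-\tau)\|^2 + o_\p(1)
\]
uniformly in $\p$.

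Finally, subtracting the expansion of $2nL_n(\theta)$ from Assumption \ref{a:quad:unif}(i) (uniform in both $\theta\in\Theta_{osn}$ and $\p$) yields (\ref{e:quad:1:unif}); and subtracting the expansion of $2nPL_n(M(\theta))$ from Assumption \ref{a:qlr:unif}(i) yields (\ref{e:quad:qlr:unif}). The main obstacle is step two: ensuring the existence of $\theta^\star$ with $\sqrt n\gamma(\theta^\star)=v$ uniformly in $\p$, which requires the uniform lower bound $\inf_{\p\in\mf P}k_n\to\infty$ together with the uniform $O_\p(1)$ control on $\sqrt n(\hat\gamma_n-\tau)$ supplied by Lemma \ref{lem:proj:contract}; without this, one could not simultaneously drive the quadratic and singular penalty terms in the infimum to zero for all $\p\in\mf P$.
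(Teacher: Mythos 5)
Your proposal is correct and follows essentially the same route as the paper's proof: expand $2nL_n(\hat\theta)$ via Assumptions \ref{a:rate:unif}(i) and \ref{a:quad:unif}(i)(iii), use Lemma \ref{lem:proj:contract} to get $\|\sqrt n(\hat\gamma_n-\tau)\|\leq\|\mb V_n\|=O_\p(1)$ uniformly so that, by Assumption \ref{a:quad:unif}(ii) and $\inf_{\p}k_n\to\infty$, the infimum term vanishes wpa1 uniformly in $\p$, and then subtract the expansions of $nL_n(\theta)$ and $nPL_n(M(\theta))$. The only cosmetic difference is that the paper phrases the vanishing of the infimum as a uniform probability bound rather than exhibiting a feasible $\theta^\star$, which is the same argument in substance.
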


\begin{proof}[\textbf{Proof of Lemma \ref{l:quad:unif}}]
	To show (\ref{e:quad:1:unif}),  by Assumptions \ref{a:rate:unif}(i) and \ref{a:quad:unif}(i)(iii):
	\begin{align}
	nL_n(\hat \theta)
	& = \sup_{\theta \in \Theta_{osn}} \left( \ell_n + \frac{n}{2}\|\hat \gamma_n - \tau\|^2 - \frac{1}{2} \|\sqrt n \gamma(\theta) - \sqrt n (\hat \gamma_n - \tau)\|^2 - f_{n,\bot} (\gamma_\perp(\theta))\right)  + o_\p(1)  \notag \\
	& = \ell_n + \frac{n}{2}\|\hat \gamma_n - \tau\|^2 - \inf_{\theta \in \Theta_{osn}} \frac{1}{2} \|\sqrt n \gamma(\theta) - \sqrt n (\hat \gamma_n - \tau)\|^2 + o_\p(1) \label{e:lnhattheta:prime:unif}
	\end{align}
	uniformly in $\p$. But observe that by Assumption \ref{a:quad:unif}(i)(ii), for any $\epsilon > 0$:
	\begin{align*}
	& \sup_{\p \in \mf P} \p \left( \inf_{\theta \in \Theta_{osn}} \|\sqrt n \gamma(\theta) - \sqrt n (\hat \gamma_n - \tau)\|^2 > \epsilon \right) \\
	& \leq \sup_{\p \in \mf P} \p \left(  \left\{ \inf_{t \in (T - \sqrt n \tau) \cap B_{k_n}}  \|t- \sqrt n (\hat \gamma_n - \tau)\|^2 > \epsilon \right\} \cap \left\{ \|\hat \gamma_n - \tau\| < \frac{k_n}{\sqrt n} \right\} \right) \\
	& \quad \quad +  \sup_{\p \in \mf P} \p \left(  \| \hat \gamma_n - \tau\| \geq \frac{k_n}{\sqrt n}  \right)
	\end{align*}
	where $\inf_{t \in (T - \sqrt n \tau) \cap B_{k_n}}  \|t- \sqrt n (\hat \gamma_n - \tau)\|^2  = 0$ whenever $\|\sqrt n (\hat \gamma_n - \tau)\| < k_n$ (because $\sqrt n \hat \gamma_n \in T$). Notice $\|\sqrt n (\hat \gamma_n - \tau)\| = o_\p(k_n)$ uniformly in $\p $ by (\ref{e:proj:contract}) and the condition $\|\mb V_n\| = O_\p(1)$ (uniformly in $\p$). This proves (\ref{e:quad:1:unif}). Result (\ref{e:quad:qlr:unif}) follows by Assumption \ref{a:qlr:unif}(i).
\end{proof}

\begin{proof}[\textbf{Proof of Lemma \ref{l:post:unif}}]
	We only prove the case with singularity; the case without singularity follows similarly. By identical arguments to the proof of Lemma \ref{l:post:prime}, it is enough to characterize the large-sample behavior of $R_n (z)$ defined in equation (\ref{e-Rn}) uniformly in $\p$.
	By Lemma \ref{l:quad:unif} and Assumption \ref{a:quad:unif}(i)--(iii), there exist a positive sequence $(\varepsilon_n)_{n \in \mb N}$ independent of $z$ with $\varepsilon_n = o(1)$ and a sequence of events $(\mc A_n)_{n \in \mb N} \subset \mc F$ with $\inf_{\p \in \mf P} \p(\mc A_n) = 1-o(1)$ such that:
	\begin{align*}
	\sup_{\theta \in \Theta_{osn}} \left| Q_n(\theta) - \left( \|\sqrt n \gamma(\theta) - \sqrt n (\hat \gamma_n - \tau)\|^2 + 2 f_{n,\bot}(\gamma_\bot(\theta)) \right) \right| & \leq \varepsilon_n  \\
	\sup_{\theta \in \Theta_{osn}} \left| nL_n(\theta) - \ell_n - \frac{n}{2}\|\hat \gamma_n - \tau\|^2 + \frac{1}{2} \|\sqrt n \gamma(\theta) - \sqrt n (\hat \gamma_n - \tau)\|^2 + f_{n,\bot} (\gamma_\perp(\theta))  \right| & \leq \varepsilon_n
	\end{align*}
	both hold on $\mc A_n$ for all $\p \in \mf P$. Also note that for any $z \in \mb R$ and any singular $\p \in \mf P$, we have
	\begin{align*}
	& \left\{\theta \in \Theta_{osn} : \|\sqrt n \gamma(\theta) - \sqrt n (\hat \gamma_n - \tau)\|^2 + 2 f_{n,\bot}(\gamma_\bot(\theta)) \leq z  + \varepsilon_n  \right\}  \\
	& \subseteq \left\{\theta \in \Theta_{osn} :  \|\sqrt n \gamma(\theta) - \sqrt n (\hat \gamma_n - \tau)\|^2 \leq z + \varepsilon_n  \right\}
	\end{align*}
	because $f_{n,\bot} \geq 0$. Therefore, on $\mc A_n$ we have:
	\begin{align*}
	R_n(z)
	& \leq  e^{2 \varepsilon_n} \frac{\int_{\{\theta: \|\sqrt n \gamma(\theta) - \sqrt n (\hat \gamma_n - \tau)\|^2 \leq  z + \varepsilon_n \} \cap \Theta_{osn}}\! e^{ - \frac{1}{2}\|\sqrt n \gamma(\theta) - \sqrt n (\hat \gamma_n - \tau)\|^2- f_{n,\bot}(\gamma_\bot(\theta))} \mr d\Pi(\theta)}{\int_{\Theta_{osn}} \!e^{ - \frac{1}{2} \|\sqrt n \gamma(\theta) - \sqrt n (\hat \gamma_n - \tau)\|^2 - f_{n,\bot}(\gamma_\bot(\theta)) } \mr d\Pi(\theta)}
	\end{align*}
	uniformly in $z$, for all $\p \in \mf P$.

	Define $\Gamma_{osn} = \{\gamma(\theta) : \theta \in \Theta_{osn}\}$ and $\Gamma_{\bot,osn} = \{\gamma_\bot(\theta) : \theta \in \Theta_{osn}\}$ (if $\p$ is singular). The condition $\sup_{\p \in \mf P} \sup_{\theta \in \Theta_{osn}} \|(\gamma(\theta),\gamma_\bot(\theta))\| \to 0$ in Assumption \ref{a:quad:unif}(i) implies that for all $n$ sufficiently large we have $\Gamma_{osn} \times \Gamma_{\bot,osn} \subset B^*_\delta$ for all $\p \in \mf P$. By similar arguments to the proof of Lemma \ref{l:post:prime}, we use Assumption \ref{a:prior:unif}(ii), a change of variables and Tonelli's theorem to obtain:
	\begin{align*}
	R_n(z)
	& \leq e^{2 \varepsilon_n}(1+\bar \varepsilon_n) \frac{\int_{(\{ \gamma :\|\sqrt n \gamma - \sqrt n (\hat \gamma_n - \tau)\|^2 \leq  z + \varepsilon_n) \cap \Gamma_{osn}  }\! e^{ - \frac{1}{2} \|\sqrt n \gamma - \sqrt n (\hat \gamma_n - \tau)\|^2 } \mr d \gamma }{\int_{ \Gamma_{osn} } \!e^{ - \frac{1}{2} \|\sqrt n \gamma - \sqrt n (\hat \gamma_n - \tau)\|^2 }  \mr d \gamma}
	\end{align*}
	which holds uniformly in $z$ for all $\p \in \mf P$ (on $\mc A_n$ with $n$ sufficiently large) for some sequence $(\bar \varepsilon_n)_{n \in \mb N}$ with $\bar \varepsilon_n = o(1)$. A second change of variables with $\sqrt n \gamma - \sqrt n (\hat \gamma_n - \tau) \mapsto \kappa$ yields:
	\begin{align*}
	R_n(z)
	& \leq e^{2 \varepsilon_n}(1+\bar \varepsilon_n) \frac{\nu_{d^*}(\{ \kappa : \| \kappa \|^2 \leq z  + \varepsilon_n\} \cap (T_{osn}- \sqrt n (\hat \gamma_n - \tau)) ) }{\nu_{d^*}( T_{osn} -  \sqrt n (\hat \gamma_n - \tau)) }
	\end{align*}
	where $T_{osn} = \{ \sqrt n \gamma : \gamma \in \Gamma_{osn}\}= \{ \sqrt n \gamma(\theta) : \theta \in \Theta_{osn}\}$.
	
	Recall that $B_\delta \subset \mb R^{d^*}$ denotes a ball of radius $\delta$ centered at zero. To complete the proof, it is enough to show that:
	\begin{equation} \label{e:prob:unif:z}
	 \sup_z \bigg| \frac{ \nu_{d^*}( B_{\sqrt{z  + \varepsilon_n }} \cap (T_{osn}- \sqrt n (\hat \gamma_n - \tau)) ) }{ \nu_{d^*}( T_{osn}- \sqrt n (\hat \gamma_n - \tau) ) }  -  \frac{\nu_{d^*}( B_{\sqrt{z}}  \cap (T - \sqrt n \hat \gamma_n) )}{\nu_{d^*}(T - \sqrt n \hat \gamma_n)}  \bigg| =  o_\p(1)
	 \end{equation}
	 uniformly in $\p$. We split this into three parts. First note that
	 \begin{align}
	 & \sup_z \bigg| \frac{ \nu_{d^*}( B_{\sqrt{z  + \varepsilon_n }} \cap (T_{osn}- \sqrt n (\hat \gamma_n - \tau)) ) }{ \nu_{d^*}( T_{osn}- \sqrt n (\hat \gamma_n - \tau) ) }  -   \frac{ \nu_{d^*}( B_{\sqrt{z  + \varepsilon_n }} \cap (T_{osn}\cap B_{k_n}- \sqrt n (\hat \gamma_n - \tau)) ) }{ \nu_{d^*}( T_{osn}\cap B_{k_n}- \sqrt n (\hat \gamma_n - \tau) ) }  \bigg| \notag \\
	 & \leq 2 \frac{ \nu_{d^*}(  ((T_{osn}\setminus B_{k_n})- \sqrt n (\hat \gamma_n - \tau)) ) }{ \nu_{d^*}( T_{osn}- \sqrt n (\hat \gamma_n - \tau) ) } \notag \\
	 & \leq 2 \frac{ \nu_{d^*}(  B_{k_n}^c- \sqrt n (\hat \gamma_n - \tau)) }{ \nu_{d^*}( T_{osn}- \sqrt n (\hat \gamma_n - \tau) ) } \label{e:prob:ratio:unif}
	 \end{align}
	 where the first inequality is by (\ref{e:abc:ineq}) and the second is by the inclusion $(T_{osn}\setminus B_{k_n}) \subseteq B_{k_n}^c$. Since $\|\sqrt n (\hat \gamma_n - \tau)\| \leq \| \mb V_n\|$ (by \ref{e:proj:contract}) where $\| \mb V_n\| = O_\p(1)$ uniformly in $\p$ and $\inf_{\p \in \mf P} k_n(\p) \to \infty$ and $d^* = d^*(\p) \leq \ol d < \infty$, we have
	 \[
	 \nu_{d^*}(B_{k_n}^c- \sqrt n (\hat \gamma_n - \tau)) = o_\p(1)
	\]
	uniformly in $\p$. Also notice that, by Assumption \ref{a:quad:unif}(ii),
	 \[
	   \frac{ \nu_{d^*}( B_{\sqrt{z  + \varepsilon_n }} \cap (T_{osn}\cap B_{k_n}- \sqrt n (\hat \gamma_n - \tau)) ) }{ \nu_{d^*}( T_{osn}\cap B_{k_n}- \sqrt n (\hat \gamma_n - \tau) ) }
	   = \frac{ \nu_{d^*}( B_{\sqrt{z  + \varepsilon_n }} \cap ((T - \sqrt n \tau)\cap B_{k_n}- \sqrt n (\hat \gamma_n - \tau)) ) }{ \nu_{d^*}( (T - \sqrt n \tau)\cap B_{k_n}- \sqrt n (\hat \gamma_n - \tau) ) }
	 \]
	 where, by similar arguments to (\ref{e:prob:ratio:unif}),
	 \begin{align}
	 & \sup_z \left| \frac{ \nu_{d^*}( B_{\sqrt{z  + \varepsilon_n }} \cap ((T - \sqrt n \tau)\cap B_{k_n}- \sqrt n (\hat \gamma_n - \tau)) ) }{ \nu_{d^*}( (T - \sqrt n \tau)\cap B_{k_n}- \sqrt n (\hat \gamma_n - \tau) ) } \notag
	 - \frac{ \nu_{d^*}( B_{\sqrt{z  + \varepsilon_n }} \cap (T - \sqrt n \hat \gamma_n  ) }{ \nu_{d^*}(T - \sqrt n \hat \gamma_n  ) }  \right| \\
	 & \leq 2 \frac{ \nu_{d^*}(  ((T - \sqrt n \tau)\setminus B_{k_n})- \sqrt n (\hat \gamma_n - \tau) ) }{ \nu_{d^*}( T - \sqrt n \hat \gamma_n ) } \label{e:prob:ratio:unif0} \\
	 & \leq 2 \frac{ \nu_{d^*}( B_{k_n}^c- \sqrt n (\hat \gamma_n - \tau) ) }{ \nu_{d^*}(T - \sqrt n \hat \gamma_n  ) } \label{e:prob:ratio:unif2} \,.
	 \end{align}
	 A sufficient condition for the right-hand side of display (\ref{e:prob:ratio:unif2}) to be $o_\p(1)$ (uniformly in $\p$) is that
	 \begin{align} \label{e:prob:ratio:unif3}
	 1/\nu_{d^*}(T - \sqrt n \hat \gamma_n  )  = O_\p(1) \quad \mbox{(uniformly in $\p$).}
	 \end{align}
	 But notice that $T - \sqrt n \hat \gamma_n  = (T - \sqrt n \tau) - \sqrt n (\hat \gamma_n - \tau)$ where the $\sqrt n (\hat \gamma_n - \tau)$ are uniformly tight (by (\ref{e:proj:contract}) and the condition $\| \mb V_n\| = O_\p(1)$ uniformly in $\p$) and $T - \sqrt n \tau \supseteq T$. We may therefore deduce by the condition $\inf_{\p \in \mf P} \nu_{d^*}(T) > 0$ in Assumption \ref{a:quad:unif}(ii) that (\ref{e:prob:ratio:unif3}) holds, and so:
	 \[
	  \sup_z \left| \frac{ \nu_{d^*}( B_{\sqrt{z  + \varepsilon_n }} \cap ((T - \sqrt n \tau)\cap B_{k_n}- \sqrt n (\hat \gamma_n - \tau)) ) }{ \nu_{d^*}( (T - \sqrt n \tau)\cap B_{k_n}- \sqrt n (\hat \gamma_n - \tau) ) } - \frac{ \nu_{d^*}( B_{\sqrt{z  + \varepsilon_n }} \cap (T - \sqrt n \hat \gamma_n  ) }{ \nu_{d^*}(T - \sqrt n \hat \gamma_n  ) }  \right| = o_\p(1)
	 \]
	 uniformly in $\p$.
	 To see that the right-hand side of (\ref{e:prob:ratio:unif}) is $o_\p(1)$ (uniformly in $\p$), first note that:
	 \begin{align*}
	 \frac{ 1 }{ \nu_{d^*}( T_{osn}- \sqrt n (\hat \gamma_n - \tau) ) } & = \frac{ \frac{ 1 }{ \nu_{d^*}( T_{osn} \cap B_{k_n}- \sqrt n (\hat \gamma_n - \tau) ) } }{1 + \frac{\nu_{d^*}( T_{osn} \setminus B_{k_n}- \sqrt n (\hat \gamma_n - \tau) )}{ \nu_{d^*}( T_{osn} \cap B_{k_n}- \sqrt n (\hat \gamma_n - \tau) ) } }
	 \end{align*}
	 where
	 \begin{align*}
	 \frac{1}{\nu_{d^*}( T_{osn} \cap B_{k_n}- \sqrt n (\hat \gamma_n - \tau) )} & = \frac{ 1 }{ \nu_{d^*}( (T-\sqrt n \tau) \cap B_{k_n}- \sqrt n (\hat \gamma_n - \tau) ) } \\
	 & = \frac{1}{(1-o_\p(1)) \times \nu_{d^*}( T-\sqrt n \tau) )} = O_\p(1)
	 \end{align*}
	 (uniformly in $\p$) by (\ref{e:prob:ratio:unif3}) and because the $o_\p(1)$ term holds uniformly in $\p$ by (\ref{e:prob:ratio:unif0}) and (\ref{e:prob:ratio:unif2}). It follows that $1/\nu_{d^*}( T_{osn}- \sqrt n (\hat \gamma_n - \tau) ) = O_\p(1)$ (uniformly in $\p$) and so, by (\ref{e:prob:ratio:unif}), we obtain:
	 \[
	  \sup_z \bigg| \frac{ \nu_{d^*}( B_{\sqrt{z  + \varepsilon_n }} \cap (T_{osn}- \sqrt n (\hat \gamma_n - \tau)) ) }{ \nu_{d^*}( T_{osn}- \sqrt n (\hat \gamma_n - \tau) ) }  -   \frac{ \nu_{d^*}( B_{\sqrt{z  + \varepsilon_n }} \cap (T_{osn}\cap B_{k_n}- \sqrt n (\hat \gamma_n - \tau)) ) }{ \nu_{d^*}( T_{osn}\cap B_{k_n}- \sqrt n (\hat \gamma_n - \tau) ) }  \bigg| = o_\p(1)
	 \]
	 (uniformly in $\p$). To complete the proof of (\ref{e:prob:unif:z}), it remains to show that
	 \[
	  \sup_z \left| \nu_{d^*}( B_{\sqrt{z+\varepsilon_n}}  \cap (T - \sqrt n \hat \gamma_n) ) - \nu_{d^*}( B_{\sqrt{z}}  \cap (T - \sqrt n \hat \gamma_n) ) \right| = o_\p(1)
	 \]
	holds uniformly in $\p$. But here we have:
	\begin{align*}
	 & \sup_z \left| \nu_{d^*}( B_{\sqrt{z+\varepsilon_n}}  \cap (T - \sqrt n \hat \gamma_n) ) - \nu_{d^*}( B_{\sqrt{z}}  \cap (T - \sqrt n \hat \gamma_n) ) \right| \\
	 & \leq \sup_z \left| \nu_{d^*}( B_{\sqrt{z+\varepsilon_n}}\setminus B_{\sqrt{z}}  ) \right| \\
	 & = \sup_z \left| F_{\chi^2_{d^*}}(z+\varepsilon_n) - F_{\chi^2_{d^*}}(z) \right| \to 0
	\end{align*}
	by uniform equicontinuity of $\{ F_{\chi^2_{d}} : d \leq \ol d\}$.
\end{proof}

\begin{proof}[\textbf{Proof of Theorem \ref{t:main:unif}}]
	We first prove part (i) by verifying the conditions of Lemma \ref{l:basic:unif}. We assume w.l.o.g. that $L_n(\hat \theta) = \sup_{\theta \in \Theta_{osn}} L_n(\theta) + o_\p(n^{-1})$ uniformly in $\p$. By display (\ref{e:quad:1:unif}) in Lemma \ref{l:quad:unif} we have $\sup_{\theta \in \Theta_I(\p)} Q_n(\theta) = \| \mf T(\mb V_n + \sqrt n \tau) - \sqrt n \tau\|^2  + o_\p(1)$ uniformly in $\p$. This verifies condition (i) with $W_n =\| \mf T(\mb V_n + \sqrt n \tau) - \sqrt n \tau\|^2$.
	
	For condition (ii) let $\xi_{\alpha,\p}$ denote the $\alpha$ quantile of $F_{T}$ under $\p$ and let $(\varepsilon_n)_{n \in \mb N}$ be a positive sequence with $\varepsilon_n = o(1)$. By the conditions $\| \mf T (\mb V_n + \sqrt n \tau) - \sqrt n \tau\|^2 \leq \| \mf T \mb V_n\|^2$ (almost surely) for each $\p \in \mf P$, $\sup_{\p \in \mf P} \sup_z | \p ( \| \mf T \mb V_n \|^2 \leq z ) - \p_Z (\|\mf T Z\|^2 \leq z) | = o(1)$ and the equicontinuity of $\{ F_T : \p \in \mf P\}$ at their $\alpha$ quantiles, we have:
	\begin{align*}
	\liminf_{n \to \infty} \inf_{\p \in \mf P} \p(W_n \leq \xi_{\alpha,\p} - \varepsilon_n)
	& \geq \liminf_{n \to \infty} \inf_{\p \in \mf P} \p( \| \mf T \mb V_n\|^2 \leq \xi_{\alpha,\p} - \varepsilon_n) \\
	& \geq \liminf_{n \to \infty} \inf_{\p \in \mf P} \p_Z( \| \mf T Z\|^2 \leq \xi_{\alpha,\p} - \varepsilon_n) \\
	& = \alpha.
	\end{align*}
	By Condition \ref{a:mcmc:unif} it suffices to show that for each $\epsilon > 0$
	\[
	\lim_{n \to \infty} \sup_{\p \in \mf P} \p( \xi_{\alpha,\p} - \xi_{n,\alpha}^{post} > \epsilon) = 0\,.
	\]
	A sufficient condition is that
	\[
	 \lim_{n \to \infty} \inf_{\p \in \mf P} \p ( \Pi_n (\{ \theta : Q_n(\theta) \leq \xi_{\alpha,\p} - \epsilon\} | \mf X_n) < \alpha )= 1\,.
	\]
	By Lemma \ref{l:post:unif} there exists a sequence of positive constants $(u_n)_{n \in \mb N}$ with $u_n = o(1)$ and a sequence of events $(\mc A_n)_{n \in \mb N}$ (possibly depending on $\p$) with $\inf_{\p \in \mf P} \p(\mc A_n) = 1-o(1)$ such that:
	\[
	\Pi_n \big(\{ \theta : Q_n(\theta) \leq \xi_{\alpha,\p} - \epsilon \} \,\big|\,\mf X_n\big) \leq \p_{Z|\mf X_n} \left( \|Z\|^2 \leq \xi_{\alpha,\p} - \epsilon  | Z \in T - \sqrt n \hat \gamma_n  \right)  + u_n
	\]
	holds on $\mc A_n$ for each $\p$. But by Theorem 2 of \cite{ChenGao} we also have:
	\[
	\p_{Z|\mf X_n} \left( \|Z\|^2 \leq \xi_{\alpha,\p} - \epsilon  | Z \in T - \sqrt n \hat \gamma_n  \right) \leq  F_T(\xi_{\alpha,\p} - \epsilon)
	\]
	and hence
	\[
	\Pi_n \big(\{ \theta : Q_n(\theta) \leq \xi_{\alpha,\p} - \epsilon \} \,\big|\,\mf X_n\big) \leq F_T(\xi_{\alpha,\p} - \epsilon) + u_n
	\]
	holds on $\mc A_n$ for each $\p$. Also note that by the equicontinuity of $\{F_T : \p \in \mf P\}$ at their $\alpha$ quantiles:
	\begin{equation} \label{e:unif:ineq:1}
	 \limsup_{n \to \infty} \sup_{\p \in \mf P} F_T(\xi_{\alpha,\p} - \epsilon )  + u_n < \alpha - \delta
	\end{equation}
	for some $\delta > 0$.
	
	We therefore have:
	\begin{align*}
	 & \lim_{n \to \infty} \inf_{\p \in \mf P} \p ( \Pi_n (\{ \theta : Q_n(\theta) \leq \xi_{\alpha,\p} - \epsilon\} | \mf X_n) < \alpha ) \\
	 & \geq \liminf_{n \to \infty} \inf_{\p \in \mf P} \p \Big( \Big\{ \Pi_n (\{ \theta : Q_n(\theta) \leq \xi_{\alpha,\p} - \epsilon\} | \mf X_n) < \alpha \Big\} \cap \mc A_n \Big) \\
	 & \geq \liminf_{n \to \infty} \inf_{\p \in \mf P} \p \Big( \Big\{ F_T(\xi_{\alpha,\p} - \epsilon )  + u_n < \alpha \Big\} \cap \mc A_n \Big) \\
	 & \geq 1 - \limsup_{n \to \infty} \sup_{\p \in \mf P} \ind\{ F_T(\xi_{\alpha,\p} - \epsilon )  + u_n \geq \alpha \}  - \limsup_{n \to \infty} \sup_{\p \in \mf P} \p( \mc A_n^c ) \\
	 & = 1
	\end{align*}
	where the final line is by (\ref{e:unif:ineq:1}) and definition of $\mc A_n$.
	
	The proof of part (ii) is similar.
\end{proof}

\begin{proof}[\textbf{Proof of Lemma \ref{l:post:profile:unif}}]
	It suffices to characterize the large-sample behavior of $R_{n}(z)$ defined in (\ref{e-Rn-profile}) uniformly in $\p$. By Lemma \ref{l:quad:unif} and Assumption \ref{a:quad:unif}(i)--(iii), there exist a positive sequence $(\varepsilon_n)_{n \in \mb N}$ independent of $z$ with $\varepsilon_n = o(1)$ and a sequence of events $(\mc A_n)_{n \in \mb N} \subset \mc F$ with $\inf_{\p \in \mf P} \p(\mc A_n) = 1-o(1)$ such that:
	\begin{align*}
	\sup_{\theta \in \Theta_{osn}} \left| PQ_n ( M (\theta))  - f(\sqrt n(\hat \gamma_n - \tau) - \sqrt n \gamma(\theta) ) \right| & \leq \varepsilon_n  \\
	\sup_{\theta \in \Theta_{osn}} \left| nL_n(\theta) - \ell_n - \frac{n}{2}\|\hat \gamma_n - \tau\|^2 +  \frac{1}{2} \|\sqrt n \gamma(\theta) - \sqrt n (\hat \gamma_n - \tau)\|^2 + f_{n,\bot} (\gamma_\perp(\theta)) \right| & \leq \varepsilon_n
	\end{align*}
	both hold on $\mc A_n$ for all $\p \in \mf P$. By similar arguments to the proof of Lemma \ref{l:post:profile}, wpa1 we obtain:
	\begin{align*}
	&  (1-\bar \varepsilon_n) e^{-2 \varepsilon_n} \frac{\nu_{d^*} ( (f^{-1}( z - \varepsilon_n )) \cap (\sqrt n(\hat \gamma_n - \tau) - T_{osn}))}{ \nu_{d^*} (\sqrt n(\hat \gamma_n - \tau) - T_{osn})}  \\
	& \leq R_n(z)
	\leq  (1+\bar \varepsilon_n) e^{2 \varepsilon_n} \frac{\nu_{d^*} ( (f^{-1}( z + \varepsilon_n ) )\cap (\sqrt n(\hat \gamma_n - \tau) - T_{osn}))}{ \nu_{d^*} (\sqrt n(\hat \gamma_n - \tau) - T_{osn})}
	\end{align*}
	uniformly in $z$ for all $\p \in \mf P$, for some positive sequence $(\bar \varepsilon_n)_{n \in \mb N}$ with $\bar \varepsilon_n = o(1)$.
		To complete the proof, it remains to show that:
	\[
	\sup_{z \in I} \left| \frac{\nu_{d^*} ( (f^{-1}( z + \varepsilon_n ) )\cap (\sqrt n(\hat \gamma_n - \tau) - T_{osn}))}{ \nu_{d^*} (\sqrt n(\hat \gamma_n - \tau) - T_{osn})}  -  \frac{\nu_{d^*} ( f^{-1}( z  )\cap (\sqrt n \hat \gamma_n -T))}{ \nu_{d^*} (\sqrt n \hat \gamma_n -T)} \right| = o_\p(1)
	\]
	uniformly in $\p$. This follows by the uniform continuity condition on $I$ in the statement of the lemma, using similar arguments to the proofs of Lemmas \ref{l:post:profile} and \ref{l:post:unif}.
\end{proof}

\begin{proof}[\textbf{Proof of Theorem \ref{t:profile:unif}}]
	We verify the conditions of Lemma \ref{l:basic:profile:unif}. We assume w.l.o.g. that $L_n(\hat \theta) = \sup_{\theta \in \Theta_{osn}} L_n(\theta) + o_\p(n^{-1})$ uniformly in $\p$. By display (\ref{e:quad:qlr:unif}) in Lemma \ref{l:quad:unif} we have $PQ_n(M_I) = f( \sqrt n (\hat \gamma_n - \tau) ) + o_\p(1)$ uniformly in $\p$. This verifies condition (i) with $W_n = f( \sqrt n (\hat \gamma_n - \tau) ) = f( \mf T (\mb V_n + \sqrt n \tau) - \sqrt n \tau)$.
	
	For condition (ii) let $\xi_{\alpha,\p}$ denote the $\alpha$ quantile of $f( Z)$ under $\p$ and let $(\varepsilon_n)_{n \in \mb N}$ be a positive sequence with $\varepsilon_n = o(1)$. By Assumption \ref{a:qlr:unif}(ii), the condition $\sup_{\p \in \mf P} \sup_z | \p (f (\mb V_n) \leq z) - \p_Z (f(Z) \leq z) | = o(1)$, and equicontinuity of $f(Z)$ at thier $\alpha$ quantiles, we have:
\begin{align*}
\liminf_{n \to \infty} \inf_{\p \in \mf P} \p( W_n \leq \xi_{\alpha,\p} - \varepsilon_n)
 & \geq \liminf_{n \to \infty} \inf_{\p \in \mf P} \p( f(\mb V_n) \leq \xi_{\alpha,\p} - \varepsilon_n) \\
 & \geq \liminf_{n \to \infty} \inf_{\p \in \mf P} \p_Z( f(Z) \leq \xi_{\alpha,\p} - \varepsilon_n) \\
 & = \alpha \,.
\end{align*}
By condition \ref{a:mcmc:profile:unif} it suffices to show that for each $\epsilon > 0$:
\[
 \lim_{n \to \infty} \sup_{\p \in \mf P} \p (\xi_{\alpha,\p} - \xi_{n,\alpha}^{post,p} > \epsilon) = 0\,.
\]
A sufficient condition is that
\[
 \lim_{n \to \infty} \inf_{\p \in \mf P} \p ( \Pi_n (\{ \theta : PQ_n(M(\theta)) \leq \xi_{\alpha,\p} - \epsilon\} | \mf X_n) < \alpha )= 1\,.
\]
By Lemma \ref{l:post:profile:unif} there exists a sequence of positive constants $(u_n)_{n \in \mb N}$ with $u_n = o(1)$ and a sequence of events $(\mc A_n)_{n \in \mb N}$ (possibly depending on $\p$) with $\inf_{\p \in \mf P} \p(\mc A_n) = 1-o(1)$ such that:
\[
 \Pi_n \big(\{ \theta : PQ_n(M(\theta)) \leq \xi_{\alpha,\p} - \epsilon\} \,\big|\,\mf X_n\big) \leq \p_{Z|\mf X_n} (f(Z) \leq \xi_{\alpha,\p} - \epsilon | Z \in \sqrt n \hat \gamma_n - T)  + u_n
\]
holds on $\mc A_n$ for each $\p$. But by Assumption \ref{a:qlr:unif}(iii) we may deduce that
\[
 \ \Pi_n \big(\{ \theta : PQ_n(M(\theta)) \leq \xi_{\alpha,\p} - \epsilon\} \,\big|\,\mf X_n\big) \leq \p_Z (f(Z) \leq \xi_{\alpha,\p} - \epsilon )  + u_n
\]
holds on $\mc A_n$ for each $\p$. By equicontinuity of the distribution of $\{f(Z) : \p \in \mf P\}$ we have:
	\[
	 \limsup_{n \to \infty} \sup_{\p \in \mf P} \p_Z (f(Z) \leq \xi_{\alpha,\p} - \epsilon )  + u_n  < \alpha - \delta
	\]
	for some $\delta > 0$. The result now follows by the same arguments as the proof of Theorem \ref{t:main:unif}.
\end{proof}

\begin{proof}[Proof of Lemma \ref{lem:quad:unif:disc}]
To simplify notation, let $D_{\theta;p} = \sqrt{\chi^2(p_\theta;p)}$. Define the {\it generalized score} of $\p_\theta$ with respect to $\p$ as $S_{\theta;p}(x) = g_{\theta;p}' e_{x}$ where
\[
 g_{\theta;p} = \frac{1}{D_{\theta;p}} \left[ \begin{array}{c}
 \frac{p_\theta(1)-p(1)}{p(1)}  \\
 \vdots \\
 \frac{p_\theta(k)-p(k)}{p(k)}  \end{array} \right]  \,.
\]
Note that $P S_{\theta;p} = 0$ and $P (S_{\theta;p}^2) = 1$. Also define $u_{\theta;p} = \mb J_p^{-1} g_{\theta;p}$ and notice that $u_{\theta;p}$ is a unit vector (i.e. $\|u_{\theta;p}\| = 1$). Therefore,
\begin{align}\label{e:disc:smax}
 |S_{\theta;p}(x)| \leq 1/(\min_{1 \leq j \leq k} \sqrt{p(j)})
\end{align} for each $\theta$ and $\p \in \mf P$.

 For any $p_\theta > 0$, a Taylor series expansion of $\log(u+1)$ about $u = 0$ yields
\begin{align}
 n L_n (p_\theta) - n L_n (p) & = n \p_n \log ( D_{\theta;p}S_{\theta;p}  + 1) \notag \\
 & = n D_{\theta;p} \p_n S_{\theta;p} - \frac{n D_{\theta;p}^2}{2} \p_n S_{\theta;p}^2 + nD_{\theta;p}^2 \p_n (S_{\theta;p}^2R(D_{\theta;p}S_{\theta;p})) \label{e:quad:disc:1}
\end{align}
where $R(u) \to 0$ as $u \to 0$.

By (\ref{e:disc:smax}), we may choose $(a_n)_{n \in \mb N}$ be a positive sequence with $a_n \to \infty$ as $n \to \infty$ such that $a_n \sup_{\theta: p_\theta > 0} \max_{1 \leq i \leq n} |S_{\theta;p}(X_i)| = o_\p(\sqrt n)$ (uniformly in $\p$). Then, for any $r_n \leq a_n$:
\begin{align} \label{e:quad:disc:2}
 \sup_{\theta \in \Theta_{osn}(\p)} \max_{1 \leq i \leq n} |D_{\theta;p}S_{\theta;p}(X_i)| = o_\p(1) \quad \mbox{(uniformly in $\p$).}
\end{align}
By the two-sided Chernoff bound, for any $\delta \in (0,1)$:
\begin{equation} \label{e:chernoff:unif}
 \sup_{\p \in \mf P} \p \Big( \max_{1 \leq j \leq k} \Big|\frac{\p_n \ind\{x=j\}}{p(j)}-1\Big| > \delta \Big) \leq 2ke^{-n (\inf_{\p \in P} \min_{1 \leq j \leq k} p(j)) \frac{ \delta^2}{3}} \to 0
\end{equation}
because $\sup_{\p \in \mf P} \max_{1 \leq j \leq k} (1/p(j)) = o(n)$. It follows that $\p_n (\mb J_p e_x^{\phantom \prime} e_x' \mb J_p) = I+o_\p(1)$ uniformly in $\p$. Also notice that $S_{\theta;p}^2(x) = u_{\theta;p}'\mb J_p e_x^{\phantom \prime} e_x' \mb J_p u_{\theta;p}$ where each $u_{\theta;p}$ is a unit vector. Therefore,
\begin{align} \label{e:quad:disc:3}
 \sup_{\theta : p_\theta > 0} |\p_n S_{\theta;p}^2 - 1| = o_\p(1) \quad \mbox{(uniformly in $\p$).}
\end{align}
Substituting (\ref{e:quad:disc:2}) and (\ref{e:quad:disc:3}) into (\ref{e:quad:disc:1}) yields:
\[
 nL_n(p_\theta) - nL_n(p) =  n D_{\theta;p} \p_n S_{\theta;p} - \frac{nD_{\theta;p}^2}{2} + nD_{\theta;p}^2 \times o_\p(1)
\]
where the $o_\p(1)$ term holds uniformly for all $\theta$ with $p_\theta > 0$, uniformly for all $\p \in \mf P$. We may therefore choose a positive sequence $(b_n)_{n \in \mb N}$ with $b_n \to \infty$ slowly such that $b_n^2$ times the $o_\p(1)$ term is still $o_\p(1)$ uniformly in $\p$. Letting $r_n = (a_n \wedge b_n)$, we obtain
\[
 \sup_{\theta \in \Theta_{osn}(\p)} \left| nL_n(p_\theta) - nL_n(p) -  n D_{\theta;p} \p_n S_{\theta;p} + \frac{nD_{\theta;p}^2}{2} \right| = o_\p(1) \quad \mbox{(uniformly in $\p$)}
\]
where $n D_{\theta;p} \p_n S_{\theta;p} = \sqrt n D_{\theta;p} \mb G_n (S_{\theta;p}) = \sqrt n \tilde \gamma_{\theta;p} \mb G_n (\mb J_p e_x)$ and $D_{\theta;p}^2 = \| \tilde \gamma_{\theta;p}\|^2$.
\end{proof}

\begin{proof}[Proof of Proposition \ref{p:quad:unif:disc}]
The quadratic expansion follows from Lemma \ref{lem:quad:unif:disc} and (\ref{e:quad:unif:disc:1}) and (\ref{e:quad:unif:disc:2}), which give $\| \tilde \gamma_{\theta;p} \|^2 = \tilde \gamma_{\theta;p}' \tilde \gamma_{\theta;p}^{\phantom \prime} = \tilde \gamma_{\theta;p}' V_p' V_p^{\phantom \prime} \tilde \gamma_{\theta;p}^{\phantom \prime} = \gamma(\theta)'\gamma(\theta)$ and  $\tilde \gamma_{\theta;p}'\tilde{\mb V}_{n,p} = \tilde \gamma_{\theta;p}'V_p' V_p^{\phantom \prime}\tilde{\mb V}_{n,p} = \gamma(\theta)' \mb V_n$.

Uniform convergence in distribution is by Proposition A.5.2 of \cite{vdVW}, since $\sup_{\p \in \mf P} \max_{1 \leq j \leq k} (1/p(j)) = o(n)$ implies $\sup_{\p \in \mf P} |v_{j,p}' \mb J_p e_x| \leq  1/(\min_{1 \leq j \leq k} \sqrt{p(j)}) = o(n^{1/2})$.
\end{proof}

\begin{proof}[Proof of Proposition \ref{p:quad:unif:disc:sieve}]
The condition $\sup_{\p \in \mf P} \max_{1 \leq j \leq k} (1/p(j)) = o(n/\log k)$ ensures that display (\ref{e:chernoff:unif}) holds with $k = k(n) \to \infty$. The rest of the proof follows that of Proposition \ref{p:quad:unif:disc}.
\end{proof}

\begin{proof}[Proof of Lemma \ref{lem:disc:cov}]
For any $\p \in \mf P$, the mapping $p_\theta \mapsto V_p \tilde \gamma_{\theta;p}$ is a homeomorphism because $p > 0$ and $V_p$ is an orthogonal matrix. Recall that the upper $k-1$ elements of $V_p\tilde \gamma_{\theta;p}$ is the vector $\gamma(\theta) = \gamma(\theta;\p)$ and the remaining $k$th element is zero. Therefore, for each $\p \in \mf P$ the mapping $p_\theta \mapsto \gamma(\theta)$ is a homeomorphism. Since  $\{p_\theta : \theta \in \Theta, p_\theta > 0\} = \mr{int}(\Delta^{k-1})$ and $p \in \mr{int}(\Delta^{k-1})$ for each $\p \in \mf P$, it follows that $\{\gamma(\theta) : \theta \in \Theta, p_\theta > 0\}$ contains a ball of radius $\epsilon = \epsilon(\p) > 0$ for each $\p \in \mf P$ (because homeomorphisms map interior points to interior points).

Recall that $\theta \in \Theta_{osn}(\p)$ if and only if $\|\gamma(\theta)\| \leq r_n/\sqrt n$ (because $\|\gamma(\theta)\|^2 = \|\tilde \gamma_{\theta;p}\|^2 = \chi^2(p_\theta;p)$). Let $\epsilon(\p) = \sup\{\epsilon > 0 : B_\epsilon \subseteq \{\gamma(\theta) : \theta \in \Theta, p_\theta > 0\}\}$. It suffices to show that $\inf_{\p \in \mf P} \sqrt n \epsilon(\p) \to \infty$ as $n \to \infty$. We can map back from any $\gamma \in \mb R^{k-1}$ by the inverse mapping $q_{\gamma;p}$ given by
\[
 q_{\gamma;p}(j) = p(j) + \sqrt{p(j)} [ V_p^{-1} ((\gamma' \; 0)')]_j
\]
for $1 \leq j \leq k$, where $[ V_p^{-1} ((\gamma' \; 0)')]_j$ denotes the $j$th element of $[ V_p^{-1} ((\gamma' \; 0)')]$. An equivalent definition of $\epsilon(\p)$ is $\inf\{ \epsilon > 0 : q_p(\gamma) \not \in \mr{int}(\Delta^{k-1}) \mbox{ for some } \gamma \in B_\epsilon\}$. As $p > 0$ and $\sum_{j=1}^k q_{\gamma;p}(j) = 1$ for each $\gamma$ by construction, we therefore need to find the smallest $\epsilon > 0$ for which $q_{\gamma;p}(j) \leq 0$ for some $j$, for some $\gamma \in B_\epsilon$. This is equivalent to finding the smallest $\epsilon > 0$ for which
\begin{equation}\label{e:disc:cov:1}
 \frac{1}{\sqrt{p(j)}} \geq \frac{1}{[ V_p^{-1} ((\gamma' \; 0)')]_j}
\end{equation}
for some $j$, for some $\gamma \in B_\epsilon$. The left-hand side is $o(\sqrt n)$ uniformly for $1 \leq j \leq k$ and uniformly in $\p$ under the condition $\sup_{\p \in \mf P} \max_{1 \leq j \leq k} (1/p(j)) = o(n)$. Also notice that, since the $\ell^2$ norm dominates the maximum norm and $V_p$ is an orthogonal matrix, we have
\begin{equation}\label{e:disc:cov:2}
 \frac{1}{[ V_p^{-1} ((\gamma' \; 0)')]_j}  \geq \frac{1}{\| V_p^{-1} ((\gamma' \; 0)')\|} = \frac{1}{\| \gamma\|} \geq \frac{1}{\epsilon}
\end{equation}
It follows from (\ref{e:disc:cov:1}) and (\ref{e:disc:cov:2}) that $\sqrt n \inf_{\p \in \mf P}\epsilon(\p) \geq \frac{\sqrt n}{o(\sqrt n)} \to \infty$ as $n \to \infty$, as required.
\end{proof}

\begin{proof}[Proof of Lemma \ref{lem:profile:disc}]
Condition (\ref{e:chernoff:unif}) implies that
\begin{align*}
 & \sup_{\theta \in \Theta_{osn}(\p)} \sup_{\mu \in M(\theta)} \left| \sup_{\eta \in H_\mu} n L_n(p_{\mu,\eta}) - \sup_{\eta \in H_\mu : (\mu,\eta) \in \Theta_{osn}(\p)} nL_n(p_{\mu,\eta}) \right| \\
 & = \sup_{\theta \in \Theta_{osn}(\p)} \sup_{\mu \in M(\theta)} \left| \inf_{\eta \in H_\mu} n D_{KL}(p\|p_{\mu,\eta}) - \inf_{\eta \in H_\mu : (\mu,\eta) \in \Theta_{osn}(\p)} n D_{KL}(p\|p_{\mu,\eta}) \right|(1+o_\p(1))
\end{align*}
where the $o_\p(1)$ term holds uniformly in $\p$ and $D_{KL}(p\|p_\theta) = \sum_{j=1}^k p(j) \log (p(j)/p_\theta(j))$. By a Taylor expansion of $-\log(u+1)$ about $u = 0$, it is straightforward to deduce that
\begin{align} \label{e:div:equiv}
 \lim_{\epsilon \to 0} \sup_{\p \in \mf P} \sup_{\theta \in \Theta : \chi^2(p_\theta;p) \leq \epsilon}\left| \frac{D_{KL}(p\|p_\theta)}{\frac{1}{2}\chi^2(p_\theta;p)}-1 \right| = o(1) \,.
\end{align}
In particular, for any $\theta \in \Theta_{osn}(\p)$ and any $\mu \in M(\theta)$, we have
\begin{align} \label{e:div:equiv:1}
 \inf_{\eta \in H_\mu} D_{KL}(p\|p_{\mu,\eta}) \leq \inf_{\eta \in H_\mu : (\mu,\eta) \in \Theta_{osn}(\p)} D_{KL}(p\|p_{\mu,\eta}) \leq \frac{\chi^2(p_\theta;p)}{2}(1+o(1))
\end{align}
uniformly in $\p$. We want to show that an equivalence (\ref{e:div:equiv}) holds uniformly over shrinking $KL$-divergence neighborhoods (rather $\chi^2$-divergence neighborhoods). By similar arguments to Lemma 3.1 in \cite{LiuShao}, we may deduce that
\begin{align*}
 \frac{1}{\chi^2(p_\theta;p)}|4h^2(p_\theta,p) - \chi^2(p_\theta;p)| \leq \frac{3}{D_{\theta;p}} \max_x |S_{\theta,p}(x)|  h^2(p_\theta,p)
\end{align*}
where again $D_{\theta;p} = \sqrt{\chi^2(p_\theta;p)}$. But, $h(p_\theta,p) \leq D_{\theta;p}$. Moreover, the proof of Proposition \ref{p:quad:unif:disc} also shows that $|S_{\theta;p}|\leq 1/(\min_{1 \leq j \leq k} \sqrt{p(j)})$ holds for each $\theta$ and each $\p \in \mf P$ so $\max_x |S_{\theta,p}(x)| = o(\sqrt n)$ uniformly in $\p$. This, together with the fact that $h(p_\theta,p) \leq \sqrt{D_{KL}(p\|p_\theta)}$, yields
\begin{align*}
 \frac{1}{\chi^2(p_\theta;p)}|4h^2(p_\theta,p) - \chi^2(p_\theta;p)| \leq o(\sqrt n) \times \sqrt{D_{KL}(p\|p_\theta)}
\end{align*}
where the $o(\sqrt n)$ term holds uniformly for $\theta \in \Theta$ and $\p \in \mf P$. Let $(a_n)_{n \in \mb N}$ be a positive sequence with $a_n \leq r_n$ and $a_n \to \infty$ sufficiently slowly that $a_n$ times the $o(\sqrt n)$ term in the above display is still $o(\sqrt n)$ (uniformly in $\theta$ and $\p$). We then have
\[
 \sup_{\p \in \mf P} \sup_{\theta : D_{KL}(p\|p_\theta) \leq \frac{a_n}{\sqrt n}} \frac{1}{\chi^2(p_\theta;p)}|4h^2(p_\theta,p) - \chi^2(p_\theta;p)| = o(1)\,.
\]
Since $h^2(p_\theta,p) \leq D_{KL}(p\|p_\theta)$, this implies that
\[
 \sup_{\p \in \mf P} \sup_{\theta : D_{KL}(p\|p_\theta) \leq \frac{a_n}{\sqrt n}} \chi^2(p_\theta;p) = o(1)
\]
and so, by (\ref{e:div:equiv}), we obtain
\[
 \sup_{\p \in \mf P} \sup_{\theta \in \Theta : D_{KL}(p\|p_\theta) \leq \frac{a_n}{\sqrt n}}\left| \frac{D_{KL}(p\|p_\theta)}{\frac{1}{2}\chi^2(p_\theta;p)}-1 \right| = o(1) \,.
\]
It now follows by (\ref{e:div:equiv}) that
\begin{align*}
 & \sup_{\theta : \chi^2(p_\theta;p) \leq \frac{a_n}{n}} \sup_{\mu \in M(\theta)} \left| \inf_{\eta \in H_\mu} n D_{KL}(p\|p_{\mu,\eta}) - \inf_{\eta \in H_\mu :\chi^2(p_{(\mu,\eta)};p) \leq \frac{a_n}{n} } n D_{KL}(p\|p_{\mu,\eta}) \right|(1+o_\p(1))  \\
 & = \frac{n}{2} \sup_{\theta : \chi^2(p_\theta;p) \leq \frac{a_n}{n}} \sup_{\mu \in M(\theta)} \left| \inf_{\eta \in H_\mu} \chi^2(p_{\mu,\eta};p) - \inf_{\eta \in H_\mu :\chi^2(p_{\mu,\eta};p) \leq \frac{a_n}{n} } \chi^2(p_{\mu,\eta};p) \right|(1+o_\p(1))
\end{align*}
where $ \inf_{\eta \in H_\mu} \chi^2(p_{\mu,\eta};p) - \inf_{\eta \in H_\mu :\chi^2(p_{\mu,\eta};p) \leq \frac{a_n}{n} } \chi^2(p_{\mu,\eta};p) = 0$ because $\chi^2(p_\theta;p) \leq \frac{a_n}{n}$ and $\mu \in M(\theta)$ implies that there exists an $\eta \in H(\mu)$ with $p_\theta = p_{\mu,\eta}$, so the constraint $\chi^2(p_{\mu,\eta};p) \leq \frac{a_n}{n}$ is never violated for any $\mu \in M(\theta)$, for any such $\theta$. The result follows by taking $r_n' = a_n$.
\end{proof}

\subsection{Proofs for Appendix \ref{s:lp}}

\begin{proof}[\textbf{Proof of Theorem \ref{t:lp}}]
We first derive the asymptotic distribution of $\sup_{\theta \in \Theta_I} Q_n(\theta)$ under $\mr P_{n,a}$. By similar arguments to the proof of Theorem \ref{t:main}, we have:
\begin{align*}
 \sup_{\theta \in \Theta_I} Q_n(\theta) & = \| \mb V_n \|^2 + o_{\mr P_{n,a}}(1) \; \overset{\mr P_{n,a}}{\rightsquigarrow} \chi^2_{d^*}(a'a)\,.
\end{align*}
Identical arguments to the proof of Lemma \ref{l:post} yield:
\[
 \sup_{z} | \Pi_n (\{ \theta : Q_n(\theta) \leq z\} | \mf X_n) - F_{\chi^2_{d^*}}(z)| = o_{\mr P_{n,a}}(1)\,.
\]
Therefore, $\xi_{n,\alpha}^{mc} = \chi^2_{d^*,\alpha} + o_{\mr P_{n,a}}(1)$ and we obtain:
\[
 \mr P_{n,a} (\Theta_I \subseteq \wh \Theta_\alpha) = \Pr ( \chi^2_{d^*}(a'a) \leq \chi^2_{d^*,\alpha} )+ o(1)
\]
as required.
\end{proof}

\begin{proof}[\textbf{Proof of Theorem \ref{t:lp:profile}}]
By similar arguments to the proof of Theorem \ref{t:main:profile}, we have:
\[
 PQ_n(M_I) = f(\mb V_n) + o_{\mr P_{n,a}}(1)  \overset{\mr P_{n,a}}{\rightsquigarrow} f(Z + a)
\]
where $Z \sim N(0,I_{d^*})$. Identical arguments to the proof of Lemma \ref{l:post:profile} yield:
\[
 \sup_{z  \in I} \left| {\textstyle \Pi_n \big( \{\theta:PQ_n( M(\theta)) \leq  z \} \,\big|\, \mf X_n \big) }  - \mb P_{Z| \mf X_n} \big( f ( Z )  \leq z\big) \right| = o_{\mr P_{n,a}}(1)
\]
for a neighborhood $I$ of $z_\alpha$. Therefore, $\xi_{n,\alpha}^{mc,p} = z_\alpha + o_{\mr P_{n,a}}(1)$ and we obtain:
\[
 \mr P_{n,a} (M_I \subseteq \wh M_\alpha) = \p_Z  ( f(Z+a) \leq z_\alpha )+ o(1)
\]
as required.
\end{proof}

\subsection{Proofs for Appendix \ref{ax:pds}}

\begin{proof}[\textbf{Proof of Lemma \ref{l:post:gamma}}]
By equations (\ref{e-post-target}) and (\ref{e-denombd}) in the proof of Lemma \ref{l:post}, it suffices to characterize the large-sample behavior of:
\[
	R_{n}(z) :=  \frac{\int_{\{\theta:Q_n(\theta) \leq  z \} \cap \Theta_{osn}}\! e^{-\frac{1}{2}Q_n(\theta)} \mr d\Pi(\theta)}{\int_{\Theta_{osn}} \!e^{-\frac{1}{2}Q_n(\theta)} \mr d\Pi(\theta)}\,.
\]
By Assumption \ref{a:quad:gamma}(i), there exists a positive sequence $(\varepsilon_n)_{n \in \mb N}$ with $\varepsilon_n = o(1)$ such that: $(1-\varepsilon_n) h(\gamma(\theta) - \hat \gamma_n) \leq \frac{a_n}{2}Q_n(\theta) \leq (1+\varepsilon_n) h(\gamma(\theta) - \hat \gamma_n)$ holds uniformly over $\Theta_{osn}$. Therefore:
\begin{align*}
 & \frac{\int_{\{\theta:2a_n^{-1}(1+\varepsilon_n)h(\gamma(\theta) -\hat \gamma_n) \leq  z \} \cap \Theta_{osn}}\! e^{-a_n^{-1}(1+\varepsilon_n)h(\gamma(\theta) -\hat \gamma_n)} \mr d\Pi(\theta)}{\int_{\Theta_{osn}} \!e^{-a_n^{-1}(1-\varepsilon_n)h(\gamma(\theta) -\hat \gamma_n)} \mr d\Pi(\theta)} \\
 & \leq R_n(z) \leq \frac{\int_{\{\theta:2a_n^{-1}(1-\varepsilon_n)h(\gamma(\theta) -\hat \gamma_n) \leq  z \} \cap \Theta_{osn}}\! e^{-a_n^{-1}(1-\varepsilon_n)h(\gamma(\theta) -\hat \gamma_n)} \mr d\Pi(\theta)}{\int_{\Theta_{osn}} \!e^{-a_n^{-1}(1+\varepsilon_n)h(\gamma(\theta) -\hat \gamma_n)} \mr d\Pi(\theta)}\,.
\end{align*}
By similar arguments to the proof of Lemma \ref{l:post}, under Assumption \ref{a:prior} there exists a positive sequence $(\bar \varepsilon_n)_{n \in \mb N}$ with $\bar \varepsilon_n = o(1)$ such that for all $n$ sufficiently large we have:
\begin{align*}
 & (1-\bar \varepsilon_n)\frac{\int_{\{\gamma:2a_n^{-1}(1+\varepsilon_n)h(\gamma -\hat \gamma_n) \leq  z \} \cap \Gamma_{osn}}\! e^{-a_n^{-1}(1+\varepsilon_n)h(\gamma -\hat \gamma_n)} \mr d\gamma}{\int_{\Gamma_{osn}} \!e^{-a_n^{-1}(1-\varepsilon_n)h(\gamma -\hat \gamma_n)} \mr d\gamma} \\
 & \leq R_n(z) \leq (1+\bar \varepsilon_n)\frac{\int_{\{\gamma:2a_n^{-1}(1-\varepsilon_n)h(\gamma -\hat \gamma_n) \leq  z \} \cap \Gamma_{osn}}\! e^{-a_n^{-1}(1-\varepsilon_n)h(\gamma -\hat \gamma_n)} \mr d\gamma}{\int_{\Gamma_{osn}} \!e^{-a_n^{-1}(1+\varepsilon_n)h(\gamma -\hat \gamma_n)} \mr d\gamma}\,.
\end{align*}
under the change of variables $\theta \mapsto \gamma(\theta)$, where $\Gamma_{osn} = \{\gamma(\theta) : \theta \in \Theta_{osn}\}$.

Assumption \ref{a:quad:gamma}(ii) implies that:
\[
 a_n^{-1}(1\pm\varepsilon_n)h(\gamma -\hat \gamma_n) = h\Big(a_n^{-r_1}(1\pm\varepsilon_n)^{r_1}(\gamma_1 -\hat \gamma_{n,1}),\ldots,a_n^{-r_{d^*}}(1\pm\varepsilon_n)^{r_{d^*}}(\gamma_{d^*} -\hat \gamma_{n,d^*})\Big)\,.
\]
Using a change of variables:
\[
 \gamma \mapsto \kappa_\pm(\gamma) = \big(a_n^{-r_1}(1\pm\varepsilon_n)^{r_1}(\gamma_1 -\hat \gamma_{n,1}),\ldots,a_n^{-r_{d^*}}(1\pm\varepsilon_n)^{r_{d^*}}(\gamma_{d^*} -\hat \gamma_{n,d^*})\big)
\]
(with choice of sign as appropriate) and setting $r^* = r_1 + \ldots + r_{d^*}$, we obtain:
\begin{align}
 & (1-\bar \varepsilon_n)\frac{(1-\varepsilon_n)^{r^*}}{(1+\varepsilon_n)^{r^*}}\frac{\int_{\{\kappa:2h(\kappa) \leq  z \} \cap K_{osn}^+}\! e^{-h(\kappa)} \mr d\kappa}{\int  \!e^{-h(\kappa)} \mr d\kappa} \notag \\
 & \leq R_n(z) \leq (1+\bar \varepsilon_n)\frac{(1+\varepsilon_n)^{r^*}}{(1-\varepsilon_n)^{r^*}}\frac{\int_{\{\kappa:2h(\kappa) \leq  z \}}\! e^{-h(\kappa)} \mr d\kappa}{\int_{K_{osn}^+} \!e^{-h(\kappa)} \mr d\kappa} \label{e:pf:gamma:0}
\end{align}
uniformly in $z$, where $K^+_{osn} = \{ \kappa_{+}(\gamma) : \gamma \in \Gamma_{osn}\}$.

We can use a change of variables for $\kappa \mapsto t = 2h(\kappa)$ to obtain:
\begin{align} \label{e:pf:gamma:1}
 \int_{\{\kappa : h(\kappa ) \leq z/2\}} e^{-h(\kappa)} \mr d \kappa & = 2^{-r^*} \mr{V}(S) \int_0^{z} e^{-t/2} t^{r^* -1} dt  & \int e^{-h(\kappa)} \mr d \kappa & = 2^{-r^*} \mr{V}(S) \int_0^\infty e^{-t/2} t^{r^* -1} dt
\end{align}
where $\mr{V}(S)$ denotes the volume of the set $S = \{ \kappa : h(\kappa) = 1\}$.

For the remaining integrals over $K_{osn}^+$ we first fix any $\omega \in \Omega$ so that $K_{osn}^+(\omega)$ becomes a deterministic sequence of sets. Let $C_n(\omega) = K_{osn}^+(\omega) \cap B_{k_n}$. Assumption \ref{a:quad:gamma}(iii) gives $\mb R^{d^*}_+ = \ol{\cup_{n \geq 1} C_n(\omega)}$ for almost every $\omega$. Now clearly:
\[
 \int e^{-h(\kappa)} \mr d \kappa \geq \int_{K_{osn}^+(\omega)} e^{-h(\kappa)} \mr d \kappa \geq \int \ind\{\kappa \in C_n(\omega)\} e^{-h(\kappa)} \mr d \kappa \to \int e^{-h(\kappa)} \, \mr d\kappa
\]
(by dominated convergence) for almost every $\omega$. Therefore:
\begin{align} \label{e:pf:gamma:2}
 \int_{K_{osn}^+} \!e^{-h(\kappa)} \mr d\kappa \to_{p} 2^{-r^*} \mr{V}(S) \int_0^\infty e^{-t/2} t^{r^* -1} dt\,.
\end{align}
We may similarly deduce that:
\begin{align} \label{e:pf:gamma:3}
 \sup_z \left| \int_{\{\kappa : h(\kappa ) \leq 2z\}\cap K_{osn}^+} e^{-h(\kappa)} \mr d \kappa  - 2^{-r^*} \mr{V}(S) \int_0^{z} e^{-t/2} t^{r^* -1} dt \right| \text{x}o_p 0\,.
\end{align}
The result follows by substituting (\ref{e:pf:gamma:1}), (\ref{e:pf:gamma:2}), and (\ref{e:pf:gamma:3}) into (\ref{e:pf:gamma:0}).
\end{proof}

\begin{proof}[\textbf{Proof of Theorem \ref{t:main:gamma}}]
We verify the conditions of Lemma \ref{l:basic}.
Lemma \ref{l:post:gamma} shows that the posterior distribution of the QLR is asymptotically $F_\Gamma = \Gamma(r^*,1/2)$, and hence $\xi_{n,\alpha}^{post} = z_\alpha + o_\p(1)$, where $z_\alpha$ denotes the $\alpha$ quantile of the $F_\Gamma$. By Assumption $\sup_{\theta \in \Theta_I} Q_n(\theta) \rightsquigarrow F_\Gamma$. Then:
\begin{align*}
 \xi_{n,\alpha}^{mc} & = z_\alpha + (\xi_{n,\alpha}^{post} - z_\alpha) + (\xi_{n,\alpha}^{mc} - \xi_{n,\alpha}^{post})
  = z_\alpha + o_\p(1)
\end{align*}
where the final equality is by Assumption \ref{a:mcmc}.
\end{proof}

\end{document}